\documentclass[acmsmall, screen, nonacm]{acmart}

\usepackage[english]{babel}
\usepackage{bbm}
\usepackage{xcolor}
\usepackage{array}
\usepackage{fancybox}
\usepackage{calc}
\usepackage{needspace}
\usepackage{textcomp}
\usepackage{mathtools}
\usepackage{multirow}
\usepackage{amsmath}
\usepackage{amsfonts}
\usepackage{accents}
\usepackage{amsthm}
\usepackage{graphicx}
\usepackage{float} %
\usepackage{resizegather}
\usepackage{wrapfig}
\usepackage{tikz}
\usetikzlibrary{babel,positioning,fit,calc,arrows.meta, automata, matrix,backgrounds}
\usepackage{tcolorbox}
\usepackage{enumerate}
\usepackage[linesnumbered,lined]{algorithm2e}
\usepackage{stmaryrd}
\usepackage[paper]{knowledge}
\knowledgeconfigure{notion}
\knowledgeconfigure{quotation}
\knowledge{notion}
	| reln
	| rel
	| relation
	| relations
	| Relation
	| Relations
	
\knowledge{notion}
	| fn
	| function
	| functions
	| Function
	| Functions

\knowledge{notion}
	| finmon
	| monoid
	| monoids
	| Monoid
	| Monoids
	
\knowledge{notion}
	| comp
	| composition
	| compositions
	| Composition
	| Compositions

\knowledge{notion}
	| finitely generated

\knowledge{notion}
	| configuration
	| configurations
	| Configurations
	| Configuration

\knowledge{notion}
	| partially-ordered
	| poset
	| posets
	| Poset
	| Posets

\knowledge{notion}
	| word
	| words
	| Word
	| Words
	
\knowledge{notion}
	| empty

\knowledge{notion}
	| prefix
	| Prefix
	| Prefixes
	| prefixes

\knowledge{notion}
	| proper prefix
	
\knowledge{notion}
	| prefix-closed

\knowledge{notion}
	| language
	| Language
	| Languages
	| languages
	
\knowledge{notion}
	| free monoid
	| free monoids
	| freemon
	
\knowledge{notion}
	| (P1)

\knowledge{notion}
	| connected
	
\knowledge{notion}
	| well-partially ordered
	| wpo
	
\knowledge{notion}
	| tree
	| Tree
	| trees
	| Trees

\knowledge{notion}
	| fintree

\knowledge{notion}
	| node
	| Node
	| nodes
	| Nodes

\knowledge{notion}
	| root
	| roots
	| Root
	| Roots

\knowledge{notion}
| constrained
| Constrained

\knowledge{notion}
| linear
| Linear

\knowledge{notion}
| $\integer$-PVASS

\knowledge{notion}
| ltree
| labelled tree
| Labelled tree
| labelled trees
| Labelled trees

\knowledge{notion}
| (H1)

\knowledge{notion}
| (H2)

\knowledge{notion}
| (H3)

\knowledge{notion}
| (V1)

\knowledge{notion}
| (V2)

\knowledge{notion}
| (V3)

\knowledge{notion}
| base

\knowledge{notion}
| ancestor
| ancestors
| Ancestor
| Ancestors

\knowledge{notion}
| direction

\knowledge{notion}
| height
| Height

\knowledge{notion}
| ideal
| ideals
| Ideals
| Ideal

\knowledge{notion}
| context
| contexts
| Context
| Contexts

\knowledge{notion}
| subtree
| subtrees
| Subtree
| Subtrees

\knowledge{notion}
| (P3)'

\knowledge{notion}
| (P2)'

\knowledge{notion}
| (P2)

\knowledge{notion}
| (P3)

\knowledge{notion}
| finsem
| semigroups
| semigroup
| Semigroup
| Semigroups

\knowledge{notion}
| semilinear
| Semilinear

\knowledge{notion}
| finite affine monoid property
| fam-$\integer$-PVASS

\knowledge{notion}
| rabin-unboundedness
| Rabin-unboundedness
| Rabin-Unboundedness

\knowledge{notion}
| inclusion
| Inclusion

\knowledge{notion}
| reachability set
| Reachability set
| reachability sets
| Reachability sets

\knowledge{notion}
| set

\knowledge{notion}
| semigroup operation

\knowledge{notion}
| run
| runs
| Runs
| Run

\knowledge{notion}
| segment tree algebra
| segment tree algebras
| sta

\knowledge{notion}
| stap

\knowledge{notion}
| reachability

\knowledge{notion}
| grcon

\knowledge{notion}
| generated

\knowledge{notion}
| pseudo-reset

\knowledge{notion}
| pseudo-permutation

\knowledge{notion}
| pseudo-transfer

\knowledge{notion}
| pseudo-copy

\knowledge{notion}
| pseudo-copyless

\knowledge{notion}
| affine $\integer$-VASS

\knowledge{notion}
| affine $\integer$-PVASS
| Affine $\integer$-PVASS

\knowledge{notion}
| affine monoid
| affine monoids
| Affine monoid
| Affine monoids

\knowledge{notion}
| transition
| transitions
| Transitions
| Transition

\knowledge{notion}
| stack content
| stack contents

\knowledge{notion}
| successor
| successors

\knowledge{notion}
| dimension
| Dimension
| dimensions
| Dimensions

\knowledge{notion}
| vector content
| vector contents

\knowledge{notion}
| state
| states
| State
| States

\knowledge{notion}
| generator
| generators
| Generator
| Generators

\knowledge{notion}
| child
| Child

\knowledge{notion}
| leaf
| Leaf

\knowledge{notion}
| descendant
| Descendant
| descendants
| Descendants

\knowledge{notion}
| PDS
| PDSs
| pushdown system

\knowledge{notion}
| \textsc{GenConfigReach}

\knowledge{notion}
| \textsc{QuantReach}

\knowledge{notion}
| weighted pushdown system
| wPDS
| wPDSs

\knowledge{notion}
| pumping semiring
| pumping semirings
| Pumping semiring
| Pumping semirings
| Pumping Semirings

\knowledge{notion}
| weight function
| weight functions
| Weight function
| Weight functions

\knowledge{notion}
| downward closure
| downward closures
| Downward closure
| Downward closures

\knowledge{notion}
| upward closure
| upward closures
| Upward closure
| Upward closures

\knowledge{notion}
| global configuration
| global configurations
| Global configuration
| Global configurations

\knowledge{notion}
| global run
| global runs
| Global run
| Global runs

\knowledge{notion}
| global trace
| global traces
| Global trace
| Global traces

\knowledge{notion}
| local configuration
| local configurations
| Local configuration
| Local configurations

\knowledge{notion}
| local run
| local runs
| Local run
| Local runs

\knowledge{notion}
| local trace
| local traces
| Local trace
| Local traces

\knowledge{notion}
| idempotent semiring
| idempotent semirings
| Idempotent semiring
| Idempotent semirings

\knowledge{notion}
| (C1)

\knowledge{notion}
| (C2)

\knowledge{notion}
| (C3)

\knowledge{notion}
| SRE
| SREs
| simple regular expression
| simple regular expressions
| Simple regular expression
| Simple regular expressions

\knowledge{notion}
| upward-closed
| Upward-closed

\knowledge{notion}
| downward-closed
| Downward-closed

\knowledge{notion}
| thermal extension
| thermal extensions
| Thermal extension
| Thermal extensions

\knowledge{notion}
| normal
| Normal

\knowledge{notion}
| W_0

\knowledge{notion}
| height
| heights
| Height
| Heights

\knowledge{notion}
| (P3a)

\knowledge{notion}
| (P3b)

\knowledge{notion}
| embeds
| embedding
| embeddings
| Embedding
| Embeddings

\knowledge{notion}
| configs

\knowledge{notion}
| tconfigs

\knowledge{notion}
| size
| sizes
| Size
| Sizes

\knowledge{notion}
| hot
| Hot

\knowledge{notion}
| cold
| Cold

\knowledge{notion}
| effect
| effects
| Effect
| Effects

\knowledge{notion}
| pumping
| Pumping

\knowledge{notion}
| \textsc{ConfigReach}

\knowledge{notion}
| trivial semiring

\knowledge{notion}
| root weight

\knowledge{notion}
| thin
| Thin

\knowledge{notion}
| Boolean semiring
| Boolean semirings

\knowledge{notion}
| Safety

\knowledge{notion}
| Reachability

\knowledge{notion}
| Liveness

\knowledge{notion}
| Cost-bounded reachability
| cost-bounded reachability

\knowledge{notion}
| regular
| Regular

\knowledge{notion}
| (S1)

\knowledge{notion}
| (S2)

\knowledge{notion}
| (P3c)

\knowledge{notion}
| (P3d)

\knowledge{notion}
| atom
| atoms

\knowledge{notion}
| asynchronous product
| asynchronous products

\knowledge{notion}
| local state
| local states

\knowledge{notion}
| global state
| global states

\usepackage{etoolbox}
\usepackage{url}
\usepackage[obeyFinal,textwidth=40px,textsize=footnotesize]{todonotes}
\usepackage{multirow}
\usepackage{booktabs}
\usepackage{array}
\usepackage{pifont}
\usepackage{bbm}
\usepackage{listings}
\usepackage{subfig}
\lstdefinelanguage{Haskell}{
    morekeywords={case,of,if,then,else,let,in,where,do,module,import,data,
    type,newtype,class,instance,deriving},
    sensitive=true,
    morecomment=[l]--,
    morecomment=[n]{\{-}{-\}},
    morestring=[b]"
    }
\usepackage{cleveref}
\crefname{lemma}{Lemma}{Lemmas}
\Crefname{lemma}{Lemma}{Lemmas}
\crefname{example}{Example}{Examples}
\Crefname{example}{Example}{Examples}
\crefname{corollary}{Corollary}{Corollaries}
\Crefname{corollary}{Corollary}{Corollaries}
\AtBeginDocument{%
  \hypersetup{
    colorlinks=true,
    linkcolor=black, %
  }%
}

\newrobustcmd{\dni}[1]{\textcolor{blue}{[DN: {#1}]}}
\newcommand{\inlsec}[1]{\smallskip\noindent\textbf{#1.}}
\newcommand{\removelatexerror}{\let\@latex@error\@gobble}

\newrobustcmd{\children}{\mathsf{Children}}
\newrobustcmd{\parents}{\mathsf{Parents}}
\newrobustcmd{\project}[1]{\pi_{#1}}
\newrobustcmd{\product}{\parallel}
\newrobustcmd{\optsafe}[1]{\textsc{OptSafe}_{#1}}
\newrobustcmd{\optreach}[1]{\textsc{OptReach}_{#1}}
\newrobustcmd{\optlive}[1]{\textsc{OptCost}_{#1}}
\newrobustcmd{\genconfigreach}{"\textsc{GenConfigReach}"\xspace}%
\newrobustcmd{\trans}{\textsc{Trans}}
\newrobustcmd{\bottom}{\!\!\!\dashv}
\newrobustcmd{\config}[1]{\kl[configuration]{\textsc{Config}(#1)}}
\newrobustcmd{\nconfig}[1]{\kl[global configuration]{\textsc{Config}(#1)}}
\newrobustcmd{\subsub}[1]{_{_{#1}}}
\newrobustcmd{\ch}{\mathsf{ch}}
\newrobustcmd{\reaches}{\Rightarrow^*}
\newrobustcmd{\xreaches}[1]{\xRightarrow{#1}}
\newrobustcmd{\reach}[1]{\textsc{Reach}_{#1}}
\newrobustcmd{\run}{\rho}
\newrobustcmd{\paths}{\textsc{Paths}}
\newrobustcmd{\pds}{\kl[pushdown system]{\mathcal{P}}}
\newrobustcmd{\wpds}{\kl[wPDS]{\tup{\pds, \semiring{S}, \wt}}}
\newrobustcmd{\vf}[1]{\mathbf{#1}}
\newrobustcmd{\vr}{\boldsymbol{\rho}}
\newrobustcmd{\vt}{\boldsymbol{\tracevar}}
\newrobustcmd{\tree}[1]{\textsc{FTree}(#1)}
\newrobustcmd{\down}[1]{{#1}\kl[downward closure]{\downarrow}}
\newrobustcmd{\up}[1]{{#1}\kl[upward closure]{\uparrow}}
\newrobustcmd{\upsub}[1]{{#1}_{\text{uc}}}
\newrobustcmd{\downsub}[1]{{#1}_{\text{dc}}}
\newrobustcmd{\wt}{\mathsf{wt}}
\newrobustcmd{\wtsemantics}[1]{\kl[weight function]{\left[\hspace{-2.75pt}\left[#1\right]\hspace{-2.5pt}\right]}}
\newrobustcmd{\nwtsemantics}[1]{\left[\hspace{-2.75pt}\left[#1\right]\hspace{-2.5pt}\right]}
\newrobustcmd{\expsem}[1]{\left(\hspace{-2.4pt}\left|#1\right|\hspace{-2.15pt}\right)}
\newrobustcmd{\rootweight}[1]{\kl[root weight]{\left[\hspace{-2.75pt}\left[#1\right]\hspace{-2.5pt}\right]}}
\newrobustcmd{\sre}[1]{\kl[SRE]{\textsc{sre}(#1)}}
\newrobustcmd{\effect}[1]{\kl[effect]{\mathsf{eff}(#1)}}

\newrobustcmd{\st}[1]{\textsc{ST}({#1})-algebra}
\newrobustcmd{\freest}[1]{\mathcal{F}_{\mathcal{#1}}}
\newrobustcmd{\stgen}[2]{[{#1}]_{#2}}
\newrobustcmd{\staut}[1]{\mathcal{T}_\aut{#1}}
\newrobustcmd{\gc}[1]{"\textsc{GrCon}_{#1}@grcon"}
\newrobustcmd{\context}[1]{"\textsc{Context}_{#1}@context"}
\newrobustcmd{\configs}{\kl[configs]{\mathcal{C}}}
\newrobustcmd{\tconfigs}{\kl[tconfigs]{\mathcal{C}_{\mathsf{Th}}}}
\newrobustcmd{\mf}[1]{\mathsf{#1}}
\newrobustcmd{\conn}[1]{~"\overset{#1}{\rightsquigarrow}@connected"~}
\newrobustcmd{\init}{\textsc{Init}}
\newrobustcmd{\free}[1]{\mathcal{F}(#1)}
\newrobustcmd{\tracevar}{\tau}
\newrobustcmd{\cinit}{\stgen{\textsc{Init}}{\freest{\configs}}}
\definecolor{hotcolour}{HTML}{D55E00}
\definecolor{coldcolour}{HTML}{0072B2}
\definecolor{hlcolour}{HTML}{FFF2CC}
\newrobustcmd{\hot}[1]{\kl[hot]{\textcolor{hotcolour}{\widehat{#1}}}}
\newrobustcmd{\cold}[1]{\kl[cold]{\textcolor{coldcolour}{#1}}}
\newrobustcmd{\hotw}[1]{\kl[hot]{\textcolor{hotcolour}{#1}}}
\newrobustcmd{\coldw}[1]{\kl[cold]{\textcolor{coldcolour}{#1}}}
\newrobustcmd{\mix}[1]{\textcolor{black}{#1}}
\newrobustcmd{\forget}{\mathsf{fgt}}
\newrobustcmd{\inverse}{t^{-1}}
\newrobustcmd{\embed}{\kl[embedding]{\nu}}
\newrobustcmd{\tmp}{\mathsf{tmp}}
\newrobustcmd{\funcclosure}[1]{\overline{#1}}
\newrobustcmd{\letters}[1]{\left\lVert #1 \right\rVert}
\newrobustcmd{\stopgap}{\#}
\newrobustcmd{\push}[1]{\mathsf{push} {#1}}
\newrobustcmd{\pop}[1]{\mathsf{pop} {#1}}
\newrobustcmd{\chk}[1]{\mathsf{check} {#1}}
\newrobustcmd{\nop}{\mathsf{nop}}

\newrobustcmd{\tup}[1]{{\langle #1 \rangle}}
\newrobustcmd{\nat}{\mathbb{N}}
\newrobustcmd{\integer}{\mathbb{Z}}
\newrobustcmd{\coveredeq}{\sqsubseteq}
\newrobustcmd{\covered}{\sqsubset}
\newrobustcmd{\semiring}[1]{\kl[idempotent semiring]{\mathcal{#1}}}
\newrobustcmd{\isrgeq}[1]{\geq_{\semiring{#1}}}
\newrobustcmd{\isrleq}[1]{\leq_{\semiring{#1}}}
\newrobustcmd{\isrlt}[1]{<_{\semiring{#1}}}
\newrobustcmd{\indleq}{\trianglelefteq}
\newrobustcmd{\ka}[1]{\kl[Kleene algebra]{\textsc{KA}}(#1)}
\newrobustcmd{\regex}[1]{\textsc{Reg}(#1)}
\newrobustcmd{\unbounded}{\infty}
\newrobustcmd{\fixop}{\circledast}
\newrobustcmd{\osem}[1]{\kl[pumping semiring]{\mathcal{#1}}}
\newrobustcmd{\tsem}[1]{\kl[thermal extension]{\mathsf{Th}_{\mathcal{#1}}}}
\newrobustcmd{\identity}{\mathsf{Id}}
\newrobustcmd{\trivialsem}{\kl[Boolean semiring]{\mathbb{B}}}
\newrobustcmd{\semgen}[2]{[{#1}]_{#2}}

\renewcommand\footnotetextcopyrightpermission[1]{} %
\setcopyright{none}

\acmDOI{}

\acmISBN{}

\acmPrice{}

\begin{document}

\title{Quantitative Verification of Infinite-State Networks}

\author{\href{https://orcid.org/0009-0008-0845-6754}{Dhruv Nevatia}}
\affiliation{%
    \institution{ETH Zurich}
}
\email{dhruv.nevatia@inf.ethz.ch}

\author{\href{https://orcid.org/0000-0003-2952-939X}{David Basin}}
\affiliation{%
    \institution{ETH Zurich}
}
\email{basin@inf.ethz.ch}

\renewcommand{\shortauthors}{Dhruv Nevatia and David Basin}

\begin{abstract}
Many network protocols and distributed systems combine recursion, unbounded local state, and quantitative behaviour such as cost, latency, or reliability. Existing decidability results for networks of pushdown systems are largely qualitative, and do not extend to quantitative analyses, which must jointly track traces and weights.

We present a framework for the quantitative verification of acyclic networks of weighted pushdown systems. Finitely summarising a recursive network component requires collapsing the infinite family of runs obtained by pumping its nested loops, and doing so \emph{exactly}, rather than by over-approximation, is a standing difficulty. We give a class of weight domains where this is possible: pumping semirings, those in which the weights accumulated by such a family collapse to a closed form; informally, the domain must be unable to count iterations. The class admits domains with infinite ascending chains, such as the arctic semiring and downward-closed languages. Over these, we give a terminating saturation algorithm computing quantitative reachability exactly, resting on two ideas: segment tree algebras, a compositional representation of runs, and thermal extensions, which symbolically separate accelerated weights so that further acceleration remains exact. We use this to compute upward and downward closures of context-free languages uniformly, and to lift the algorithm to acyclic networks over "thin" pumping semirings, yielding the first quantitative safety and reachability analyses for networks of pushdown systems.
\end{abstract}

\maketitle

\newtcolorbox{problemdef}{
    colback=olive!10,    %
	colframe=black,     %
	coltitle=black,     %
	rounded corners=southeast, %
	boxrule=0.8pt,      %
	width=\columnwidth,   %
	enlarge left by=0mm, %
	enlarge right by=0mm
}

\newtcolorbox{algodef}{
	colback=gray!10,    %
	colframe=black,     %
	coltitle=black,     %
	rounded corners=southeast, %
	boxrule=0.8pt,      %
	width=\columnwidth,   %
	enlarge left by=0mm, %
	enlarge right by=0mm,
	after upper={\global\everypar{}} %
}

\newtcolorbox{quotedef}{
	colback=gray!15,    %
	colframe=white,     %
    sharp corners=all,
	boxrule=0pt,      %
    fontupper=\ttfamily\small,
	width=\columnwidth,   %
	enlarge left by=0mm, %
	enlarge right by=0mm
	}

\section{Introduction}\label{sec:introduction}
Quantitative verification is an extension of classical verification that goes beyond simple yes-or-no queries, and instead queries quantitative information about a system's behaviour. Such information includes probabilities, costs, time, memory usage, and resources. Quantitative verification has found applications in probabilistic programming \cite{barthe2020foundations}, resource analysis \cite{hoffmann2022two}, and dataflow analysis \cite{kuhnrich2009interprocedural,reps2005weighted}. 

In this paper, we restrict our focus to weighted pushdown systems ("wPDS"), which are a suitable model for analysing programs with procedures \cite{reps2005weighted}. These systems were independently introduced in \cite{schwoon2003generalized} for the verification of access control policies in shared computing resources, and in \cite{bouajjani2003generic} for the static analysis of concurrent recursive programs.

The main idea is to equip the transitions of a pushdown system ("PDS") with values from a data domain such as the natural numbers. As transitions are executed along a run, these values are combined according to an appropriate operation, e.g., addition of numbers. This accumulates a value along each computation path. One can then study quantitative verification questions such as computing, for a given target configuration, the \emph{aggregate} value over all paths leading to it — for example, the minimal accumulated value among all such paths. It follows that the underlying data domain must have a semiring-like structure (cf. \Cref{inlsec:semirings}). When the underlying semiring is Boolean, the model reduces to "PDSs". 
Recently, several works have shown that verification problems for network protocols can be reduced to a qualitative decision problem for "PDSs". 

\paragraph{Qualitative Network Verification}
Jacobs et al. \cite{jacobs2025stackat} introduce $\mathsf{StacKAT}$, an extension of the domain-specific language $\mathsf{NetKAT}$ \cite{foster2015coalgebraic,anderson2014netkat} with $\mathsf{push}$ and $\mathsf{pop}$ constructs. This facilitates a faithful modelling of tunnelling protocols and source routing schemes \cite{sunshine1977source}. It allows them to model and verify protocols like Segment Routing and MPLS. They show that the equivalence problem for $\mathsf{StacKAT}$ expressions is decidable. At the heart of their decision procedure lies a translation from $\mathsf{StacKAT}$ expressions to "PDSs", such that the semantics of an expression is equivalent to the reachability relation of the configurations of the "PDS" between certain control states. They reduce the equivalence problem of two $\mathsf{StacKAT}$ expressions to the equivalence problem of two regular languages, which is decidable. 
Schmid and Srba \cite{schmid2018polynomial} show that the verification of MPLS networks can be reduced to the reachability problem in "PDSs". Nevatia et al. \cite{nevatia2025reachability} formalise the semantics of the Domain Name System (DNS) as a system of communicating pushdown processes with an infinite message alphabet. This allows them to specify a large class of DNS vulnerabilities, including amplification attacks. Under certain restrictions, they prove that deciding the existence of given kinds of vulnerabilities can be reduced to the pushdown state reachability problem. 

With the increasing use of "PDSs" in modelling complex large-scale networks, it becomes equally important to reason about quantitative properties. While classical network verification has primarily focused on qualitative properties such as reachability and isolation, many practical concerns are not yes-or-no questions. Examples include the worst-case latency of a network, its worst-case failure rate, the bandwidth of its bottleneck, the reliability or security level it can guarantee, and the information flow between two of its configurations (cf. \Cref{fig:semiring-interpretation}). Answering such questions calls for the quantitative verification of "wPDS" networks, and the literature here remains limited. 

\paragraph{Challenges}
A natural way to develop algorithms for quantitative verification of weighted pushdown networks would be to generalise the existing qualitative techniques. It was shown by Atig et al. \cite{atig2008reachability} that the qualitative reachability problem for a network of "PDSs" with no communication loops is decidable. 
One of the main ingredients in their algorithm is the computation
of language \emph{closures}. For instance, the downward closure of a set of words is the set of all subsequences of words in that set. The downward closure of the set of traces between two configurations of a "PDS" can be effectively computed as the language accepted by an NFA of size exponential in that of the input "PDS" \cite{courcelle1991constructing}. They provide an iterative procedure in which each iteration constructs an NFA recognising the downward closure of the set of traces between two configurations in a parent process of the network, and takes its product with the child of that process. This algorithm has an elementary complexity in the size of the network. 

A generalisation of the above algorithm to the quantitative setting
is not straightforward. For instance, how does the weight assigned to every transition of a weighted pushdown system in the network factor into the corresponding \emph{weighted} NFAs associated with its downward closures? Alternatively, one may propose a qualitative analysis, followed by a finite number of isolated quantitative analyses of each network component, assuming arbitrary behaviour in other components. However, there is an \emph{entanglement} between traces and their weights that is now forgotten. This is especially challenging when the underlying data domain for the weighted pushdown system is infinite. Let us take a simple example to understand this better.

\subsection{A Motivating Example}\label{subsec:motivation}

Consider a cost model for client-server networks. A client accumulates a backlog of tasks and clears it either by sending the tasks to a remote server or by executing them itself. Sending is free for the client but occupies the server, which, while idle, accrues credit points that fund task completion \emph{for free}; once these reserves are depleted, the client is billed for further tasks. Executing tasks locally requires no server at all, but the client pays for each one. This is a simplified model of leaky-bucket rate limiters \cite{courcoubetis2000study,courcoubetis2003pricing}, ubiquitous in API gateways and congestion control, and of the burstable CPU-credit systems used in cloud architectures like AWS \cite{budkule2025optimizing}.

We illustrate a concrete instance in \Cref{fig:motivating-wpds-network}, modelling a client ($\pds^{\mathsf{cl}}$) and a server ($\pds^{\mathsf{srv}}$) as weighted pushdown processes. The server can at any time choose to observe the client's state, illustrated by the doubly dashed arrow on the left; the client observes nothing. The transitions are of the form $s? \mathsf{op} : a$, where $s$ is an observed state, $a$ is the weight annotation, and $\mathsf{op} \in \{\push{\gamma}, \pop{\gamma}, \nop, \chk{\gamma}\}$ is a macro for a stack operation using a stack symbol $\gamma$. The symbol $\dashv$ denotes the bottom of the stack. The operation ``$\push{\gamma}$'' denotes pushing $\gamma$ onto the stack, ``$\pop{\gamma}$'' denotes popping $\gamma$ from the stack, ``$\nop$'' leaves the stack unchanged, and ``$\chk{\gamma}$'' checks that the top of the stack is $\gamma$, again leaving it unchanged. These are exactly the four forms of stack operation admitted by a "PDS" rule, namely the pairs $\tup{\epsilon, \gamma}$, $\tup{\gamma, \epsilon}$, $\tup{\epsilon, \epsilon}$, and $\tup{\gamma, \gamma}$ respectively; in particular, $\chk{\dashv}$ tests that the stack is empty.

\begin{figure}[t]
    \centering
    \resizebox{0.7\textwidth}{!}{\begin{tikzpicture}
            [shorten >=1pt,node distance=2.5cm,on grid,>={Stealth[round]},auto,initial text=,
            every state/.style={draw=red!50,very thick,fill=red!20, minimum size=1.1cm}]
            \node (Pc) at (-1.2, 0.9) {$\pds^{\mathsf{cl}}$};
            \node (cpic) at (-1.2, 0) {\includegraphics[page=1, scale=3.5]{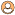}};
            \node[state] (in_c)    at (2.4, 0)     {$in^{\mathsf{cl}}$};
            \node[state] (send) at (6.0, 1.3)   {$\mathsf{send}$};
            \node[state] (req)     at (9.6, 1.3)   {$\mathsf{req}$};
            \node[state] (local)   at (6.0, -1.3)  {$\mathsf{local}$};
            \node[state] (idle)    at (9.6, -1.3)  {$\mathsf{idle}$};

            \node (Ps) at (-1.2, -5.1) {$\pds^{\mathsf{srv}}$};
            \node (spic) at (-1.2, -4.2) {\includegraphics[page=1, scale=1.5]{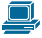}};
            \node[state] (in_s)  at (2.4, -4.2)  {$in^{\mathsf{srv}}$};
            \node[state] (fin)   at (6.0, -4.2)  {$\mathsf{fin}$};
            \node[state] (next)  at (9.6, -4.2)  {$\mathsf{next}$};

            \path[->]
                (in_c)
                    edge [loop above]    node [above] {$\push{\beta} : 0$} ()
                    edge [bend left=12]  node [above left,xshift=22pt,yshift=-3pt,fill=white,inner sep=1.5pt] {$\nop : 0$}   (send)
                    edge [bend right=12] node [below left,xshift=22pt,yshift=3pt,fill=white,inner sep=1.5pt]  {$\nop : 0$}   (local)
                (send)
                    edge [loop above]    node [above] {$\pop{\beta} : 0$}         ()
                    edge                 node [above] {$\chk{\dashv} : 0$}       (req)
                (req)
                    edge                 node [right] {$\nop : 0$}               (idle)
                (local)
                    edge [loop below]    node [below] {$\pop{\beta} : 1$}         ()
                    edge                 node [below] {$\chk{\dashv} : 0$}        (idle);

            \path[->]
                (in_s)
                    edge [loop above]     node [above] {$in^{\mathsf{cl}}?\ \push{\alpha} : 0$} ()
                    edge                  node [below] {$\mathsf{req}?\ \nop : 0$}              (fin)
                (fin)
                    edge [bend right=18]  node [below] {$\mathsf{idle}?\ \pop{\alpha} : 0$}     (next)
                (next)
                    edge [bend right=18]  node [above,yshift=11pt]
                                                       {$\mathsf{req}?\ \chk{\dashv} : 1,$}    (fin)
                    edge [bend right=18]  node [above] {$\mathsf{req}?\ \chk{\alpha} : 0$}      (fin);

            \draw[->, double, double distance=3pt, dash pattern=on 3pt off 2pt] (cpic) -- (spic);
        \end{tikzpicture}}
    \captionsetup{skip=5pt}
    \caption{\label{fig:motivating-wpds-network} A client-server network. The doubly dashed arrow points from the observed process to the observing one: the server may read the client's state, but not conversely. A label of the form $\mathsf{x}?$, for $x \in \{in^{\mathsf{cl}}, \mathsf{req}, \mathsf{idle}\}$, indicates that the transition is enabled only while the client is in the state $\mathsf{x}$.}
    \vspace{-10pt}
\end{figure}
From its initial state $in^{\mathsf{cl}}$, the client queues tasks at no cost, pushing $\beta$ onto its stack once per task ($\push{\beta} : 0$), so the backlog is unbounded. It then clears the backlog in one of two ways. It may \emph{send}, handing each queued task to the server ($\pop{\beta} : 0$), and, once the backlog is empty, submitting the batch ($\chk{\dashv} : 0$) and going idle, paying nothing. Alternatively, it may work \emph{locally}, performing each queued task itself at one unit apiece ($\pop{\beta} : 1$), and, once the backlog is empty, finishing ($\chk{\dashv} : 0$) \emph{without ever contacting the server}. Either way, the guard $\chk{\dashv}$ prevents the client from leaving before its backlog is empty. The asymmetry is justified: sending is free for the client but requires the server, whereas working locally costs the client.

The server maintains a credit counter, i.e., a stack over the unary alphabet $\{\alpha\}$. While the client is still queueing, the server banks credits at no cost ($in^{\mathsf{cl}}? \push{\alpha} : 0$); on observing a submitted batch, it serves the batch, entering its finished state $\mathsf{fin}$ ($\mathsf{req}? \nop : 0$). When the client then goes idle, the server \emph{speculatively} spends a credit to pre-stage the next batch, moving to $\mathsf{next}$ ($\mathsf{idle}? \pop{\alpha} : 0$). From there it needs a follow-up request, which is free if credits remain ($\mathsf{req}? \chk{\alpha} : 0$), and costs one unit once the counter has run dry ($\mathsf{req}? \chk{\dashv} : 1$).

A natural question is: ``Are $x$ units sufficient for the client to clear its backlog?'' This amounts to computing the worst-case cost of reaching the target global state $\tup{\mathsf{cl}: \mathsf{idle}, \mathsf{srv}: \mathsf{fin}}$, from the initial state $\tup{\mathsf{cl}: in^{\mathsf{cl}}, \mathsf{srv}: in^{\mathsf{srv}}}$, over all interleavings of the two processes; $x$ suffices when it exceeds this cost. Analysed in isolation, both processes look unboundedly expensive. The client reaches ``$\mathsf{idle}$'' by either route, and the local route costs $n$ for a backlog of $n$ tasks, so its worst case is $\unbounded$. The server may repeat the loop ``$\mathsf{fin} \rightarrow \mathsf{next} \rightarrow \mathsf{fin}$'' as often as it likes, paying a unit per turn once the counter is empty, so its worst case is $\unbounded$ as well. The decoupled answer is $\tup{\mathsf{cl}: \unbounded, \mathsf{srv}: \unbounded}$.

Yet both estimates are wrong. The target requires the server to be in ``$\mathsf{fin}$'', and the server reaches ``$\mathsf{fin}$'' only by observing ``$\mathsf{req}$'', which only the ``$\mathsf{send}$'' route ever visits. Every global run reaching the target therefore takes the ``$\mathsf{send}$'' route, on which the client pays nothing. On those very runs the server steps into ``$\mathsf{next}$'' at most once, and is then stranded there: the follow-up request it speculated on never arrives, because from ``$\mathsf{idle}$'' the client never returns to ``$\mathsf{req}$''. So the server pays nothing either, and the true answer is $\tup{\mathsf{cl}: 0, \mathsf{srv}: 0}$.

This shows that isolated analyses can grossly over-approximate costs. The local route is not a spurious artefact of the model. The client really can do all of the work itself, at unbounded cost, and really does terminate; it is simply a run in which the server does nothing, and hence one that no \emph{global} trace reaching the target can contain. Analysing the client on its own, under arbitrary behaviour of the server, therefore reports $\unbounded$ for a branch that no global trace can follow. One cannot compute the weight of a component without knowing which of its runs the other components are able to follow, and so traces and weights must be computed together. The same phenomenon occurs in networks with several \emph{hierarchical} rate-limiting servers, such as those in the Linux kernel \cite{linuxhtb} and cloud-native applications like Envoy \cite{envoylocalratelimit,envoyglobalratelimit} and DynamoDB \cite{awsdynamodbburst}. This motivates the need for a principled approach that can tackle the analysis of all such networks.

\subsection{Contributions}
Our principal result, developed in \Cref{sec:network-verification}, is a framework for the quantitative verification of networks of "wPDSs", where we provide the first quantitative verification algorithms for networks of pushdown systems. We achieve this in three steps.

A quantitative analysis of a recursive system must summarise the weight accumulated by unboundedly many nested loop iterations. Existing techniques either compute this summary exactly, but only over weight domains satisfying the ascending chain condition \cite{bouajjani2003generic,kuhnrich2009interprocedural,schwoon2003generalized}, leaving out domains as basic as the arctic semiring, or else apply to arbitrary domains at the price of over-approximation, through widening or abstraction \cite{kincaid2017compositional}. We ask when such a summary can be computed \emph{exactly} without constraining the chains, and answer this in \Cref{sec:main-results} with the class of "pumping semirings"; see \Cref{fig:semiring-examples} for several examples. 
Informally, the condition requires that the domain be unable to \emph{count} the iterations. 
The class admits domains with infinite ascending chains, such as the arctic semiring and the semiring of "downward-closed" languages (\Cref{sec:closure-computation}). 

In \Cref{sec:quantitative}, we propose a novel saturation-based procedure for the quantitative verification problem "\textsc{QuantReach}" for "wPDSs". As the first application of our procedure, we show that both the upward and downward closures of context-free languages can be computed as instances of "\textsc{QuantReach}" (cf. \Cref{sec:closure-computation}). These results serve as a key technical ingredient for the network-verification algorithms developed in subsequent sections.

In \Cref{sec:network-verification}, we focus on three classes of quantitative properties: safety, reachability, and cost-bounded reachability. We categorise these properties into two kinds of problems: threshold (``Given a bound $r$, does the network satisfy $r$ w.r.t. the property?'') and optimality (``Can we compute the \emph{optimal} bound $r$ such that the network satisfies $r$ w.r.t. the property?''). We then introduce \genconfigreach, a quantitative verification problem for networks of "wPDSs" over "pumping semirings", which forms the basis for solving the above problems. Using our saturation procedure, we prove that this problem can be solved for a wide range of quantitative domains relevant for network verification, thus yielding algorithms for all the aforementioned properties.

\section{Notation and Background}\label{sec:background}
Given a set $X$, let $F(X)$ denote the set of all finite subsets of $X$. For $n \in \nat$, we write $[n]$ for the set $\{1, 2, \ldots, n\}$. A sequence in $X$ is defined as a partial function $f: \nat \rightharpoonup X$ such that if $f(n)$ is defined, then $f(n')$ must be defined for all $n' \leq n$; we also denote it by $n \mapsto f(n)$, implicitly assuming that $n$ belongs to the domain of $f$. An infinite sequence in $X$ is a total function $f: \nat \rightarrow X$. Given a function $f: X \rightarrow Y$ and a subset $X' \subseteq X$, we write $f(X')$ to denote the set $\{f(x) \mid x \in X' \}$. Given functions $f_1: 2^X \rightarrow 2^Y$ and $f_2: 2^X \rightarrow 2^Y$, we write $f_1 + f_2$ to denote the function $X' \mapsto f_1(X') \cup f_2(X')$. Given a function $f: 2^X \rightarrow 2^X$, we write $\funcclosure{f}$ to denote the function $\sum_{n \in \nat} f^n$, where $f^0$ is the identity, and $f^{n + 1} = f \circ f^n$.

\subsection{Some Basic Definitions}

\inlsec{Words}
Let $A^*$ be the set of all finite ""words"" over the alphabet $A$. The ""empty"" word is denoted by $\epsilon$. The set of all non-"empty" "words" is denoted by $A^+ = A^*\setminus \{\epsilon\}$. The concatenation of two "words" $u, v \in A^*$ is denoted by $uv$. We write $w^k = ww\cdots w$ for the $k$-fold concatenation of a word $w$ in $A^*$. Given $L \subseteq A^*$, we define $L^{k} \coloneqq \{w_1\cdots w_k \mid w_1,\ldots,w_k \in L\}$. The ""prefix"" partial order $\preceq$ over "words" is defined by $u \preceq v$ if $v = uw$ for some "word" $w \in A^*$. 
A subset $L \subseteq A^*$ is ""prefix-closed"" if $u \preceq v$ and $v \in L$ implies $u \in L$. The length of a word $w \in A^*$ is denoted by $|w|$.

Given two "words" $u, v \in A^*$, we say $u$ is a \emph{subword} of $v$, written $u \coveredeq v$, iff $u = u_1\cdots u_n$ and $v = v_0u_1\cdots u_nv_n$ for some $v_0,u_1,\ldots,u_n,v_n \in A^*$. Given a set of "words" $L \subseteq A^*$, we define its ""downward closure"" $\down{L} \coloneqq \{y \in A^* \mid \exists x \in L. y \coveredeq x \}$. Dually, its ""upward closure"" is $\up{L} \coloneqq \{y \in A^* \mid \exists x \in L. x \coveredeq y \}$. A set $L \subseteq A^*$ is ""downward-closed"" (resp. ""upward-closed"") iff $L = L\downarrow$ (resp. $L = L\uparrow$). Both "upward-closed" and "downward-closed" languages are closed under union and concatenation.

\inlsec{Trees}
A ""tree"" $T$ is a "prefix-closed" subset of $\nat^*$ such that for every $t \in \nat^*$ and $i, j \in \nat$, if $tj \in T$ and $i \leq j$, then $ti \in T$. A "tree" is \kl[fintree]{finite} if $T$ is finite. The elements of $T$ are called ""nodes"". The "empty" "word" $\epsilon$ is called the ""root"" "node". A ""child"" of a "node" $t \in T$ is a "node" $ti \in T$. 
The ""subtree"" of $T$ rooted at $t$ is the set $t^{-1}T \coloneqq \{ t' \mid t\cdot t' \in T\}$. The ""height"" of the "tree" $T$ is the supremum $\sup \{|t| \mid t \in T\} \in \nat \cup \{\infty\}$ of the lengths of its "nodes", denoted by $\text{ht}(T)$. The "height" of a "node" $t$ in $T$ is $\text{ht}(\inverse T)$.

Given a set $A$, an $A$-""labelled tree"" is a tuple $\tup{T, \sigma}$, where $T$ is a "tree" and $\sigma : T \rightarrow A$ is a \emph{labelling function} that maps "nodes" in $T$ to labels in $A$. Each of the above definitions can be naturally lifted from "trees" to "labelled trees" by considering it on the "tree" $T$. For instance, the "subtree" of $\tup{T, \sigma}$ rooted at $t$ is $\tup{t^{-1}T, t^{-1}\sigma}$, where $t^{-1}\sigma(t') \coloneqq \sigma(t\cdot t')$ for $t' \in t^{-1}T$. We abuse notation and write $t^{-1}\tup{T, \sigma}$ for $\tup{t^{-1}T, t^{-1}\sigma}$. The set of all \kl[fintree]{finite} $A$-"labelled trees" is denoted by $\tree{A}$.

\inlsec{Idempotent Semirings}\label{inlsec:semirings}
An ""idempotent semiring"" $\semiring{S}$ is a 5-tuple $\tup{S, \oplus, \otimes, \vf{0}, \vf{1}}$, where $S$ is a set, and $\oplus, \otimes: S \times S \rightarrow S$ are binary operations such that: (1) $\tup{S, \oplus, \vf{0}}$ forms a commutative monoid, (2) $\tup{S, \otimes, \vf{1}}$ forms a monoid, and (3) the following identities hold for arbitrary $a, b, c \in S$:
\begin{eqnarray*}
    a \oplus a = a, & a \otimes \vf{0} = \vf{0} = \vf{0} \otimes a,\\
    a \otimes (b \oplus c) = (a \otimes b) \oplus (a \otimes c)\text{, and } & (b \oplus c) \otimes a = (b \otimes a) \oplus (c \otimes a).
\end{eqnarray*}
As shorthand, we may write $\semiring{S}$ for $S$, and similarly for semigroups and monoids. We also omit the symbol $\otimes$ when it is clear from context. 

Given two "idempotent semirings" $\semiring{S}_1 = \tup{S_1, \oplus_1, \otimes_1, \vf{0}_1, \vf{1}_1}$, and $\semiring{S}_2 = \tup{S_2, \oplus_2, \otimes_2, \vf{0}_2, \vf{1}_2}$, we define the product semiring $\semiring{S}_1 \times \semiring{S}_2 \coloneqq \tup{S_1 \times S_2, \oplus, \otimes, \tup{\vf{0}_1, \vf{0}_2}, \tup{\vf{1}_1, \vf{1}_2}}$, where $\tup{a_1, a_2} \odot \tup{b_1, b_2} \coloneqq \tup{a_1 \odot_1 b_1, a_2 \odot_2 b_2}$ for $\odot \in \{\oplus, \otimes\}$, i.e., the individual operations are applied pointwise.

Given a set $X$, an "idempotent semiring" $\tup{S, \oplus, \otimes, \vf{0}, \vf{1}}$, and a function $f: X \rightarrow S$, we can naturally lift $f$ to the function $f^*: X^* \rightarrow S$ over the free monoid generated by $X$, i.e., given a "word" $x_1\cdots x_n \in X^*$, we define $f^*(\epsilon) = \vf{1}$, and $f^*(x_1\cdots x_n) \coloneqq f(x_1) \otimes \cdots \otimes f(x_n)$.

An "idempotent semiring" $\semiring{S}$ defines a natural partial order $\isrleq{S}$ on its elements: $a \isrleq{S} b \iff a \oplus b = b.$ It is easy to see that all semiring operations are monotone with respect to this partial order, i.e., if $a \isrleq{S} b$, then
$ac \isrleq{S} bc$, $ca \isrleq{S} cb$, and $a \oplus c \isrleq{S} b \oplus c.$

Given an increasing sequence $a: \nat \rightarrow \semiring{S}$ where $a(i) \isrleq{S} a(i + 1)$ for every $i \in \nat$, we write $\sup_{i \in \nat} a(i)$ to denote the supremum of the sequence. Let $a: \nat \rightarrow \semiring{S}$ be an arbitrary sequence. We write $\bigoplus_{i \in \nat} a(i)$ to denote $\sup_{i \in \nat} \left(a(0) \oplus \cdots \oplus a(i)\right)$. Note that $\bigoplus_{i \in \nat} a(i)$ need not exist in $\semiring{S}$. Henceforth, whenever we write $\bigoplus_{i \in \nat} a(i)$, we implicitly assume that it exists.
\newcommand{\semiringrow}[2][]{%
  #2%
  \ifstrempty{#1}{}{%
    \\[-1.75ex]{\footnotesize #1}%
  }%
}
\begin{figure}[t]
\centering
\small
\renewcommand{\arraystretch}{1.4}
\begin{tabular}{@{}p{0.6\columnwidth}p{0.17\columnwidth}p{0.18\columnwidth}@{}}
\toprule
\textbf{Idempotent Semirings: }$\tup{S, \oplus, \otimes, \vf{0}, \vf{1}}$ & $\fixop$-extension & "Pumping semiring"?\\
\midrule
\semiringrow[where $\regex{A}$ is the set of regular languages over $A$]%
  {\textbf{Regular languages:} $\tup{\regex{A}, \cup, \cdot, \emptyset, \{\epsilon\}}$}
  & $x^\fixop \coloneqq x^*$ & \ding{55}\\
\semiringrow[where $\nat_{\pm\unbounded} \coloneqq \nat \cup \{\infty, -\infty\}$]%
{\textbf{Arctic semiring:} $\tup{\nat_{\pm\unbounded}, \max, +, -\unbounded, 0}$}
& $0^{\fixop}, (-\infty)^\fixop \coloneqq 0$,\newline $x^\fixop \coloneqq \infty$ & \ding{51}\\
\semiringrow[where $\nat_{+\unbounded} \coloneqq \nat \cup \{\infty\}$]%
{\textbf{Tropical semiring:} $\tup{\nat_{+\unbounded}, \min, +, \unbounded, 0}$}
& $x^\fixop \coloneqq 0$ & \ding{51}\\
\semiringrow%
  {\textbf{""Boolean semiring""} $\trivialsem$: $\tup{\{\text{true}, \text{false}\}, \vee, \wedge, \text{false}, \text{true}}$}
  & $x^\fixop \coloneqq \text{true}$ & \ding{51}\\
\semiringrow[where $x \uplus y \coloneqq x + y - xy$ is the probabilistic-union]%
  {\textbf{Probabilistic-union semiring:} $\tup{[0, 1] \cup \{-\infty\}, \max, \uplus, -\infty, 0}$}
  & $0^\fixop, (-\infty)^\fixop \coloneqq 0$,\newline $x^\fixop \coloneqq 1$ & \ding{51}\\
\semiringrow[]%
  {\textbf{Bottleneck semiring:} $\tup{\nat_{+\infty}, \max, \min, 0, \infty}$}
  & $x^\fixop \coloneqq \infty$ & \ding{51}\\
\semiringrow[{where $[0, 1]$ is the set of rationals from $0$ to $1$}]%
  {\textbf{Viterbi semiring:} $\tup{[0, 1], \max, \cdot, 0, 1}$}
  & $x^\fixop \coloneqq 1$ & \ding{51}\\
  \semiringrow[where $0 < L < M < H$]%
  {\textbf{Security semiring:} $\tup{\{0, L, M, H\}, \max, \min, 0, H}$}
  & $x^\fixop \coloneqq H$ & \ding{51}\\
  \semiringrow%
  {\textbf{Why semiring:} $\tup{\mathcal{W}(A), +, \cdot, \emptyset, \{\emptyset\}}$}
  & $x^\fixop \coloneqq \{\emptyset\}$ & \ding{51}\\
\bottomrule
\end{tabular}
\caption{Idempotent semirings (left), $\fixop$-operators (middle; cf. \Cref{def:fixop-semiring}), and their "pumping semiring" status (right). Why is the absorptive Why-provenance semiring \cite{green2007provenance}: $\mathcal{W}(A)$ is the set of $\subseteq$-antichains of subsets of a finite set $A$, $X + Y \coloneqq \min(X \cup Y)$, and $X \cdot Y \coloneqq \min\{x \cup y \mid x \in X, y \in Y\}$, where $\min$ keeps $\subseteq$-minimal sets.}
\label{fig:semiring-examples}
\vspace{-8pt}
\end{figure}
In \Cref{fig:semiring-examples}, we provide some examples of "idempotent semirings".

\subsection{Weighted Pushdown Systems}\label{subsec:rcps}
Next, we give an overview of weighted pushdown systems.

\inlsec{Pushdown Systems}
A ""pushdown system"" is a 4-tuple $\mathcal{P} = \tup{Q,A,\Gamma,\trans}$, where $Q$ is a finite set of states, $A$ is a finite input alphabet, $\Gamma$ is a finite stack alphabet, and $\trans$ is a set of transition rules. Transition rules are of the form $\tup{s, \alpha, u,\alpha', s'}$, with $s,s' \in Q$ states, $u \in A \cup \{\epsilon\}$ an input letter, and the pair $\tup{\alpha, \alpha'} \in \left(\{\epsilon\} \times \Gamma\right) \cup \left(\Gamma \times \{\epsilon\}\right) \cup \{\tup{\epsilon, \epsilon}\} \cup \left\{\tup{\beta, \beta} \mid \beta \in \Gamma \cup \{\dashv\}\right\}$ representing the lookup of the top of the stack (or none thereof, if $\alpha = \epsilon$) and the element to be pushed (or none thereof, if $\alpha' = \epsilon$), respectively. The four cases are, in order, a \emph{push}, a \emph{pop}, a step that leaves the stack untouched, and a \emph{test} that the top of the stack is $\beta$. The bottom-of-stack marker may be tested but never pushed or popped, so that a test $\tup{\dashv, \dashv}$ checks that the stack is empty. The \emph{size} of $\mathcal{P}$, written $|\pds|$, is the cumulative size of all its components.

A ""configuration"" $c$ of the "PDS" $\pds$ is a tuple $\tup{s,\gamma}$, where $s\in Q$, and $\gamma \in \Gamma^*\bottom$ is the stack valuation with the stack top being the leftmost label in $\gamma$, and the special symbol $\dashv\,\notin \Gamma$ denoting the bottom of the stack. The set of all "configurations" is the set $\config{\pds} = Q \times \Gamma^*\bottom$. 
We write $st(c)$ to denote the state $s$ within the "configuration".
A \emph{successor} of $c$ is a "configuration" $c' = \tup{s', \gamma'}$ such that there exists a transition $\delta = \tup{s, \alpha, u, \alpha', s'}$, and we have $\gamma' = \alpha'\cdot \gamma''$ where $\gamma = \alpha\cdot\gamma''$. We then write this relation as $c \xreaches{\delta}_{\pds} c'$, and denote the reflexive transitive closure of all such relations by $\reaches_{\pds} \subseteq \config{\mathcal{P}} \times \config{\mathcal{P}}$. Intuitively, each transition checks whether the top of the stack matches $\alpha$ and, after \emph{popping} it, \emph{pushes} $\alpha'$ onto the stack. We drop the subscript $\pds$ from $\xreaches{\delta}_{\pds}$ when the context is clear.

Given two "configurations" $c, c' \in \config{\pds}$, a \emph{run} $\run$ from $c$ to $c'$ is a finite alternating sequence $c = c_0\xreaches{\delta_1}c_1\cdots\xreaches{\delta_n}c_n = c'$ of configurations and transitions. We then say that $c'$ is \emph{reachable} from $c$. The \emph{trace} of this run is the "word" $\delta_1\delta_2\cdots \delta_n \in \trans^*$ of transitions.

\inlsec{Weighted Pushdown Systems}
A ""weighted pushdown system"" is a 3-tuple $\tup{\pds, \semiring{S}, \wt}$, where $\pds = \tup{Q, A, \Gamma, \trans}$ is a "PDS", $\semiring{S} = \tup{S, \oplus, \otimes, \vf{0}, \vf{1}}$ is an "idempotent semiring", and $\wt: \trans \rightarrow S \setminus \{\vf{0}\}$ is a ""weight function"" that assigns a \emph{non-zero} value from $S$ to each transition in $\pds$.

Given two "configurations" $c, c' \in \config{\pds}$, let $\paths(c, c') \coloneqq \{\delta_1\cdots\delta_n \in \trans^* \mid c \xreaches{\delta_1} \cdots \xreaches{\delta_n} c'\}$ be the (possibly infinite) set of traces from $c$ to $c'$. We then define $$\wt\wtsemantics{c, c'} \coloneqq \bigoplus_{\tracevar \in \paths(c, c')} \wt^*(\tracevar).$$ By convention, if $\paths(c, c')$ is empty, then $\wt\wtsemantics{c, c'}$ is defined to be $\vf{0}$. Intuitively, $\wt\wtsemantics{c, c'}$ is the accumulation of the weights of all the runs that go from $c$ to $c'$, thus measuring the weight of reaching $c'$ from $c$. 
Note that the definition of $\wt\wtsemantics{c, c'}$ implicitly assumes its existence. However, a priori, it may not exist. 
Moreover, if $\semiring{S}$ is the "Boolean semiring" $\trivialsem$, then the "wPDS" essentially reduces to a "PDS", and $\wt\wtsemantics{c, c'} = \text{true}$ iff there is a "run" from $c$ to $c'$. 

\begin{example}\label[example]{ex:wpds}
    \sloppy Consider the "wPDS" $\tup{\pds_{\text{ex}}, \tup{\nat_{\pm\unbounded}, \max, +, -\unbounded, 0}, \wt_{\text{ex}}}$ illustrated in \Cref{fig:wpds}. Every trace from the "configuration" $\tup{q_1, \dashv}$ to $\tup{q_2, \dashv}$ is of the form $\tup{q_1, \epsilon, a, \alpha, q_1}^n\tup{q_1, \alpha, a, \epsilon, q_2}\tup{q_2, \alpha, a, \epsilon, q_2}^{n - 1}$. The lifted function $\wt_{\text{ex}}^*$ maps every such trace to a weight of the form $2n + 1$. To compute $\wt_{\text{ex}}\wtsemantics{\tup{q_1, \dashv}, \tup{q_2, \dashv}}$, we aggregate all these weights by taking their maximum, which is the infinite value $\unbounded$. If we instead switch to the tropical semiring, the infinite sum equals its least value, $3$.
    
	\begin{figure}
		\centering
		\subfloat[]{
            \centering
			\label{subfig:ex-wpds}
            \resizebox{0.275\textwidth}{!}{\begin{tikzpicture}
                    [shorten >=1pt,node distance=2.5cm,on grid,>={Stealth[round]},auto,initial text=,
                    every state/.style={draw=red!50,very thick,fill=red!20}]
                    \node (a) at (0, 0) {$\pds_{\text{ex}}:$}; 
                    \node[state]  (q_1)            [below=of a]          {$q_1$};
                    \node[state]          (q_2) [right=of q_1] {$q_2$};
                    
                    \path[->] (q_1) edge [loop above]             node              {$\push{\alpha}: 2$} ()
                                    edge             node {$\pop{\alpha}: 1$} (q_2)
                            (q_2) edge [loop above] node {$\pop{\alpha}: 0$} ();
            \end{tikzpicture}}}\hfill
	\subfloat[]{
		\centering
		\label{subfig:wpds-paths}
		\resizebox{0.675\textwidth}{!}{
            \begin{tikzpicture}
				\node[align=left] {
                    $\wt_{\text{ex}}^*: \paths(\tup{q_1, \dashv}, \tup{q_2, \dashv}) \rightarrow \nat_{\pm\unbounded}$\\[4pt]
                    $\tup{q_1, \epsilon, a, \alpha, q_1}\tup{q_1, \alpha, a, \epsilon, q_2} \mapsto 2 + 1 = 3$\\
                    $(\tup{q_1, \epsilon, a, \alpha, q_1})^2\tup{q_1, \alpha, a, \epsilon, q_2}\tup{q_2, \alpha, a, \epsilon, q_2} \mapsto 2 + 2 + 1 + 0 = 5$\\
                    $(\tup{q_1, \epsilon, a, \alpha, q_1})^3\tup{q_1, \alpha, a, \epsilon, q_2}(\tup{q_2, \alpha, a, \epsilon, q_2})^2 \mapsto 2 + 2 + 2 + 1 + 0 + 0 = 7$\\
                    $\vdots$\\[4pt]
                    $\wt_{\text{ex}}\wtsemantics{\tup{q_1, \dashv}, \tup{q_2, \dashv}} = 3 \oplus 5 \oplus 7 \oplus 9 \oplus \cdots$};
            \end{tikzpicture}
            }}
		\captionsetup{skip=5pt}
		\caption{\label{fig:wpds} (a) An example "wPDS" where the weight associated with every transition in $\pds_{\text{ex}}$ is annotated on its right (separated by `$:$'), and (b) the corresponding weight semantics. Every transition of $\pds_{\text{ex}}$ reads the input letter $a$. }
        \vspace{-13pt}
	\end{figure}
\end{example}

\section{Pumping Semirings and Quantitative Reachability}\label{sec:main-results}
We will now state the main contributions of the \emph{first} part of this paper. Essentially, we obtain a decision procedure for a quantitative variant of the reachability problem for "wPDSs", as introduced in \cite{schwoon2003generalized}, but for a much larger class of semirings. In \Cref{sec:network-verification}, we show that these semirings play a vital role in the quantitative verification of "wPDS" networks. We start by introducing a new class of semirings that form the weight domains for "wPDSs" in our problem.

\begin{definition}\label[definition]{def:fixop-semiring}
    Let $\osem{S}$ be a 6-tuple $\tup{S, \oplus, \otimes, \fixop, \vf{0}, \vf{1}}$, where $\tup{S, \oplus, \otimes, \vf{0}, \vf{1}}$ is an "idempotent semiring", and $\fixop : S \rightarrow S$ (applied as a superscript) is a unary operator. 
    We call $\osem{S}$ a ""pumping semiring"" iff the following conditions are satisfied: 
    \begin{enumerate}
        \item[\emph{""(C1)""}] It holds that $$\bigoplus_{n \in \nat} a^n \otimes b \otimes c^n = a^\fixop \otimes b \otimes c^\fixop $$ for arbitrary $a, c \in S \setminus \{\vf{0}\}$ and $b \in S$, where $x^0 \coloneqq \vf{1}$ and $x^{n + 1} \coloneqq x \otimes x^n$ for $n \in \nat$.
        \item[\emph{""(C2)""}] It holds that: 
        \begin{align*}
            \bigoplus_{i \in \nat} \left(a \otimes f(i) \otimes b\right) &= a \otimes \left(\bigoplus_{i \in \nat} f(i)\right) \otimes b\text{, and }\\
            \bigoplus_{i \in \nat} f(i) &= \bigoplus_{j \in \nat} \left(\bigoplus_{i \in P(j)} f(i)\right),  
        \end{align*}
        for every $a, b \in S$, $f: \nat \rightarrow S$, and partition $P: \nat \rightharpoonup 2^\nat$ of the domain of $f$. 
        \item[\emph{""(C3)""}] The set $\fixop(S) \coloneqq \{x^\fixop \mid x \in S \}$ is finite.
    \end{enumerate}

    Each condition imposes a structural restriction. Condition "(C1)" strengthens the Kleene star operation meant for regular loop summarisation to also include context-free behaviour; in particular, $\fixop$ is forced to coincide with the Kleene star (cf. \Cref{prop:one-omega}). Condition "(C2)" requires $\otimes$ to distribute over the countable sums that exist, and condition "(C3)" requires that $S$ have only finitely many Kleene closures. 
\end{definition}

The operator $\fixop$ is not additional data; the "idempotent semiring" already determines it.

\begin{proposition}\label[proposition]{prop:one-omega}
    Let $\osem{S} = \tup{S, \oplus, \otimes, \fixop, \vf{0}, \vf{1}}$ be a "pumping semiring" with $\vf{0} \neq \vf{1}$. Then $\vf{1}^\fixop = \vf{1}$, and consequently $a^\fixop = \bigoplus_{n \in \nat} a^n$ for every $a \in S \setminus \{\vf{0}\}$.
\end{proposition}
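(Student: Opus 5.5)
The plan is to instantiate (C1) with $a = c = \vf{1}$ and $b = \vf{1}$, which is legitimate because $\vf{0} \neq \vf{1}$ guarantees $\vf{1} \in S \setminus \{\vf{0}\}$. The left-hand side is $\bigoplus_{n \in \nat} \vf{1}^n \otimes \vf{1} \otimes \vf{1}^n = \bigoplus_{n \in \nat} \vf{1}$. By idempotence, every partial sum $\vf{1} \oplus \cdots \oplus \vf{1}$ equals $\vf{1}$, so the supremum of this constant sequence is $\vf{1}$. The right-hand side is $\vf{1}^\fixop \otimes \vf{1} \otimes \vf{1}^\fixop = \vf{1}^\fixop \otimes \vf{1}^\fixop$. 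This gives the identity $\vf{1}^\fixop \otimes \vf{1}^\fixop = \vf{1}$.

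To extract $\vf{1}^\fixop = \vf{1}$ itself, I will use (C1) a second time with $a = c = \vf{1}$, but now with $b$ arbitrary.
\begin{itemize}
\item Taking $b = \vf{0}$ gives only $\vf{0} = \vf{0}$, which is uninformative.
\item The useful choice is $b = \vf{1}^\fixop$. The left-hand side is again a constant sum, namely $\bigoplus_n \vf{1}^\fixop = \vf{1}^\fixop$ by idempotence. The right-hand side is $\vf{1}^\fixop \otimes \vf{1}^\fixop \otimes \vf{1}^\fixop$.
\end{itemize}
Substituting $\vf{1}^\fixop \otimes \vf{1}^\fixop = \vf{1}$ into the right-hand side gives $\vf{1}^\fixop = \vf{1} \otimes \vf{1}^\fixop$. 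This is circular, so a different route is needed.

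The better route is this. For any $b$, (C1) gives $b = \bigoplus_n b = \vf{1}^\fixop \otimes b \otimes \vf{1}^\fixop$. Taking $b = \vf{1}$ gives $e \otimes e = \vf{1}$ with $e \coloneqq \vf{1}^\fixop$, as before. Separately, the order gives $e \isrgeq{S} \vf{1}$. To see this, note that the sum defining the left-hand side of (C1) with $b=\vf{1}$ is a supremum that dominates its $n=0$ term $\vf{1}$; more directly, $e \otimes b \otimes e = b$ is the relation to exploit.

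Since $e \isrgeq{S} \vf{1}$, monotonicity gives $e = e \otimes \vf{1} \isrleq{S} e \otimes e = \vf{1}$. This yields $e \isrleq{S} \vf{1}$, and antisymmetry then gives $e = \vf{1}$. So the key is establishing $\vf{1} \isrleq{S} e$, which I expect to be the main obstacle.

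For that step, I would apply (C1) with $a = \vf{1}$, $b = \vf{1}$, and $c = \vf{1}$ written in a form where the $n=0$ term isolates $b$ while the right-hand side involves $e$ only once. Since both exponents coincide here, I would instead try a general $a$ with $c = \vf{1}$. Then (C1) gives $\bigoplus_n a^n \otimes b = a^\fixop \otimes b \otimes e$. Setting $a = \vf{1}$ yields $b = e \otimes b \otimes e$, which does not help. So instead I obtain the order fact from $e \otimes e = \vf{1}$ together with idempotence.

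Let $s \coloneqq e \oplus \vf{1}$. Then $s \otimes s = e \otimes e \oplus e \oplus e \oplus \vf{1} = \vf{1} \oplus e = s$. Applying the relation $e \otimes b \otimes e = b$ with $b = s$ gives $s = e \otimes s \otimes e$. Expanding the right-hand side gives $e^3 \oplus e^2 = e \oplus \vf{1} = s$, which is consistent but again not decisive.

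If this stalls, I fall back on (C1) with $a = \vf{1}$, $c = \vf{1}$, and $b = e$. This gives $e = e \otimes e \otimes e = e$, which is trivial. At that point I would look for a genuinely non-symmetric instance, possibly comparing $\bigoplus_n \vf{1}$ with the partial sums explicitly as the definition of $\bigoplus$ as a supremum demands.

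Once $e = \vf{1}$ is established, the consequence follows directly from (C1) with $c = \vf{1}$ and $b = \vf{1}$:
\[
\bigoplus_{n \in \nat} a^n = a^\fixop \otimes \vf{1} \otimes \vf{1}^\fixop = a^\fixop
\]
for every $a \neq \vf{0}$.
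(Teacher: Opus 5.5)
\paragraph{Gap: the main claim is never proved.} Write $e \coloneqq \vf{1}^\fixop$. Everything you derive follows from $e \otimes b \otimes e = b$ for all $b$, in particular $e \otimes e = \vf{1}$. The inequality $\vf{1} \isrleq{S} e$ that your closing argument rests on is asserted but never established. Your observation that the sum dominates its $n = 0$ term only gives $\vf{1} \isrleq{S} \bigoplus_{n} \vf{1} = \vf{1}$, and the proposal ends with ``once $e = \vf{1}$ is established''.

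This is not a matter of finding a cleverer manipulation. Every instance of (C1) you exploit has $a = c = \vf{1}$, and those instances cannot force $e = \vf{1}$. Take the parity semiring of Example~\ref{ex:parity} and the element $e = \{1\}$. It satisfies $e + e = \{0\} = \vf{1}$ and $e + b + e = b$ for every $b$, so declaring $\vf{1}^\fixop \coloneqq \{1\}$ validates every instance of (C1) with $a = c = \vf{1}$. Yet $\{1\}$ is incomparable with $\vf{1} = \{0\}$, since $\{1\} \cup \{0\} = \{0,1\}$. Hence neither inequality between $e$ and $\vf{1}$ follows from what you use, and your claim that ``the order gives'' $e \isrgeq{S} \vf{1}$ is false in this model of your premises. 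Any proof must apply (C1) at some $a \neq \vf{1}$.

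\paragraph{The missing idea.} Instantiate (C1) at $a = e$ itself. This is legitimate because $e \otimes e = \vf{1} \neq \vf{0}$ forces $e \neq \vf{0}$. Also use that the powers of $e$ alternate between $\vf{1}$ and $e$.
\begin{enumerate}
\item With $c = b = \vf{1}$, (C1) gives $e^\fixop \otimes e = \bigoplus_{n} e^n = \vf{1} \oplus e$. Right-multiplying by $e$ yields $e^\fixop = \vf{1} \oplus e$.
\item With $c = e$ and $b = \vf{1}$, it gives $e^\fixop \otimes e^\fixop = \bigoplus_{n} e^{2n} = \vf{1}$.
\item Combining, $\vf{1} = (\vf{1} \oplus e) \otimes (\vf{1} \oplus e) = \vf{1} \oplus e$, that is, $e \isrleq{S} \vf{1}$.
\item Then $\vf{1} = e \otimes e \isrleq{S} \vf{1} \otimes e = e$, and antisymmetry gives $e = \vf{1}$.
\end{enumerate}
This is exactly the paper's argument. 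Your own closing monotonicity step works just as well once either inequality is available; the trouble was only in obtaining one. Your derivation of the second claim from $e = \vf{1}$, via (C1) with $b = c = \vf{1}$, is correct and matches the paper.
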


We defer the proof to \Cref{appx-sec:one-omega}. Hence $\fixop$ is uniquely determined on $S \setminus \{\vf{0}\}$, and "(C1)" leaves the value $\vf{0}^\fixop$ unconstrained. This also isolates the actual content of "(C1)": it is a \emph{collapse} law, demanding that the diagonal sum $\bigoplus_{n \in \nat} a^n \otimes b \otimes c^n$, which aggregates \emph{context-free} pumping, be captured by the expression $a^\fixop \otimes b \otimes c^\fixop$. This is exactly what fails for regular expressions, as we see next.

\begin{example}\label[example]{ex:regex-not-pumping}
    In \Cref{fig:semiring-examples} we illustrate several "idempotent semirings" extended to their potential "pumping semiring" counterparts.
    In particular, observe that the Kleene algebra $\tup{\regex{A}, \cup, \cdot, *, \emptyset, \{\epsilon\}}$ is \emph{not} a "pumping semiring". Let $u \in A^+$. Clearly, $\bigoplus_{n \in \nat} u^{2n + 1} \neq (u)^*u(u)^* = u^+$, contradicting "(C1)", and the set $\{(u^n)^* \mid n \in \nat\} \subseteq \regex{A}$ is not finite, contradicting "(C3)".
\end{example}

Even a finite weight domain need not admit a $\fixop$ operator.

\begin{example}\label[example]{ex:parity}
    Consider the \emph{parity} semiring $\semiring{S}_{\mathsf{par}} \coloneqq \tup{2^{\integer_2}, \cup, +, \emptyset, \{0\}}$, whose elements are the subsets of $\integer_2 = \{0, 1\}$, and where $X + Y \coloneqq \{x + y \mid x \in X, y \in Y\}$ lifts addition modulo $2$ to sets. It is a natural weight domain for tracking the parity of the number of times a distinguished set of transitions is taken along a run, and it has four elements, so it trivially satisfies the ascending chain condition. Nevertheless, it admits no $\fixop$ operator. Indeed, by \Cref{prop:one-omega}, such an operator would have to satisfy $\{1\}^\fixop = \bigoplus_{n \in \nat}\{1\}^n = \{0\} \cup \{1\}$, but then instantiating "(C1)" at $a = c = \{1\}$ and $b = \{0\}$ fails:
    $$\bigoplus_{n \in \nat} \left(\{1\}^n + \{0\} + \{1\}^n\right) = \bigoplus_{n \in \nat}\{1\}^{2n} = \{0\} \neq \{0, 1\} = \{1\}^\fixop + \{0\} + \{1\}^\fixop.$$
    The problem is that the sequence $n \mapsto a^n \otimes b \otimes c^n$ can \emph{count} $n$ modulo $2$, which "(C1)" forbids.
\end{example}

\inlsec{Related Semiring Classes}
"Pumping semirings" are closely related to three established classes of semirings with iteration. A \emph{closed semiring} \cite{kozen1990kleene} is an "idempotent semiring" in which every countable sum exists and $\otimes$ distributes over countable sums; its star is $a^* \coloneqq \bigoplus_{n \in \nat} a^n$. An \emph{$\omega$-continuous semiring} \cite{esik2019free}, the weight domain of weighted NetKAT \cite{suarez2026weighted}, likewise admits every countable sum, as the supremum of its partial sums, with both operations preserving these suprema. However, it need not be idempotent. A \emph{$*$-continuous Kleene algebra} \cite{kozen1990kleene} is a Kleene algebra in which $\bigoplus_{n \in \nat} a \otimes b^n \otimes c$ exists and equals $a \otimes b^* \otimes c$, for all $a, b, c$. "Pumping semirings" differ from these classes in two respects. First, we require idempotency, unlike $\omega$-continuous semirings, but only the countable sums demanded by "(C1)" need to exist, unlike closed and $\omega$-continuous semirings. Second, although $\fixop$ coincides with the star $a \mapsto \bigoplus_{n \in \nat} a^n$ (\Cref{prop:one-omega}), the \emph{pumping axiom} "(C1)" additionally requires every diagonal sum $\bigoplus_{n \in \nat} a^n \otimes b \otimes c^n$ to collapse to $a^\fixop \otimes b \otimes c^\fixop$. This fails in all three classes: the argument of \Cref{ex:regex-not-pumping} applies directly to the semiring of all languages over $A$, which is both closed and $\omega$-continuous, and to the $*$-continuous Kleene algebra of regular languages. All our examples satisfy $\vf{0}^\fixop = \vf{1}$, although our techniques do not require it. Under this choice, the "pumping semirings" in which every countable sum exists are exactly the closed semirings satisfying "(C1)" and "(C3)", and every "pumping semiring" is a $*$-continuous Kleene algebra; regular languages show that the latter containment is strict.

\subsection{Quantitative Reachability}
We now state the problem that is our main focus in \Cref{sec:quantitative}:

\begin{problemdef}
    \emph{""\textsc{QuantReach}""}\textbf{:} \\[5pt]
    \textbf{Input:} \\
    A "pumping semiring" $\osem{S}$, a "wPDS" $\tup{\pds, \osem{S}, \wt}$, and two states $p$ and $p'$ in $\pds$.\\
    \textbf{Questions:} 
    \begin{itemize}
        \item Does $\tup{p, \dashv} \reaches_{\pds} \tup{p', \dashv}$ hold?
        \item What is $\wt\wtsemantics{\tup{p, \dashv}, \tup{p', \dashv}}$?
    \end{itemize}
\end{problemdef}

We prove the following statement:
\begin{theorem}\label{thm:quant-problem-computable}
    "\textsc{QuantReach}" is computable.
\end{theorem}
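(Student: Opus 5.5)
The first question is classical: reachability between two configurations of a "PDS" is decidable, for instance by $\mathsf{post}^*$ saturation. For the second, I would follow the standard summary-based approach to "wPDSs" and build the value as a least solution of a system of equations over pairs of states. Call a run \emph{balanced} if it starts and ends with the same stack and never pops below its initial stack. For states $q, q'$, let $W(q,q')$ be the sum of the weights of all balanced runs from $q$ to $q'$. On such runs the bottom marker $\dashv$ may also be tested, so I would track a flag for whether the run is at the bottom; equivalently, I would split into the case where the stack is empty and the case where the stack is nonempty with unknown content. Every balanced run is then either the empty run, a concatenation of two balanced runs, a single $\nop$/$\chk{}$ step, or a triple $\push{\gamma}\cdot(\text{balanced run})\cdot\pop{\gamma}$. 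In the Boolean case this is a context-free grammar, and the weight semantics is its weighted derivation sum.

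The key step is to compute this sum exactly, even though chains in $\osem{S}$ may be infinite. I would use an accelerated saturation, in the spirit of Newton or Kleene iteration with the $\fixop$-operator. Ordinary saturation, which repeatedly adds new combinations, need not terminate. So I would represent each run by a finite \emph{derivation tree} labelled by rules, and argue by a pumping argument: in any sufficiently tall derivation tree some pair $(q,q')$ repeats along a root-to-leaf path. This decomposes the tree into a context $C$ with hole $(q,q')$ and a subtree. Iterating the context $n$ times gives weights $a^n \otimes b \otimes c^n$, where $a$ and $c$ are the left and right weights accumulated by the context and $b$ is the weight of the inner subtree. By "(C1)" the sum over all $n$ equals $a^\fixop \otimes b \otimes c^\fixop$, and by "(C2)" this summation can be performed inside arbitrary surrounding contexts and regrouped along partitions. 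Sums over the whole infinite family of trees therefore reduce to sums over finitely many \emph{skeleton} trees with bounded repetition, each decorated with starred left/right context weights.

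Contexts can nest, and the pumped contexts themselves contain further pumpable structure. To handle this, the algorithm would saturate over symbolic objects: pairs $(q,q')$ together with an element of the finite set $\fixop(S)$, using "(C3)", recording the accelerated ``hot'' parts separately from the concrete ``cold'' weights. Only finitely many such symbolic facts exist, so saturation terminates. The concrete weight is then assembled as a finite $\oplus$ of finite products, computable from the effective operations of $\osem{S}$.

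I expect the main obstacle to be exactness of \emph{repeated} acceleration. Once $a^\fixop$ appears inside a context, pumping that context again yields sums like $\bigoplus_n (a^\fixop x)^n b (y c^\fixop)^n$. I would need these to collapse again to closed forms built from finitely many $\fixop$-values, and each regrouping of the countable sum into such blocks must be justified by the partition clause of "(C2)". My plan is to prove a normal-form lemma stating that every summed weight equals $\bigoplus$ over finitely many terms of the form $u_0 e_1 u_1 \cdots e_k u_k$ with $e_i \in \fixop(S)$. Its proof would go by induction on the nesting of pumped contexts, using \Cref{prop:one-omega} to identify $e^\fixop$ with the star for the idempotence-like collapses.
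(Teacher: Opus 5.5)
Your outline (derivation trees of the $A_{pq}$-grammar, pumping a repeated state pair, collapsing with (C1) and (C2), keeping accelerated hot parts apart from cold weights, finiteness from (C3)) matches the paper's, but the termination step has a genuine gap. A symbolic fact consisting of a pair $(q,q')$ and an element of $\fixop(S)$ does not determine a weight. Once you also carry the cold parts, the facts become words $u_0 e_1 u_1 \cdots e_k u_k$. Their cold factors $u_i$ range over an infinite set of products of transition weights, and you have not bounded their length $k$. Moreover, the product of two hot values need not lie in $\fixop(S)$. So (C3) alone gives no finiteness.

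The paper gets termination from two invariants your plan lacks. First, it saturates over whole trees, and whenever a new root would repeat the state pair of a node $t$ below it, it never stores the taller tree. Steps (V1) and (H1) return the subtree at $t$, and (V2) and (H2) return that subtree with an accelerated root label. Hence no state pair repeats along a root-to-leaf path, heights are at most $|Q|^2$, and labels can only grow in (V3) and (H3), which strictly raise the height. Second, if the label at $t$ already has the form $\hot{l}\cdot x\cdot\hot{r}$, the new label is $(y_1\cdot\hot{l})^\star\cdot x\cdot(\hot{r}\cdot y_2)^\star$. The old boundary hot letters are absorbed rather than nested, so repeated acceleration at a fixed height never lengthens a label. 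A minimal-height argument (\Cref{lem:termination-lemma}) then shows that only finitely many labels occur at each height.

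On exactness, your claim that sums over all derivation trees reduce to finitely many skeletons decorated with starred context weights is asserted, not proved, and it is the core of the theorem. Repetitions overlap and nest, and distinct contexts with the same hole interleave in arbitrary orders. You would therefore need an explicit covering of all derivations by families on which (C1) and (C2) collapse. The paper never builds such a covering and uses a sandwich instead. Property (P2) says every genuine root label is dominated by a computed one (\Cref{lem:p2}, one forward use of (C1) per acceleration). Property (P3) says every computed weight is a countable sum of genuine weights. Together they yield \Cref{lem:correctness}.

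The obstacle you single out is also not the real one: $\bigoplus_n (a^\fixop x)^n b (y c^\fixop)^n$ collapses by (C1) directly whenever these factors are non-zero. The real difficulty is that $a^\fixop$ is not the effect of any single context. After re-accelerating, you must still exhibit genuine trees, with matched left and right iteration counts, whose weights sum to the new label. That is what the hot boundary letters are for. Stripping $\hot{l}$ and $\hot{r}$ yields a tree already present one stage earlier (\Cref{lem:pumping-infix}). The invariants (P3c) and (P3d) of \Cref{lem:p3} then supply families of genuine pumping contexts whose effects sum to $\forget(l)\otimes\_\otimes\forget(r)$, and these are composed with the new context.

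A minor point: for balanced runs containing $\chk{\gamma}$ steps, summaries must be indexed by the top-of-stack symbol in $\Gamma\cup\{\dashv\}$, not just by an empty/non-empty flag. With that fix, your explicit handling of $\nop$ and check steps is more careful than the paper's $v_{\configs}$ and $h_{\configs}$, which only bracket matching push/pop pairs and concatenate.
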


\sloppy Schwoon et al. \cite{schwoon2003generalized} showed that this problem is computable when the underlying "idempotent semiring" satisfies the ascending chain condition (ACC), i.e., it has no infinite strictly ascending chain. This already excludes the arctic "pumping semiring" $\tup{\nat_{\pm\unbounded}, \max, +, \fixop, -\unbounded, 0}$, as well as the semiring of "downward-closed" languages that underpins our network algorithms in \Cref{sec:network-verification}.

Observe that, given a "pumping semiring" $\osem{S} = \tup{S, \oplus, \otimes, \fixop, \vf{0}, \vf{1}}$, a "wPDS" $\wpds$, and two "configurations" $c, c' \in \config{\pds}$, we can effectively construct another "wPDS" $\tup{\pds', \osem{S}, \wt'}$ with additional transitions that initially \emph{populate} the stack, and finally \emph{flush} the stack, each transition annotated with the weight $\vf{1}$, such that computing $\wt\wtsemantics{c, c'}$ reduces to computing $\wt'\wtsemantics{\tup{p, \dashv}, \tup{p', \dashv}}$ for some states $p$ and $p'$ in $\pds'$. This leads to the following corollary: 

\begin{corollary}\label[corollary]{cor:generalised-quant-problem-computable}
    Given a "pumping semiring" $\osem{S}$, a "wPDS" $\tup{\pds, \osem{S}, \wt}$, and two "configurations" $c, c' \in \config{\pds}$, $\wt\wtsemantics{c, c'}$ can be effectively computed. 
\end{corollary}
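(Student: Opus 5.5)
The plan is to make precise the reduction sketched before the statement, and then invoke \Cref{thm:quant-problem-computable} on the resulting instance.

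\textbf{Construction.} Write $c = \tup{s, \gamma_1\cdots\gamma_k\bottom}$ and $c' = \tup{s', \gamma'_1\cdots\gamma'_m\bottom}$, where the $\gamma_i, \gamma'_j \in \Gamma$ are listed from the top. I build $\pds'$ from $\pds$ by adding fresh states $p = u_0, u_1, \ldots, u_k$ and $v_0, v_1, \ldots, v_m, p'$, together with the following fresh transitions, all reading $\epsilon$.
\begin{itemize}
\item Populate: for $i = 1, \ldots, k$, a push $\tup{u_{i-1}, \epsilon, \epsilon, \gamma_{k-i+1}, u_i}$, so the bottom symbol is pushed first. Then a step $\tup{u_k, \epsilon, \epsilon, \epsilon, s}$.
\item Flush: a step $\tup{s', \epsilon, \epsilon, \epsilon, v_0}$. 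Then, for $j = 1, \ldots, m$, a pop $\tup{v_{j-1}, \gamma'_j, \epsilon, \epsilon, v_j}$. Finally an emptiness test $\tup{v_m, \dashv, \epsilon, \dashv, p'}$.
\end{itemize}
Let $\wt'$ extend $\wt$ by assigning $\vf{1}$ to every fresh transition. This is a legal "weight function" as long as $\vf{1} \neq \vf{0}$. In the degenerate case $\vf{0} = \vf{1}$, the semiring has a single element and the answer is trivially that element.

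\textbf{Correctness.} I would show that the map $\tracevar \mapsto \pi\,\tracevar\,\sigma$ is a bijection from $\paths(c, c')$ onto $\paths(\tup{p, \dashv}, \tup{p', \dashv})$ in $\pds'$. Here $\pi$ is the fixed populate sequence and $\sigma$ is the fixed flush sequence.
\begin{itemize}
\item The map is well defined: $\pi$ leads from $\tup{p,\dashv}$ to $c$, and $\sigma$ leads from $c'$ to $\tup{p',\dashv}$.
\item The map is surjective, and decompositions are unique. Any run of $\pds'$ from $\tup{p, \dashv}$ must start with $\pi$, since the $u_i$ form a deterministic chain whose only exit is into $s$. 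From that point it uses only $\pds$-transitions until it first takes the step into $v_0$, and the $v_j$-chain never returns to $\pds$'s states. That chain reaches $p'$ only by popping exactly $\gamma'_1\cdots\gamma'_m$ and then testing $\dashv$. This forces the configuration at the moment of entering $v_0$ to be exactly $c'$.
\item The map is injective, since $\pi$ and $\sigma$ are fixed sequences.
\end{itemize}
Because $\wt'^*(\pi\tracevar\sigma) = \vf{1}\otimes \wt^*(\tracevar) \otimes \vf{1} = \wt^*(\tracevar)$, both families of weights consist of exactly the same values. Hence one sum $\bigoplus$ exists iff the other does, and $\wt\wtsemantics{c,c'} = \wt'\wtsemantics{\tup{p,\dashv},\tup{p',\dashv}}$. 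The case where either path set is empty is covered too, since both values are then $\vf{0}$.

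\textbf{Conclusion and main obstacle.} Since $\pds'$ and $\wt'$ are effectively constructed from the input, \Cref{thm:quant-problem-computable} computes the right-hand side, and hence $\wt\wtsemantics{c, c'}$. The only delicate step is the uniqueness of the decomposition, that is, ruling out spurious runs that enter the flush phase with the wrong stack or interleave fresh transitions with $\pds$-transitions. This is what forces the fresh entry state $v_0$ and the final $\dashv$-test, rather than popping directly from $s'$.
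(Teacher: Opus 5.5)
Your proposal is correct and follows the paper's own argument. The paper gives only the one-sentence sketch preceding the corollary: prepend a populate phase and append a flush phase, all with weight $\vf{1}$, then invoke \Cref{thm:quant-problem-computable}. Your explicit chains, the trace bijection $\tracevar \mapsto \pi\tracevar\sigma$, and the separate treatment of $\vf{0} = \vf{1}$ make that sketch rigorous. One small over-claim: the fresh state $v_0$ is a convenience rather than a necessity, since popping $\gamma'_1$ from $s'$ directly into fresh states works equally well; it is the final $\dashv$-test that actually pins the stack to $\gamma'_1\cdots\gamma'_m\dashv$.
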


\section{Segment Tree Algebras}
In this section, we describe a novel representation of runs in a "wPDS". Intuitively, it gives a compositional counterpart to the sequential definition provided in \Cref{subsec:rcps}, comprising "labelled trees" such that each tree describes a set of runs. Moreover, the "root" label of each tree contains a \emph{summary} of the runs that the tree represents, including their cumulative weight.

The construction is driven by two observations. First, reachability between empty-stack "configurations" of a "PDS" is already captured by the classical construction of a context-free grammar from a pushdown automaton \cite[Lem. 2.15]{sipser1997introduction}, whose non-terminals $A_{pq}$ range over pairs of states, read as ``from $\tup{p, \dashv}$ to $\tup{q, \dashv}$'', and which derives them in exactly two ways: by bracketing a non-terminal between a matching push and pop, or by concatenating two of them. The operations $v$ and $h$ below are precisely these, and our "labelled trees" are the corresponding derivation trees, with each $A_{pq}$ enriched by the cumulative weight of the runs it derives. Second, qualitatively, the non-terminal alone suffices, as there is no pumping to account for and only finitely many pairs of states, whereas recovering pumping behaviour requires the \emph{history} of the derivation, which the tree retains.

\subsection{Segment Tree Algebras}\label{subsec:sta}	
Given sets $S$ and $X$, we define a ""segment tree algebra"" \emph{over $(S, X)$} (\st{$S, X$}, for short) as a tuple $\mathcal{T} = \tup{\mathbb{T}^\bot, v, h, \bot}$ where $\mathbb{T}^\bot$ is shorthand for $\mathbb{T} \cup \{\bot\}$, $\mathbb{T} \subseteq \tree{S}$ is a set of $S$-"labelled trees", $\bot$ is a special symbol, $v: X \times (\mathbb{T}^\bot) \rightarrow {\mathbb{T}^\bot}$, and $h: {\mathbb{T}^\bot} \times {\mathbb{T}^\bot} \rightarrow \mathbb{T}^\bot$ such that:
\begin{gather*}
	v(x, \bot) = \bot\text{, and}\\
	h(LT, \bot) = h(\bot, LT) = \bot
\end{gather*}
for arbitrary $x \in X$, and $LT \in \mathbb{T}^\bot$. We call $\mathcal{T}$ \kl[sta]{finite} if $\mathbb{T}$ is finite. 

An ""ideal"" in $\mathcal{T}$ is a subset $I \subseteq \mathbb{T}^\bot$ such that $\bot \in I$ and, for every $x \in X$ and ${LT}_1, {LT}_2 \in I$, we have $v(x, {LT}_1) \in I$ and $h({LT}_1, {LT}_2) \in I$. Clearly, $\tup{I, v, h, \bot}$ is also a \kl[sta]{segment tree algebra}. Given a set $G \subseteq \mathbb{T}$ of $S$-"labelled trees", let $\stgen{G}{\mathcal{T}} \subseteq \mathbb{T}^\bot$ be the smallest "ideal" in $\mathcal{T}$ containing $G^\bot$; we call $\stgen{G}{\mathcal{T}}$ the "ideal" \kl[sta]{generated} by $G$.
Equivalently, $\stgen{G}{\mathcal{T}}$ is the closure of $G^\bot$ under the operations $v$ and $h$:
\begin{lemma}\label[lemma]{lem:sta-closure}
	Given $G \subseteq \mathbb{T}$, $\stgen{G}{\mathcal{T}} = \bigcup_{i \in \nat} G_i$ such that
	\begin{enumerate}[(a)]
		\item $G_0 \coloneqq G^\bot = G \cup \{\bot\}$,
		\item $G_i \coloneqq G_{i - 1} \cup G_{i, v} \cup G_{i, h}$ where $G_{i, v} \coloneqq \{v(x, {LT}) \mid x \in X, LT \in G_{i - 1}\}$, and $G_{i, h} \coloneqq \{h({LT}_1, {LT}_2) \mid {LT}_1, {LT}_2 \in G_{i - 1}\}$.
	\end{enumerate}
\end{lemma}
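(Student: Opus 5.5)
We prove the equality by double inclusion, writing $G_\omega \coloneqq \bigcup_{i \in \nat} G_i$ and noting first that the chain is increasing: $G_{i-1} \subseteq G_i$ by definition, hence $G_j \subseteq G_i$ whenever $j \leq i$.

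\emph{$G_\omega$ is an ideal containing $G^\bot$.} Clearly $G^\bot = G_0 \subseteq G_\omega$, and in particular $\bot \in G_\omega$. For closure, take $x \in X$ and ${LT}_1, {LT}_2 \in G_\omega$, say ${LT}_1 \in G_{i}$ and ${LT}_2 \in G_{j}$, and let $k \coloneqq \max(i,j)$. By monotonicity of the chain both trees lie in $G_k$. Then $v(x, {LT}_1) \in G_{k+1, v} \subseteq G_{k+1}$ and $h({LT}_1, {LT}_2) \in G_{k+1, h} \subseteq G_{k+1}$. So $G_\omega$ is an ideal containing $G^\bot$. Since $\stgen{G}{\mathcal{T}}$ is the \emph{smallest} such ideal, $\stgen{G}{\mathcal{T}} \subseteq G_\omega$. (Smallest is meaningful because an intersection of ideals is again an ideal, and $\mathbb{T}^\bot$ itself is an ideal.)

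\emph{$G_\omega \subseteq I$ for every ideal $I \supseteq G^\bot$.} We show $G_i \subseteq I$ by induction on $i$. The base case is $G_0 = G^\bot \subseteq I$. For the step, assume $G_{i-1} \subseteq I$. The elements of $G_{i,v}$ and $G_{i,h}$ are obtained by applying $v$ or $h$ to elements of $G_{i-1} \subseteq I$, so they lie in $I$ by its closure properties. Hence $G_i \subseteq I$. Taking $I = \stgen{G}{\mathcal{T}}$ gives $G_\omega \subseteq \stgen{G}{\mathcal{T}}$.

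No step is a real obstacle. The only care needed is the monotonicity of the chain in the first part, which lets two elements from different stages be placed in a common stage before applying $h$.
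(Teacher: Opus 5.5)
Your proof is correct and takes essentially the paper's route: the paper proves $\bigcup_{i \in \nat} G_i \subseteq \stgen{G}{\mathcal{T}}$ by taking a least index $i$ with an element of $G_i$ outside the ideal and deriving a contradiction from closure under $v$ and $h$, which is your induction on $i$ in contrapositive form. The paper's contradiction hypothesis $\stgen{G}{\mathcal{T}} \subsetneq \bigcup_{i \in \nat} G_i$ silently assumes the reverse inclusion and skips the base case $i = 0$, whereas you check both explicitly: you show $\bigcup_{i \in \nat} G_i$ is an ideal containing $G^\bot$, using monotonicity of the chain for $h$, and you treat $G_0$ directly, so your version is the more complete one.
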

We defer the proof to \Cref{appx-sec:sta-closure}.

\inlsec{Contexts}
We define $\intro[grcon]{\textsc{GrCon}_\mathcal{T}}$ to be the set of \emph{ground} functions of one of the following forms: $v(x, \_): LT \mapsto v(x, LT)$, $h(LT', \_): LT \mapsto h(LT', LT)$, or $h(\_, LT'): LT \mapsto h(LT, LT')$. A ""context"" in $\mathcal{T}$ is a function in the monoid $\context{\mathcal{T}}$ "generated" by $\gc{\mathcal{T}}$ under function composition ($\circ$). In particular, we define the "context" $\identity$ as the identity function. We then have the following important property, which essentially states that an "ideal" is closed under "context" applications:

\begin{lemma}\label[lemma]{lem:context-closure}
	Given an "ideal" $I$ in $\mathcal{T} = \tup{\mathbb{T}^\bot, v, h, \bot}$, let $\mathcal{I} = \tup{I, v, h, \bot}$. Then for any "context" $f \in \context{\mathcal{I}}$ in $\mathcal{I}$, and $LT \in I$, we have $f(LT) \in I$.
\end{lemma}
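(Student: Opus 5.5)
The proof goes by induction on the structure of $f$ as an element of the monoid $\context{\mathcal{I}}$ "generated" by $\gc{\mathcal{I}}$ under composition. Since this monoid is the closure of $\{\identity\} \cup \gc{\mathcal{I}}$ under $\circ$, every "context" $f$ can be written as a finite composition $f = g_1 \circ g_2 \circ \cdots \circ g_k$ with each $g_j \in \gc{\mathcal{I}}$, where $k = 0$ gives $f = \identity$. I would induct on $k$.

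For the base case $k = 0$, we have $f(LT) = LT \in I$ trivially. The main step is to show that every ground function $g \in \gc{\mathcal{I}}$ maps $I$ into $I$. There are three cases.
\begin{itemize}
\item If $g = v(x, \_)$ with $x \in X$, then $g(LT) = v(x, LT) \in I$, because $I$ is an "ideal" and $LT \in I$.
\item If $g = h(LT', \_)$, then $LT' \in I$, since ground functions of $\mathcal{I}$ are built from elements of its carrier $I$. Hence $g(LT) = h(LT', LT) \in I$ by closure of the "ideal" under $h$.
\item The case $g = h(\_, LT')$ is symmetric.
\end{itemize}

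For the inductive step, write $f = g_1 \circ f'$, where $f'$ is a composition of $k-1$ ground functions. By the induction hypothesis $f'(LT) \in I$, and by the ground case $g_1(f'(LT)) \in I$.

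The only point needing care is to justify that every element of $\context{\mathcal{I}}$ really is such a finite composition, i.e. that the generated monoid equals the set of finite words over the generators under composition. This is standard: that set contains $\identity$ and the generators, is closed under $\circ$ by associativity, and is contained in any submonoid containing the generators. An equivalent route avoids this: observe that the set $M = \{f \mid f(I) \subseteq I\}$ is a submonoid under composition containing $\identity$ and, by the ground case, all of $\gc{\mathcal{I}}$. Minimality of the generated monoid then gives $\context{\mathcal{I}} \subseteq M$, which is exactly the claim. I would present this second formulation, since it sidesteps the word-representation bookkeeping.
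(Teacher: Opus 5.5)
Your proposal is correct and takes essentially the same approach as the paper: write $f$ as a finite composition of ground functions and induct on the number of factors, using that $I$ is closed under $v$ and $h$. Your alternative formulation, via the submonoid of maps sending $I$ into $I$ together with minimality of the generated monoid, is the same induction stated differently.
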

\begin{proof}
	Since $f$ is a finite composition of elements of $\gc{\mathcal{I}}$, it has the form $f = f_1 \circ \cdots \circ f_k$ for $f_i \in \gc{\mathcal{I}}$. The proof is by induction on $k$, using the definition of "ideals".
\end{proof}

Intuitively, we will use "contexts" to capture pumping behaviours in "wPDS" runs. In the following subsections, we will describe how \kl[sta]{segment tree algebras} form a useful class of objects for "wPDSs".

\subsection{"wPDS" vs. Segment Tree Algebras}\label{subsec:versus}
In this subsection, we will construct a \emph{canonical} \kl[sta]{segment tree algebra} that captures the runs of a "wPDS". We fix an arbitrary "wPDS" $\tup{\pds, \semiring{S}, \wt}$ where $\pds = \tup{Q, A, \Gamma, \trans}$, and $\semiring{S} = \tup{S, \oplus, \otimes, \vf{0}, \vf{1}}$ for the rest of this paper, unless specified otherwise. Consider the set $Q \times Q \cup \{\bot_Q\}$ with the special element $\bot_Q$ that denotes an undefined element, and the semigroup $\mathcal{Q}$ over this set, given by

\begin{equation*}
    \begin{array}{ccc}
        \tup{p_1, q_1} \bot_Q \coloneqq \bot_Q \text{, } &
        \bot_Q \tup{p_1, q_1} \coloneqq \bot_Q \text{, and } & 
        \tup{p_1, q_1} \tup{p_2, q_2} \coloneqq 
            \begin{cases}
                \tup{p_1, q_2} & \text{if } q_1 = p_2\\
                \bot_Q & \text{otherwise,}
            \end{cases}
    \end{array}
\end{equation*}
for arbitrary $p_1, q_1, p_2, q_2 \in Q$. We denote by $\configs$ the product semigroup $\mathcal{Q} \times \tup{S, \otimes}$. Intuitively, a tuple $\tup{\tup{p, q}, a} \in \configs$ can be read as ``There is a run from $c = \tup{p, \dashv}$ to $c' = \tup{q, \dashv}$ in $\pds$ such that $\wt\wtsemantics{c, c'} = a$.'' For notational convenience, we abuse notation and write $st(\tup{\tup{p, q}, a})$ for the tuple $\tup{p, q}$, and $\rootweight{\tup{\tup{p, q}, a}}$ for the ""root weight"" $a$.

Let us first define the \emph{ambient} \kl[sta]{\st{$\configs, \trans^2$}} $\freest{\configs} \coloneqq \tup{\tree{\configs}^\bot, v_{\configs}, h_{\configs}, \bot}$ such that for $\delta_1 = \tup{p_1, \alpha_{11}, u_1, \alpha_{12}, q_1}, \delta_2 = \tup{p_2, \alpha_{21}, u_2, \alpha_{22}, q_2}\in \trans$, and $\tup{T_1, \sigma_1}, \tup{T_2, \sigma_2} \in \tree{\configs}$ the following holds for $v_{\configs}$:
\begin{enumerate}
	\item \sloppy if $\tup{\tup{p_1, q_1}, \wt(\delta_1)} \sigma_1(\epsilon) \tup{\tup{p_2, q_2}, \wt(\delta_2)} = \tup{\tup{p_1, q_2}, a}$, $\alpha_{11} = \alpha_{22} = \epsilon$, and $\alpha_{12} = \alpha_{21} \in \Gamma$ then $v_{\configs}(\tup{\delta_1, \delta_2}, \tup{T_1, \sigma_1}) \coloneqq \tup{T', \sigma'}$, where $T' = 1 \cdot T_1 \cup \{\epsilon\}$, $\sigma'(\epsilon) = \tup{\tup{p_1, q_2}, a}$, and $\sigma'(1 \cdot w) = \sigma_1(w)$, 
	\item otherwise, $v_{\configs}(\tup{\delta_1, \delta_2}, \tup{T_1, \sigma_1}) \coloneqq \bot$;
\end{enumerate}
and the following holds for $h_{\configs}$:
\begin{enumerate}
	\item if $\sigma_1(\epsilon) \sigma_2(\epsilon) = \tup{\tup{p, q}, a}$ then $h_{\configs}(\tup{T_1, \sigma_1}, \tup{T_2, \sigma_2}) \coloneqq \tup{T', \sigma'}$, where $T' = \{\epsilon\} \cup 1 \cdot T_1 \cup 2 \cdot T_2$, $\sigma'(\epsilon) = \tup{\tup{p, q}, a}$, $\sigma'(1 \cdot w) = \sigma_1(w)$, and $\sigma'(2 \cdot w) = \sigma_2(w)$,
	\item otherwise, $h_{\configs}(\tup{T_1, \sigma_1}, \tup{T_2, \sigma_2}) \coloneqq \bot$.
\end{enumerate}
We illustrate both these operations in \Cref{fig:h-v-op}. The following lemma follows by induction.

\begin{lemma}
    Given an arbitrary "context" $f \in \context{\freest{\configs}}$, we can compute $\tup{b, c} \in \semiring{S}^2$ such that for any two trees $\tup{T, \sigma}, \tup{T', \sigma'} \in \tree{\configs}$ if $\tup{T', \sigma'} = f(\tup{T, \sigma})$, then $\rootweight{\sigma'(\epsilon)} = b\rootweight{\sigma(\epsilon)}c$.
\end{lemma}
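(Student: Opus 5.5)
We argue by induction on the length $k$ of a decomposition $f = f_1 \circ \cdots \circ f_k$ of the "context" $f$ into ground functions $f_i \in \gc{\freest{\configs}}$. Such a decomposition exists by the definition of $\context{\freest{\configs}}$. The pair $\tup{b, c}$ will be computed from this decomposition. For $k = 0$ we have $f = \identity$, and we take $\tup{b, c} = \tup{\vf{1}, \vf{1}}$. In that case $\tup{T', \sigma'} = \tup{T, \sigma}$ and the equation holds trivially.

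The core of the argument is the case of a single ground function. We treat its three forms separately, reading off the root label from the definitions of $v_{\configs}$ and $h_{\configs}$.
\begin{itemize}
\item For $f = v_{\configs}(\tup{\delta_1, \delta_2}, \_)$, take $\tup{b, c} = \tup{\wt(\delta_1), \wt(\delta_2)}$. If the output is a tree rather than $\bot$, it was produced by case (1). The root label is then the product in $\configs = \mathcal{Q} \times \tup{S, \otimes}$ of $\tup{\tup{p_1,q_1},\wt(\delta_1)}$, $\sigma(\epsilon)$ and $\tup{\tup{p_2,q_2},\wt(\delta_2)}$. Its semiring component is $\wt(\delta_1)\rootweight{\sigma(\epsilon)}\wt(\delta_2)$, because the product semigroup acts componentwise.
\item For $f = h_{\configs}(LT', \_)$ with $LT' = \tup{T_0, \sigma_0}$, take $\tup{b, c} = \tup{\rootweight{\sigma_0(\epsilon)}, \vf{1}}$. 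A non-$\bot$ output has root weight $\rootweight{\sigma_0(\epsilon)}\rootweight{\sigma(\epsilon)}$.
\item For $f = h_{\configs}(\_, LT')$, symmetrically take $\tup{b, c} = \tup{\vf{1}, \rootweight{\sigma_0(\epsilon)}}$.
\end{itemize}
In each case $b$ and $c$ depend only on $f$ and not on the argument tree. They are computable, since $\wt$ is given and the root label of the fixed tree $LT'$ can be read off; a ground function presented as $h_{\configs}(LT', \_)$ carries $LT'$ explicitly.

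For the inductive step write $f = g \circ f_k$, where $g$ is a composition of $k-1$ ground functions. Let $\tup{b_g, c_g}$ be the pair obtained for $g$ by induction and $\tup{b_k, c_k}$ the pair for $f_k$ from the base case. We set $\tup{b, c} = \tup{b_g b_k, c_k c_g}$. Suppose $f(\tup{T,\sigma}) = \tup{T',\sigma'}$ is a tree. Then $f_k(\tup{T,\sigma})$ must itself be a tree, because every operation sends $\bot$ to $\bot$. Applying the induction hypothesis to $g$ and then the base case to $f_k$ gives $\rootweight{\sigma'(\epsilon)} = b_g\, b_k \rootweight{\sigma(\epsilon)} c_k\, c_g$, using associativity of $\otimes$.

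The only delicate point is the handling of $\bot$: the statement is vacuous whenever some intermediate result is $\bot$. This is exactly why the base cases only need to cover the defined branch. We also note that the same function $f$ may admit several decompositions; the lemma only asks that \emph{some} valid pair be computed, so fixing one decomposition suffices.
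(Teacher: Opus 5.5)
Your proof is correct and takes essentially the approach the paper indicates ("follows by induction"). You induct on the length of a decomposition $f = f_1 \circ \cdots \circ f_k$ into ground functions, read off each ground function's effect from the root labels produced by $v_{\configs}$ and $h_{\configs}$, and compose effects as $\tup{b_g b_k, c_k c_g}$. The ground functions $h_{\configs}(\bot, \_)$ and $h_{\configs}(\_, \bot)$, which you do not treat explicitly, are constantly $\bot$, so the claim is vacuous for them and your argument covers them as well.
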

Intuitively, this means that every "context" $f$ has an ""effect"" of the form $\tup{b, c}$ independent of the trees that it is applied to, and this "effect" can be effectively computed. We then write $\effect{f} = \tup{b, c}$. A "context" $f$ is ""pumping"" iff $\tup{T, \sigma} = f(\tup{T', \sigma'}) \implies st(\sigma(\epsilon)) = st(\sigma'(\epsilon))$ for arbitrary $\tup{T, \sigma}, \tup{T', \sigma'} \in \tree{\configs}$. It is easy to show that compositions of "pumping" "contexts" are "pumping".

\begin{figure}
		\subfloat[]{
		\centering
		\resizebox{0.425\textwidth}{!}{\includegraphics[width=\textwidth, page=1]{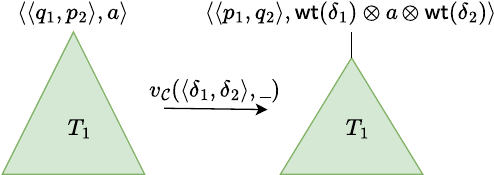}}}
		\subfloat[]{
		\centering
		\resizebox{0.525\textwidth}{!}{\includegraphics[width=\textwidth, page=1]{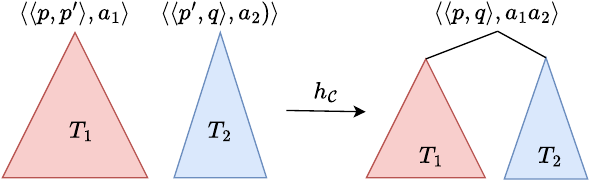}}}
		\caption{\label{fig:h-v-op}Illustrations of $v_{\configs}$ and $h_{\configs}$. Let $\delta_1 = \tup{p_1, \epsilon, u_1, \alpha_{12}, q_1}, \delta_2 = \tup{p_2, \alpha_{21}, u_2, \epsilon, q_2} \in \trans$, where $\alpha_{12} = \alpha_{21}$.}
        \vspace{-10pt}
\end{figure}

\inlsec{Capturing "wPDS"}
We will now formally define a \kl[sta]{segment tree algebra} that characterises the reachability relation in $\pds$. Consider the set
\begin{align*}
	\init &\coloneqq \{\tup{\{\epsilon\}, \sigma} \mid p \in Q, \sigma(\epsilon) = \tup{\tup{p, p}, \vf{1}} \}.
\end{align*}
The \kl[sta]{segment tree algebra} of $\wpds$ is the tuple $\tup{\stgen{\textsc{Init}}{\freest{\configs}}, v_{\configs}, h_{\configs}, \bot}$ where $\stgen{\textsc{Init}}{\freest{\configs}}$ is the "ideal" \kl[sta]{generated} by $\textsc{Init}$ in $\freest{\configs}$. Intuitively, the set $\textsc{Init}$ captures the "reachability" relation due to the "empty" runs, along with their weights, which default to $\vf{1}$.

Recall from \Cref{lem:sta-closure} that the set $\stgen{\textsc{Init}}{\freest{\configs}}$ is of the form $\bigcup_{i \in \nat} G_i$ where $G_i \subseteq \tree{\configs}^\bot$. Let us fix this representation for the rest of this paper. 
Below, we state the main result of this section. 

\begin{theorem}\label{cor:sta-equivalence-weights}
    Given states $p, q \in Q$, let $X = \{a \in \semiring{S} \mid \tup{T, \sigma} \in \cinit, \sigma(\epsilon) = \tup{\tup{p, q}, a} \}$. Then  $$\wt\wtsemantics{\tup{p, \dashv}, \tup{q, \dashv}} = \bigoplus_{a \in X} a$$ if $X$ is non-empty, and $\wt\wtsemantics{\tup{p, \dashv}, \tup{q, \dashv}} = \vf{0}$ otherwise.
\end{theorem}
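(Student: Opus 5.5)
The plan is to establish a weight-preserving correspondence between trees in $\cinit$ with root state pair $\tup{p,q}$ and traces in $\paths(\tup{p,\dashv},\tup{q,\dashv})$, and then regroup the infinite sum along this correspondence.

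Concretely, I will define a map $\mathsf{tr}$ from non-$\bot$ trees in $\cinit$ to words in $\trans^*$ by structural recursion, following the generation in \Cref{lem:sta-closure}. A tree in $\textsc{Init}$ with root $\tup{\tup{p,p},\vf{1}}$ is sent to $\epsilon$. A tree $v_{\configs}(\tup{\delta_1,\delta_2},LT)$ is sent to $\delta_1\,\mathsf{tr}(LT)\,\delta_2$. A tree $h_{\configs}(LT_1,LT_2)$ is sent to $\mathsf{tr}(LT_1)\mathsf{tr}(LT_2)$. This map is well defined because the tree shape records which operation produced it: a single child means $v$, two children mean $h$, and a leaf means $\textsc{Init}$. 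For the $v$ case, however, the pair $\tup{\delta_1,\delta_2}$ is not recoverable from the labels alone. I will therefore either argue via a multiset/relational version, or work with the set of derivations rather than the set of trees. This gives the sum over $X$ as a sum over derivations, which is harmless because idempotency lets repeated values collapse.

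Claim 1 (soundness) is proved by induction on $i$ with $LT\in G_i$. If $\sigma(\epsilon)=\tup{\tup{p,q},a}$, then $\mathsf{tr}(LT)\in\paths(\tup{p,\dashv},\tup{q,\dashv})$ and $\wt^*(\mathsf{tr}(LT))=a$. For the $v$ case, the conditions $\alpha_{11}=\alpha_{22}=\epsilon$ and $\alpha_{12}=\alpha_{21}\in\Gamma$ say that $\delta_1$ pushes a symbol $\gamma$ and $\delta_2$ pops it. A run of the inner trace from empty stack to empty stack lifts to a run from stack $\gamma\dashv$ to $\gamma\dashv$, since every transition either ignores the stack below its lookup or tests $\dashv$. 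I need to be careful here: a test $\tup{\dashv,\dashv}$ inside the bracket would fail on $\gamma\dashv$. So the lifting lemma must be stated for traces that never test $\dashv$, or I must check that the paper intends tests of $\dashv$ to be handled only at top level. This subtlety is the place I expect to need care. Weights multiply correctly by the semigroup $\configs$. The $h$ case is simply concatenation of runs.

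Claim 2 (completeness) states that every trace $\tau\in\paths(\tup{p,\dashv},\tup{q,\dashv})$ equals $\mathsf{tr}(LT)$ for some tree with root $\tup{\tup{p,q},\wt^*(\tau)}$. I will prove it by strong induction on $|\tau|$, using the standard Sipser-style decomposition. If $\tau$ is empty, take the $\textsc{Init}$ tree. If the stack returns to $\dashv$ at some intermediate point, split $\tau$ there and apply $h$. A single stack-neutral transition (a nop or a check) is handled as an $h$ of smaller pieces; since $h$ can only combine, I will need the base case of one-transition runs of nop/test type. This looks like a gap, because $\textsc{Init}$ only yields $\tup{p,p}$. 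I will look for how such transitions are encoded, possibly via $v$ applied with a pair whose push/pop annotation is trivial, and if necessary treat it as a required convention. Otherwise, the first transition is a push that is matched only by the last transition being a pop of the same symbol, and $v$ applies to the inner segment.

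Finally, combining the two claims, the images of trees with root pair $\tup{p,q}$ are exactly the weights of the traces in $\paths$. The sets of summands therefore coincide, so $\bigoplus_{\tau}\wt^*(\tau)=\bigoplus_{a\in X}a$ by idempotency and the rearrangement property of sums (the second part of "(C2)", or directly the definition of the sum as a supremum of finite partial sums, since both sides are suprema of the same set). If $X$ is empty, $\paths$ is empty and the value is $\vf{0}$ by convention. The main obstacle I expect is the completeness decomposition, namely the handling of stack-neutral and $\dashv$-test transitions within the tree operations.
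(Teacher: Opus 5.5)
Your route is the paper's. It proves \Cref{thm:sta-equivalence} (a root label $\tup{\tup{p,q},a}$ occurs in the ideal iff some $\tracevar \in \paths(\tup{p,\dashv},\tup{q,\dashv})$ has $\wt^*(\tracevar)=a$) by induction on the generation index in one direction, and by induction on $|\tracevar|$ with the same split-or-bracket decomposition in the other. It then notes that both sums range over the same set of values. Proving only this existential equivalence makes your detour through derivations (to make $\mathsf{tr}$ well defined) unnecessary. For the final step, use your direct supremum argument rather than (C2), which is not assumed for the semiring of this section.

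Your first worry dissolves. $v_{\configs}$ is defined only for a push $\delta_1$ and a pop $\delta_2$ of the same symbol, and $\textsc{Init}$ and $h_{\configs}$ contribute no transitions. So every word your construction produces is a well-matched word of pushes and pops. Such a word never inspects the stack below its starting height, and therefore lifts from $\dashv$ to $\gamma\dashv$. The paper leans on $\paths(\tup{p',\dashv},\tup{q',\dashv}) \subseteq \paths(\tup{p',\alpha\dashv},\tup{q',\alpha\dashv})$, which is false once $\dashv$-tests exist, but it only needs this for such words.

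Your second worry is a genuine gap that cannot be closed, because the statement fails for the paper's general transition model. There is no trivial-annotation escape: $v_{\configs}$ explicitly requires $\alpha_{11}=\alpha_{22}=\epsilon$ and $\alpha_{12}=\alpha_{21}\in\Gamma$.

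Take a wPDS whose only transition is a nop $\delta=\tup{p,\epsilon,u,\epsilon,q}$ with $p\neq q$. Then $\wt\wtsemantics{\tup{p,\dashv},\tup{q,\dashv}}=\wt(\delta)\neq\vf{0}$. On the other hand, $v_{\configs}$ always returns $\bot$, and $h_{\configs}$ only recombines roots of the form $\tup{\tup{r,r},\vf{1}}$. So $X=\emptyset$ and the theorem predicts $\vf{0}$. A single $\dashv$-test behaves the same way.

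Checks also break your bracket case. Take a push of $\alpha$ from $p$ to $p'$, a $\chk{\alpha}$ from $p'$ to $q'$, and a pop of $\alpha$ from $q'$ to $q$. The inner segment of the only trace is not a run from $\tup{p',\dashv}$ at all, so the induction hypothesis is unavailable. Again $X=\emptyset$ although a trace exists, so over the Boolean semiring the theorem fails.

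The paper's $(\Leftarrow)$ direction has exactly this hole: its case analysis silently assumes that every transition is a push or a pop. Your fallback of treating this as a convention is the correct repair, but it must be stated as an explicit hypothesis. Alternatively, establish it by a weight-preserving normalisation beforehand:
\begin{itemize}
\item expand each nop into a push/pop pair, and each $\Gamma$-check into a pop/push pair, through a fresh state, with the original weight on the first step and $\vf{1}$ on the second;
\item remove $\dashv$-tests by recording in the control state whether the stack is empty, marking the bottom-most stack symbol.
\end{itemize}
Under that hypothesis your plan is complete and coincides with the paper's.
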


Essentially, we state that in order to confirm the existence of a witness run from $\tup{p, \dashv}$ to $\tup{q, \dashv}$ with the weight $a$, it is necessary and sufficient to find the tuple $\tup{\tup{p, q}, a}$ in the "ideal" $\stgen{\textsc{Init}}{\freest{\configs}}$ of the "segment tree algebra" of $\wpds$. The proof follows as a corollary of the following statement:

\begin{lemma}\label[lemma]{thm:sta-equivalence}
	There exists a $\configs$-"labelled tree" $\tup{T, \sigma} \in \stgen{\textsc{Init}}{\freest{\configs}}$ with $\sigma(\epsilon) = \tup{\tup{p, q}, a}$ if and only if there exists a sequence $\tracevar \in \paths(\tup{p, \dashv}, \tup{q, \dashv})$ such that $\wt^*(\tracevar) = a$.
\end{lemma}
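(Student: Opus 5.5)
Both directions go by induction, following the classical correspondence between empty-stack runs of a "PDS" and derivations of the grammar with non-terminals $A_{pq}$.

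\emph{($\Rightarrow$)} We induct on the least $i$ with $\tup{T, \sigma} \in G_i$, using the representation of \Cref{lem:sta-closure}, and strengthen the claim to all non-$\bot$ trees in $G_i$.
\begin{itemize}
\item For $i = 0$, the tree lies in $\textsc{Init}$, so $\sigma(\epsilon) = \tup{\tup{p, p}, \vf{1}}$, which is witnessed by the empty trace $\epsilon \in \paths(\tup{p, \dashv}, \tup{p, \dashv})$ with $\wt^*(\epsilon) = \vf{1}$.
\item If the tree is $h_{\configs}(LT_1, LT_2)$ with root labels $\tup{\tup{p, r}, a_1}$ and $\tup{\tup{r, q}, a_2}$, the induction hypothesis gives traces $\tracevar_1, \tracevar_2$ with these weights. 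Their concatenation $\tracevar_1\tracevar_2$ is a trace from $\tup{p, \dashv}$ to $\tup{q, \dashv}$ with weight $a_1 a_2$. The product in $\configs$ is exactly $\tup{\tup{p, q}, a_1 a_2}$, so this matches the root label.
\item If the tree is $v_{\configs}(\tup{\delta_1, \delta_2}, LT_1)$, then by the side conditions $\delta_1$ is a push of some $\gamma \in \Gamma$ from $p$ to $q_1$, and $\delta_2$ is a pop of $\gamma$ from $p_2$ to $q$. The root of $LT_1$ must be $\tup{\tup{q_1, p_2}, a_1}$, since otherwise the $\mathcal{Q}$-product is $\bot_Q$. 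Induction gives a trace $\tracevar_1$ from $\tup{q_1, \dashv}$ to $\tup{p_2, \dashv}$, and we output $\delta_1 \tracevar_1 \delta_2$.
\end{itemize}
The key auxiliary fact in the last case is a stack-lifting lemma. If $\tracevar$ leads from $\tup{s, \dashv}$ to $\tup{s', \dashv}$, then for every $\gamma' \in \Gamma^*$ the same trace leads from $\tup{s, \gamma'\dashv}$ to $\tup{s', \gamma'\dashv}$. This holds because every transition inspects at most the top symbol, and the run never descends below its starting height. The only transition that could behave differently is a $\chk{\dashv}$ test, which is the delicate point discussed below.

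\emph{($\Leftarrow$)} We induct on the length of $\tracevar$, strengthened to: for every $\gamma'$, every trace from $\tup{p, \gamma'\dashv}$ to $\tup{q, \gamma'\dashv}$ that never pops below $\gamma'$ is represented. The empty trace yields an $\textsc{Init}$ tree. Otherwise, let $\delta$ be the first transition and split into two cases.
\begin{itemize}
\item \textbf{$\delta$ is a push.} We take the first later position where the stack height returns to $|\gamma'|$. The transition there is a pop $\delta'$ of the same symbol, so $\tracevar = \delta\,\tracevar_1\,\delta'\,\tracevar_2$, where $\tracevar_1$ and $\tracevar_2$ are shorter, height-respecting traces. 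Induction gives trees $LT_1$ and $LT_2$. The tree $h_{\configs}(v_{\configs}(\tup{\delta, \delta'}, LT_1), LT_2)$ then has the right root, by associativity of $\otimes$ and the ideal property; if $\tracevar_2$ is empty we use its $\textsc{Init}$ tree.
\item \textbf{$\delta$ is a $\nop$ or a test.} The stack is unchanged by $\delta$, and it must be handled by a separate base tree.
\end{itemize}

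This second case is the main obstacle. As literally defined, $v_{\configs}$ only accepts push/pop pairs, and $\textsc{Init}$ only contains identity pairs. Single $\nop$ and test transitions therefore have to enter through some convention. I would first check whether the intended encoding treats such a $\delta$ as a push/pop pair: for instance, by assuming a normal form in which every $\nop$ or test step is simulated by a push of a fresh symbol followed by its pop, with the weight placed on one of them and $\vf{1}$ on the other. This preprocessing preserves $\wt\wtsemantics{\cdot,\cdot}$. Tests such as $\chk{\dashv}$ additionally depend on the actual stack. These tests break stack-lifting, so the ground contexts must be restricted to those whose tests are consistent with an empty surrounding stack. I would handle this by recording, in the normal form, the tested top symbol in the state. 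I expect making this normalisation precise, and verifying that it commutes with the inductions above, to be the real work; the rest is routine bookkeeping with the semigroup $\configs$.
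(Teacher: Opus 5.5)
Your ($\Rightarrow$) direction and the push case of ($\Leftarrow$) are correct, and they essentially coincide with the paper's proof. The paper first splits $\tracevar$ at an intermediate empty-stack configuration via $h_{\configs}$ and otherwise peels off an outer push/pop pair via $v_{\configs}$; your first-return decomposition $\delta\,\tracevar_1\,\delta'\,\tracevar_2$ does both at once. In ($\Rightarrow$) your $\chk{\dashv}$ caveat is unnecessary. Traces read off a tree are built from $\epsilon$ by concatenation and by wrapping in push/pop pairs, so they contain only pushes and pops, and stack-lifting holds for them.

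The case you leave open is not bookkeeping, and the definitions contain no hidden encoding that rescues it. Since $v_{\configs}$ is defined only on push/pop pairs and $\textsc{Init}$ contains only labels $\tup{\tup{s, s}, \vf{1}}$, the ($\Leftarrow$) direction is false.
\begin{itemize}
\item Take $p \neq q$ and a single transition $\delta = \tup{p, \epsilon, u, \epsilon, q}$. Then $\paths(\tup{p, \dashv}, \tup{q, \dashv}) = \{\delta\}$, but $v_{\configs}$ never fires and $h_{\configs}$ only recombines labels $\tup{\tup{s, s}, \vf{1}}$. So no tree in $\cinit$ has root states $\tup{p, q}$.
\item Tests fail even inside a block. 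Take a push of $\alpha$ from $p$ to $p'$, a test $\tup{p', \alpha, u, \alpha, q'}$, and a pop of $\alpha$ from $q'$ to $q$, with all four states distinct. The three-step trace exists. Yet $v_{\configs}$ could only fire on a subtree with root states $\tup{p', q'}$, i.e.\ one representing a run from $\tup{p', \dashv}$ to $\tup{q', \dashv}$, and none can be generated. So every tree has root states of the form $\tup{s, s}$, and your strengthened hypothesis for arbitrary $\gamma'$ is false here.
\end{itemize}
The paper's proof has exactly this hole. Its final case asserts that a run without intermediate empty-stack configurations has the form $\delta_1\tracevar'\delta_2$ with $\delta_1$ a push and $\delta_2$ a pop; this overlooks length-one runs consisting of a $\nop$ or a $\chk{\dashv}$. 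It then places the inner run in $\paths(\tup{p', \dashv}, \tup{q', \dashv})$, which is invalid when that run tests $\alpha$.

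Consequently, your plan to verify that a normalisation commutes with the inductions cannot prove the lemma for the given $\pds$. Normalising changes $Q$, $\trans$, $\configs$ and the ambient algebra, and the lemma about the original algebra stays false. Simply adding one-node base trees for $\nop$ and $\chk{\dashv}$ steps to $\textsc{Init}$ does not help either: $v_{\configs}$ could then wrap a $\chk{\dashv}$ under a push, breaking ($\Rightarrow$), and tests of symbols in $\Gamma$ would still be missing.

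What you can prove has two parts:
\begin{itemize}
\item The lemma under the extra hypothesis that every transition of $\pds$ is a push or a pop. There your argument is complete.
\item A separate reduction showing this hypothesis is without loss of generality for $\wt\wtsemantics{\tup{p, \dashv}, \tup{q, \dashv}}$, which is all \Cref{thm:quant-problem-computable} needs.
\end{itemize}
Your sketch supplies that reduction once made precise. First, store the current top stack symbol in the control state, annotating each pushed symbol with the one beneath it so that a pop can restore the stored top; this turns every test into a state-guarded $\nop$. Then replace each $\nop$ by a push and a pop of a fresh symbol through a fresh intermediate state, with weights $\wt(\delta)$ and $\vf{1}$. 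Mapping each simulated step back to its original transition, and each fresh push/pop pair back to the original $\nop$ or test, gives a weight-preserving bijection between the empty-stack traces of the two systems. Hence the aggregated weights agree.
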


Thus, we obtain a \emph{dual} representation of the standard notion of derivation trees in context-free grammars, but fit for inductive reasoning. Due to space limitations, we defer the proof to \Cref{appx-sec:sta-equivalence}. \Cref{fig:wpds-run-equivalence} gives a visualisation.

\begin{figure}
		\centering
		\resizebox{0.8\textwidth}{!}{\includegraphics[width=\textwidth, page=1]{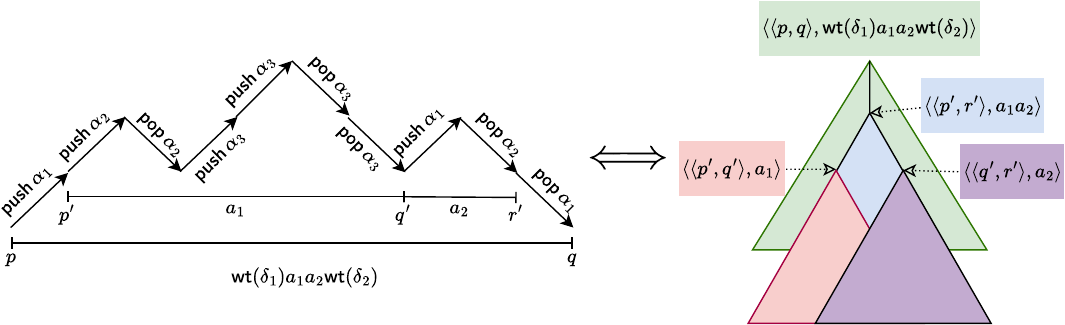}}
		\caption{\label{fig:wpds-run-equivalence}Left: a run from $\tup{p, \dashv}$ to $\tup{q, \dashv}$, with push (upward) and pop (downward) arrows, decomposed as $\tup{p, \dashv} \xreaches{\push{\alpha_1}} \tup{p', \dashv} \reaches \tup{r', \dashv} \xreaches{\pop{\alpha_1}} \tup{q, \dashv}$. Right: the corresponding tree, whose "root" $\tup{\tup{p, q}, \wt(\delta_1)a_1a_2\wt(\delta_2)}$ has a single "child" $\tup{\tup{p', r'}, a_1a_2}$; the \textcolor{red}{red} and \textcolor{purple}{purple} subtrees model the further split of the $\tup{p', \dashv}$-to-$\tup{r', \dashv}$ run at $\tup{q', \dashv}$ into weights $a_1$ and $a_2$.}
        \vspace{-12pt}
\end{figure}

\section{An  Algorithm for Quantitative Reachability}\label{sec:quantitative}
In this section, we leverage "segment tree algebras" to give an algorithm for $"\textsc{QuantReach}"$. 
We start with an overview of our approach.

\inlsec{A Na\"ive Approach}
By \Cref{thm:sta-equivalence} and \Cref{cor:sta-equivalence-weights}, it suffices to first compute the set of all $\configs$-"labelled trees" in the "ideal" $\cinit$, and project each of its elements onto the corresponding "root" label. This procedure hinges crucially on computing the "ideal" $\cinit$ in a representation that allows us to effectively compute the projection. 

Recall from \Cref{subsec:versus} that $\cinit$ is of the form $\bigcup_{i \in \nat} G_i$. One may try to iteratively construct the increasing sequence $G_0 \subseteq G_1 \subseteq \ldots$ as defined in \Cref{lem:sta-closure} until it reaches a fixpoint; if it did, this would yield an algorithm for "\textsc{QuantReach}".

If $G_{i + 1} = G_{i}$ holds for some $i \in \nat$, then it is clear that $G_{i + 1} = \cinit$. Unfortunately, we will never reach such a fixed point. By definition, both the functions $v_{\configs}$ and $h_{\configs}$ necessarily increment the "height" of the input "labelled tree". This means that for every $i \in \nat$, there is a $\configs$-"labelled tree" in $G_{i + 1}$ not present in $G_{i}$. Hence, this approach will not work.

Let us define the partial order $\tup{\tup{p, q}, a} \isrleq{\configs} \tup{\tup{p', q'}, a'}$ iff $\tup{p, q} = \tup{p', q'}$ and $a \isrleq{S} a'$. A better fixpoint strategy would then be to compare the "root" labels of the "labelled trees" in $G_i$. It is sufficient to show that for some $i \in \nat$, $$\forall \tup{T', \sigma'} \in G_{i + 1}. \exists \tup{T, \sigma} \in G_i. \sigma'(\epsilon) \isrleq{\configs} \sigma(\epsilon).$$ By monotonicity of our operations w.r.t. $\isrleq{\configs}$, the above property also holds when $i$ is replaced by an arbitrary $j > i$. Intuitively, we have computed all the $\isrleq{\configs}$-maximal elements by iteration $i$ and, by \Cref{cor:sta-equivalence-weights}, have sufficient elements to compute "\textsc{QuantReach}". However, by definition, "pumping semirings", and thus $\configs$, may have infinite ascending chains. Hence, a fixpoint may not exist.

\inlsec{Solution}
We overcome the problem stated above by \emph{iteratively} computing another increasing sequence $J_0 \subseteq J_1 \subseteq \ldots$ of finite sets of $\configs$-"labelled trees", for which we can prove the following:
\begin{enumerate}
	\item[\emph{""(P1)""}] there exists $i \in \nat$ such that $J_{i + 1} = J_i$,
	\item[\emph{""(P2)""}] for every $i \in \nat$ and $\tup{T, \sigma} \in G_i$, there exists $\tup{T', \sigma'} \in J_i$ such that $\sigma(\epsilon) \isrleq{\configs} \sigma'(\epsilon)$,
	\item [\emph{""(P3)""}] for every $i \in \nat$ and $\tup{T', \sigma'} \in J_i$, where $\sigma'(\epsilon) = \tup{\tup{p, q}, a'}$, there exists a countable subset $$X \subseteq \{\rootweight{\sigma(\epsilon)} \mid \tup{T, \sigma} \in \cinit, st(\sigma(\epsilon)) = st(\sigma'(\epsilon))\}$$ such that $\bigoplus_{a \in X} a = a'.$
\end{enumerate}
Property "(P1)" ensures the termination of the procedure. 
Property "(P2)" ensures that, by computing the limit of the $J_i$'s, we at least compute upper bounds, w.r.t. $\isrleq{\configs}$, of the "root" labels in $\cinit$. Property "(P3)" ensures that these upper bounds are not too large. Every computed weight is a sum of weights of trees in $\cinit$. Together, they ensure that the \emph{least} upper bounds are captured. If $J_{i_0}$ is the fixed point of the sequence, then "(P2)" and "(P3)" yield:

\begin{lemma}\label[lemma]{lem:correctness}
    Given $p, q \in Q$, define the sets
    \begin{align*}
        X &= \{\rootweight{\sigma(\epsilon)} \in \semiring{S} \mid \tup{T, \sigma} \in \cinit, st(\sigma(\epsilon)) = \tup{p, q}\}\\
        X_0 &= \{\rootweight{\sigma'(\epsilon)} \in \semiring{S} \mid \tup{T', \sigma'} \in J_{i_0}, st(\sigma'(\epsilon)) = \tup{p, q}\}
    \end{align*}
    Then it holds that $$\bigoplus_{a \in X} a = \bigoplus_{a_0 \in X_0} a_0.$$
\end{lemma}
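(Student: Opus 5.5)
We establish the equality by proving the two inequalities $\bigoplus_{a \in X} a \isrleq{S} \bigoplus_{a_0 \in X_0} a_0$ and the converse, using antisymmetry of the natural order $\isrleq{S}$. Throughout we assume, as the paper does, that the sums in question exist. Two degenerate cases are handled first. If $X$ is empty, then by "(P3)" no tree in $J_{i_0}$ can carry state pair $\tup{p,q}$, since the countable set $X$ witnessing its root weight would have to be drawn from the empty set and would sum to $\vf{0}$. Weights are non-zero and $\vf{0}$-roots never arise from $\textsc{Init}$, so $X_0$ is empty too, and both sides are $\vf{0}$ by convention. Symmetrically, if $X_0$ is empty then "(P2)" forces $X$ to be empty. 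We may therefore assume both sets are non-empty. Note also that $X_0$ is finite, since each $J_i$ is a finite set of trees.

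For $\bigoplus X \isrleq{S} \bigoplus X_0$, we first recall from \Cref{lem:sta-closure} that $\cinit = \bigcup_i G_i$. Hence every $a \in X$ is the root weight of some tree in some $G_i$. By "(P2)" there is a tree in $J_i$ whose root label dominates it in $\isrleq{\configs}$; in particular it has the same state pair $\tup{p,q}$ and a larger weight. Since the sequence $(J_i)$ is increasing and stabilises at $i_0$ by "(P1)", we have $J_i \subseteq J_{i_0}$ when $i \leq i_0$, and $J_i = J_{i_0}$ otherwise. Thus every $a \in X$ lies below some element of the finite set $X_0$, and so below $s_0 \coloneqq \bigoplus X_0$. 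It follows that every partial sum of $X$ is below $s_0$. The supremum of these partial sums, which by assumption exists and equals $\bigoplus X$, is therefore below $s_0$.

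For the converse, fix $a_0 \in X_0$. By "(P3)" there is a countable $Y \subseteq X$ with $\bigoplus Y = a_0$. We must show $\bigoplus Y \isrleq{S} \bigoplus X$. Every partial sum of $Y$ is a finite sum of elements of $X$, hence below some partial sum of $X$ (using idempotency to absorb repetitions), and hence below $\bigoplus X$. Taking suprema gives $a_0 \isrleq{S} \bigoplus X$. Since $X_0$ is finite, summing over $a_0 \in X_0$ yields $\bigoplus X_0 \isrleq{S} \bigoplus X$.

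The step I expect to need the most care is the bookkeeping with infinite sums. Here $\bigoplus$ is defined only for sequences, as the supremum of partial sums, so for a countable set we must fix an enumeration. We then need to check that the value does not depend on that enumeration, and that comparing suprema of partial sums behaves as argued above. Both facts follow directly from the definition of $\bigoplus$ as a supremum together with idempotency and monotonicity, and the second clause of "(C2)" also covers reindexing if it is needed. No appeal to "(C1)" is required at this point, since all the pumping content is packaged into "(P2)" and "(P3)".
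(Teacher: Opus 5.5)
Your proposal is correct and follows the argument the paper intends. The paper only asserts that the lemma follows from (P2), which gives $\bigoplus_{a \in X} a \isrleq{S} \bigoplus_{a_0 \in X_0} a_0$, and (P3), which gives the converse; you supply the routine bookkeeping with suprema of partial sums.

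Two minor remarks. First, your degenerate-case claim that roots of weight $\vf{0}$ never arise is not justified in semirings with zero divisors. The case split is dispensable anyway: empty sums are $\vf{0}$, so both of your inequalities go through unchanged. Second, if you run your two arguments against an arbitrary upper bound of $X$ instead of $\bigoplus_{a \in X} a$, they show that $\bigoplus_{a_0 \in X_0} a_0$ is the least upper bound of $X$. So the existence of $\bigoplus_{a \in X} a$ need not be assumed.
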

By \Cref{cor:sta-equivalence-weights}, in order to compute $\wt\wtsemantics{\tup{p, \dashv}, \tup{q, \dashv}}$, it suffices to compute $\bigoplus_{a_0 \in X_0} a_0$. In the rest of this section, we will obtain a fixpoint algorithm that computes $J_{i_0}$.

\inlsec{Overview} The key idea of our approach is to accelerate the "effects" of a pumping "context". 
Consider the simple acceleration scenario in \Cref{fig:accel-simple}.

\begin{figure}[H]
    \centering
    \resizebox{0.7\textwidth}{!}{\includegraphics[width=\textwidth, page=1]{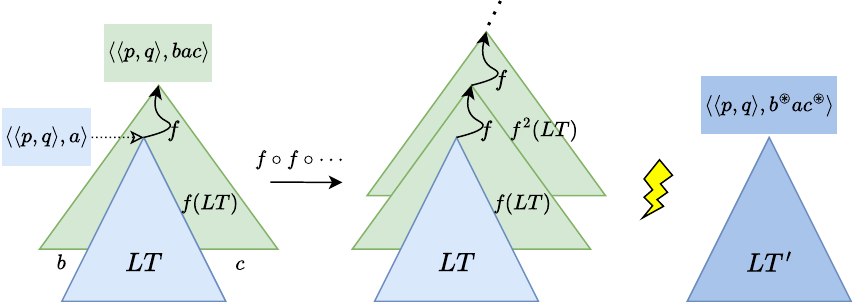}}
    \caption{\label{fig:accel-simple}Accelerating a pumping "context".}
\end{figure}

On the left, we observe a pumping "context" $f$, with the "effect" $\tup{b, c}$ (illustrated along the bottom contour of the tree $f(LT)$) applied to the tree $LT$. The application transforms the "root" label $\tup{\tup{p, q}, a}$ of $LT$ into the "root" label $\tup{\tup{p, q}, bac}$ of $f(LT)$. In the middle, we illustrate the result of composing multiple iterations of $f$ on $LT$. By \Cref{thm:sta-equivalence}, each fresh application of $f$ to $f^i(LT)$ corresponds to another tree in the "ideal" $\cinit$, with the "root" label $\tup{\tup{p, q}, b^{i + 1}ac^{i + 1}}$. By "(C1)", the sum of the weight components of all these labels is $b^\fixop ac^\fixop$. Hence, we can instead accelerate the infinite sequence of pumped trees into the single tree illustrated on the right of \Cref{fig:accel-simple}, where we just replace the "root" label of $LT$ by $\tup{\tup{p, q}, b^\fixop a c^\fixop}$. Instead of simply computing the left-most tree, we compute $LT'$ in our iterative step. However, acceleration is not always so straightforward. Consider the acceleration scenario in \Cref{fig:accel-recovery}.

\begin{figure}[H]
    \centering
    \resizebox{0.7\textwidth}{!}{\includegraphics[width=\textwidth, page=1]{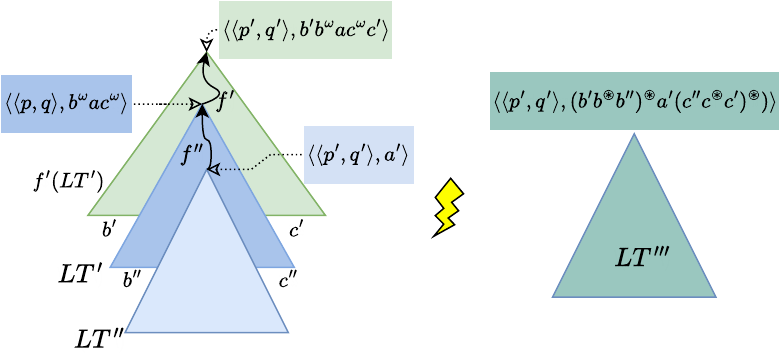}}
    \caption{\label{fig:accel-recovery}Recovering the pumping "effect" from an accelerated tree.}
\end{figure}

On the left, we have a case of pumping where the "context" $f'$ is applied to our \emph{already pumped} tree $LT'$, producing a tree with the "root" label $\tup{\tup{p', q'}, b'b^\fixop a c^\fixop c'}$. Now suppose the tree $LT'$ additionally has a subtree $LT''$ with the "root" label $\tup{\tup{p', q'}, a'}$. As we saw earlier, the existence of the tree $LT'$ implies that there is a sequence $\identity, f, f \circ f, \ldots$ of pumping "contexts" which cumulatively creates the "effect" $\tup{b^\fixop , c^\fixop}$ on some tree that is \emph{structurally} similar to $LT$. Moreover, since acceleration leaves the \emph{non-trivial} subtrees untouched, we should be able to infer a "context" $f''$ (as illustrated) with the effect $\tup{b'', c''}$, such that $a = b''a'c''$. In particular, it may be that $b^\fixop a c^\fixop \neq b'' a' c''$, since the "root" label of $LT'$ was synthetically mutated. Composing each "context" in this sequence with $f'$ on the left and $f''$ on the right will give us yet another sequence of pumping "contexts" with the cumulative "effect" $\tup{b'b^\fixop b'', c''c^\fixop c'}$, as illustrated in \Cref{fig:accel-sequence}.
\begin{figure}[H]
    \centering
    \resizebox{0.8\textwidth}{!}{\includegraphics[width=\textwidth, page=1]{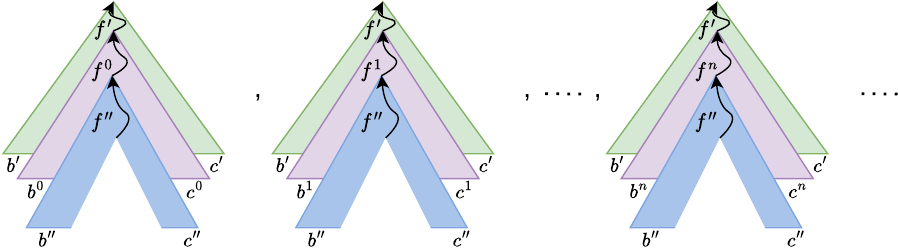}}
    \caption{\label{fig:accel-sequence}Composing the recovered "contexts". Here $f^n$ denotes $n$ compositions of the "context" $f$.}
\end{figure}

Since all "contexts" in this sequence are pumping "contexts", they can be composed with each other to generate larger pumping "contexts". Hence, we may construct another sequence generated by composing every pair of "contexts" in the above sequence. Putting the illustrated sequence together with our newly constructed sequence, we obtain the cumulative "effect" $\tup{(b'b^\fixop b'')^2, (c''c^\fixop c')^2}$. Repeating this process, where we compose the elements from the $i^{th}$ sequence with those of the illustrated sequence and then take the union of the $i^{th}$ sequence with the new sequence, generates a cumulative "effect" of $\tup{(b'b^\fixop b'')^{i + 1}, (c''c^\fixop c')^{i + 1}}$. 
Aggregating these, we obtain a \emph{countable} set of "contexts", with subsets having cumulative "effects" of the form $\tup{(b'b^\fixop b'')^i, (c''c^\fixop c')^i}$, for every $i \in \nat$. Applying each of these to the tree $LT''$ and summing the resulting "root" weights, we should \emph{at least} generate the weight of the "root" node in the accelerated tree $LT'''$, illustrated on the right. 
\vspace{-0.2775em}

\inlsec{Challenge}
Such a construction presents an important problem. How do we recover the pumping effect $\tup{b^\fixop , c^\fixop}$ from the accelerated tree $LT'$? A priori, this is not always possible. We solve this problem by \emph{syntactically} separating the pumping effect $\tup{b^\fixop , c^\fixop}$. To do this, we introduce \emph{thermal extensions} of semirings. Figuratively, we represent the accelerated weights (of the form $a^\fixop$) as \emph{\hotw{hot}} expressions, and others as \emph{\coldw{cold}} expressions (hence our terminology). \coldw{Cold} expressions do not merge (under $\otimes$) with any other elements, but \hotw{hot} expressions merge with other \hotw{hot} expressions.

In \Cref{subsec:thermal-extension}, we formalise the notion of thermal extensions. In \Cref{subsec:computing-quant-reach}, we construct an auxiliary sequence, following the earlier discussion. 
We then argue that our construction is correct and terminating (cf. \Cref{sec:proof-of-correctness}).

\subsection{Thermal Extensions}\label{subsec:thermal-extension}
As mentioned earlier, in order to make combinatorial arguments about our construction, we need a richer syntax for expressions in the "pumping semirings". In this subsection, we introduce the notion of \emph{thermal extensions} of "pumping semirings". Recall our fixed semiring $\osem{S}= \tup{S, \oplus, \otimes, \fixop, \vf{0}, \vf{1}}$.

The ""thermal extension"" of $\osem{S}$, written as $\tsem{S}$, is the set of expressions defined by the following grammar, where $\hot{a}$ (resp. $\cold{a}$) ranges over the \hotw{hot}-sorted (resp. \coldw{cold}-sorted) set $\hot{S}$ (resp. $\cold{S}$):
$$\tsem{S} \coloneqq \stopgap \mid \hot{a} \mid \cold{a} \mid \tsem{S} \cdot \tsem{S} \mid \tsem{S}^\star,$$ where $\stopgap$ is a special symbol not in $S$. Intuitively, expressions over $\hot{S}$ denote ""hot"" expressions, and those over $\cold{S}$ denote ""cold"" expressions. The ""size"" of an expression is defined by:
\begin{equation*}
    \begin{array}{ccc}
        "size"(\stopgap) = "size"(\cold{a}) = "size"(\hot{a}) = 0, &
        "size"(\mix{x} \cdot \mix{y}) = 1 + "size"(\mix{x}) + "size"(\mix{y}), &
        "size"(\mix{x}^\star) = 1 + "size"(\mix{x}),
    \end{array} 
\end{equation*}
where $\mix{x} \in \tsem{S}$. We define the forgetful map $\forget: \tsem{S} \rightarrow S$ such that
\begin{equation*}
    \begin{array}{ccccc}
        \forget(\stopgap) = \vf{1}, &
        \forget(\hot{a}) = a, &
        \forget(\cold{a}) = a, &
        \forget(\mix{x} \cdot \mix{y}) = \forget(\mix{x}) \otimes \forget(\mix{y}), &
        \forget(\mix{x}^\star) = \forget(\mix{x})^\fixop
    \end{array}
\end{equation*}
where $a \in S$, and $\mix{x}, \mix{y} \in \tsem{S}$. Intuitively, $\forget(\mix{x})$ evaluates $x$ as per the semantics of $\osem{S}$.

Now consider the equivalence relation $\equiv$ defined as follows:
\begin{equation}\label{eqn:normal-form}
    \begin{array}{ccccc}
        \stopgap \cdot \stopgap \equiv \stopgap, &
        \stopgap \cdot \cold{x} \equiv \cold{x} \equiv \cold{x} \cdot \stopgap, &
        \mix{x} \cdot (\mix{y} \cdot \mix{z}) \equiv (\mix{x} \cdot \mix{y}) \cdot \mix{z}, &
        \hot{x} \cdot \hot{y} \equiv \hot{\forget(x \cdot y)}, &
        \mix{x}^\star \equiv \hot{\forget(x)^\fixop}.
    \end{array}
\end{equation}
Note that the relation imposes associativity on the concatenation operator ``$\cdot$''. While $\stopgap$ acts as the identity for $\cold{S}$, it acts as a separator for elements in $\hot{S}$. Moreover, two elements can never merge under the equivalence if both are in $\cold{S}$, or if one is in $\cold{S}$ and the other in $\hot{S}$. We say that an expression $\mix{x}$ is in ""normal"" form iff for every $\mix{y} \in \tsem{S}$ it holds that $$\mix{x} \equiv \mix{y} \implies "size"(\mix{x}) \leq "size"(\mix{y}).$$ The following lemma follows by structural induction:

\begin{lemma}
    If $\mix{x} \in \tsem{S}$ is in "normal" form, then $\mix{x}$ does not contain a subexpression of the form $\mix{x'}^\star$.
\end{lemma}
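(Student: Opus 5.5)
We argue by contraposition, directly from the definition of "normal" form. Suppose $\mix{x}$ contains a subexpression occurrence of the form $\mix{x'}^\star$. We replace that occurrence by $\hot{\forget(x')^\fixop}$ and obtain an expression $\mix{y}$. We then show two things: first that $\mix{x} \equiv \mix{y}$, and second that $"size"(\mix{y}) < "size"(\mix{x})$. Together these contradict normality of $\mix{x}$.

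For the first claim, $\mix{x'}^\star \equiv \hot{\forget(x')^\fixop}$ is an instance of the last rule in (\ref{eqn:normal-form}). The point requiring care is that $\equiv$ should be a congruence for $\cdot$ and $\star$, so that a local rewrite inside a context yields an equivalent whole expression. I read $\equiv$ as the congruence generated by the listed identities, which is the standard reading; associativity is stated as a rule precisely so that it can be applied inside larger terms.

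Under that reading, I would prove the following by structural induction on a one-hole context $C[\_]$ built from $\cdot$ and $\star$: if $\mix{u} \equiv \mix{v}$, then $C[\mix{u}] \equiv C[\mix{v}]$. The base case is the empty context. The inductive cases $C'[\_] \cdot \mix{z}$, $\mix{z} \cdot C'[\_]$ and $C'[\_]^\star$ follow by the congruence closure. This yields $\mix{x} \equiv \mix{y}$.

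For the second claim, we induct on the same context and show $"size"(C[\mix{u}]) - "size"(C[\mix{v}]) = "size"(\mix{u}) - "size"(\mix{v})$. This holds because "size" is defined additively: each constructor contributes a constant plus the sizes of its arguments.

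Now take $\mix{u} = \mix{x'}^\star$ and $\mix{v} = \hot{\forget(x')^\fixop}$. Then $"size"(\mix{u}) = 1 + "size"(\mix{x'}) \geq 1$, while $"size"(\mix{v}) = 0$. Hence $"size"(\mix{y}) \leq "size"(\mix{x}) - 1 < "size"(\mix{x})$, which contradicts the normality of $\mix{x}$.

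The only real obstacle is the congruence issue. If $\equiv$ were meant only as the listed identities applied at the top level, the statement would fail for nested stars. So I would state explicitly that $\equiv$ denotes the least congruence containing (\ref{eqn:normal-form}), and derive the context-substitution lemma from that.

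A subtlety is the case where the occurrence of $\mix{x'}^\star$ sits under an outer star. The rewrite is still valid there, because the context may itself contain $\star$ and the induction covers it. Well-sortedness is also not an issue: $\hot{\forget(x')^\fixop}$ is a legitimate hot atom, since $\forget(x')^\fixop \in S$.
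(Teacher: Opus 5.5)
Your proof is correct and is essentially the structural induction the paper invokes, unrolled. Your context-substitution lemma, together with the additivity of size, does the work of the inductive step (subexpressions of a normal expression are again normal), and the base contradiction comes from the star rule strictly lowering size. You are also right to state explicitly that $\equiv$ must be read as the congruence generated by the listed identities: the paper leaves this implicit, but its worked example relies on it when it rewrites a starred subexpression inside a larger product.
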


Due to the associativity of ``$\cdot$'', we can thus represent every "normal" expression as a "word" in $(\hot{S} \cup \cold{S} \cup \{\stopgap\})^+$, and the "size" $"size"(\mix{x})$ of an expression $\mix{x}$ is equal to $|\mix{x}| - 1$. By induction, we prove:

\begin{lemma}\label[lemma]{lem:thermal-extension-word-form}
    An expression $\mix{x} \in \tsem{S}$ is in "normal" form iff $\mix{x}$ has an equivalent representation in $\left(\cold{S} \cup \hot{S} \cup \{\stopgap\}\right)^+$ such that the following conditions hold:
    \begin{enumerate}[(i)]
        \item no two neighbouring positions in $\mix{x}$ are labelled in $\{\stopgap\}$ and $\cold{S}$,
        \item no two neighbouring positions in $\mix{x}$ are labelled in $\hot{S}$, and
        \item no two neighbouring positions in $\mix{x}$ are labelled `$\stopgap$'.
    \end{enumerate}
\end{lemma}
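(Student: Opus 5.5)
Since every rule in \eqref{eqn:normal-form} is a two-sided congruence step, and the rule $\mix{x}^\star \equiv \hot{\forget(x)^\fixop}$ together with the preceding lemma lets us eliminate every $\star$, the plan is to work with flat words over $\Sigma \coloneqq \cold{S} \cup \hot{S} \cup \{\stopgap\}$. Associativity identifies each $\star$-free expression with a word $w \in \Sigma^+$, and its "size" is $|w| - 1$. Hence being in "normal" form amounts to being a word of \emph{minimal length} in its $\equiv$-class. The class of a word is obtained by closing under the three length-changing word rewrites: $\stopgap\stopgap \leftrightarrow \stopgap$, $\stopgap\cold{x} \leftrightarrow \cold{x} \leftrightarrow \cold{x}\stopgap$, and $\hot{x}\hot{y} \leftrightarrow \hot{\forget(x\cdot y)}$. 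We may also pass through $\star$-expressions, but such detours can always be replaced by hot letters of the same value, since $\mix{x}^\star$ and $\hot{\forget(x)^\fixop}$ are interchangeable as subterms.

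For the forward direction, suppose $\mix{x}$ is in "normal" form. By the preceding lemma it is $\star$-free, hence a word. If it violated (i), (ii) or (iii), then the corresponding contracting rewrite ($\stopgap\cold{a} \to \cold{a}$, $\cold{a}\stopgap \to \cold{a}$, $\hot{a}\hot{b} \to \hot{\forget(a\cdot b)}$, or $\stopgap\stopgap \to \stopgap$) would produce an equivalent expression of strictly smaller "size", contradicting normality. This direction is immediate.

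For the converse, I would orient every rule as a length-decreasing rewrite system $R$ on $\Sigma^+$ and show two facts. First, $R$ terminates, since length strictly decreases. Second, $R$ is confluent. Confluence follows from Newman's lemma once local confluence is checked on critical overlaps. The relevant overlaps are $\stopgap\stopgap\stopgap$, $\stopgap\stopgap\cold{a}$, $\stopgap\cold{a}\stopgap$, $\cold{a}\stopgap\stopgap$, and $\hot{a}\hot{b}\hot{c}$. The hot triple joins by associativity of $\otimes$ in $S$, because $\forget$ is a homomorphism. The others join trivially to $\stopgap$, $\cold{a}$, or $\cold{a}$ respectively. Note that no rule has a left-hand side mixing $\hot{}$ with $\cold{}$ or with $\stopgap$, so there are no further overlaps.

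From termination and confluence, each $\equiv$-class of words (the equivalence closure of $R$) has a unique $R$-irreducible element. The irreducible words are exactly those satisfying (i)--(iii). Every word in the class rewrites to this element without increasing length, so the irreducible word has minimal length. Hence a word satisfying (i)--(iii) is in "normal" form. It also follows that every $\mix{y}$ in the class, including those containing $\star$, has "size" at least that of the irreducible word: first eliminate each $\star$ (which never increases size), then reduce with $R$.

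The main obstacle I expect is this last justification. I need to confirm that $\equiv$, defined on arbitrary expressions including $\star$, restricts on $\star$-free words to exactly the equivalence closure of $R$ plus associativity, with no extra identifications introduced via $\star$. I would handle it by defining an invariant $\mathrm{nf}(\mix{y})$: replace each outermost $\mix{z}^\star$ by $\hot{\forget(z)^\fixop}$, flatten, and $R$-normalise. I would then check rule by rule that each generating equation preserves $\mathrm{nf}$. For the hot-merge rule this uses $\forget(\hot{x}\cdot\hot{y}) = \forget(x)\otimes\forget(y)$. For the $\star$ rule it holds by construction. Since a word satisfying (i)--(iii) is its own $\mathrm{nf}$, any equivalent $\mix{y}$ has "size" at least $|\mathrm{nf}(\mix{y})| - 1 = "size"(\mix{x})$, because flattening preserves size, replacing a $\star$ strictly lowers it, and $R$-steps strictly lower it.
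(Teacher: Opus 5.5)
Your proof is correct, but it takes a different and more explicit route than the paper, which only asserts the lemma ``by induction'' and gives no further detail.

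Instead of inducting on expressions, you orient the defining equations as a length-decreasing string rewriting system $R$ on $(\cold{S}\cup\hot{S}\cup\{\stopgap\})^+$. You get unique irreducible words from termination and local confluence via Newman's lemma, and you lift this to arbitrary, possibly starred, expressions through the invariant $\mathrm{nf}$. The forward direction is the same one-step size argument in either approach. The substance is the converse: one must rule out that a detour through longer or starred expressions reaches a strictly shorter word. Your Church--Rosser argument handles exactly this point, which a bare induction leaves implicit.

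It also yields something the paper uses but never states: normal forms are \emph{unique}. This is what justifies speaking of ``the'' normal form in \Cref{cor:normal-form-computable}, and testing syntactic equality of normal expressions in steps (V1) and (H1). You also read ``equivalent representation'' correctly, as the word given by associativity alone. Under the $\equiv$ reading the converse would fail, e.g.\ for $\stopgap\cdot\stopgap$.

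Two small points to tighten.

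First, your list of critical overlaps omits $\cold{a}\,\stopgap\,\cold{b}$, which arises from $\cold{a}\stopgap\to\cold{a}$ and $\stopgap\cold{b}\to\cold{b}$. It joins trivially at $\cold{a}\,\cold{b}$.

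Second, since $\equiv$ is a congruence, ``each generating equation preserves $\mathrm{nf}$'' must hold for instances inside arbitrary contexts.
\begin{itemize}
\item Outside any star, this follows from confluence: $\mathrm{nf}(C[u])$ is the $R$-normal form of $p\cdot\mathrm{nf}(u)\cdot s$, where $p,s$ are the flattened, star-replaced parts of $C$.
\item Under a star, it requires that \emph{every} generating equation preserves $\forget$, not only the hot-merge rule. For example, $\forget(\stopgap\cdot\stopgap)=\vf{1}=\forget(\stopgap)$, $\forget(\stopgap\cdot\cold{a})=a$, and $\forget(\mix{x}^\star)=\forget(\mix{x})^\fixop=\forget(\hot{\forget(\mix{x})^\fixop})$.
\end{itemize}
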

Hence, when concatenating "normal" expressions, it suffices to check and \emph{normalise} the connecting edges of the two expressions. As an immediate corollary of the above lemma, we obtain:
\begin{corollary}\label[corollary]{cor:normal-form-computable}
    Given two "normal" expressions $\mix{x}, \mix{y} \in \tsem{S}$, we can compute the "normal" forms for $\mix{x} \cdot \mix{y}$ and $\mix{x}^\star$ in linear time, assuming that $a\cdot b$ and $a^\fixop$ are constant-time operations for $a, b \in S$.
\end{corollary}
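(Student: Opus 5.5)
The plan is to build the normal forms explicitly on the word representation given by \Cref{lem:thermal-extension-word-form}. Normal expressions $\mix{x}$ and $\mix{y}$ are words $x_1\cdots x_m$ and $y_1\cdots y_n$ over $\cold{S} \cup \hot{S} \cup \{\stopgap\}$ satisfying conditions (i)--(iii). Every rewrite in \eqref{eqn:normal-form} is local: it merges two adjacent letters or deletes a $\stopgap$ next to a cold letter. So inside the concatenation $x_1\cdots x_m y_1 \cdots y_n$, the only neighbouring pair that can violate (i)--(iii) is the junction $x_m y_1$.

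For $\mix{x} \cdot \mix{y}$, I would do a case analysis on the letter types at the junction.
If $x_m, y_1 \in \hot{S}$, replace them by the single letter $\hot{\forget(x_m \cdot y_1)}$.
If the pair is $\stopgap\stopgap$, keep one $\stopgap$.
If it is a $\stopgap$ next to a cold letter, delete the $\stopgap$.
In every other case the word is already normal.

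The key step is to check that a single rewrite never creates a new violation. In the hot--hot case, the merged letter is hot and its neighbours $x_{m-1}$ and $y_2$ were non-hot by (ii) applied to $\mix{x}$ and $\mix{y}$. In the $\stopgap$ cases, deleting a $\stopgap$ brings together letters that already satisfied the conditions within their own word, or joins a cold letter to a letter that was not $\stopgap$ by (i). So at most one local rewrite is needed. By \Cref{lem:thermal-extension-word-form} the result is normal, and it is $\equiv$-equivalent to $\mix{x}\cdot\mix{y}$ because each step is an instance of \eqref{eqn:normal-form} together with associativity. The cost is copying the two words, plus one semiring product, which takes constant time by assumption. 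This gives linear time.

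For $\mix{x}^\star$, the last rule of \eqref{eqn:normal-form} gives $\mix{x}^\star \equiv \hot{\forget(\mix{x})^\fixop}$. Computing $\forget(\mix{x})$ on a word of length $m$ takes $m-1$ products, since $\forget(\stopgap)=\vf{1}$ and $\forget$ is multiplicative on words. One application of $\fixop$ then finishes the computation. A single letter has size $0$, so it is trivially normal, and the whole computation is linear.

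The main obstacle is making sure that the $\equiv$-class computed this way is the intended one, that is, that these local rewrites are sound with respect to $\equiv$. This is immediate from the definition of $\equiv$ as the equivalence generated by \eqref{eqn:normal-form}, provided $\equiv$ is a congruence for $\cdot$, which I take to be intended. The remaining content is the case check above that one rewrite restores conditions (i)--(iii).
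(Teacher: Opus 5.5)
Your proposal is correct and follows the paper's route: by \Cref{lem:thermal-extension-word-form} only the junction of the two concatenated words can violate conditions (i)--(iii), a single local rewrite there restores them, and $\mix{x}^\star$ collapses to the single hot letter $\hot{\forget(\mix{x})^\fixop}$. One small slip: when you delete a $\stopgap$ next to a cold letter, the newly adjacent neighbour is not $\stopgap$ because of (iii), since it was already next to the deleted $\stopgap$, not because of (i); this does not affect the argument.
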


\begin{example}
    Consider the expression $\hot{3} \cdot (\hot{1} \cdot \hot{2} \cdot \cold{1})^\star \cdot \cold{2} \cdot \# \cdot \hot{1}$ in the "thermal extension" of the Arctic "pumping semiring". It has the following sequence of equivalences leading to its "normal" form:
    $$\left(\hot{3} \cdot (\hot{1} \cdot \hot{2} \cdot \cold{1})^\star \cdot \cold{2} \cdot \# \cdot \hot{1}\right) \equiv \left(\hot{3} \cdot \hot{\unbounded} \cdot \cold{2} \cdot \# \cdot \hot{1}\right) \equiv \left(\hot{\unbounded} \cdot \cold{2} \cdot \# \cdot \hot{1}\right) \equiv \left(\hot{\unbounded} \cdot \cold{2} \cdot \hot{1}\right).$$

    Moreover, given the two "normal" expressions $\hot{\unbounded} \cdot \cold{2} \cdot \hot{1}$ and $\# \cdot \hot{1}$, their composition $\hot{\unbounded} \cdot \cold{2} \cdot \hot{1} \cdot \# \cdot \hot{1}$ cannot be reduced any further, since the last element $\hot{1}$ of the first expression and the first element $\#$ of the second do not merge.
\end{example}

Henceforth, we implicitly assume that expressions of the form $\mix{x} \cdot \mix{y}$ and $\mix{x}^\star$ are reduced to their "normal" forms in $(\hot{S} \cup \cold{S} \cup \{\stopgap\})^*$, whenever $\mix{x}, \mix{y} \in (\hot{S} \cup \cold{S} \cup \{\stopgap\})^*$ are in "normal" forms.

\subsection{Computing Quantitative Reachability}\label{subsec:computing-quant-reach}
We are now ready to prove \Cref{thm:quant-problem-computable}. Essentially, we formalise the construction discussed at the beginning of this section. 
Let $\tconfigs$ be the set of all tuples $\tup{\tup{p, q}, \mix{x}} \in (Q \times Q) \times \tsem{S}$, where $\mix{x}$ is assumed to be in "normal" form. Analogously, we write $st(\tup{\tup{p, q}, \mix{x}})$ to denote $\tup{p, q}$, and $\wtsemantics{\tup{\tup{p, q}, \mix{x}}}$ to denote $\mix{x}$. By convention, we assume $\mix{x} \in \tsem{S}$ to be a "normal" expression unless specified otherwise.
Given a "labelled tree" $\tup{T, \sigma} \in \tree{\tconfigs}$, its canonical image in $\tree{\configs}$ is $\tup{T, \forget_\sigma}$, where $\forget_\sigma(t) = \tup{st(\sigma(t)), \forget\rootweight{\sigma(t)}}$ for $t \in T$.

\inlsec{Construction}\label{subsubsec:construction}
Formally, we construct an auxiliary sequence $i \mapsto H_i$ of $\tconfigs$-"labelled trees", instead of $\configs$-"labelled trees". The desired sequence is obtained as their canonical image in $\tree{\configs}$ given by $J: i \mapsto J_i$ where $$J_i \coloneqq \{\tup{T, \forget_{\sigma}} \mid \tup{T, \sigma} \in H_i\} \subseteq \tree{\configs}.$$

Let $H_0 \coloneqq \{\tup{\{\epsilon\}, \sigma} \mid p \in Q, \sigma(\epsilon) = \tup{\tup{p, p}, \stopgap} \}$. We inductively define $H_i \coloneqq H_{i - 1} \cup H_{i, v} \cup H_{i, h}$, for $i \in \nat$. The sets $H_{i, v}$ and $H_{i, h}$ are defined as follows:

\begin{algodef}
    \inlsec{$\textsc{Compute}_v (H_{i - 1})$}
    Let $\tup{\delta_1, \delta_2} \in \trans^2$, and $\tup{T, \sigma} \in H_{i - 1}$ be arbitrary. Assume $v_{\configs}(\tup{\delta_1, \delta_2}, \tup{T, \forget_\sigma}) = \tup{T'', \sigma''} \in \tree{\configs}$. Then $H_{i, v}$ is the set of all $\tup{T', \sigma'} \in \tree{\tconfigs}$ such that:
    \begin{enumerate}
        \item[\emph{""(V1)""}] if there exists $t \in T$ such that $st(\sigma''(\epsilon)) = st(\sigma(t))$, and $\cold{\wt(\delta_1)} \cdot \stopgap \cdot \rootweight{\sigma(\epsilon)} \cdot \stopgap \cdot \cold{\wt(\delta_2)} = \rootweight{\sigma(t)}$, then $\tup{T', \sigma'} = t^{-1}\tup{T, \sigma}$,
        \item[\emph{""(V2)""}] else, if some $t \in T$ is the shortest "node" with $st(\sigma(t)) = st(\sigma''(\epsilon))$ such that $\cold{\wt(\delta_1)} \cdot \stopgap \cdot \rootweight{\sigma(\epsilon)} \cdot \stopgap \cdot \cold{\wt(\delta_2)} = \mix{y_1} \cdot \rootweight{\sigma(t)} \cdot \mix{y_2}$, for some $\mix{y_1}, \mix{y_2} \in \tsem{S}$ in "normal" form, and $\mix{y_1} \neq \stopgap \neq \mix{y_2}$, then $T' = t^{-1}T$, $\sigma'(t') = \sigma(t \cdot t')$ for $t' \in T'\setminus \{\epsilon\}$, and $\sigma'(\epsilon) = \tup{st(\sigma''(\epsilon)), \mix{z}}$, where
        \begin{equation*}
            \mix{z} = \begin{cases}
                (\mix{y_1} \cdot \hot{l})^\star \cdot \mix{x} \cdot (\hot{r} \cdot \mix{y_2})^\star & \text{, if } \rootweight{\sigma(t)} \text{ is of the form } \hot{l} \cdot \mix{x} \cdot \hot{r},\\
                \mix{y_1}^\star \cdot \rootweight{\sigma(t)} \cdot \mix{y_2}^\star & \text{, otherwise, and}
            \end{cases}
        \end{equation*}
        \item[\emph{""(V3)""}] otherwise, $T' = 1 \cdot T \cup \{\epsilon\}$, $\sigma'(\epsilon) = \tup{st(\sigma''(\epsilon)), \cold{\wt(\delta_1)} \cdot \stopgap \cdot \rootweight{\sigma(\epsilon)} \cdot \stopgap \cdot \cold{\wt(\delta_2)}}$, and $\sigma'(1 \cdot t) = \sigma(t)$ for $t \in T$.
    \end{enumerate}
\end{algodef}

\begin{algodef}
    \inlsec{$\textsc{Compute}_h (H_{i - 1})$}
    Let $\tup{T_1, \sigma_1}, \tup{T_2, \sigma_2} \in H_{i - 1}$. Let $\sigma_i(\epsilon) = \tup{\tup{p_i, q_i}, \mix{x_i}}$ for every $i \in [2]$, and $h_{\configs}(\tup{T_1, \forget_{\sigma_1}}, \tup{T_2, \forget_{\sigma_2}}) = \tup{T'', \sigma''}$ with $\sigma''(\epsilon) = \tup{\tup{p', q'}, a}$. Then $H_{i, h}$ is the set of all $\tup{T', \sigma'} \in \tree{\tconfigs}$ such that:
    \begin{enumerate}
        \item[\emph{""(H1)""}] if there exists $t \in T_j$, for some $j \in [2]$, such that $st(\sigma''(\epsilon)) = st(\sigma_j(t))$, and $\rootweight{\sigma_1(\epsilon)} \cdot \stopgap \cdot \rootweight{\sigma_2(\epsilon)} = \rootweight{\sigma_j(t)}$, then $\tup{T', \sigma'} = t^{-1}\tup{T_j, \sigma_j}$,
        \item[\emph{""(H2)""}] else, for $j \in [2]$, if some $t \in T_j$ is the shortest "node" with $st(\sigma_j(t)) = st(\sigma''(\epsilon))$ such that $\rootweight{\sigma_1(\epsilon)} \cdot \stopgap \cdot \rootweight{\sigma_2(\epsilon)} = \mix{y_1} \cdot \rootweight{\sigma_j(t)} \cdot \mix{y_2}$, for some $\mix{y_1}, \mix{y_2} \in \tsem{S}$ in "normal" form, and $\mix{y_1} \neq \stopgap \neq \mix{y_2}$, then $T' = t^{-1}T_j$, $\sigma'(t') = \sigma_j(t \cdot t')$ for $t' \in T'\setminus \{\epsilon\}$, and $\sigma'(\epsilon) = \tup{st(\sigma''(\epsilon)), \mix{z}}$, where
        \begin{equation*}
            \mix{z} = \begin{cases}
                (\mix{y_1} \cdot \hot{l})^\star \cdot \mix{x} \cdot (\hot{r} \cdot \mix{y_2})^\star & \text{, if } \rootweight{\sigma_j(t)} \text{ is of the form } \hot{l} \cdot \mix{x} \cdot \hot{r},\\
                \mix{y_1}^\star \cdot \rootweight{\sigma_j(t)} \cdot \mix{y_2}^\star & \text{, otherwise, and}
            \end{cases}
        \end{equation*}
        \item[\emph{""(H3)""}] otherwise, $T' = 1 \cdot T_1 \cup 2 \cdot T_2 \cup \{\epsilon\}$, $\sigma'(\epsilon) = \tup{st(\sigma''(\epsilon)), \rootweight{\sigma_1(\epsilon)} \cdot \stopgap \cdot \rootweight{\sigma_2(\epsilon)}}$, and $\sigma'(j \cdot t) = \sigma_j(t)$ for $j \in [2]$, and $t \in T_j$.
    \end{enumerate}
\end{algodef}
Note that we can effectively compute the "normal" expressions $\mix{y_1}$, and $\mix{y_2}$ in steps "(V2)", and "(H2)". We can prove this using induction via a case-by-case analysis.

\begin{lemma}\label[lemma]{lem:infix}
    For every $\tup{T, \sigma} \in H_{i}$ where $i \in \nat$, and "nodes" $t_1, t_2 \in T$, where $\sigma(t_j) = \tup{\tup{p_j, q_j}, \mix{x_j}}$ for $j \in [2]$, it holds that $t_1 \preceq t_2 \implies \mix{y_1} \cdot \mix{x_2} \cdot \mix{y_2} = \mix{x_1}$, for some effectively computable $\mix{y_1}, \mix{y_2} \in \tsem{S}$.
\end{lemma}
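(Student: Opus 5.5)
The plan is to prove the statement by induction on $i$, and within each step to reduce to the case where $t_2$ is a child of $t_1$. Since $t_1 \preceq t_2$ gives a chain of parent-child steps $t_1 = s_0 \prec s_1 \prec \cdots \prec s_k = t_2$, it suffices to establish the claim for direct children; the general case then follows by composing. If $\mix{x}_{s_{j}} = \mix{y}_{1,j}\cdot \mix{x}_{s_{j+1}} \cdot \mix{y}_{2,j}$ for all $j$, then setting $\mix{y_1} = \mix{y}_{1,0}\cdots\mix{y}_{1,k-1}$ and $\mix{y_2} = \mix{y}_{2,k-1}\cdots\mix{y}_{2,0}$ works. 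This uses associativity of $\cdot$ modulo $\equiv$ and \Cref{cor:normal-form-computable} to normalise the products effectively. For $k = 0$ we take $\mix{y_1} = \mix{y_2} = \stopgap$; this is valid because $\stopgap\cdot\mix{x}\cdot\stopgap$ must be read up to the normalisation convention, which should be checked. If $\stopgap$ does not act as an identity on hot boundary letters, I will instead allow the empty word as the trivial factor and keep the normal-form convention in $(\hot{S}\cup\cold{S}\cup\{\stopgap\})^*$.

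The base case $H_0$ is trivial, since the trees there are single nodes. For the inductive step, take a tree in $H_i \setminus H_{i-1}$ and split according to the rule that produced it. Under \kl{(V1)} and \kl{(H1)} the tree is a subtree $t^{-1}\tup{T,\sigma}$ of a tree in $H_{i-1}$. Parent-child pairs in it are parent-child pairs in the original tree, so the induction hypothesis applies directly.

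Under \kl{(V3)} the new root has label $\cold{\wt(\delta_1)}\cdot\stopgap\cdot \mix{x}\cdot\stopgap\cdot\cold{\wt(\delta_2)}$, where $\mix{x}$ is the label of its unique child. Here we take $\mix{y_1} = \cold{\wt(\delta_1)}\cdot\stopgap$ and $\mix{y_2} = \stopgap\cdot\cold{\wt(\delta_2)}$; the remaining pairs lie inside the old tree. \kl{(H3)} is analogous. For child $1$, take $\mix{y_1}$ trivial and $\mix{y_2} = \stopgap\cdot\mix{x_2}$; for child $2$, take $\mix{y_1} = \mix{x_1}\cdot\stopgap$. In both cases one must check that the displayed concatenation, once normalised, literally equals the root label. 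This holds because the root label is defined as that normalised product.

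The main obstacle is \kl{(V2)}/\kl{(H2)}. There the new root replaces the label $\rootweight{\sigma(t)}$ of the subtree root by $\mix z$, while the children of $t$ are kept. So for each child $t'$ of the new root I need $\mix z = \mix{y_1'}\cdot \mix{x}_{t'}\cdot\mix{y_2'}$. By the induction hypothesis, $\rootweight{\sigma(t)} = \mix{u_1}\cdot\mix{x}_{t'}\cdot\mix{u_2}$ for computable $\mix{u_1}, \mix{u_2}$.

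In the second case of the definition of $\mix z$, we have $\mix z = \mix{y_1}^\star\cdot\rootweight{\sigma(t)}\cdot\mix{y_2}^\star$. Since $\mix{y_j}^\star$ normalises to a single hot letter, we can set $\mix{y_1'} = \hot{\forget(\mix{y_1})^\fixop}\cdot\mix{u_1}$, and symmetrically for $\mix{y_2'}$.

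In the first case, $\rootweight{\sigma(t)} = \hot{l}\cdot\mix{x}\cdot\hot{r}$. This is the delicate part: I must show that the decomposition $\mix{u_1}\cdot\mix{x}_{t'}\cdot\mix{u_2}$ does not cut into the hot boundary letters $\hot l$ and $\hot r$. The danger is that $\mix{x}_{t'}$ might itself begin with a hot letter that merged with part of $\mix{u_1}$. I would handle this by a strengthened invariant, proved simultaneously by induction: in every tree, a child label occurs as a literal infix, that is, as a subword of the normal-form word of the parent label. For \kl{(V3)}/\kl{(H3)} this holds because separators $\stopgap$ are inserted between the parts, which prevents hot merging. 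For \kl{(V2)}/\kl{(H2)} it is preserved because the starred factors collapse to hot letters that sit outside $\mix{x}$. With the literal-infix invariant in hand, either $\mix{x}_{t'}$ lies inside $\mix{x}$, giving $\mix{y_1'} = (\mix{y_1}\cdot\hot l)^\star\cdot\mix{v_1}$ and similarly on the right, or it touches a boundary hot letter. In the latter case I expect to need to treat the boundary letter as absorbed into the starred hot letter, which I would verify by case analysis on \Cref{lem:thermal-extension-word-form}.
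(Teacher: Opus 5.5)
Your overall strategy matches the paper's one-line proof: induction on $i$, reduction to parent--child pairs, and a case split on the generating rule. Cases (V1)/(H1), (V3)/(H3) and the second case of (V2)/(H2) go through by associativity of $\equiv$, as you say. (On the trivial factor: $\stopgap$ is indeed not an identity next to a hot letter, and the empty word is not in $\tsem{S}$. Use $\hot{\vf{1}}$ at hot boundaries instead, since $\hot{\vf{1}}\cdot\hot{a}\equiv\hot{a}$.)

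The gap is the first case of (V2)/(H2), which is where the lemma has its real content. Your literal-infix invariant is false. In the second case, suppose $\rootweight{\sigma(t)}$ begins with a hot letter $\hot{l}$ but does not end with one. Then $\mix{y_1}^\star\cdot\rootweight{\sigma(t)}$ normalises to a word beginning with $\hot{\forget(\mix{y_1})^\fixop\otimes l}$. So a child label beginning with $\hot{l}$ is in general not a subword of the new root label. Consequently your dichotomy for the first case (child label inside $\mix{x}$, or literally sharing the boundary letter) is not exhaustive. The stripped letter of $\rootweight{\sigma(t)} = \hot{l}\cdot\mix{x}\cdot\hot{r}$ is in general $l = c\otimes l_0$, where $\hot{l_0}$ is the leading letter of the child's label.

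More importantly, the boundary case cannot be settled by a case analysis on \Cref{lem:thermal-extension-word-form}. The new root label normalises to $\hot{A'}\cdot\mix{x}\cdot\hot{B'}$ with $A' = (\forget(\mix{y_1})\otimes l)^\fixop$. Writing it as $\mix{y_1'}\cdot\hot{l_0}\cdots$ forces $\mix{y_1'}$ to contribute a hot letter $\hot{e}$ with $e\otimes l_0 = A'$ \emph{in $S$}. That is an algebraic identity of the pumping semiring, not a normal-form fact, and your sketch never supplies it.

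The missing idea is an absorption law derived from (C1) and (C2). Take $l_0 = m^\fixop$ and $P = (u\otimes m^\fixop)^\fixop$ with $u, m \neq \vf{0}$.
\begin{itemize}
\item $P\otimes m^\fixop \isrgeq{S} P$, since $m^\fixop \isrgeq{S} \vf{1}$.
\item (C1) with $a = u\otimes m^\fixop$, $b = \vf{1}$, $c = m$ gives $P\otimes m^\fixop = \bigoplus_{n\in\nat}(u\otimes m^\fixop)^n\otimes m^n$.
\item Each term is $\isrleq{S} P$, because $m^\fixop\otimes m^n \isrleq{S} m^\fixop$ by (C2). Hence $P\otimes m^\fixop \isrleq{S} P$.
\end{itemize}
So $P\otimes m^\fixop = P$, and $\mix{y_1'} = \hot{P}$ with $u = \forget(\mix{y_1})\otimes c$ works; the right boundary is symmetric.

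You would still have to treat boundary letters $l_0$ that are themselves products of closures, which arise from exactly the merges described above. In a commutative semiring, applying the law to each factor suffices. In the non-commutative case you need a separate argument, or a proof that such letters cannot occur at that position.
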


Let $v_{\tconfigs}(\tup{\delta_1, \delta_2}, \tup{T, \sigma})$ (resp. $h_{\tconfigs}(\tup{T_1, \sigma_1}, \tup{T_2, \sigma_2})$) denote the tree $\tup{T', \sigma'}$ constructed in step "(V3)" (resp. "(H3)") if $v_{\configs}(\tup{\delta_1, \delta_2}, \tup{T, \forget_\sigma}) \neq \bot$ (resp. $h_{\configs}(\tup{T_1, \forget_{\sigma_1}}, \tup{T_2, \forget_{\sigma_2}}) \neq \bot$), and $\bot$ otherwise. Intuitively, the trees $\tup{T'', \sigma''}$ are of the form $\tup{T'', \forget_{\sigma'''}}$ where $v_{\tconfigs}(\tup{\delta_1, \delta_2}, \tup{T, \sigma}) = \tup{T'', \sigma'''}$ (resp. $h_{\tconfigs}(\tup{T_1, \sigma_1}, \tup{T_2, \sigma_2}) = \tup{T'', \sigma'''}$). 
Moreover, the set $H_i$ is subtree-closed for every $i \in \nat$. The proof is by induction. We state this as follows:
\begin{lemma}\label[lemma]{lem:H-subtree-closure}
    For every $\tup{T, \sigma} \in H_{i + 1}$ where $i \in \nat$, and "node" $t \in T$, if $t \neq \epsilon$, then $\inverse\tup{T, \sigma} \in H_{i}$.
\end{lemma}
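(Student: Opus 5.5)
We argue by induction on $i$, proving the slightly stronger statement that every proper subtree of every tree in $H_{i+1}$ lies in $H_i$. Since the sets $H_i$ increase with $i$ ($H_i = H_{i-1} \cup H_{i,v} \cup H_{i,h}$), it is enough to handle the base case and the trees created freshly by $\textsc{Compute}_v(H_i)$ and $\textsc{Compute}_h(H_i)$. A tree of $H_{i+1}$ that already belongs to $H_i$ falls under the induction hypothesis for $i-1$: its proper subtrees lie in $H_{i-1} \subseteq H_i$. For $i = 0$ this reduces to the observation that trees in $H_0$ consist of a root only, so they have no node $t \neq \epsilon$.

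For a fresh tree $\tup{T', \sigma'}$ we split on the rule that produced it. Under "(V3)" we have $T' = 1\cdot T \cup \{\epsilon\}$ and $\sigma'(1\cdot t) = \sigma(t)$ with $\tup{T,\sigma} \in H_i$. A node $t' \neq \epsilon$ has the form $1\cdot t$, and $(1\cdot t)^{-1}\tup{T',\sigma'} = t^{-1}\tup{T,\sigma}$. If $t = \epsilon$ this is $\tup{T,\sigma} \in H_i$ itself. Otherwise it is a proper subtree of a tree in $H_i$, which lies in $H_{i-1} \subseteq H_i$ by the induction hypothesis, and for $i = 0$ this case does not arise. Rule "(H3)" is handled identically, with the two children $1\cdot T_1$ and $2\cdot T_2$.

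Under "(V1)" or "(H1)" the new tree is $t^{-1}\tup{T,\sigma}$ (respectively $t^{-1}\tup{T_j,\sigma_j}$) for some tree in $H_i$, and $t$ must be a proper node. Indeed, $t = \epsilon$ would force $\cold{\wt(\delta_1)}\cdot\stopgap\cdot\mix{x}\cdot\stopgap\cdot\cold{\wt(\delta_2)} \equiv \mix{x}$ (respectively $\mix{x_1}\cdot\stopgap\cdot\mix{x_2} \equiv \mix{x_j}$), which is impossible by a length count on normal forms using \Cref{lem:thermal-extension-word-form}. Cold letters never merge, so the left-hand side is strictly longer. The new tree is therefore already a proper subtree of a tree in $H_i$, hence in $H_{i-1}$ (or it is in $H_i$ directly when $t$ is a child), and its own proper subtrees are proper subtrees of that same tree, so they lie in $H_{i-1} \subseteq H_i$. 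Under "(V2)" and "(H2)" the tree $\tup{T',\sigma'}$ has $T' = t^{-1}T$ and agrees with $t^{-1}\sigma$ everywhere except at the root. For $t' \neq \epsilon$ we get $t'^{-1}\tup{T',\sigma'} = (t\cdot t')^{-1}\tup{T,\sigma}$, a proper subtree of a tree in $H_i$, and so it lies in $H_{i-1} \subseteq H_i$ by induction (for $i = 0$ there are no proper nodes at all).

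The main obstacle is bookkeeping in the base and first levels: we must make sure that in each case the relevant node is either the immediate child of a tree in $H_i$ or a deeper node covered by the induction hypothesis. We must also dispose of the degenerate $t = \epsilon$ possibilities in "(V1)" and "(H1)"; as noted above, we expect this to follow from the normal-form length argument.
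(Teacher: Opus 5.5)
Your induction on $i$ is the argument the paper has in mind; the paper only says the proof is by induction. The case split over (V1)--(V3) and (H1)--(H3), with the observation that every proper subtree of a newly produced tree is a proper subtree of a tree in $H_i$, goes through. One side claim is false, however, and should be removed.

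In (H1) the node $t$ can be the root. Take both inputs to be the single-node tree of $H_0$ labelled $\tup{\tup{p,p},\#}$. Then $\#\cdot\#\cdot\# \equiv \#$, the states match, and (H1) fires with $t=\epsilon$. More generally, $x_1\cdot\#\cdot\# \equiv x_1$ whenever $x_1$ ends in a cold letter. So pairing any tree produced by (V3) with an identity tree of $H_0$ triggers (H1) at the root.

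For (V1) your conclusion $t\neq\epsilon$ is true, but not for the reason you give. Length need not grow: if $x$ begins and ends with $\#$, both occurrences of $\#$ are absorbed by the new cold letters, and the two sides have equal length. The quantity that is invariant under $\equiv$, and that separates the two sides, is the number of cold letters.

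None of this is needed. If $t=\epsilon$ in (V1) or (H1), the produced tree is a tree of $H_i$ itself. Its proper subtrees lie in $H_{i-1}\subseteq H_i$ by the induction hypothesis, and for $i=0$ there are none. This is the same reasoning you already apply to the subtrees of the produced tree. Drop the exclusion step, and with it the corresponding remark in your final paragraph, and the proof is complete.
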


\inlsec{Correctness and Termination}
By \Cref{lem:correctness}, it suffices to show that properties "(P1)", "(P2)", and "(P3)" hold for the sequence $J$. Termination, i.e., "(P1)", follows since the trees in $\bigcup_{i \in \nat} H_i$ have "height" at most $|Q|^2$, and, by "(C3)", only finitely many labels can occur at each "height". Our proofs contain a rigorous case-by-case analysis and are deferred to \Cref{sec:proof-of-correctness}. For completeness, we describe the full algorithm computing the fixpoint of the sequence $J$, further projected onto the "root" label weights, in \Cref{sec:fixpoint-algorithm}. This concludes our proof of \Cref{thm:quant-problem-computable}.

\inlsec{Complexity} Our termination argument is by contradiction and yields no bound on the number of iterations, but it can be made quantitative (cf. \Cref{appx-sec:complexity}). The weight domain enters through a single parameter, namely the number
$$K \coloneqq \left|\fixop\left(\semgen{\wt(\trans)}{\osem{S}}\right)\right|$$
of distinct $\fixop$-closures generated by the transition weights, where $\semgen{\wt(\trans)}{\osem{S}} \subseteq S$ is the smallest set containing $\wt(\trans)$ that is closed under $\oplus$, $\otimes$, and $\fixop$, and $\fixop(X) \coloneqq \{x^\fixop \mid x \in X\}$ for $X \subseteq S$. It is finite by "(C3)", since $\fixop(\semgen{\wt(\trans)}{\osem{S}}) \subseteq \fixop(S)$.

\begin{proposition}\label[proposition]{prop:complexity}
    The sequence $i \mapsto H_i$ reaches its fixpoint after at most $\left(|\trans| + K + 1\right)^{2^{\mathcal{O}(|Q|^2)}}$ iterations.
\end{proposition}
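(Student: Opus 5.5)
The plan is to bound the total number of distinct $\tconfigs$-labelled trees that can ever appear in $\bigcup_{i} H_i$. Since $H_{i+1} \supsetneq H_i$ whenever the fixpoint has not been reached, the number of iterations is at most this count. Counting proceeds in three steps: bound the height of trees, bound the set of labels that can occur at each height, and then count trees of bounded height and branching (at most $2$) over these label sets.

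\emph{Height.} First I would show that every tree in $\bigcup_i H_i$ has pairwise distinct state pairs $st(\sigma(t))$ along every root-to-leaf path, which gives height at most $|Q|^2$. The argument is by induction on $i$. Steps "(V1)", "(V2)", "(H1)" and "(H2)" return subtrees of trees already in $H_{i-1}$, and by \Cref{lem:H-subtree-closure} these inherit the invariant. In "(V3)" and "(H3)" a new root is added only when no node $t$ of the argument tree carries the state pair of the new root; otherwise "(V1)"/"(V2)" (resp. "(H1)"/"(H2)") would have fired. Here \Cref{lem:infix} is what guarantees that the factorisation $\mix{y_1}\cdot\rootweight{\sigma(t)}\cdot\mix{y_2}$ required by "(V2)" exists. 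Hence no state pair repeats on any path.

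\emph{Labels per height.} Next, for each height $k \le |Q|^2$, I would bound the number $N_k$ of normal expressions that can label a root of height $k$. A leaf carries $\stopgap$. A "(V3)" root has a label of the form $\cold{\wt(\delta_1)}\cdot\stopgap\cdot\mix{x}\cdot\stopgap\cdot\cold{\wt(\delta_2)}$. An "(H3)" root concatenates two child labels. A "(V2)"/"(H2)" root replaces the label by one in which every hot letter is either a $\fixop$-closure, so it lies in a set of size at most $K$, or a product of hot letters already present. Using \Cref{lem:thermal-extension-word-form}, I would represent a normal word as an alternation of cold letters from $\wt(\trans)$, separators, and hot letters. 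I would show by induction that hot letters lie in a finite set $\mathcal{H}$ of size $\mathcal{O}(K)$, namely closures, possibly multiplied by a bounded number of further closures that collapse into a single closure, and that the word length at height $k$ is at most $c\cdot 2^{k}$. This yields $N_k \le (|\trans|+K+1)^{\mathcal{O}(2^k)}$.

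\emph{Counting.} A tree of height $k$ is determined by its root label and its at most two subtrees of height less than $k$. The number $M_k$ of such trees therefore satisfies $M_k \le N_k\cdot M_{k-1}^2$, which solves to $M_{|Q|^2} \le (|\trans|+K+1)^{2^{\mathcal{O}(|Q|^2)}}$, as claimed.

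\emph{Main obstacle.} I expect the hard part to be the label-size step: showing that hot letters stay inside a set of size controlled by $K$. A product $\hot{l}\cdot\hot{r}$ of two closures need not itself be a closure, so I would first try to prove an invariant that every hot letter is a $\fixop$-image of an element of $\semgen{\wt(\trans)}{\osem{S}}$. This invariant follows from the "(V2)"/"(H2)" formula, which always wraps hot material inside $(\cdot)^\star$. The case to check carefully is the merging at the boundary between a $\star$-block and the middle part $\mix{x}$.
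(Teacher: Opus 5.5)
Your outline follows the paper's: height at most $|Q|^2$ because state pairs are pairwise distinct along paths, label length $2^{\mathcal{O}(h)}$ from $L_h \le 2L_{h-1}+\mathcal{O}(1)$, and a count of bounded-height binary trees. Your recursion $M_k \le N_k M_{k-1}^2$ is just different bookkeeping for the paper's count of labels raised to the number of nodes. The gap is the step you flagged, and the invariant you propose for it is false, because (V2)/(H2) wrap the old boundary hot letters inside $(\cdot)^\star$ only in their first branch. The second branch, $y_1^\star \cdot x_t \cdot y_2^\star$ with $x_t$ the label at $t$, is taken whenever $x_t$ is not hot at \emph{both} ends, yet $x_t$ may still be hot at one end. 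For instance, (H3) applied to a (V2)-tree (hot at both ends) and a (V3)-tree (cold at both ends) produces a label $\hot{A}\cdots\cold{c}$. Wrapping that node by a pair $\tup{\delta,\delta''}$ of push/pop self-loops that preserves its state pair triggers (V2) with $y_1 = \cold{\wt(\delta)}$. Normalisation then merges $\hot{\wt(\delta)^\fixop}\cdot\hot{A}$ into the single hot letter $\hot{\wt(\delta)^\fixop \otimes A}$. In $\osem{S}_{\downsub{\trans}}$, with $A = \{\delta'\}^*$ coming from another push loop $\delta'$, this letter is $\{\delta\}^*\{\delta'\}^*$, which is not of the form $X^*$ since it omits $\delta'\delta$. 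So neither ``every hot letter is a $\fixop$-image'' nor ``the products collapse into a single closure'' holds, and a hot alphabet of size $\mathcal{O}(K)$ is too optimistic. The paper's appendix asserts the same alphabet $\wt(\trans)\cup\fixop(\semgen{\wt(\trans)}{\osem{S}})\cup\{\stopgap\}$ with the same justification, so this is a subtlety of the construction, not a deviation on your part.

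The claimed bound survives with a weaker invariant. Hot letters merge only in that second branch, and each merge adds one closure factor to a boundary letter of $x_t$. A label that is hot at exactly one end can only have been created by (H3): leaves carry $\stopgap$, (V3) labels are cold at both ends, (V2)/(H2) labels are hot at both ends, and labels of non-root nodes never change. Hence its boundary hot letters come from a child of smaller height. Induct on height with the two-part hypothesis that every hot letter at height $h$ is a product of at most $h+1$ elements of $\fixop(\semgen{\wt(\trans)}{\osem{S}})$, and at most $h$ if it sits at the boundary of a label not hot at both ends. This gives at most $(K+1)^{|Q|^2+1}$ distinct hot letters. The alphabet therefore has size $(|\trans|+K+1)^{\mathcal{O}(|Q|^2)}$ instead of $|\trans|+K+1$, which multiplies the exponent in your count by $\mathcal{O}(|Q|^2)$. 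That factor is absorbed into $2^{\mathcal{O}(|Q|^2)}$, so your recursion still yields $(|\trans|+K+1)^{2^{\mathcal{O}(|Q|^2)}}$.
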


Our procedure is therefore doubly exponential in the number of control states of $\pds$, with the weight domain contributing only to the base. For most of the "pumping semirings" in \Cref{fig:semiring-examples}, the operator $\fixop$ is constant, so that $K = 1$; for the arctic and probabilistic-union "pumping semirings", $K = 2$. The exception is the semiring of "downward-closed" languages of \Cref{subsec:compute-downward}, where $e^{\downsub{\fixop}} = \letters{e}^*$ ranges over the subsets of $\trans$, and hence $K \leq 2^{|\trans|}$. This is precisely the exponential blow-up incurred by the classical "downward closure" construction \cite{courcelle1991constructing}. Note that our bound is not tight. The second exponential stems entirely from step "(H3)", which concatenates both its children's labels, and already for $\osem{S} = \trivialsem$, the bound is far from the known polynomial one \cite{bouajjani1997reachability}. Whether the label "sizes" stay polynomial in $|Q|$, which would make our procedure singly exponential, remains open.

\section{Uniform Computation of Language Closures}\label{sec:closure-computation}
In this section, we 
show that one can compute the upward and downward closures of context-free languages as instances of our framework. In \Cref{sec:network-verification}, we will explain how this allows us to quantitatively verify networks of "wPDSs". 

\subsection{Upward Closures}\label{subsec:upward-closure-free}
Let us start with the case of "upward closures". By Higman's lemma \cite{higman1952ordering}, every "upward-closed" set of "words" has finitely many $\sqsubseteq$-minimal elements. Thus, every "upward-closed" set can be uniquely represented by a finite set of mutually incomparable "words".

Let $\upsub{\trans}$ be the set of all finite subsets of mutually incomparable "words" in $\tup{\trans^*, \sqsubseteq}$. Every set $L \in \upsub{\trans}$ defines the "upward-closed" set $\cup_{x \in L} \up{\{x\}}$. It follows that every "upward-closed" set can be represented by an element in $\upsub{\trans}$. Let us show that this set forms a "pumping semiring".

Consider the tuple $\semiring{S}_{\upsub{\trans}} = \tup{\upsub{\trans}, \upsub{+}, \upsub{\cdot}, \upsub{\fixop}, \emptyset, \{\epsilon\}}$ where:
\begin{align*}
    L_1 \upsub{+} L_2 &\coloneqq \{u \in L_1 \cup L_2 \mid \forall v \in L_1 \cup L_2. v \sqsubseteq u \implies v = u\}\text{,}\\
    L_1 \upsub{\cdot} L_2 &\coloneqq \{u_1 u_2 \mid u_1 \in L_1, u_2 \in L_2\}\text{, and}\\
    L_1^{\upsub{\fixop}} &\coloneqq \{\epsilon\},
\end{align*}
for arbitrary $L_1, L_2 \in \upsub{\trans}$. Observe that summation maintains the mutual incomparability of the summands. For instance, $\{\delta_1\delta_2\delta_3\} \upsub{+} \{\delta_1\delta_3\} = \{\delta_1\delta_3\}$, for $\delta_1, \delta_2, \delta_3 \in \trans$, since $\delta_1\delta_3 \sqsubseteq \delta_1\delta_2\delta_3$. One can check that the conditions "(C1)"-"(C3)" hold. In particular, property "(C1)" states that, given sets $L_1, L_2, L_3 \in \upsub{\trans}$, the $\sqsubseteq$-minimal subset of the (potentially infinite) set $\{L_1^nL_2L_3^n \mid n \in \nat\}$ is $L_2$, making \emph{pumping behaviour} redundant. Hence, $\semiring{S}_{\upsub{\trans}}$ is a "pumping semiring".

Consider the "PDS" $\pds$, and let $c, c' \in \config{\pds}$ be arbitrary "configurations". We will define the "wPDS" $\tup{\pds, \osem{S}_{\upsub{\trans}}, \upsub{\wt}}$ such that
\begin{equation}\label{eqn:upward-closure}
    \up{\paths(c, c')} = \up{\upsub{\wt}\wtsemantics{c, c'}}.
\end{equation}
Using \Cref{cor:generalised-quant-problem-computable}, we can compute a finite set $J \subseteq \upsub{\trans}$ of finite sets of mutually incomparable elements in $\trans^*$, such that $\bigoplus_{L \in J} L = \upsub{\wt}\wtsemantics{c, c'}$. 

We define the "weight function" $\upsub{\wt}: \trans \rightarrow \upsub{\trans}$ such that $\delta \mapsto \{\delta\}$, for arbitrary $\delta \in \trans$, intuitively mapping every transition $\delta$ to its "upward closure" $\up{\{\delta\}}$. We can prove \Cref{eqn:upward-closure} by induction on the length of the traces in $\paths(c, c')$. 

\begin{figure}
    \centering
    \subfloat[]{
        \centering
\label{subfig:ex-wpds-upward-translated}
\resizebox{0.275\textwidth}{!}{\begin{tikzpicture}
        [shorten >=1pt,node distance=2.5cm,on grid,>={Stealth[round]},auto,initial text=,
        every state/.style={draw=red!50,very thick,fill=red!20}]
        \node[state]  (q_1)        at    (0, 0)          {$q_1$};
        \node[state]          (q_2) [right=of q_1] {$q_2$};
        
        \path[->] (q_1) edge [loop above]             node              {$\push{\alpha}: \{\delta_1\}$} ()
                        edge             node [below, yshift=-2pt] {$\pop{\alpha}: \{\delta_2\}$} (q_2)
                (q_2) edge [loop above] node  {$\pop{\alpha}: \{\delta_3\}$} ();
\end{tikzpicture}}}\hspace{0.15cm}
\subfloat[]{
    \centering
    \label{subfig:wpds-upward-translated-iterations}
    \resizebox{0.5\textwidth}{!}{\begin{tikzpicture}[
    >={Stealth[round]},
    rowlbl/.style={anchor=east, inner sep=0pt},
    lbl/.style={inner xsep=3pt, inner ysep=3pt},
    hl/.style={lbl, fill=hlcolour},
    steplbl/.style={inner sep=1.5pt}
  ]
  \def\rA{0}\def\rB{-1.15}\def\rC{-2.3}\def\yJ{-3.15}

  \node[hl]  (i3) at (0,\rC)
    {$\tup{\tup{q_1, q_2}, \hot{\{\delta_1\}^{\upsub{\fixop}}} \cdot \cold{\{\delta_1\}} \cdot \cold{\{\delta_2\}} \cdot \hot{\{\delta_3\}^{\upsub{\fixop}}}}$};
  \node[hl]  (i2) at (0,\rB) {$\tup{\tup{q_1, q_2}, \cold{\{\delta_1\}} \cdot \cold{\{\delta_2\}}}$};
  \node[lbl] (i1) at (0,\rA) {$\tup{\tup{q_1, q_1}, \stopgap}$};

  \coordinate (rx) at ([xshift=8pt]i3.east);
  \node[lbl] (r1) at (rx |- i1) {$\tup{\tup{q_2, q_2}, \stopgap}$};

  \coordinate (jl) at (0,\yJ);
  \coordinate (jr) at (r1 |- jl);
  \coordinate (jm) at ($(jl)!0.5!(jr)$);
  \node[hl, below=0.5cm of jm] (i4)
    {$\tup{\tup{q_1, q_2}, \hot{\{\epsilon\}} \cdot \cold{\{\delta_1\}} \cdot \cold{\{\delta_2\}} \cdot \hot{\{\epsilon\}} \cdot \stopgap}$};

  \draw[->] (i1) -- node[steplbl, right] {(V3)} (i2);
  \draw[->] (i2) -- node[steplbl, right] {(V2)} (i3);
  \draw (i3.south) -- (jl) -- (jr);
  \draw (r1.south) -- node[steplbl, right, pos=0.93] {(H3)} (jr);
  \draw[->] (jm) -- (i4);

  \node[rowlbl] at ([xshift=-5pt]i3.west |- i1) {iter 1:};
  \node[rowlbl] at ([xshift=-5pt]i3.west |- i2) {iter 2:};
  \node[rowlbl] at ([xshift=-5pt]i3.west |- i3) {iter 3:};
  \node[rowlbl] at ([xshift=-5pt]i3.west |- i4) {iter 4:};
\end{tikzpicture}}
    }
    \captionsetup{skip=5pt}
    \caption{\label{fig:wpds-upward-translated} Let $\trans_{\text{ex}} = \{\delta_1, \delta_2, \delta_3\}$, where $\delta_1 = \tup{q_1, \epsilon, a, \alpha, q_1}$, $\delta_2 = \tup{q_1, \alpha, a, \epsilon, q_2}$, and $\delta_3 = \tup{q_2, \alpha, a, \epsilon, q_2}$. (a) A translation of the "PDS" $\pds_{\text{ex}}$ to the "wPDS" $\tup{\pds_{\text{ex}}, \semiring{S}_{\upsub{(\trans_{\text{ex}})}}, \upsub{(\wt_{\text{ex}})}}$, and (b) an illustration of our iterative "upward closure" construction for the "wPDS" in (a). For each iteration, denoted by $\mathsf{iter }i$, we depict the "root" labels of the trees added in that iteration. The arrows indicate the source "root" labels used to generate the target "root" label, and are labelled with the step used for the generation.}
    \vspace{-15pt}
\end{figure}

In \Cref{fig:wpds-upward-translated}, we illustrate such a translation for the underlying "PDS" $\pds_{\text{ex}}$ defined in \Cref{ex:wpds}. Observe that our construction yields the following increasing (albeit degenerate) sequence of under-approximations of $\up{\paths(\tup{q_1, \dashv}, \tup{q_2, \dashv})}$, obtained by applying $\forget$ to every expression in the labels indicated with yellow (in \Cref{subfig:wpds-upward-translated-iterations}):
$$\{\delta_1\delta_2\} \subseteq \{\delta_1\delta_2\} \subseteq \{\delta_1\delta_2\}.$$

As a useful side effect, the above algorithm simultaneously yields the set of minimal traces between the "configurations" $c$ and $c'$. Thus, it also serves as a trace generation algorithm.

\subsection{Downward Closures}\label{subsec:compute-downward}

It was shown by Jullien \cite{jullien1969contribution} (and later rediscovered by Abdulla et al. \cite{abdulla2004using}) that the class of "downward-closed" languages is precisely captured by the class of simple regular expressions. A ""simple regular expression"" ("SRE") is a regular expression defined by the following syntax, where $u$ ranges over a finite alphabet $A$, and $X$ ranges over $F(A)$: $$\sre{A} \coloneqq \emptyset \mid (\epsilon + u) \mid X^* \mid \sre{A} + \sre{A} \mid \sre{A} \cdot \sre{A}.$$ Here, we call the base expressions of the form $(\epsilon + u)$ and $X^*$ ""atoms"". Observe that every atom itself describes a "downward-closed" language. We write $\expsem{e} \subseteq A^*$ to denote the semantics of the "SRE" $e \in \sre{A}$. Moreover, we assume that sufficiently many equational axioms hold for the expressions in $\sre{A}$ so that $\tup{\sre{A}, +, \cdot, \emptyset, \emptyset^*}$ forms an "idempotent semiring".

In this subsection, we will refine the notion of "SREs" so that they form a "pumping semiring". We then obtain an algorithm to compute the "downward closures" of context-free languages \emph{purely} as SREs. 
The key idea in \Cref{subsec:upward-closure-free} was to construct an auxiliary "wPDS" using an \emph{appropriate "pumping semiring"} such that, given any two "configurations", the weight along any trace between them encodes the "upward closure" of that trace, while \emph{accelerating} any loops within it. As a result, every trace generates an under-approximation of the "upward closure" of the set of all such traces. 

In the case of "downward closures", the natural choice is the semiring $\semiring{S}_{\downsub{\trans}} = \tup{\sre{\trans}, +, \cdot, \emptyset, \emptyset^*}$ of "SREs", but to lift it to a "pumping semiring", we must define the correct notion of the unary operator $\fixop$ in \Cref{def:fixop-semiring}. Our definition is motivated by Courcelle's "downward closure" construction for context-free grammars \cite{courcelle1991constructing}. 

Given an "SRE" $e \in \sre{\trans}$, let $\letters{e} \subseteq \trans$ be the set of all transitions that occur in the expression. We can inductively define it as follows:
\begin{equation*}
    \begin{array}{ccc}
        \letters{\emptyset} \coloneqq \emptyset, & 
        \letters{\epsilon + \delta} \coloneqq \{\delta\}, &
        \letters{X^*} \coloneqq X, \\
        \letters{e_1 + e_2} \coloneqq \letters{e_1} \cup \letters{e_2}, & &
        \letters{e_1 \cdot e_2} \coloneqq \letters{e_1} \cup \letters{e_2},
    \end{array}
\end{equation*}
where $e_1, e_2 \in \sre{\trans}$, $\delta \in \trans$, and $X \in F(\trans)$. We then define the function $\downsub{\fixop} : \sre{\trans} \rightarrow \sre{\trans}$ such that $e^{\downsub{\fixop}} \coloneqq \letters{e}^*$. Then the tuple $\osem{S}_{\downsub{\trans}} = \tup{\sre{\trans}, +, \cdot, \downsub{\fixop}, \emptyset, \emptyset^*}$ satisfies condition "(C2)" by definition. For "(C3)", observe that the set $\downsub{\fixop}(\sre{\trans}) = \left\{X^* \mid X \in F(\trans) \right\}$ is finite.

To see that property "(C1)" is satisfied, consider the infinite sum $\sum_{n \in \nat} \left((\epsilon + \delta_1)(\epsilon + \delta_1')\right)^n \cdot e \cdot \left((\epsilon + \delta_2')(\epsilon + \delta_2)\right)^n$, where $e \in \sre{\trans}$ and $\delta_i, \delta_i' \in \trans$ for $i \in [2]$. Every "word" in the expression $\left((\epsilon + \delta_1)(\epsilon + \delta_1')\right)^n$ is a "word" over $\{\delta_1, \delta_1'\}$. Conversely, every "word" $w$ over $\{\delta_1, \delta_1'\}$ can be produced by dropping labels from the "word" $(\delta_1\cdot \delta_1')^{|w|}$, matching the $j$-th letter of $w$ inside the $j$-th block, and hence belongs to $\left((\epsilon + \delta_1)(\epsilon + \delta_1')\right)^n$ for every $n \geq |w|$. The same holds for the right-hand factor. Thus, the infinite sum is equivalent to the expression $(\delta_1 + \delta_1')^* \cdot e \cdot (\delta_2 + \delta_2')^*$, i.e., $\letters{(\epsilon + \delta_1)(\epsilon + \delta_1')}^* \cdot e \cdot \letters{(\epsilon + \delta_2)(\epsilon + \delta_2')}^*$.

We define the auxiliary "wPDS" $\tup{\pds, \osem{S}_{\downsub{\trans}}, \downsub{\wt}}$ where $\downsub{\wt}(\delta) \coloneqq (\epsilon + \delta)$, for $\delta \in \trans$. By induction, we can prove the following analogue of \Cref{eqn:upward-closure} for two "configurations" $c$ and $c'$:
\begin{equation}
    \down{\paths(c, c')} = \expsem{\downsub{\wt}\wtsemantics{c, c'}}.
\end{equation}

\begin{figure}
    \centering
    \subfloat[]{
        \centering
\label{subfig:ex-wpds-downward-translated}
\resizebox{0.275\textwidth}{!}{\begin{tikzpicture}
        [shorten >=1pt,node distance=2.5cm,on grid,>={Stealth[round]},auto,initial text=,
        every state/.style={draw=red!50,very thick,fill=red!20}]
        \node[state]  (q_1)        at    (0, 0)          {$q_1$};
        \node[state]          (q_2) [right=of q_1] {$q_2$};
        
        \path[->] (q_1) edge [loop above]             node              {$\push{\alpha}: \epsilon + \delta_1$} ()
                        edge             node [below, yshift=-5pt] {$\pop{\alpha}: \epsilon + \delta_2$} (q_2)
                (q_2) edge [loop above] node  {$\pop{\alpha}: \epsilon + \delta_3$} ();
\end{tikzpicture}}}\hspace{0.15cm}
\subfloat[]{
    \centering
    \label{subfig:wpds-downward-translated-iterations}
    \resizebox{0.5\textwidth}{!}{\begin{tikzpicture}[
    >={Stealth[round]},
    rowlbl/.style={anchor=east, inner sep=0pt},
    lbl/.style={inner xsep=3pt, inner ysep=3pt},
    hl/.style={lbl, fill=hlcolour},
    steplbl/.style={inner sep=1.5pt}
  ]
  \def\rA{0}\def\rB{-1.15}\def\rC{-2.3}\def\yJ{-3.15}

  \node[hl]  (i3) at (0,\rC)
    {$\tup{\tup{q_1, q_2}, \hot{\{\delta_1\}^*} \cdot \cold{(\epsilon + \delta_1)} \cdot \cold{(\epsilon + \delta_2)} \cdot \hot{\{\delta_3\}^*}}$};
  \node[hl]  (i2) at (0,\rB)
    {$\tup{\tup{q_1, q_2}, \cold{(\epsilon + \delta_1)} \cdot \cold{(\epsilon + \delta_2)}}$};
  \node[lbl] (i1) at (0,\rA) {$\tup{\tup{q_1, q_1}, \stopgap}$};

  \coordinate (rx) at ([xshift=8pt]i3.east);
  \node[lbl] (r1) at (rx |- i1) {$\tup{\tup{q_2, q_2}, \stopgap}$};

  \coordinate (jl) at (0,\yJ);
  \coordinate (jr) at (r1 |- jl);
  \coordinate (jm) at ($(jl)!0.5!(jr)$);
  \node[hl, below=0.5cm of jm] (i4)
    {$\tup{\tup{q_1, q_2}, \hot{\{\delta_1\}^*} \cdot \cold{(\epsilon + \delta_1)} \cdot \cold{(\epsilon + \delta_2)} \cdot \hot{\{\delta_3\}^*} \cdot \stopgap}$};

  \draw[->] (i1) -- node[steplbl, right] {(V3)} (i2);
  \draw[->] (i2) -- node[steplbl, right] {(V2)} (i3);
  \draw (i3.south) -- (jl) -- (jr);
  \draw (r1.south) -- node[steplbl, right, pos=0.93] {(H3)} (jr);
  \draw[->] (jm) -- (i4);

  \node[rowlbl] at ([xshift=-5pt]i3.west |- i1) {iter 1:};
  \node[rowlbl] at ([xshift=-5pt]i3.west |- i2) {iter 2:};
  \node[rowlbl] at ([xshift=-5pt]i3.west |- i3) {iter 3:};
  \node[rowlbl] at ([xshift=-5pt]i3.west |- i4) {iter 4:};
\end{tikzpicture}}
    }
    \captionsetup{skip=5pt}
    \caption{\label{fig:wpds-downward-translated} We borrow notation from \Cref{fig:wpds-upward-translated}. (a) A translation of the "PDS" $\pds_{\text{ex}}$ to the "wPDS" $\tup{\pds_{\text{ex}}, \semiring{S}_{\downsub{(\trans_{\text{ex}})}}, \downsub{(\wt_{\text{ex}})}}$, and (b) our iterative "downward closure" construction for the "wPDS" in (a).}
    \vspace{-10pt}
\end{figure}

In \Cref{fig:wpds-downward-translated}, we illustrate such a translation for the underlying "PDS" $\pds_{\text{ex}}$ defined in \Cref{ex:wpds}. As in the case of "upward closure", our construction yields the following increasing sequence of under-approximations of $\down{\paths(\tup{q_1, \dashv}, \tup{q_2, \dashv})}$:
$$(\epsilon + \delta_1)(\epsilon + \delta_2) \subseteq \{\delta_1\}^*(\epsilon + \delta_1)(\epsilon + \delta_2)\{\delta_3\}^* \subseteq \{\delta_1\}^*(\epsilon + \delta_1)(\epsilon + \delta_2)\{\delta_3\}^*.$$

\section{Quantitative Network Verification}\label{sec:network-verification}

We now show that several classes of quantitative problems about networks of "PDSs" can be generically analysed within our framework. Let us assume a "pumping semiring" $\osem{S} = \tup{S, \oplus, \otimes, \fixop, \vf{0}, \vf{1}}$ and a partial order $\indleq$ on $S$, not necessarily the same as $\isrleq{S}$.
We now divide the problems into two main categories. All order comparisons are w.r.t. $\indleq$.

\paragraph{Threshold Problems}
We first consider three classes of threshold problems, where the threshold $\vf{r}$ is known, and $\vf{C}_1, \vf{C}_2$ are sets of network configurations. In $(\vf{r}, \vf{C}_1, \vf{C}_2)$-""Safety"", we ask whether \emph{every} trace from a configuration in $\vf{C}_1$ to one in $\vf{C}_2$ satisfies the upper bound $\vf{r}$. In $(\vf{r}, \vf{C}_1, \vf{C}_2)$-""Reachability"", we dually ask whether, for \emph{each} process $p$, some such trace satisfies the lower bound $\vf{r}^p$ on $p$. Finally, in $(\vf{r},\vf{C}_1,\vf{c})$-""Cost-bounded reachability"", we ask whether a configuration $\vf{c}$ is reachable from some configuration in $\vf{C}_1$, and whether every trace from
$\vf{C}_1$ to $\vf{c}$ satisfies an upper bound $\vf{r}$.

\paragraph{Optimality Problems}
Each of the above properties is parameterised by a threshold. A violation of $\vf{r}$-safety, for instance, witnesses a trace whose weight exceeds $\vf{r}$; dually, requiring that no trace triggers a violation amounts to computing an appropriate threshold in the first place. This computation matters in practice: overly aggressive thresholds may flag even desired behaviour as violations, thus impeding network availability. We therefore also consider, for each property above, an \emph{optimality} analogue in which the threshold must be \emph{computed} as the set $R$ of optimal (minimal for safety and cost-bounded reachability, maximal for reachability) bounds $\vf{r}$. 

In \Cref{subsec:pushdown-networks}, we will introduce the notion of networks of "wPDSs" and briefly review the existing literature on qualitative verification of networks of "PDSs". In \Cref{subsec:meta-problem}, we will define a novel quantitative problem that lies at the heart of all the above problems. In \Cref{subsec:problem-formalisation}, we will formalise each of the above problems, and state our results. In \Cref{subsec:illuminating-example}, we will demonstrate our verification algorithm on the motivating example from \Cref{subsec:motivation}. 

\subsection{Networks of Weighted Pushdown Systems}\label{subsec:pushdown-networks}

It is widely known that the control-state reachability problem for networks of "PDSs", communicating over various mechanisms, is undecidable in general \cite{aiswarya2020network,atig2014bounded}. Consequently, a quantitative analysis of such models is impossible in general. In this paper, we will study acyclic networks of \emph{weighted} "PDSs". 

\inlsec{Acyclic Networks}
Acyclic networks of "PDSs" were first introduced by Atig et al. \cite{atig2008reachability}. They considered two kinds of networks with differing communication mechanisms: communication through shared memory and communication through unreliable channels. They studied their configuration reachability problems and showed decidability in both cases. For the sake of simplicity, we describe the model with shared-memory communication, naturally extended with \emph{weights}.

A \emph{network} is defined by a tuple $\mathcal{N} = \tup{P, \ch, \{\tup{\pds^p, \semiring{S}^p, \wt^p}\}_{p \in P}, \{\phi^\delta\}_{\delta \in \bigcup_{p \in P} \trans^p}}$, where $P$ is a finite set of processes, $\ch \subseteq P \times P$ is a binary relation denoting a set of channels $\tup{p', p}$ from the \emph{source} process $p'$ to the \emph{destination} process $p$, and each process $p \in P$ is assigned a "wPDS" $\tup{\pds^p, \semiring{S}^p, \wt^p}$ where $\pds^p = \tup{Q^p,A^p,\Gamma^p,\trans^p}$. This models a network of "PDSs" according to the channel relation $\ch$: a "PDS" $\pds^p$ can observe the state of any "PDS" $\pds^{p'}$ if $\tup{p', p} \in \ch$, and conditionally take a transition $\delta \in \trans^p$ based on its observation; this condition is modelled by $\phi^\delta \subseteq \prod_{p'' \in \{p\} \cup \{p' \mid \tup{p', p} \in \ch\}} Q^{p''}$, which characterises the valid tuples of states of $p$ and of the processes it observes. The network $\mathcal{N}$ is \emph{acyclic} iff the transitive closure $\ch^+ \subseteq P \times P$ of $\ch$ is \emph{anti-reflexive}.

A ""local configuration"" of the "PDS" $\pds^p$ is a "configuration" in $\config{\pds^p}$. A ""global configuration"" of the network $\mathcal{N}$ is a tuple\footnote{For some predefined enumeration of $P$.} $\vf{c} \in (\config{\pds^p})_{p \in P}$ of "local configurations". Let $\nconfig{\mathcal{N}}$ denote the set of all "global configurations". We define ""local states"" and ""global states"" analogously. Given $p \in P$, a set $C \subseteq \config{\pds^p}$ of "local configurations" is called regular iff it is a finite union of sets of the form $\{q\} \times L$, where $L \subseteq \Gamma^*\dashv$ is a regular language. A set $\vf{C} \subseteq \nconfig{\mathcal{N}}$ of "global configurations" is called ""regular"" iff it is the Cartesian product of regular sets of "local configurations". 

Let $\vf{c}^p$ denote the "local configuration" that corresponds to the process $p$ in the "global configuration" $\vf{c}$. Then, a \emph{successor} of $\vf{c}$ is a "global configuration" $\vf{c}'$ such that, for some $p \in P$, there exists a transition $\delta \in \trans^p$ such that (1) $\vf{c}^p \xreaches{\delta}_{\pds^p} \vf{c}'^p$, (2) $\tup{st(\vf{c}^{p''})}_{p'' \in \{p\} \cup \{p' \mid \tup{p', p} \in \ch\}} \in \phi^\delta$ (we then say that these states \emph{satisfy} $\phi^\delta$), and (3) $\forall p' \neq p. \vf{c}^{p'} = \vf{c}'^{p'}$. We write $\vf{c} \xreaches{\delta}_{\mathcal{N}} \vf{c}'$ to denote this relation, and $\reaches_{\mathcal{N}}$ to denote the reflexive-transitive closure of all such relations, dropping the subscript when the context is clear. We define ""global runs"" and ""global traces"" analogously. Given a "global run" (resp. "global trace") $\vr$ (resp. $\vt$, written in boldface), we define the ""local run"" (resp. ""local trace"") of the process $p$, denoted by $\vr^p$ (resp. $\vt^p$), as the projection of $\vr$ (resp. $\vt$) onto the transitions that belong to $\pds^p$. Analogously to "weight functions" for "PDSs", we define

\begin{align*}
    \paths_{\mathcal{N}}(\vf{c}, \vf{c}') &\coloneqq \{\delta_1\cdots\delta_n \mid \vf{c} \xreaches{\delta_1} \cdots \xreaches{\delta_n} \vf{c'}\}\\
    \nwtsemantics{\vf{c}, \vf{c'}}_{\mathcal{N}} &\coloneqq \tup{\bigoplus_{\vt \in \paths_{\mathcal{N}}(\vf{c}, \vf{c}')} (\wt^p)^*(\vt^p)}_{p \in P}.
\end{align*}
Intuitively, $\nwtsemantics{\vf{c}, \vf{c'}}_{\mathcal{N}}$ captures the weight of every "local trace" in a "global trace" from $\vf{c}$ to $\vf{c}'$ as a tuple. 

\begin{example}\label[example]{ex:wpds-network}
    Consider the example of a simple client-server model illustrated in \Cref{subfig:ex-wpds-network}. The underlying semiring is the arctic semiring. In the initial state $in_\mathsf{serv}$, the server process $\pds_\mathsf{serv}$ non-deterministically chooses to \emph{display} either the message $a$, by moving to state $\mathsf{snt}_a$, or the message $b$, by moving to state $\mathsf{snt}_b$. Both transitions cost $1$ unit of work. On the other hand, the client process $\pds_\mathsf{cl}$ repeatedly checks the state of the server process. If it observes the message $a$, then it pushes $\alpha$ onto its stack and moves back to its initial state $in_\mathsf{cl}$, intuitively asking the server process to display the message $b$. This transition costs $1$ unit of work. It keeps doing so until it observes the message $b$, whereupon it moves to the state $\mathsf{rcv}_b$. Once it reaches this state, it may choose to move to the state $fin$ and flush its stack. Each popping operation costs $1$ unit of work.

    In \Cref{subfig:wpds-network-runs}, we depict a "global run" with two "local runs" of the two processes. The server process transitions to state $\mathsf{snt}_b$ via the state $\mathsf{snt}_a$. For the first half of $\pds^{\mathsf{cl}}$'s run, it observes the state $\mathsf{snt}_a$ twice, and thus pushes $\alpha$ twice onto the stack. At this point, the server decides to move to state $\mathsf{snt}_b$. Once the client process observes this, it moves to $\mathsf{rcv}_b$, followed by $fin$, and flushes its stack. The weight of this "global run" is the tuple $\tup{\mathsf{serv}: 2, \mathsf{cl}: 4} \in \nat^2$.

    \begin{figure}
		\centering
		\subfloat[]{
            \centering
			\label{subfig:ex-wpds-network}
            \resizebox{0.4\textwidth}{!}{\begin{tikzpicture}
                    [shorten >=1pt,node distance=2.5cm,on grid,>={Stealth[round]},auto,initial text=,
                    every state/.style={draw=red!50,very thick,fill=red!20, minimum size=1.1cm}]
                    \node (Ps) at (0, 1) {$\pds^{\mathsf{serv}}$}; 
                    \node (spic) at (0, 0) {\includegraphics[page=1, scale=1.5]{images/server.pdf}}; 
                    \node[state]  (start_s)            at (2, 0)          {$in^{\mathsf{serv}}$};
                    \node[state]          (snt_a) at (5, 1) {$\mathsf{snt}_a$};
                    \node[state]          (snt_b) at (5, -1) {$\mathsf{snt}_b$};

                    \node (Pc) at (0, -5) {$\pds^{\mathsf{cl}}$}; 
                    \node (cpic) at (0, -4) {\includegraphics[page=1, scale=3.5]{images/client.pdf}}; 
                    \node[state]  (start_c)            at (2, -4)          {$in^{\mathsf{cl}}$};
                    \node[state]  (rcv_a) at (5, -3) {$\mathsf{rcv}_a$};
                    \node[state]  (rcv_b) at (5, -5) {$\mathsf{rcv}_b$};
                    \node[state]  (fin) at (8, -5) {$fin$};
                    
                    \path[->] (start_s) edge [loop above]  node {$\nop : 0$}   ()
                                    edge    node [xshift=9pt] {$\nop : 1$}          (snt_a)
                                    edge node [xshift=-8pt] {$\nop : 1$} (snt_b)
                            (snt_a) edge node {$\nop : 1$} (snt_b);
                    \path[->] (start_c) 
                                    edge [bend left=30]   node  {$\mathsf{snt}_a?: 0$}      (rcv_a)
                                    edge [bend right=30]  node [below]  {$\mathsf{snt}_b?: 0$} (rcv_b)
                            (rcv_a) edge [bend left=30]  node [above, yshift=-15.5pt] {$\push{\alpha}: 1$} (start_c)
                            (rcv_b) edge  node  {$\pop{\alpha}: 1$} (fin)
                            (fin) edge [loop above]  node  {$\pop{\alpha}: 1$} ();
                    \draw[->, double, double distance=3pt, dash pattern=on 3pt off 2pt] (spic) -- (cpic);
            \end{tikzpicture}}}\hspace{-13pt}
	\subfloat[]{
		\centering
		\label{subfig:wpds-network-runs}
		\resizebox{0.6\textwidth}{!}{
                \includegraphics[page=1, scale=3.5]{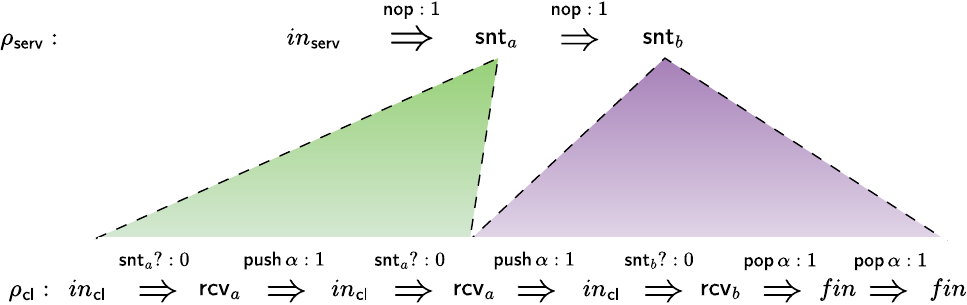}
            }}
            \captionsetup{skip=5pt}
            \caption{\label{fig:wpds-network} (a) An example of an acyclic network with a client process (below) and a server process (above). The labels of the form $(\mathsf{snt}_a?)$ indicate a check with the singleton valid set $\{\tup{\mathsf{serv}: \mathsf{snt}_a, \mathsf{cl}: in^{\mathsf{cl}}}\}$ of state tuples. (b) A correspondence between two asynchronous runs $\run^{\mathsf{serv}}$ and $\run^{\mathsf{cl}}$ of the two processes. We have omitted the stack contents from our "configurations". The coloured triangles depict the scope of the runs where $\pds_\mathsf{cl}$ views the corresponding $\pds_\mathsf{serv}$'s state near its top vertex.}
    \vspace{-9pt}
	\end{figure}
\end{example}

\subsection{Finding the Common Denominator}\label{subsec:meta-problem}
All the threshold and optimality problems discussed at the start of this section share a common core: aggregating, for each process, the weights of all "global traces" between two "global configurations". We now present this core problem, a weighted extension of the reachability problem proposed in \cite{atig2008reachability} for acyclic networks of "wPDSs". In \Cref{subsec:problem-formalisation}, we will show that it provides the basis for solving all of them.

\begin{problemdef}
    \emph{""\textsc{GenConfigReach}""}\textbf{:} \\[5pt]
    \textbf{Input:} \\
    A network $\mathcal{N} = \tup{P, \ch, \{\tup{\pds^p, \osem{S}^p, \wt^p}\}_{p \in P}, \{\phi^\delta\}_{\delta \in \bigcup_{p \in P} \trans^p}}$, where $\osem{S}^p$ is a "pumping semiring" for every $p \in P$, and two "global configurations" $\vf{c}, \vf{c'} \in (\config{\pds^p})_{p \in P}$. \\[3pt]
    \textbf{Questions:}
    \begin{itemize}
        \item Does $\vf{c} \reaches_{\mathcal{N}} \vf{c}'$ hold?
        \item What is $\nwtsemantics{\vf{c}, \vf{c}'}_{\mathcal{N}}$?
    \end{itemize}
\end{problemdef} 
Note that when $\osem{S}^p = \trivialsem$ for every $p \in P$, each "wPDS" in the network is a "PDS". Atig et al. showed that \genconfigreach is decidable in such cases \cite[Prop. 4, Thm. 4]{atig2008reachability}. 

\inlsec{Main result}
A "pumping semiring" $\osem{S}$ is $\isrleq{S}$-""thin"" (resp. $\isrgeq{S}$-"thin") iff for every $a, b, c \in \osem{S}\setminus\{\vf{0}\}$, it holds that $a \isrleq{S} b\otimes a \otimes c$ (resp. $a \isrgeq{S} b\otimes a \otimes c$).\footnote{Both $\leq_{\osem{S}}$-"thin" \cite{abuhlail2015v} and $\geq_{\osem{S}}$-"thin" \cite{galatos2007residuated,gil2023structure} semirings have been previously studied in the semiring literature under different names. We chose our terminology to emphasise their mutual duality.} We say $\osem{S}$ is "thin" if it is either $\isrleq{S}$-"thin" or $\isrgeq{S}$-"thin". Equivalently, a $\isrleq{S}$-"thin" (resp. $\isrgeq{S}$-"thin") semiring $\osem{S}$ imposes the condition that $\vf{1}$ is the smallest (resp. largest) non-zero element in the semiring.%

\begin{figure}[t]
\centering
\footnotesize
\renewcommand{\arraystretch}{1}
\begin{tabular}{@{}p{0.335\columnwidth}p{0.075\columnwidth}p{0.39\columnwidth}@{}}
\toprule
\textbf{"Pumping Semirings"} & ``$x$''-"thin" & $\nwtsemantics{\vf{c}, \vf{c}'}_{\mathcal{N}}$ computes: \\
\midrule
\semiringrow%
  {\textbf{Arctic semiring} (AS)}
  & $\leq_{AS}$ & \text{\emph{Worst-case} \textbf{latency/cost} across all traces}\\
\semiringrow%
  {\textbf{Probabilistic-union semiring} (PU)}
  & $\leq_{PU}$ & \text{\emph{Worst-case} \textbf{failure rate} across all traces}\\
\semiringrow%
  {\textbf{Downward-closures} ($\semiring{S}_{\downsub{\trans}}$)}
  & $\leq_{\semiring{S}_{\downsub{\trans}}}$ & \text{\textbf{"Downward closure"} of all traces}\\
\semiringrow%
  {\textbf{Upward-closures} ($\semiring{S}_{\upsub{\trans}}$)}
  & $\geq_{\semiring{S}_{\upsub{\trans}}}$ & \text{\textbf{"Upward closure"} of all traces}\\
\semiringrow%
  {\textbf{Tropical semiring} (TS)}
  & $\geq_{TS}$ & \text{\emph{Best-case} \textbf{cost} across all traces}\\
\semiringrow%
  {\textbf{Boolean semiring} ($\trivialsem$)}
  & $\geq_{\trivialsem}$, $\leq_{\trivialsem}$ & \text{\emph{Existence} of a trace}\\
\semiringrow%
  {\textbf{Bottleneck semiring} (BS)}
  & $\geq_{BS}$ & \text{\emph{Best-case} \textbf{bandwidth} across all traces}\\
\semiringrow%
  {\textbf{Viterbi semiring} (VS)}
  & $\geq_{VS}$ & \text{\emph{Best-case} \textbf{reliability} across all traces}\\
  \semiringrow%
  {\textbf{Security semiring} (SS)}
  & $\geq_{SS}$ & \text{\emph{Best-case} \textbf{security level} across all traces}\\
  \semiringrow%
  {\textbf{Why semiring} (WS)}
  & $\geq_{WS}$ & \text{\emph{Minimal} \textbf{witness sets} across all traces}\\
\bottomrule
\end{tabular}
\caption{"Thin" semirings and their interpretations in the analysis of a network $\mathcal{N}$.}
\label{fig:semiring-interpretation}
\vspace{-8pt}
\end{figure}

While being "thin" may seem like a strong restriction, it still subsumes a rich class of "pumping semirings". For instance, \emph{all} the "pumping semirings" described in \Cref{fig:semiring-examples} are "thin" (cf. \Cref{fig:semiring-interpretation}). In fact, inserting transitions into a trace can only increase (resp. decrease) its weight in a $\isrleq{S}$-"thin" (resp. $\isrgeq{S}$-"thin") semiring, so the weight of a set of traces coincides with that of its "downward closure" (resp. "upward closure"), and can be read off from it directly. After linearising the network \cite[Prop. 3]{atig2008reachability}, each process of an acyclic network is constrained by the others only through the closures of their observations, so the same holds for networks. For these semirings, our contribution is thus a single procedure computing both closures together with the corresponding weights, uniformly.
We prove that \genconfigreach is computable for $\isrleq{S}$-"thin" (resp. $\isrgeq{S}$-"thin") "pumping semirings": 
\begin{theorem}\label{thm:thin-network-verification}
    Let $\mathcal{N} = \tup{P, \ch, \{\tup{\pds^p, \osem{S}^p, \wt^p}\}_{p \in P}, \{\phi^\delta\}_{\delta \in \bigcup_{p \in P} \trans^p}}$ be an arbitrary acyclic network, such that for every $p \in P$, $\osem{S}^p$ is $\leq_{\osem{S}^p}$-"thin" (resp. $\geq_{\osem{S}^p}$-"thin"). Let $\vf{c}, \vf{c}'$ be arbitrary "global configurations" of $\mathcal{N}$. Then \genconfigreach can be solved for the input $\mathcal{N}$, $\vf{c}$, and $\vf{c}'$. 
\end{theorem}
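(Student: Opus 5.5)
We follow the strategy of Atig et al.~\cite{atig2008reachability}, replacing their qualitative downward closures by the quantitative closures of \Cref{sec:closure-computation}. We treat the $\leq$-"thin" case; the $\geq$-"thin" case is dual, with "upward closures" replacing "downward closures". First, we linearise the acyclic network following \cite[Prop. 3]{atig2008reachability}. This yields a chain $p_1, \ldots, p_k$ in which each $p_{j+1}$ observes only $p_j$. The reduction preserves global traces up to projection, so it preserves $\nwtsemantics{\vf{c}, \vf{c}'}_{\mathcal{N}}$. We also reduce arbitrary source and target configurations to empty-stack states using the populate/flush gadget behind \Cref{cor:generalised-quant-problem-computable}.

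We then proceed inductively along the chain. Suppose we have a finite automaton $\mathcal{A}_j$ over the "local states" of $p_j$, annotated with weights. It should recognise, as a downward-closed language, the state sequences that $p_j$ can exhibit on runs compatible with all its ancestors. We form the product of $\pds^{p_{j+1}}$ with $\mathcal{A}_j$: a transition $\delta$ of $p_{j+1}$ is enabled only if the current $\mathcal{A}_j$-state's observed $p_j$-state satisfies $\phi^\delta$, and $\mathcal{A}_j$ may advance asynchronously. The product is again a "PDS". We weight it over the product semiring $\osem{S}^{p_{j+1}} \times \osem{S}_{\downsub{\trans}}$, which we must check is a "pumping semiring". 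Conditions "(C1)"--"(C3)" hold componentwise, and $\fixop$ is defined pointwise.

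By \Cref{cor:generalised-quant-problem-computable}, we compute $\wt^{p_{j+1}}$ together with an "SRE" for the "downward closure" of the compatible traces. From the SRE we build $\mathcal{A}_{j+1}$ for the next stage.

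Correctness has two parts. For soundness, every weight we compute is a countable sum of weights of genuine global traces, using "(C2)". For completeness, we must show that replacing the exact observation language of $p_j$ by its "downward closure" neither adds spurious weight nor loses any. This is where "thinness" is essential. Consider a trace of $p_{j+1}$ that is compatible only with a subword $w'$ of some realisable observation sequence $w$ of $p_j$. Since observations are stuttering-invariant (a state can be observed repeatedly while $p_j$ idles), compatibility with $w'$ already implies compatibility with $w$ after inserting waits. The weight of $p_j$ on the trace producing $w$ dominates, by $a \leq b a c$, any weight attributed to $w'$. Hence the sums coincide.

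I expect the main obstacle to be coupling the weights of upstream processes to the global traces that are actually realisable downstream, since the target imposes constraints back up the chain, as the motivating example shows. My plan is to handle this by carrying, at each stage, the tuple of weights of all ancestors inside the automaton $\mathcal{A}_j$. Concretely, each SRE atom is tagged with the ancestor weights accumulated between observations. When $p_{j+1}$ consumes an atom, it multiplies in those tagged weights, so the final product semiring for $p_k$ records every component. The components for $p_1, \ldots, p_k$ are then read off at the target, and the first question (reachability) is answered by checking that the result is nonzero.
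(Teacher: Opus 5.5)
Your skeleton (linearise, go top-down along the chain, and compute a downward closure jointly with a weight over a product semiring) matches the paper's. The step that carries the theorem is missing, though.

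You pass each stage through \Cref{cor:generalised-quant-problem-computable}. That returns one aggregate $\langle e, a\rangle$, with $e$ a sum of SREs and $a$ the total weight, and at that point the trace--weight entanglement is already lost. In the motivating example the client's aggregate is $\langle e_{\mathsf{send}} + e_{\mathsf{local}}, \infty\rangle$. The $\infty$ comes from local-route traces that occur in no global trace reaching the target, so your soundness claim fails for the pipeline as written.

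Your remedy of tagging SRE atoms with ancestor weights is not grounded, for three reasons:
\begin{itemize}
\item The corollary's output carries no per-atom or per-branch weights to tag with.
\item Weights of parent segments the child never observes must still be counted, so multiplying a tag only when an atom is consumed is wrong.
\item Most importantly, your completeness argument does not prove what is needed. The inequality $a \leq b a c$ says superwords are heavier. It does not say that the parent's weight summed over the traces compatible with the child equals its weight summed over all traces whose closure you passed down. For a union of unrelated trace families this is false: above it yields $\infty$ instead of $0$.
\end{itemize}

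The missing idea is directedness. The saturation never uses $\oplus$, so each individual root label $\langle\langle p, p'\rangle, \langle e, a\rangle\rangle$ of the fixpoint $J_{i_0}$ has $e$ a product of atoms. The countable trace family $X$ it summarises via (P3) is therefore directed under $\sqsubseteq$ (\Cref{lem:supertrace-exists}). Compatibility with a child's observations is preserved under taking superwords, so the compatible traces are cofinal in $X$, and $\leq$-thinness makes their weights sum to exactly $a$. More strongly, \Cref{cor:well-behaved-downward-closure} gives: for every countable set of observations from $\expsem{e}$ there is a $\sqsubseteq$-chain in $X$ with weight sum $a$ into whose elements every finite subset of those observations embeds. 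Hence a label's parent weight does not depend on what the child observes.

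The paper accordingly never sums mid-chain. It keeps the finite set of labels and runs the child's fixpoint separately against each parent label's closure $\pi_{st}^*(\expsem{e})$. It then extends tuples of labels down the chain and sums only at the end. A tagging variant would have to tag per label rather than per atom, and prove exactly this cofinality statement.

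Separately, $\mathcal{S}^{p_{j+1}} \times \mathcal{S}_{\downsub{\trans}}$ is not a pumping semiring, because non-zero pairs may have a zero component. Take the arctic semiring as first factor and $a = \langle -\infty, \emptyset^*\rangle$, $b = \langle 0, \emptyset^*\rangle$, $c = \langle 1, \emptyset^*\rangle$. The left side of (C1) gives $0$ and the right side gives $\infty$, and \Cref{prop:one-omega} leaves no other choice of $\fixop$. You need the subsemiring $\boxtimes$, which is closed because $\leq$-thin semirings have no zero divisors (\Cref{lem:thin-products}). The full product works only in the $\geq$-thin case.
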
 

Intuitively, the proof is a non-trivial adaptation of Atig et al.'s decision procedure for acyclic "PDS" networks, using our fixpoint algorithm for a \emph{joint} computation of language closures and weight summaries. However, unlike the qualitative case, where a bottom-up approach is dual to the top-down approach, the \emph{symmetry} breaks in the case of quantitative analysis. For $\leq_{\osem{S}}$-"thin" semirings, we will need to use the top-down approach, whereas for $\geq_{\osem{S}}$-"thin" semirings, we will need to use the bottom-up approach.\footnote{This corresponds to the fact that $\trivialsem$ is both $\leq_{\trivialsem}$-"thin" and $\geq_{\trivialsem}$-"thin".} See \Cref{sec:thin-network-verif-proof} for further details. 

\subsection{Formalising Threshold and Optimality}\label{subsec:problem-formalisation}

We are now ready to formalise each of the network verification problems stated at the beginning of this section. We fix an arbitrary network $\mathcal{N} = \tup{P, \ch, \{\tup{\pds^p, \osem{S}^p, \wt^p}\}_{p \in P}, \{\phi^\delta\}_{\delta \in \bigcup_{p \in P} \trans^p}}$ for the rest of this section. For simplicity, we assume that $\osem{S}^p = \osem{S}$ for every $p \in P$, for a fixed "pumping semiring" $\osem{S}$ with a computable partial order $\indleq$ on $\osem{S}$, for which $\vf{0}$ is the $\indleq$-least element (so that, when no "global trace" exists and every threshold is vacuously met, the optimal bound is $\vf{0}^P$).
By abuse of notation, we use the symbol $\indleq$ to also denote its iterated product order.

\paragraph{Threshold}
Let $\vf{r} \in \osem{S}^P$ be a tuple of weights in $\osem{S}$, and let $\vf{C}_1, \vf{C}_2 \subseteq \nconfig{\mathcal{N}}$ be non-empty "regular" sets of configurations. We say: 
\begin{align*}
    \mathcal{N}\text{ is }(\vf{r}, \vf{C}_1, \vf{C}_2)\text{-safe} &\Leftrightarrow \forall \vf{c}_1 \in \vf{C}_1. \forall \vf{c}_2 \in \vf{C}_2. \forall \vt \in \paths_{\mathcal{N}} (\vf{c}_1, \vf{c}_2). \forall p \in P. \\
    &\hspace{8pt}(\wt^p)^*(\vt^p) \indleq \vf{r}^p,\\
    \mathcal{N}\text{ is }(\vf{r}, \vf{C}_1, \vf{C}_2)\text{-reachable} &\Leftrightarrow \forall p \in P. \exists \vf{c}_1 \in \vf{C}_1. \exists\vf{c}_2 \in \vf{C}_2. \exists \vt \in \paths_{\mathcal{N}} (\vf{c}_1, \vf{c}_2). \\
    &\hspace{8pt}\vf{r}^p \indleq (\wt^p)^*(\vt^p),\\
    \mathcal{N}\text{ is }(\vf{r}, \vf{C}_1, \vf{c})\text{-cost-bounded} &\Leftrightarrow \exists \vf{c}_1 \in \vf{C}_1. \paths_{\mathcal{N}} (\vf{c}_1, \vf{c}) \neq \emptyset \\
    &\hspace{8pt}\wedge \forall \vf{c}_1 \in \vf{C}_1. \forall \vt \in \paths_{\mathcal{N}} (\vf{c}_1, \vf{c}). \forall p \in P. (\wt^p)^*(\vt^p) \indleq \vf{r}^p. 
\end{align*}
Essentially, the network $\mathcal{N}$ is $(\vf{r}, \vf{C}_1, \vf{C}_2)$-safe when every "global trace" from $\vf{C}_1$ to $\vf{C}_2$ has weight at most $\vf{r}$; the network $\mathcal{N}$ is $(\vf{r}, \vf{C}_1, \vf{C}_2)$-reachable when for every process $p$ there exists a "global trace" from $\vf{C}_1$ to $\vf{C}_2$ with local weight at least $\vf{r}^p$; and the network $\mathcal{N}$ is $(\vf{r}, \vf{C}_1, \vf{c})$-cost-bounded when the target configuration $\vf{c}$ is reachable from some "global configuration" $\vf{c}_1 \in \vf{C}_1$, and every "global trace" from $\vf{C}_1$ to $\vf{c}$ has weight at most $\vf{r}$. Note that $\vf{C}_1$ and $\vf{C}_2$ may be infinite sets of "global configurations". 

\paragraph{Optimality}
We define the optimal sets for the above properties as follows: 
\begin{align*}
    \optsafe{\vf{C}_1, \vf{C}_2}(\mathcal{N}) &\coloneqq \min_{\indleq} \left\{\vf{r} \in \osem{S}^P \mid \mathcal{N}\text{ is }(\vf{r}, \vf{C}_1, \vf{C}_2)\text{-safe}\right\},\\
    \optreach{\vf{C}_1, \vf{C}_2}(\mathcal{N}) &\coloneqq \max_{\indleq} \left\{\vf{r} \in \osem{S}^P \mid \mathcal{N}\text{ is }(\vf{r}, \vf{C}_1, \vf{C}_2)\text{-reachable}\right\},\\
    \optlive{\vf{C}_1, \vf{c}}(\mathcal{N}) &\coloneqq \min_{\indleq} \left\{\vf{r} \in \osem{S}^P \mid \mathcal{N}\text{ is }(\vf{r}, \vf{C}_1, \vf{c})\text{-cost-bounded}\right\}.
\end{align*}

\inlsec{Results}
The optimality problems are closely related to the threshold problems: 
\begin{itemize}
    \item $\mathcal{N}$ is $(\vf{r}, \vf{C}_1, \vf{C}_2)$-safe iff there exists $\vf{r}' \in \optsafe{\vf{C}_1, \vf{C}_2}(\mathcal{N})$ such that $\vf{r}' \indleq \vf{r}$.
    \item $\mathcal{N}$ is $(\vf{r}, \vf{C}_1, \vf{C}_2)$-reachable iff there exists $\vf{r}' \in \optreach{\vf{C}_1, \vf{C}_2}(\mathcal{N})$ such that $\vf{r} \indleq \vf{r}'$.
    \item $\mathcal{N}$ is $(\vf{r}, \vf{C}_1, \vf{c})$-cost-bounded iff there exists $\vf{r}' \in \optlive{\vf{C}_1, \vf{c}}(\mathcal{N})$ such that $\vf{r}' \indleq \vf{r}$.
\end{itemize}
In other words, if each optimal set is finite and computable, then the corresponding threshold problem is decidable. 
Recall that $\mathcal{N}$ is arbitrary. We prove the following lemma: 
\begin{lemma}\label[lemma]{lem:main-network-computability}
    Suppose $\mathcal{N}$ is acyclic, and $\osem{S}$ is $\leq_{\osem{S}}$-"thin" (resp. $\geq_{\osem{S}}$-"thin"). 
    We then have: 
    \begin{enumerate}
        \item[""(S1)""] If $a \oplus b \indleq c \Leftrightarrow a \indleq c \text{and} b \indleq c$, then $\optsafe{\vf{C}_1, \vf{C}_2}(\mathcal{N})$ and $\optlive{\vf{C}_1, \vf{c}}(\mathcal{N})$ are at most singletons and computable, 
        \item[""(S2)""] If $c \indleq a \oplus b \Leftrightarrow c \indleq a \text{or} c \indleq b$, then $\optreach{\vf{C}_1, \vf{C}_2}(\mathcal{N})$ is at most a singleton and computable. 
    \end{enumerate}
\end{lemma}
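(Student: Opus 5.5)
The plan is to reduce all three optimality problems to finitely many instances of \genconfigreach, which \Cref{thm:thin-network-verification} solves for acyclic networks over $\leq_{\osem{S}}$-"thin" (resp. $\geq_{\osem{S}}$-"thin") "pumping semirings". I would proceed in three steps.

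\emph{Step 1: from regular sets of configurations to single configurations.} Because $\vf{C}_1$ and $\vf{C}_2$ are "regular", they are Cartesian products of finite unions of sets $\{q\}\times L$ with $L$ regular. I would build an augmented network $\mathcal{N}'$ in which each process $p$ gets fresh initial and final states. From the fresh initial state, weight-$\vf{1}$ transitions guess and push a stack word of some $L$ while simulating a finite automaton for $L$ in the control state, then jump to $q$. Symmetrically, before the fresh final state, weight-$\vf{1}$ transitions pop the stack while checking membership in the target language, then empty the stack. This is the same populate/flush idea as in \Cref{cor:generalised-quant-problem-computable}.

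The new transitions must not be observable in an interfering way. I would guard them with $\phi^\delta$ so that every process completes its initialisation before any original transition fires, and starts finalisation only after all original transitions are done. Enforcing such a phase separation would otherwise need a shared synchronisation, which would break acyclicity. The cleaner alternative is to let the phases interleave freely. This is harmless: observations only read states, the guards of the original $\phi^\delta$ are evaluated on original states, and each initialisation/finalisation gadget can be made to expose the intended original state to observers. Concretely, the gadget's states are tagged copies of $q$, and $\phi^\delta$ is lifted accordingly. Since every new weight is $\vf{1}$, each local weight is unchanged. We obtain
\[
\nwtsemantics{\vf{c}_{in},\vf{c}_{fin}}_{\mathcal{N}'} = \bigoplus_{\vf{c}_1\in\vf{C}_1,\vf{c}_2\in\vf{C}_2}\nwtsemantics{\vf{c}_1,\vf{c}_2}_{\mathcal{N}},
\]
computed componentwise, together with the qualitative answer to whether $\vf{C}_1$ reaches $\vf{C}_2$. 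Cost-bounded reachability is the special case $\vf{C}_2=\{\vf{c}\}$.

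\emph{Step 2: (S1).} Let $\vf{w}=\nwtsemantics{\vf{c}_{in},\vf{c}_{fin}}_{\mathcal{N}'}$. I claim that $\mathcal{N}$ is $(\vf{r},\vf{C}_1,\vf{C}_2)$-safe iff $\vf{w}\indleq\vf{r}$, so the optimal set is $\{\vf{w}\}$, or $\{\vf{0}^P\}$ when there is no trace.

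The direction from $\vf{w}\indleq\vf{r}$ to safety needs each single trace weight to satisfy $(\wt^p)^*(\vt^p)\indleq \vf{w}^p$. The hypothesis of (S1) gives $a\indleq a\oplus b$, by instantiating $c=a\oplus b$.

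The converse direction needs the infinite sum to stay below $\vf{r}^p$, and this is the main obstacle. The hypothesis controls only binary, hence finite, joins. My plan is to avoid infinite joins altogether. The fixpoint algorithm of \Cref{lem:correctness} expresses $\vf{w}^p$ as a finite $\oplus$ of the root weights in $J_{i_0}$. By (P3), each such root weight is a countable sum of actual trace weights. Each is built from $\fixop$ via (C1) and (C2), so I would need that these accelerated values are also below $\vf{r}^p$.

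What I would try first is to show that for every $a \in X$, the diagonal pumped weights $b^nac^n$ are actual trace weights, so each is $\indleq\vf{r}^p$. I would then argue that $b^\fixop a c^\fixop$ is $\indleq$ any common upper bound, at least for a threshold in the finite set of candidate values. If that fails, I would fall back on computability of $\indleq$ on the finitely many computed candidates and pick the minimum among the finite family of computed joins.

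For cost-bounded reachability, the answer is the same value $\vf{w}$, provided the Boolean reachability question answered by \genconfigreach is positive; otherwise the optimal set is empty. Computability then follows from \Cref{thm:thin-network-verification}.

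\emph{Step 3: (S2).} Here the condition is prime-like. I would compute, for each process $p$, the component $\vf{w}^p$, and claim that the optimal set is $\{\vf{w}\}$. That $\vf{w}$ itself makes the network reachable requires a single trace $\vt$ with $\vf{w}^p\indleq(\wt^p)^*(\vt^p)$. Because the quantifier over traces is chosen separately for each $p$, one witness per process suffices. Since $\vf{w}^p$ is a finite join of root weights, the hypothesis of (S2) yields one root weight $\vf{w}^p$ lies below. The remaining gap is again passing from an accelerated root weight to a single trace. I expect this to be the delicate point shared with (S1). I would handle it by the same reduction through the finitely many computed candidates.
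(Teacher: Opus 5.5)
Your Step~1 and your target values are the paper's, via $\mathcal{N}_0$ and $\mathcal{N}_1$. Your handling of interleaved bootstrap and flush phases is, if anything, more careful than the paper's.

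\textbf{(S1).} The gap is the direction ``every trace weight is $\indleq \vf{r}^p$ implies $\vf{w}^p \indleq \vf{r}^p$''. You leave it open, and the routes you sketch cannot close it. Bounding $b^\fixop a c^\fixop$ is the same countable-join problem again, since by (C1) it equals $\bigoplus_n b^n a c^n$. Choosing a minimum among finitely many computed candidates says nothing about an arbitrary threshold $\vf{r}$. The missing observation is that the hypothesis of (S1) forces $\indleq$ to coincide with $\leq_{\mathcal{S}}$:
\begin{itemize}
\item Instantiating $c = a \oplus b$ gives $a, b \indleq a \oplus b$, so $a \leq_{\mathcal{S}} b$ implies $a \indleq b$.
\item Instantiating $c = b$ turns $a \indleq b$ into $a \oplus b \indleq b \indleq a \oplus b$, i.e.\ $a \oplus b = b$.
\end{itemize}
A countable sum is by definition the $\leq_{\mathcal{S}}$-supremum of its partial sums. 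Hence a family is $\indleq$-bounded by $\vf{r}^p$ iff its sum is. This makes the paper's chain immediate: safety holds iff every $\nwtsemantics{\vf{c}_1,\vf{c}_2}_{\mathcal{N}} \indleq \vf{r}$, iff their sum $\nwtsemantics{\vf{c}_0,\vf{c}_0'}_{\mathcal{N}_0}$ is $\indleq \vf{r}$. You do not need to look inside the fixpoint computation at all.

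\textbf{(S2).} Your claim that the optimal set is always $\{\vf{w}\}$ is false. If no global trace from $\vf{C}_1$ to $\vf{C}_2$ exists, then $\vf{w} = \vf{0}^P$ but no threshold is reachable, so the set is empty. The paper accordingly concludes ``$\{\vf{w}\}$ or empty''. This case is detected by a Boolean instance, as you did for cost-bounded reachability.

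The witness step, from $\vf{r}^p \indleq \vf{w}^p$ to a single trace, is also not settled by finitely many candidates. Each candidate is itself a countable join, while the hypothesis of (S2) only splits finite ones.
\begin{itemize}
\item In the $\geq_{\mathcal{S}}$-thin case you can close it. Inserting transitions can only lower a weight, so the join over local traces equals the finite join over their subword-minimal elements (Higman's lemma), and (S2) applies there.
\item In the $\leq_{\mathcal{S}}$-thin case the step can genuinely fail. Take the arctic semiring with $\indleq$ the usual order, which satisfies the (S2) hypothesis, and the wPDS of \Cref{ex:wpds}. Its trace weights are $3, 5, 7, \ldots$, so $\vf{w} = \infty$ is attained by no trace, and the reachable thresholds have no maximum at all.
\end{itemize}
The paper's own chain of equivalences uses this same infinite-join step without justification, so it does not fill the hole for you. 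A correct argument must detect non-attainment rather than return $\{\vf{w}\}$.
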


By \Cref{thm:thin-network-verification}, acyclicity and thinness ensure that $\nwtsemantics{\vf{c}, \vf{c}'}_{\mathcal{N}}$ is computable for arbitrary input "global configurations" $\vf{c}, \vf{c}' \in \config{\mathcal{N}}$. Intuitively, the condition in "(S1)" imposes that the traces track worst-case behaviour w.r.t. $\indleq$, i.e., upper-bounding the aggregate $a \oplus b$ of two traces equivalently upper-bounds both the accumulations $a$, and $b$ of each trace. This is satisfied by the Arctic, Probabilistic-union, Why, downward closure, and upward closure semirings respectively, over the orders $\leq$, $\leq$, $\subseteq$, $\subseteq$, and $\subseteq$ (cf. \Cref{fig:semiring-examples}).

Since $\vf{C}_1$ and $\vf{C}_2$ are "regular", we can construct a new acyclic network $\mathcal{N}_0$ from $\mathcal{N}$, which can non-deterministically \emph{bootstrap} to any "global configuration" in $\vf{C}_1$ from a \emph{special} "global configuration" $\vf{c}_0 \in \nconfig{\mathcal{N}_0}$ (with empty stacks) with $\vf{1}^P$ weight. It simulates the behaviour of $\mathcal{N}$, and finally moves to a \emph{flushing} phase when it reaches a "global configuration" in $\vf{C}_2$, where it flushes the contents of its stacks to reach a \emph{special} "global configuration" $\vf{c}_0' \in \nconfig{\mathcal{N}_0}$ with $\vf{1}^P$ weight. It holds that: 
\begin{align*}
    \mathcal{N} \text{is} (\vf{r}, \vf{C}_1, \vf{C}_2)\text{-safe} &\Leftrightarrow \forall \vf{c}_1 \in \vf{C}_1. \forall \vf{c}_2 \in \vf{C}_2. \forall \vt \in \paths_{\mathcal{N}} (\vf{c}_1, \vf{c}_2). \forall p \in P. (\wt^p)^*(\vt^p) \indleq \vf{r}^p\\
    &\Leftrightarrow \forall \vf{c}_1 \in \vf{C}_1. \forall \vf{c}_2 \in \vf{C}_2. \nwtsemantics{\vf{c}_1, \vf{c}_2}_{\mathcal{N}} \indleq \vf{r}\\
    &\Leftrightarrow \bigoplus_{\vf{c}_1 \in \vf{C}_1, \vf{c}_2 \in \vf{C}_2} \nwtsemantics{\vf{c}_1, \vf{c}_2}_{\mathcal{N}} \indleq \vf{r}\\
    &\Leftrightarrow \nwtsemantics{\vf{c}_0, \vf{c}_0'}_{\mathcal{N}_0} \indleq \vf{r}. 
\end{align*} 
Therefore, $\optsafe{\vf{C}_1, \vf{C}_2}(\mathcal{N})$ is $\left\{\nwtsemantics{\vf{c}_0, \vf{c}_0'}_{\mathcal{N}_0}\right\}$, and it suffices to compute \genconfigreach for $\mathcal{N}_0$, $\vf{c}_0$, and $\vf{c}_0'$. For $\optlive{\vf{C}_1, \vf{c}}(\mathcal{N})$, we construct an acyclic network $\mathcal{N}_1$ which, like $\mathcal{N}_0$, bootstraps to any "global configuration" in $\vf{C}_1$ from a \emph{special} "global configuration" $\vf{c}_1 \in \nconfig{\mathcal{N}_1}$ with weight $\vf{1}^P$, simulates the behaviour of $\mathcal{N}$, but moves to a flushing phase \emph{only} when it reaches $\vf{c}$, eventually reaching a \emph{special} "global configuration" $\vf{c}_1' \in \nconfig{\mathcal{N}_1}$ with weight $\vf{1}^P$. To compute $\optlive{\vf{C}_1, \vf{c}}(\mathcal{N})$, we need to first decide whether $\exists \vf{c}' \in \vf{C}_1. \paths_{\mathcal{N}} (\vf{c}', \vf{c}) \neq \emptyset$, or equivalently whether $\vf{c}_1 \reaches_{\mathcal{N}_1} \vf{c}_1'$. This can be decided as an instance of \genconfigreach over the Boolean semiring. If it does not hold, $\optlive{\vf{C}_1, \vf{c}}(\mathcal{N}) = \emptyset$. Otherwise, we obtain that $\optlive{\vf{C}_1, \vf{c}}(\mathcal{N})$ is $\left\{\nwtsemantics{\vf{c}_1, \vf{c}_1'}_{\mathcal{N}_1}\right\}$, and it suffices to compute \genconfigreach for $\mathcal{N}_1$, $\vf{c}_1$, and $\vf{c}_1'$. 

The condition in "(S2)" expresses that the traces track best-case behaviour w.r.t. $\indleq$, i.e., lower-bounding the aggregate $a \oplus b$ of two traces lower-bounds the accumulation $a$ or $b$ of one of the traces. This is satisfied by the Tropical, Viterbi, Bottleneck, Security, and Boolean semirings respectively, over the orders $\geq$, $\leq$, $\leq$, $\leq$, $\Rightarrow$ (cf. \Cref{fig:semiring-examples}). Using the network $\mathcal{N}_0$, we obtain: 
\begin{align*}
    \mathcal{N}\text{ is }(\vf{r}, \vf{C}_1, \vf{C}_2)\text{-reachable} &\Leftrightarrow \forall p \in P. \exists \vf{c}_1 \in \vf{C}_1. \exists\vf{c}_2 \in \vf{C}_2. \exists \vt \in \paths_{\mathcal{N}} (\vf{c}_1, \vf{c}_2). \vf{r}^p \indleq (\wt^p)^*(\vt^p)\\
    &\Leftrightarrow \forall p \in P. \exists \vf{c}_1 \in \vf{C}_1. \exists\vf{c}_2 \in \vf{C}_2. \vf{r}^p \indleq \nwtsemantics{\vf{c}_1, \vf{c}_2}_{\mathcal{N}}^p\\
    &\Leftrightarrow \forall p \in P. \vf{r}^p \indleq \bigoplus_{\vf{c}_1 \in \vf{C}_1, \vf{c}_2 \in \vf{C}_2} \nwtsemantics{\vf{c}_1, \vf{c}_2}_{\mathcal{N}}^p\\
    &\Leftrightarrow \vf{r} \indleq \nwtsemantics{\vf{c}_0, \vf{c}_0'}_{\mathcal{N}_0}. 
\end{align*} 
Therefore, $\optreach{\vf{C}_1, \vf{C}_2}(\mathcal{N})$ is either $\left\{\nwtsemantics{\vf{c}_0, \vf{c}_0'}_{\mathcal{N}_0}\right\}$ or empty, and it suffices to compute \genconfigreach for $\mathcal{N}_0$, $\vf{c}_0$, and $\vf{c}_0'$. By \Cref{thm:thin-network-verification}, both $\nwtsemantics{\vf{c}_0, \vf{c}_0'}_{\mathcal{N}_0}$ and $\nwtsemantics{\vf{c}_1, \vf{c}_1'}_{\mathcal{N}_1}$ are computable. 

\subsection{Revisiting Our Example}\label{subsec:illuminating-example}

Recall the client-server network $\mathcal{N}$ of \Cref{fig:motivating-wpds-network}. Taking the Arctic "pumping semiring", which is $\leq_{AS}$-"thin" (cf. \Cref{fig:semiring-interpretation}), the question from \Cref{subsec:motivation}, whether $x \in \nat$ units suffice for the client to clear its backlog, becomes: ``Is $\mathcal{N}$ $(\vf{x}, \vf{C}_1, \vf{C}_2)$-safe?'', where $\vf{x} = \tup{\mathsf{cl}: x, \mathsf{srv}: x}$, and $\vf{C}_1$ and $\vf{C}_2$ are the "regular" sets $\left(\{in^{\mathsf{cl}}\} \times \{\dashv\}\right) \times \left(\{in^{\mathsf{srv}}\} \times \alpha^*\bottom\right)$, and $\left(\{\mathsf{idle}\} \times \{\dashv\}\right) \times \left(\{\mathsf{fin}\} \times \alpha^*\bottom\right)$, respectively.
\begin{figure}[t]
    \centering
    \resizebox{0.72\textwidth}{!}{\begin{tikzpicture}
            [shorten >=1pt,node distance=2.5cm,on grid,>={Stealth[round]},auto,initial text=,
            every state/.style={draw=red!50,very thick,fill=red!20, minimum size=1.1cm}]
            \node (Pc) at (-1.6, 0) {$\pds^{\mathsf{cl}}$};
            \node[state] (sp_start_c) at (0, 0)      {$s_0^{\mathsf{cl}}$};
            \node[state] (in_c)       at (3.2, 0)    {$in^{\mathsf{cl}}$};
            \node[state] (send)       at (6.6, 1.3)  {$\mathsf{send}$};
            \node[state] (req)        at (10.0, 1.3) {$\mathsf{req}$};
            \node[state] (local)      at (6.6, -1.3) {$\mathsf{local}$};
            \node[state] (idle)       at (10.0, -1.3){$\mathsf{idle}$};
            \node[state] (sp_end_c)   at (13.2, 0)   {$s_0'^{\mathsf{cl}}$};

            \node (Ps) at (-1.6, -4.6) {$\pds^{\mathsf{srv}}$};
            \node[state] (sp_start_s) at (0, -4.6)    {$s_0^{\mathsf{srv}}$};
            \node[state] (in_s)       at (3.2, -4.6)  {$in^{\mathsf{srv}}$};
            \node[state] (fin)        at (6.6, -4.6)  {$\mathsf{fin}$};
            \node[state] (next)       at (10.0, -4.6) {$\mathsf{next}$};
            \node[state] (sp_end_s)   at (6.6, -7.0)  {$s_0'^{\mathsf{srv}}$};

            \path[->]
                (sp_start_c)
                    edge                 node [above] {$\nop : 0$}          (in_c)
                (in_c)
                    edge [loop above]    node [above] {$\push{\beta} : 0$}  ()
                    edge [bend left=12]  node [above left,xshift=22pt,yshift=-3pt,fill=white,inner sep=1.5pt] {$\nop : 0$} (send)
                    edge [bend right=12] node [below left,xshift=22pt,yshift=3pt,fill=white,inner sep=1.5pt]  {$\nop : 0$} (local)
                (send)
                    edge [loop above]    node [above] {$\pop{\beta} : 0$}   ()
                    edge                 node [above] {$\chk{\dashv} : 0$}  (req)
                (req)
                    edge                 node [right] {$\nop : 0$}          (idle)
                (local)
                    edge [loop below]    node [below] {$\pop{\beta} : 1$}   ()
                    edge                 node [below] {$\chk{\dashv} : 0$}  (idle)
                (idle)
                    edge                 node [below right,xshift=-10pt] {$\nop : 0$} (sp_end_c);

            \path[->]
                (sp_start_s)
                    edge [loop below]     node [below] {$\push{\alpha} : 0$}                 ()
                    edge                  node [above] {$\nop : 0$}                          (in_s)
                (in_s)
                    edge [loop above]     node [above] {$in^{\mathsf{cl}}?\ \push{\alpha} : 0$} ()
                    edge                  node [above] {$\mathsf{req}?\ \nop : 0$}            (fin)
                (fin)
                    edge [bend right=18]  node [below] {$\mathsf{idle}?\ \pop{\alpha} : 0$}   (next)
                    edge                  node [left]  {$\nop : 0$}                           (sp_end_s)
                (sp_end_s)
                    edge [loop below]     node [below] {$\pop{\alpha} : 0$}                   ()
                (next)
                    edge [bend right=18]  node [above,yshift=11pt] {$\mathsf{req}?\ \chk{\dashv} : 1,$} (fin)
                    edge [bend right=18]  node [above] {$\mathsf{req}?\ \chk{\alpha} : 0$}     (fin);
        \end{tikzpicture}}
    \captionsetup{skip=5pt}
    \caption{\label{fig:auxiliary-motivating-wpds-network} The auxiliary network $\mathcal{N}_0$ built from the network $\mathcal{N}$ of \Cref{fig:motivating-wpds-network}. The states $s_0^{\mathsf{cl}}, s_0^{\mathsf{srv}}$ bootstrap to $\vf{C}_1$, and the states $s_0'^{\mathsf{cl}}, s_0'^{\mathsf{srv}}$ flush the stacks from $\vf{C}_2$.}
    \vspace{-10pt}
\end{figure}
$\vf{C}_1$ captures the set of all initial "global configurations" of $\mathcal{N}$ where both processes are in their initial "local states". The client's backlog is empty to begin with, so every "local configuration" of $\pds^{\mathsf{cl}}$ in $\vf{C}_1$ has an empty stack. Since we need to allow arbitrary periods of prior idleness in the network, the credit counter of $\mathsf{srv}$ can be arbitrarily large for the "local configurations" of $\pds^{\mathsf{srv}}$ in $\vf{C}_1$. In $\vf{C}_2$, the client's stack is empty once more, this time because the guards $\chk{\dashv}$ prevent it from reaching $\mathsf{idle}$ with a non-empty backlog, while the server's counter is again unconstrained.

According to the verification procedure in \Cref{subsec:problem-formalisation}, we need to first construct an auxiliary network of the form $\mathcal{N}_0$, illustrated in \Cref{fig:auxiliary-motivating-wpds-network}.
Observe that the special "global configuration" $\vf{c}_0 = \tup{\mathsf{cl}: \tup{s_0^{\mathsf{cl}}, \dashv}, \mathsf{srv}: \tup{s_0^{\mathsf{srv}}, \dashv}}$ non-deterministically bootstraps to an arbitrary configuration of the form $\tup{\mathsf{cl}: \tup{in^{\mathsf{cl}}, \dashv}, \mathsf{srv}: \tup{in^{\mathsf{srv}}, \alpha^k\dashv}}$ for $k \in \nat$. Moreover, from an arbitrary "global configuration" of the form $\tup{\mathsf{cl}: \tup{\mathsf{idle}, \dashv}, \mathsf{srv}: \tup{\mathsf{fin}, \alpha^k\dashv}}$, for $k \in \nat$, there exists a trace to  $\vf{c}_0' = \tup{\mathsf{cl}: \tup{s_0'^{\mathsf{cl}}, \dashv}, \mathsf{srv}: \tup{s_0'^{\mathsf{srv}}, \dashv}}$. Thus, the worst-case upper bound is $\nwtsemantics{\vf{c}_0, \vf{c}_0'}_{\mathcal{N}_0}$, which by \Cref{thm:thin-network-verification} is computable. We obtain that $\nwtsemantics{\vf{c}_0, \vf{c}_0'}_{\mathcal{N}_0} = \tup{\mathsf{cl}: 0, \mathsf{srv}: 0}$. Hence, $\mathcal{N}$ is $(\vf{x}, \vf{C}_1, \vf{C}_2)$-safe iff $\tup{\mathsf{cl}: 0, \mathsf{srv}: 0} \leq \vf{x}$, which is decidable. In particular, $\mathcal{N}$ is $(\vf{x}, \vf{C}_1, \vf{C}_2)$-safe for every $\vf{x} \in \nat^2$, in contrast to the estimate $\tup{\mathsf{cl}: \unbounded, \mathsf{srv}: \unbounded}$ obtained by analysing the two processes separately.

\inlsec{Need for Closure} It is instructive to see why the "downward closure" of the client cannot be summarised by a single product. Name the transitions of $\pds^{\mathsf{cl}}$ so that $\delta_1$ is its $\push{\beta}$ loop, $\delta_2, \delta_3, \delta_4, \delta_5$ are the transitions along the $\mathsf{send}$ route, and $\delta_6, \delta_7, \delta_8$ those along the $\mathsf{local}$ route. The traces from $\tup{in^{\mathsf{cl}}, \dashv}$ to $\tup{\mathsf{idle}, \dashv}$ then have the two shapes $\delta_1^n\delta_2\delta_3^n\delta_4\delta_5$ and $\delta_1^n\delta_6\delta_7^n\delta_8$, for $n \in \nat$, so that
$$\down{\paths(\tup{in^{\mathsf{cl}}, \dashv}, \tup{\mathsf{idle}, \dashv})} = \underbrace{\{\delta_1\}^*(\epsilon + \delta_2)\{\delta_3\}^*(\epsilon + \delta_4)(\epsilon + \delta_5)}_{\mathsf{send}} + \underbrace{\{\delta_1\}^*(\epsilon + \delta_6)\{\delta_7\}^*(\epsilon + \delta_8)}_{\mathsf{local}}.$$

\section{Conclusion}
In this paper, we provide a framework to solve "\textsc{QuantReach}" for "wPDSs" over
"pumping semirings". We introduce "segment tree algebras", a tree-like representation of runs in a "wPDS" that additionally tracks the corresponding weight semantics. We then develop a saturation procedure that computes a sequence of sets of accelerated trees, and show that it reaches a fixpoint. 

Our framework's algebraic underpinnings lead us to an algorithm for a novel quantitative verification problem, namely \genconfigreach, of weighted pushdown networks. Our verification procedure crucially relies on a \emph{unified} approach to the language closure computations of context-free languages, providing an alternative to the standard enumerative approach \cite{zetzsche2015approach}, and additionally preserving the relationship between traces and weights. We show that \genconfigreach can be used to compute optimal bounds for safety, reachability, and cost-bounded reachability of acyclic weighted pushdown networks. We now relate our findings to prior work in three broad areas. 

\paragraph{Algebraic Program Analysis}
The research that comes closest to ours is that of interprocedural recurrence analysis by Kincaid et al. \cite{kincaid2017compositional}. They extend the framework of Newtonian program analysis via tensor products \cite{reps2016newtonian} with abstraction and widening operators. 
Their algorithm first extracts a system of recurrence equations from the input system, followed by a Gauss-Jordan-like elimination procedure that yields an equivalent system of equations with iterated tensor applications; ours instead consumes the "wPDS" \emph{on the fly}, forming loop summaries as it discovers them (steps "(V2)" and "(H2)"). Our goals differ, however, and so do our guarantees. Their analysis applies to any domain for which an iteration operator can be supplied, at the price of over-approximating the loop summaries, whereas ours computes them exactly, but only over "pumping semirings". The difference is not merely one of precision. Their framework asks the designer of an analysis to supply that operator for each domain, and its quality determines that of the result. Condition "(C1)" is instead a property of the domain itself, checked once, that says when no such operator is needed: whenever it holds, the closed form $a^\fixop \otimes b \otimes c^\fixop$ \emph{is} the summary. Neither approach subsumes the other, but ours also extends to networks of communicating processes. Recently, Zhou et al. \cite{zhou2025incremental} proposed an algorithm to incrementally update the path expressions computed for the algebraic analysis of non-recursive programs \cite{kincaid2021algebraic}, in response to changes to the program under analysis. They work with a similar notion of segment trees to represent path expressions. It would be interesting to investigate whether our techniques yield an incremental algorithm for recursive programs. 

\paragraph{Multi-Pushdown Systems}
Safety verification of multi-threaded recursive programs can be modelled as the reachability problem of multi-pushdown automata (MPDA) \cite{inverso2021bounded,musuvathi2007iterative,qadeer2005context}. Since reachability in MPDA is generally undecidable, several parametric bounds on their executions have been introduced to obtain decidability \cite{atig2012linear,breveglieri1996multi,la2007robust,qadeer2005context,madhusudan2011tree}. In \cite{lal2008interprocedural}, Lal et al. provide a quantitative verification procedure for weighted MPDA over infinite weight domains (that satisfy the ACC) by bounding the number of switches between stack operations. A natural question to ask is whether our framework can be extended to verify quantitative properties of MPDA. 

\paragraph{Network Verification}
Larsen et al. \cite{larsen2016wnetkat} first studied the quantitative verification of finite-state networks, using a weighted extension of NetKAT. Recently, Su\'arez Acevedo et al. \cite{suarez2026weighted} took this further with a more conservative extension supporting automated analysis of $r$-safety and $r$-reachability properties, using techniques from finite weighted automata theory. In spirit, our work extends Su\'arez Acevedo et al.'s results to networks of weighted pushdown processes. Broadly, their results correspond to the cases when our network is restricted to a single process without pushdown capabilities \cite[cf. Thm. 2, 3]{suarez2026weighted}. Although the class of "thin" "pumping semirings" differs from their restrictions, our results subsume their \emph{concrete} instances \cite[Sec. 6]{suarez2026weighted}. Our techniques may also yield decision procedures for a weighted extension of StacKAT, corresponding to a single weighted pushdown process whose stack simulates the current packet; we leave this to future work. 

Our contribution is primarily theoretical; a practical implementation and an experimental evaluation of its scalability and performance through case studies remain our next steps.

\bibliographystyle{ACM-Reference-Format}
\bibliography{reference}


\begin{thebibliography}{50}


\ifx \showCODEN    \undefined \def \showCODEN     #1{\unskip}     \fi
\ifx \showDOI      \undefined \def \showDOI       #1{#1}\fi
\ifx \showISBNx    \undefined \def \showISBNx     #1{\unskip}     \fi
\ifx \showISBNxiii \undefined \def \showISBNxiii  #1{\unskip}     \fi
\ifx \showISSN     \undefined \def \showISSN      #1{\unskip}     \fi
\ifx \showLCCN     \undefined \def \showLCCN      #1{\unskip}     \fi
\ifx \shownote     \undefined \def \shownote      #1{#1}          \fi
\ifx \showarticletitle \undefined \def \showarticletitle #1{#1}   \fi
\ifx \showURL      \undefined \def \showURL       {\relax}        \fi
\providecommand\bibfield[2]{#2}
\providecommand\bibinfo[2]{#2}
\providecommand\natexlab[1]{#1}
\providecommand\showeprint[2][]{arXiv:#2}

\bibitem[\protect\citeauthoryear{Abdulla, Collomb-Annichini, Bouajjani, and
  Jonsson}{Abdulla et~al\mbox{.}}{2004}]%
        {abdulla2004using}
\bibfield{author}{\bibinfo{person}{Parosh~Aziz Abdulla},
  \bibinfo{person}{Aurore Collomb-Annichini}, \bibinfo{person}{Ahmed
  Bouajjani}, {and} \bibinfo{person}{Bengt Jonsson}.}
  \bibinfo{year}{2004}\natexlab{}.
\newblock \showarticletitle{Using forward reachability analysis for
  verification of lossy channel systems}.
\newblock \bibinfo{journal}{{\em Formal Methods in System Design\/}}
  \bibinfo{volume}{25}, \bibinfo{number}{1} (\bibinfo{year}{2004}),
  \bibinfo{pages}{39--65}.
\newblock


\bibitem[\protect\citeauthoryear{Abuhlail, Il'in, Katsov, and Nam}{Abuhlail
  et~al\mbox{.}}{2015}]%
        {abuhlail2015v}
\bibfield{author}{\bibinfo{person}{J.~Y. Abuhlail}, \bibinfo{person}{S.~N.
  Il'in}, \bibinfo{person}{Y. Katsov}, {and} \bibinfo{person}{T.~G. Nam}.}
  \bibinfo{year}{2015}\natexlab{}.
\newblock \showarticletitle{On {V}-semirings and semirings all of whose cyclic
  semimodules are injective}.
\newblock \bibinfo{journal}{{\em Communications in Algebra\/}}
  \bibinfo{volume}{43}, \bibinfo{number}{11} (\bibinfo{year}{2015}),
  \bibinfo{pages}{4632--4654}.
\newblock


\bibitem[\protect\citeauthoryear{Aiswarya}{Aiswarya}{2020}]%
        {aiswarya2020network}
\bibfield{author}{\bibinfo{person}{C Aiswarya}.}
  \bibinfo{year}{2020}\natexlab{}.
\newblock \showarticletitle{On network topologies and the decidability of
  reachability problem}. In \bibinfo{booktitle}{{\em International Conference
  on Networked Systems}}. Springer, \bibinfo{pages}{3--10}.
\newblock


\bibitem[\protect\citeauthoryear{{Amazon Web Services}}{{Amazon Web
  Services}}{2024}]%
        {awsdynamodbburst}
\bibfield{author}{\bibinfo{person}{{Amazon Web Services}}.}
  \bibinfo{year}{2024}\natexlab{}.
\newblock \bibinfo{title}{{DynamoDB} burst and adaptive capacity}.
\newblock
  \bibinfo{howpublished}{\url{https://docs.aws.amazon.com/amazondynamodb/latest/developerguide/burst-adaptive-capacity.html}}.
    (\bibinfo{year}{2024}).
\newblock
\newblock
\shownote{Amazon DynamoDB Developer Guide. Accessed: 2026-07-02.}


\bibitem[\protect\citeauthoryear{Anderson, Foster, Guha, Jeannin, Kozen,
  Schlesinger, and Walker}{Anderson et~al\mbox{.}}{2014}]%
        {anderson2014netkat}
\bibfield{author}{\bibinfo{person}{Carolyn~Jane Anderson},
  \bibinfo{person}{Nate Foster}, \bibinfo{person}{Arjun Guha},
  \bibinfo{person}{Jean-Baptiste Jeannin}, \bibinfo{person}{Dexter Kozen},
  \bibinfo{person}{Cole Schlesinger}, {and} \bibinfo{person}{David Walker}.}
  \bibinfo{year}{2014}\natexlab{}.
\newblock \showarticletitle{{NetKAT}: Semantic foundations for networks}. In
  \bibinfo{booktitle}{{\em Proceedings of the 41st ACM SIGPLAN-SIGACT Symposium
  on Principles of Programming Languages (POPL)}}. \bibinfo{publisher}{ACM},
  \bibinfo{pages}{113--126}.
\newblock


\bibitem[\protect\citeauthoryear{Atig, Bouajjani, Narayan~Kumar, and
  Saivasan}{Atig et~al\mbox{.}}{2012}]%
        {atig2012linear}
\bibfield{author}{\bibinfo{person}{Mohamed~Faouzi Atig}, \bibinfo{person}{Ahmed
  Bouajjani}, \bibinfo{person}{K Narayan~Kumar}, {and} \bibinfo{person}{Prakash
  Saivasan}.} \bibinfo{year}{2012}\natexlab{}.
\newblock \showarticletitle{Linear-time model-checking for multithreaded
  programs under scope-bounding}. In \bibinfo{booktitle}{{\em International
  Symposium on Automated Technology for Verification and Analysis}}. Springer,
  \bibinfo{pages}{152--166}.
\newblock


\bibitem[\protect\citeauthoryear{Atig, Bouajjani, Narayan~Kumar, and
  Saivasan}{Atig et~al\mbox{.}}{2014}]%
        {atig2014bounded}
\bibfield{author}{\bibinfo{person}{Mohamed~Faouzi Atig}, \bibinfo{person}{Ahmed
  Bouajjani}, \bibinfo{person}{K Narayan~Kumar}, {and} \bibinfo{person}{Prakash
  Saivasan}.} \bibinfo{year}{2014}\natexlab{}.
\newblock \showarticletitle{On bounded reachability analysis of shared memory
  systems}. In \bibinfo{booktitle}{{\em 34th International Conference on
  Foundation of Software Technology and Theoretical Computer Science (FSTTCS
  2014)}}. Schloss Dagstuhl--Leibniz-Zentrum f{\"u}r Informatik,
  \bibinfo{pages}{611--623}.
\newblock


\bibitem[\protect\citeauthoryear{Atig, Bouajjani, and Touili}{Atig
  et~al\mbox{.}}{2008}]%
        {atig2008reachability}
\bibfield{author}{\bibinfo{person}{Mohamed~Faouzi Atig}, \bibinfo{person}{Ahmed
  Bouajjani}, {and} \bibinfo{person}{Tayssir Touili}.}
  \bibinfo{year}{2008}\natexlab{}.
\newblock \showarticletitle{On the reachability analysis of acyclic networks of
  pushdown systems}. In \bibinfo{booktitle}{{\em International Conference on
  Concurrency Theory}}. Springer, \bibinfo{pages}{356--371}.
\newblock


\bibitem[\protect\citeauthoryear{Barthe, Katoen, and Silva}{Barthe
  et~al\mbox{.}}{2020}]%
        {barthe2020foundations}
\bibfield{author}{\bibinfo{person}{Gilles Barthe},
  \bibinfo{person}{Joost-Pieter Katoen}, {and} \bibinfo{person}{Alexandra
  Silva}.} \bibinfo{year}{2020}\natexlab{}.
\newblock \bibinfo{booktitle}{{\em Foundations of probabilistic programming}}.
\newblock \bibinfo{publisher}{Cambridge University Press}.
\newblock


\bibitem[\protect\citeauthoryear{Bouajjani, Esparza, and Maler}{Bouajjani
  et~al\mbox{.}}{1997}]%
        {bouajjani1997reachability}
\bibfield{author}{\bibinfo{person}{Ahmed Bouajjani}, \bibinfo{person}{Javier
  Esparza}, {and} \bibinfo{person}{Oded Maler}.}
  \bibinfo{year}{1997}\natexlab{}.
\newblock \showarticletitle{Reachability analysis of pushdown automata:
  Application to model-checking}. In \bibinfo{booktitle}{{\em International
  Conference on Concurrency Theory}}. Springer, \bibinfo{pages}{135--150}.
\newblock


\bibitem[\protect\citeauthoryear{Bouajjani, Esparza, and Touili}{Bouajjani
  et~al\mbox{.}}{2003}]%
        {bouajjani2003generic}
\bibfield{author}{\bibinfo{person}{Ahmed Bouajjani}, \bibinfo{person}{Javier
  Esparza}, {and} \bibinfo{person}{Tayssir Touili}.}
  \bibinfo{year}{2003}\natexlab{}.
\newblock \showarticletitle{A generic approach to the static analysis of
  concurrent programs with procedures}. In \bibinfo{booktitle}{{\em Proceedings
  of the 30th ACM SIGPLAN-SIGACT Symposium on Principles of Programming
  Languages (POPL)}}. \bibinfo{publisher}{ACM}, \bibinfo{pages}{62--73}.
\newblock


\bibitem[\protect\citeauthoryear{Breveglieri, Cherubini, Citrini, and
  Reghizzi}{Breveglieri et~al\mbox{.}}{1996}]%
        {breveglieri1996multi}
\bibfield{author}{\bibinfo{person}{Luca Breveglieri},
  \bibinfo{person}{Alessandra Cherubini}, \bibinfo{person}{Claudio Citrini},
  {and} \bibinfo{person}{Stefano~Crespi Reghizzi}.}
  \bibinfo{year}{1996}\natexlab{}.
\newblock \showarticletitle{Multi-push-down languages and grammars}.
\newblock \bibinfo{journal}{{\em International Journal of Foundations of
  Computer Science\/}} \bibinfo{volume}{7}, \bibinfo{number}{03}
  (\bibinfo{year}{1996}), \bibinfo{pages}{253--291}.
\newblock


\bibitem[\protect\citeauthoryear{Budkule}{Budkule}{2025}]%
        {budkule2025optimizing}
\bibfield{author}{\bibinfo{person}{Mangesh Budkule}.}
  \bibinfo{year}{2025}\natexlab{}.
\newblock \showarticletitle{Optimizing EC2 Instance Selection for Microsoft
  Workloads}.
\newblock In \bibinfo{booktitle}{{\em Optimizing Cost for Microsoft Workloads
  on AWS: Strategies, Tools, and Techniques for Cloud Efficiency}}.
  \bibinfo{publisher}{Springer}, \bibinfo{pages}{51--91}.
\newblock


\bibitem[\protect\citeauthoryear{Courcelle}{Courcelle}{1991}]%
        {courcelle1991constructing}
\bibfield{author}{\bibinfo{person}{Bruno Courcelle}.}
  \bibinfo{year}{1991}\natexlab{}.
\newblock \showarticletitle{On constructing obstruction sets of words}.
\newblock \bibinfo{journal}{{\em Bulletin of the EATCS\/}}
  \bibinfo{volume}{44} (\bibinfo{year}{1991}), \bibinfo{pages}{178--186}.
\newblock


\bibitem[\protect\citeauthoryear{Courcoubetis, Kelly, Siris, and
  Weber}{Courcoubetis et~al\mbox{.}}{2000}]%
        {courcoubetis2000study}
\bibfield{author}{\bibinfo{person}{Costas Courcoubetis},
  \bibinfo{person}{Frank~P Kelly}, \bibinfo{person}{Vasilios~A Siris}, {and}
  \bibinfo{person}{Richard Weber}.} \bibinfo{year}{2000}\natexlab{}.
\newblock \showarticletitle{A study of simple usage-based charging schemes for
  broadband networks}.
\newblock \bibinfo{journal}{{\em Telecommunication systems\/}}
  \bibinfo{volume}{15}, \bibinfo{number}{3} (\bibinfo{year}{2000}),
  \bibinfo{pages}{323--343}.
\newblock


\bibitem[\protect\citeauthoryear{Courcoubetis and Weber}{Courcoubetis and
  Weber}{2003}]%
        {courcoubetis2003pricing}
\bibfield{author}{\bibinfo{person}{Costas Courcoubetis} {and}
  \bibinfo{person}{Richard Weber}.} \bibinfo{year}{2003}\natexlab{}.
\newblock \bibinfo{booktitle}{{\em Pricing communication networks: economics,
  technology and modelling}}.
\newblock \bibinfo{publisher}{John Wiley \& Sons}.
\newblock


\bibitem[\protect\citeauthoryear{{Envoy Project}}{{Envoy Project}}{2024a}]%
        {envoyglobalratelimit}
\bibfield{author}{\bibinfo{person}{{Envoy Project}}.}
  \bibinfo{year}{2024}\natexlab{a}.
\newblock \bibinfo{title}{Global rate limiting --- {Envoy} documentation}.
\newblock
  \bibinfo{howpublished}{\url{https://www.envoyproxy.io/docs/envoy/latest/intro/arch_overview/other_features/global_rate_limiting}}.
    (\bibinfo{year}{2024}).
\newblock
\newblock
\shownote{Accessed: 2026-07-02.}


\bibitem[\protect\citeauthoryear{{Envoy Project}}{{Envoy Project}}{2024b}]%
        {envoylocalratelimit}
\bibfield{author}{\bibinfo{person}{{Envoy Project}}.}
  \bibinfo{year}{2024}\natexlab{b}.
\newblock \bibinfo{title}{Local rate limiting --- {Envoy} documentation}.
\newblock
  \bibinfo{howpublished}{\url{https://www.envoyproxy.io/docs/envoy/latest/intro/arch_overview/other_features/local_rate_limiting}}.
    (\bibinfo{year}{2024}).
\newblock
\newblock
\shownote{Accessed: 2026-07-02.}


\bibitem[\protect\citeauthoryear{Esik and Kozen}{Esik and Kozen}{2019}]%
        {esik2019free}
\bibfield{author}{\bibinfo{person}{Zoltan Esik} {and} \bibinfo{person}{Dexter
  Kozen}.} \bibinfo{year}{2019}\natexlab{}.
\newblock \showarticletitle{On Free $\omega$-Continuous and Regular Ordered
  Algebras}.
\newblock \bibinfo{journal}{{\em Logical Methods in Computer Science\/}}
  \bibinfo{volume}{15} (\bibinfo{year}{2019}).
\newblock


\bibitem[\protect\citeauthoryear{Foster, Kozen, Milano, Silva, and
  Thompson}{Foster et~al\mbox{.}}{2015}]%
        {foster2015coalgebraic}
\bibfield{author}{\bibinfo{person}{Nate Foster}, \bibinfo{person}{Dexter
  Kozen}, \bibinfo{person}{Mae Milano}, \bibinfo{person}{Alexandra Silva},
  {and} \bibinfo{person}{Laure Thompson}.} \bibinfo{year}{2015}\natexlab{}.
\newblock \showarticletitle{A coalgebraic decision procedure for NetKAT}. In
  \bibinfo{booktitle}{{\em Proceedings of the 42nd Annual ACM SIGPLAN-SIGACT
  Symposium on Principles of Programming Languages}}.
  \bibinfo{pages}{343--355}.
\newblock


\bibitem[\protect\citeauthoryear{Galatos, Jipsen, Kowalski, and Ono}{Galatos
  et~al\mbox{.}}{2007}]%
        {galatos2007residuated}
\bibfield{author}{\bibinfo{person}{Nikolaos Galatos}, \bibinfo{person}{Peter
  Jipsen}, \bibinfo{person}{Tomasz Kowalski}, {and} \bibinfo{person}{Hiroakira
  Ono}.} \bibinfo{year}{2007}\natexlab{}.
\newblock \bibinfo{booktitle}{{\em Residuated lattices: an algebraic glimpse at
  substructural logics}}. Vol.~\bibinfo{volume}{151}.
\newblock \bibinfo{publisher}{Elsevier}.
\newblock


\bibitem[\protect\citeauthoryear{Gil-F{\'e}rez, Jipsen, and
  Lodhia}{Gil-F{\'e}rez et~al\mbox{.}}{2023}]%
        {gil2023structure}
\bibfield{author}{\bibinfo{person}{Jos{\'e} Gil-F{\'e}rez},
  \bibinfo{person}{Peter Jipsen}, {and} \bibinfo{person}{Siddhartha Lodhia}.}
  \bibinfo{year}{2023}\natexlab{}.
\newblock \showarticletitle{The structure of locally integral involutive
  po-monoids and semirings}. In \bibinfo{booktitle}{{\em International
  Conference on Relational and Algebraic Methods in Computer Science}}.
  Springer, \bibinfo{pages}{69--86}.
\newblock


\bibitem[\protect\citeauthoryear{Green, Karvounarakis, and Tannen}{Green
  et~al\mbox{.}}{2007}]%
        {green2007provenance}
\bibfield{author}{\bibinfo{person}{Todd~J. Green}, \bibinfo{person}{Grigoris
  Karvounarakis}, {and} \bibinfo{person}{Val Tannen}.}
  \bibinfo{year}{2007}\natexlab{}.
\newblock \showarticletitle{Provenance semirings}. In \bibinfo{booktitle}{{\em
  Proceedings of the 26th ACM SIGMOD-SIGACT-SIGART Symposium on Principles of
  Database Systems (PODS)}}. \bibinfo{publisher}{ACM}, \bibinfo{pages}{31--40}.
\newblock
\showDOI{%
\url{https://doi.org/10.1145/1265530.1265535}}


\bibitem[\protect\citeauthoryear{Higman}{Higman}{1952}]%
        {higman1952ordering}
\bibfield{author}{\bibinfo{person}{Graham Higman}.}
  \bibinfo{year}{1952}\natexlab{}.
\newblock \showarticletitle{Ordering by divisibility in abstract algebras}.
\newblock \bibinfo{journal}{{\em Proceedings of the London Mathematical
  Society\/}} \bibinfo{volume}{3}, \bibinfo{number}{1} (\bibinfo{year}{1952}),
  \bibinfo{pages}{326--336}.
\newblock


\bibitem[\protect\citeauthoryear{Hoffmann and Jost}{Hoffmann and Jost}{2022}]%
        {hoffmann2022two}
\bibfield{author}{\bibinfo{person}{Jan Hoffmann} {and} \bibinfo{person}{Steffen
  Jost}.} \bibinfo{year}{2022}\natexlab{}.
\newblock \showarticletitle{Two decades of automatic amortized resource
  analysis}.
\newblock \bibinfo{journal}{{\em Mathematical Structures in Computer
  Science\/}} \bibinfo{volume}{32}, \bibinfo{number}{6} (\bibinfo{year}{2022}),
  \bibinfo{pages}{729--759}.
\newblock


\bibitem[\protect\citeauthoryear{Inverso, Tomasco, Fischer, La~Torre, and
  Parlato}{Inverso et~al\mbox{.}}{2021}]%
        {inverso2021bounded}
\bibfield{author}{\bibinfo{person}{Omar Inverso}, \bibinfo{person}{Ermenegildo
  Tomasco}, \bibinfo{person}{Bernd Fischer}, \bibinfo{person}{Salvatore
  La~Torre}, {and} \bibinfo{person}{Gennaro Parlato}.}
  \bibinfo{year}{2021}\natexlab{}.
\newblock \showarticletitle{Bounded verification of multi-threaded programs via
  lazy sequentialization}.
\newblock \bibinfo{journal}{{\em ACM Transactions on Programming Languages and
  Systems (TOPLAS)\/}} \bibinfo{volume}{44}, \bibinfo{number}{1}
  (\bibinfo{year}{2021}), \bibinfo{pages}{1--50}.
\newblock


\bibitem[\protect\citeauthoryear{Jacobs, Foster, Kapp{\'e}, Kozen, Saada,
  Silva, and Wagemaker}{Jacobs et~al\mbox{.}}{2025}]%
        {jacobs2025stackat}
\bibfield{author}{\bibinfo{person}{Jules Jacobs}, \bibinfo{person}{Nate
  Foster}, \bibinfo{person}{Tobias Kapp{\'e}}, \bibinfo{person}{Dexter Kozen},
  \bibinfo{person}{Lily Saada}, \bibinfo{person}{Alexandra Silva}, {and}
  \bibinfo{person}{Jana Wagemaker}.} \bibinfo{year}{2025}\natexlab{}.
\newblock \showarticletitle{StacKAT: Infinite State Network Verification}.
\newblock \bibinfo{journal}{{\em Proceedings of the ACM on Programming
  Languages\/}} \bibinfo{volume}{9}, \bibinfo{number}{PLDI}
  (\bibinfo{year}{2025}), \bibinfo{pages}{277--300}.
\newblock


\bibitem[\protect\citeauthoryear{Jullien}{Jullien}{1969}]%
        {jullien1969contribution}
\bibfield{author}{\bibinfo{person}{Pierre Jullien}.}
  \bibinfo{year}{1969}\natexlab{}.
\newblock {\em \bibinfo{title}{Contribution {\`a} l'{\'e}tude des types
  d'ordres dispers{\'e}s}}.
\newblock \bibinfo{thesistype}{Ph.D. Dissertation}.
  \bibinfo{school}{Universit{\'e} de Marseille}.
\newblock


\bibitem[\protect\citeauthoryear{Kincaid, Breck, Boroujeni, and Reps}{Kincaid
  et~al\mbox{.}}{2017}]%
        {kincaid2017compositional}
\bibfield{author}{\bibinfo{person}{Zachary Kincaid}, \bibinfo{person}{Jason
  Breck}, \bibinfo{person}{Ashkan~Forouhi Boroujeni}, {and}
  \bibinfo{person}{Thomas Reps}.} \bibinfo{year}{2017}\natexlab{}.
\newblock \showarticletitle{Compositional recurrence analysis revisited}. In
  \bibinfo{booktitle}{{\em Proceedings of the 38th ACM SIGPLAN Conference on
  Programming Language Design and Implementation (PLDI)}}.
  \bibinfo{publisher}{ACM}, \bibinfo{pages}{248--262}.
\newblock


\bibitem[\protect\citeauthoryear{Kincaid, Reps, and Cyphert}{Kincaid
  et~al\mbox{.}}{2021}]%
        {kincaid2021algebraic}
\bibfield{author}{\bibinfo{person}{Zachary Kincaid}, \bibinfo{person}{Thomas
  Reps}, {and} \bibinfo{person}{John Cyphert}.}
  \bibinfo{year}{2021}\natexlab{}.
\newblock \showarticletitle{Algebraic program analysis}. In
  \bibinfo{booktitle}{{\em International Conference on Computer Aided
  Verification}}. Springer, \bibinfo{pages}{46--83}.
\newblock


\bibitem[\protect\citeauthoryear{Kozen}{Kozen}{1990}]%
        {kozen1990kleene}
\bibfield{author}{\bibinfo{person}{Dexter Kozen}.}
  \bibinfo{year}{1990}\natexlab{}.
\newblock \showarticletitle{On Kleene algebras and closed semirings}. In
  \bibinfo{booktitle}{{\em International Symposium on Mathematical Foundations
  of Computer Science}}. Springer, \bibinfo{pages}{26--47}.
\newblock


\bibitem[\protect\citeauthoryear{K{\"u}hnrich, Schwoon, Srba, and
  Kiefer}{K{\"u}hnrich et~al\mbox{.}}{2009}]%
        {kuhnrich2009interprocedural}
\bibfield{author}{\bibinfo{person}{Morten K{\"u}hnrich},
  \bibinfo{person}{Stefan Schwoon}, \bibinfo{person}{Ji{\v{r}}{\'\i} Srba},
  {and} \bibinfo{person}{Stefan Kiefer}.} \bibinfo{year}{2009}\natexlab{}.
\newblock \showarticletitle{Interprocedural dataflow analysis over weight
  domains with infinite descending chains}. In \bibinfo{booktitle}{{\em
  International Conference on Foundations of Software Science and Computational
  Structures}}. Springer, \bibinfo{pages}{440--455}.
\newblock


\bibitem[\protect\citeauthoryear{La~Torre, Madhusudan, and Parlato}{La~Torre
  et~al\mbox{.}}{2007}]%
        {la2007robust}
\bibfield{author}{\bibinfo{person}{Salvatore La~Torre},
  \bibinfo{person}{Parthasarathy Madhusudan}, {and} \bibinfo{person}{Gennaro
  Parlato}.} \bibinfo{year}{2007}\natexlab{}.
\newblock \showarticletitle{A robust class of context-sensitive languages}. In
  \bibinfo{booktitle}{{\em 22nd Annual IEEE symposium on logic in computer
  science (LICS 2007)}}. IEEE, \bibinfo{pages}{161--170}.
\newblock


\bibitem[\protect\citeauthoryear{Lal, Touili, Kidd, and Reps}{Lal
  et~al\mbox{.}}{2008}]%
        {lal2008interprocedural}
\bibfield{author}{\bibinfo{person}{Akash Lal}, \bibinfo{person}{Tayssir
  Touili}, \bibinfo{person}{Nicholas Kidd}, {and} \bibinfo{person}{Thomas
  Reps}.} \bibinfo{year}{2008}\natexlab{}.
\newblock \showarticletitle{Interprocedural analysis of concurrent programs
  under a context bound}. In \bibinfo{booktitle}{{\em International conference
  on Tools and algorithms for the construction and analysis of systems}}.
  Springer, \bibinfo{pages}{282--298}.
\newblock


\bibitem[\protect\citeauthoryear{Larsen, Schmid, and Xue}{Larsen
  et~al\mbox{.}}{2017}]%
        {larsen2016wnetkat}
\bibfield{author}{\bibinfo{person}{Kim~Guldstrand Larsen},
  \bibinfo{person}{Stefan Schmid}, {and} \bibinfo{person}{Bingtian Xue}.}
  \bibinfo{year}{2017}\natexlab{}.
\newblock \showarticletitle{{WNetKAT}: A Weighted {SDN} Programming and
  Verification Language}. In \bibinfo{booktitle}{{\em 20th International
  Conference on Principles of Distributed Systems (OPODIS)}} {\em
  (\bibinfo{series}{Leibniz International Proceedings in Informatics
  (LIPIcs)})}, Vol.~\bibinfo{volume}{70}. \bibinfo{publisher}{Schloss
  Dagstuhl--Leibniz-Zentrum f{\"u}r Informatik}.
\newblock
\showISBNx{978-3-95977-031-6}


\bibitem[\protect\citeauthoryear{{Linux Foundation}}{{Linux
  Foundation}}{2024}]%
        {linuxhtb}
\bibfield{author}{\bibinfo{person}{{Linux Foundation}}.}
  \bibinfo{year}{2024}\natexlab{}.
\newblock \bibinfo{title}{{tc-htb(8)}: Hierarchy Token Bucket --- {Linux}
  manual page}.
\newblock
  \bibinfo{howpublished}{\url{https://man7.org/linux/man-pages/man8/tc-htb.8.html}}.
    (\bibinfo{year}{2024}).
\newblock
\newblock
\shownote{Accessed: 2026-07-02.}


\bibitem[\protect\citeauthoryear{Madhusudan and Parlato}{Madhusudan and
  Parlato}{2011}]%
        {madhusudan2011tree}
\bibfield{author}{\bibinfo{person}{Parthasarathy Madhusudan} {and}
  \bibinfo{person}{Gennaro Parlato}.} \bibinfo{year}{2011}\natexlab{}.
\newblock \showarticletitle{The tree width of auxiliary storage}. In
  \bibinfo{booktitle}{{\em Proceedings of the 38th ACM SIGPLAN-SIGACT Symposium
  on Principles of Programming Languages (POPL)}}. \bibinfo{publisher}{ACM},
  \bibinfo{pages}{283--294}.
\newblock


\bibitem[\protect\citeauthoryear{Musuvathi and Qadeer}{Musuvathi and
  Qadeer}{2007}]%
        {musuvathi2007iterative}
\bibfield{author}{\bibinfo{person}{Madanlal Musuvathi} {and}
  \bibinfo{person}{Shaz Qadeer}.} \bibinfo{year}{2007}\natexlab{}.
\newblock \showarticletitle{Iterative context bounding for systematic testing
  of multithreaded programs}. In \bibinfo{booktitle}{{\em Proceedings of the
  28th ACM SIGPLAN Conference on Programming Language Design and Implementation
  (PLDI)}}. \bibinfo{publisher}{ACM}, \bibinfo{pages}{446--455}.
\newblock


\bibitem[\protect\citeauthoryear{Nevatia, Liu, and Basin}{Nevatia
  et~al\mbox{.}}{2025}]%
        {nevatia2025reachability}
\bibfield{author}{\bibinfo{person}{Dhruv Nevatia}, \bibinfo{person}{Si Liu},
  {and} \bibinfo{person}{David Basin}.} \bibinfo{year}{2025}\natexlab{}.
\newblock \showarticletitle{Reachability Analysis of the Domain Name System}.
\newblock \bibinfo{journal}{{\em Proceedings of the ACM on Programming
  Languages\/}} \bibinfo{volume}{9}, \bibinfo{number}{POPL}
  (\bibinfo{year}{2025}), \bibinfo{pages}{1840--1870}.
\newblock


\bibitem[\protect\citeauthoryear{Qadeer and Rehof}{Qadeer and Rehof}{2005}]%
        {qadeer2005context}
\bibfield{author}{\bibinfo{person}{Shaz Qadeer} {and} \bibinfo{person}{Jakob
  Rehof}.} \bibinfo{year}{2005}\natexlab{}.
\newblock \showarticletitle{Context-bounded model checking of concurrent
  software}. In \bibinfo{booktitle}{{\em International conference on tools and
  algorithms for the construction and analysis of systems}}. Springer,
  \bibinfo{pages}{93--107}.
\newblock


\bibitem[\protect\citeauthoryear{Reps, Schwoon, Jha, and Melski}{Reps
  et~al\mbox{.}}{2005}]%
        {reps2005weighted}
\bibfield{author}{\bibinfo{person}{Thomas Reps}, \bibinfo{person}{Stefan
  Schwoon}, \bibinfo{person}{Somesh Jha}, {and} \bibinfo{person}{David
  Melski}.} \bibinfo{year}{2005}\natexlab{}.
\newblock \showarticletitle{Weighted pushdown systems and their application to
  interprocedural dataflow analysis}.
\newblock \bibinfo{journal}{{\em Science of Computer Programming\/}}
  \bibinfo{volume}{58}, \bibinfo{number}{1-2} (\bibinfo{year}{2005}),
  \bibinfo{pages}{206--263}.
\newblock


\bibitem[\protect\citeauthoryear{Reps, Turetsky, and Prabhu}{Reps
  et~al\mbox{.}}{2016}]%
        {reps2016newtonian}
\bibfield{author}{\bibinfo{person}{Thomas Reps}, \bibinfo{person}{Emma
  Turetsky}, {and} \bibinfo{person}{Prathmesh Prabhu}.}
  \bibinfo{year}{2016}\natexlab{}.
\newblock \showarticletitle{Newtonian program analysis via tensor product}. In
  \bibinfo{booktitle}{{\em Proceedings of the 43rd Annual ACM SIGPLAN-SIGACT
  Symposium on Principles of Programming Languages}}.
  \bibinfo{pages}{663--677}.
\newblock


\bibitem[\protect\citeauthoryear{Schmid and Srba}{Schmid and Srba}{2018}]%
        {schmid2018polynomial}
\bibfield{author}{\bibinfo{person}{Stefan Schmid} {and}
  \bibinfo{person}{Ji{\v{r}}{\'\i} Srba}.} \bibinfo{year}{2018}\natexlab{}.
\newblock \showarticletitle{Polynomial-time what-if analysis for
  prefix-manipulating {MPLS} networks}. In \bibinfo{booktitle}{{\em IEEE
  INFOCOM 2018-IEEE Conference on Computer Communications}}. IEEE,
  \bibinfo{pages}{1799--1807}.
\newblock


\bibitem[\protect\citeauthoryear{Schwoon, Jha, Reps, and Stubblebine}{Schwoon
  et~al\mbox{.}}{2003}]%
        {schwoon2003generalized}
\bibfield{author}{\bibinfo{person}{Stefan Schwoon}, \bibinfo{person}{Somesh
  Jha}, \bibinfo{person}{Thomas Reps}, {and} \bibinfo{person}{Stuart
  Stubblebine}.} \bibinfo{year}{2003}\natexlab{}.
\newblock \showarticletitle{On generalized authorization problems}. In
  \bibinfo{booktitle}{{\em 16th IEEE Computer Security Foundations Workshop,
  2003. Proceedings.}} IEEE, \bibinfo{pages}{202--216}.
\newblock


\bibitem[\protect\citeauthoryear{Sipser}{Sipser}{1997}]%
        {sipser1997introduction}
\bibfield{author}{\bibinfo{person}{Michael Sipser}.}
  \bibinfo{year}{1997}\natexlab{}.
\newblock \bibinfo{booktitle}{{\em Introduction to the Theory of Computation\/}
  (\bibinfo{edition}{1} ed.)}.
\newblock \bibinfo{publisher}{PWS Publishing Company}.
\newblock


\bibitem[\protect\citeauthoryear{Su{\'a}rez~Acevedo, Ferreira, Batz, B{\o}ving,
  Foster, and Silva}{Su{\'a}rez~Acevedo et~al\mbox{.}}{2026}]%
        {suarez2026weighted}
\bibfield{author}{\bibinfo{person}{Emmanuel Su{\'a}rez~Acevedo},
  \bibinfo{person}{Tiago Ferreira}, \bibinfo{person}{Kevin Batz},
  \bibinfo{person}{Oliver B{\o}ving}, \bibinfo{person}{Nate Foster}, {and}
  \bibinfo{person}{Alexandra Silva}.} \bibinfo{year}{2026}\natexlab{}.
\newblock \showarticletitle{Weighted NetKAT: A programming language for
  quantitative network verification}.
\newblock \bibinfo{journal}{{\em Proceedings of the ACM on Programming
  Languages\/}} \bibinfo{volume}{10}, \bibinfo{number}{PLDI}
  (\bibinfo{year}{2026}), \bibinfo{pages}{1788--1811}.
\newblock


\bibitem[\protect\citeauthoryear{Sunshine}{Sunshine}{1977}]%
        {sunshine1977source}
\bibfield{author}{\bibinfo{person}{Carl~A Sunshine}.}
  \bibinfo{year}{1977}\natexlab{}.
\newblock \showarticletitle{Source routing in computer networks}.
\newblock \bibinfo{journal}{{\em ACM SIGCOMM Computer Communication Review\/}}
  \bibinfo{volume}{7}, \bibinfo{number}{1} (\bibinfo{year}{1977}),
  \bibinfo{pages}{29--33}.
\newblock


\bibitem[\protect\citeauthoryear{van Leeuwen}{van Leeuwen}{1978}]%
        {van1978effective}
\bibfield{author}{\bibinfo{person}{Jan van Leeuwen}.}
  \bibinfo{year}{1978}\natexlab{}.
\newblock \showarticletitle{Effective constructions in well-partially-ordered
  free monoids}.
\newblock \bibinfo{journal}{{\em Discrete Mathematics\/}} \bibinfo{volume}{21},
  \bibinfo{number}{3} (\bibinfo{year}{1978}), \bibinfo{pages}{237--252}.
\newblock


\bibitem[\protect\citeauthoryear{Zetzsche}{Zetzsche}{2015}]%
        {zetzsche2015approach}
\bibfield{author}{\bibinfo{person}{Georg Zetzsche}.}
  \bibinfo{year}{2015}\natexlab{}.
\newblock \showarticletitle{An approach to computing downward closures}. In
  \bibinfo{booktitle}{{\em International Colloquium on Automata, Languages, and
  Programming}}. Springer, \bibinfo{pages}{440--451}.
\newblock


\bibitem[\protect\citeauthoryear{Zhou, Fang, Wang, and Wang}{Zhou
  et~al\mbox{.}}{2025}]%
        {zhou2025incremental}
\bibfield{author}{\bibinfo{person}{Chenyu Zhou}, \bibinfo{person}{Yuzhou Fang},
  \bibinfo{person}{Jingbo Wang}, {and} \bibinfo{person}{Chao Wang}.}
  \bibinfo{year}{2025}\natexlab{}.
\newblock \showarticletitle{An Incremental Algorithm for Algebraic Program
  Analysis}.
\newblock \bibinfo{journal}{{\em Proceedings of the ACM on Programming
  Languages\/}} \bibinfo{volume}{9}, \bibinfo{number}{POPL}
  (\bibinfo{year}{2025}), \bibinfo{pages}{1934--1961}.
\newblock


\end{thebibliography}

\appendix

\section{Proof of Proposition \ref{prop:one-omega}}\label{appx-sec:one-omega}

\begin{proof}
    We will write $e \coloneqq \vf{1}^\fixop$. Instantiating "(C1)" at $a = c = b = \vf{1}$ yields $$e \otimes e = e \otimes 1 \otimes e = \bigoplus_{n \in \nat} \vf{1}^n \otimes \vf{1} \otimes \vf{1}^n = \vf{1}.$$ In particular, $e \neq \vf{0}$, as otherwise $\vf{0} = \vf{1}$; moreover, the powers of $e$ alternate: $e^{2n} = \vf{1}$, and $e^{2n + 1} = e$, for every $n \in \nat$. Since $e \neq \vf{0}$, we may instantiate "(C1)" at $a = e$ twice, once with $c = \vf{1}$ and $b = \vf{1}$, and once with $c = e$ and $b = \vf{1}$:
    $$e^\fixop \otimes e = \bigoplus_{n \in \nat} e^n = \vf{1} \oplus e, \qquad\qquad e^\fixop \otimes e^\fixop = \bigoplus_{n \in \nat} e^{2n} = \vf{1}.$$
    Multiplying the first identity on the right by $e$ gives $e^\fixop = e^\fixop \otimes (e \otimes e) = (\vf{1} \oplus e) \otimes e = e \oplus \vf{1}$. Substituting this into the second identity, and using the idempotence of $\oplus$ together with $e \otimes e = \vf{1}$, we obtain $$\vf{1} = (\vf{1} \oplus e) \otimes (\vf{1} \oplus e) = \vf{1} \oplus e \oplus e \oplus (e \otimes e) = \vf{1} \oplus e,$$ that is, $e \isrleq{S} \vf{1}$. By the monotonicity of $\otimes$, $\vf{1} = e \otimes e \isrleq{S} \vf{1} \otimes e = e$. Since $\isrleq{S}$ is a partial order, $e = \vf{1}$.

    For the second claim, instantiating "(C1)" at $c = b = \vf{1}$ gives $\bigoplus_{n \in \nat} a^n = a^\fixop \otimes \vf{1}^\fixop = a^\fixop$, for every $a \in S \setminus \{\vf{0}\}$.
\end{proof}

\section{Proof of Lemma \ref{lem:sta-closure}}\label{appx-sec:sta-closure}

\begin{proof}
	Suppose $\stgen{G}{\mathcal{T}} \subsetneq \bigcup_{i \in \nat} G_i$, for the sake of contradiction. Let $i \in \nat$ be the smallest index such that for some ${LT} \in \tree{S}$, $LT \in G_i$ and ${LT} \not\in \stgen{G}{\mathcal{T}}$. By our initial assumption, such an index must exist. Now, by definition of $G_i$, ${LT} \in G_{i,v} \cup G_{i, h}$. Indeed, if ${LT} \in G_{i - 1}$, then we have found a smaller index $(i - 1)$ such that ${LT} \in G_{i - 1}\setminus\stgen{G}{\mathcal{T}}$, which is a contradiction.
	
	Suppose ${LT} \in G_{i,v}$. By definition, there exist $x \in X$ and $LT' \in G_{i - 1}$ such that $LT = v(x, LT')$. Since $LT' \in \stgen{G}{\mathcal{T}}$ by our assumption, once again by definition $LT = v(x, LT') \in \stgen{G}{\mathcal{T}}$. This leads to a contradiction. Similarly, the case of $LT \in G_{i, h}$ leads to a contradiction. Hence, $i$ cannot exist.
\end{proof}

\section{Proof of Lemma \ref{thm:sta-equivalence}}\label{appx-sec:sta-equivalence}

\begin{proof}
	$(\Rightarrow)$ Let $\tup{T, \sigma} \in \stgen{\textsc{Init}}{\freest{\configs}}$ be  arbitrary such that $\sigma(\epsilon) = \tup{\tup{p, q}, a}$. Then there exists $j \in \nat$ such that $\tup{T, \sigma} \in G_j$. It suffices to prove that the forward direction holds for arbitrary $j$. We proceed by induction on $j$. The claim holds by definition when $j = 0$. Let $j > 0$. We now do a case-by-case analysis:
	\begin{enumerate}[(i)]
		\item Suppose there exist $\delta_1, \delta_2 \in \trans$, and $\tup{T', \sigma'} \in G_{j - 1}$ where $\sigma'(\epsilon) = \tup{\tup{p', q'}, a'}$, such that $\tup{T, \sigma} = v_{\configs}(\tup{\delta_1, \delta_2}, \tup{T', \sigma'})$. By definition of $v_{\configs}$, $\delta_1 = \tup{p, \epsilon, u_1, \alpha, p'}$, and $\delta_2 = \tup{q', \alpha, u_2, \epsilon, q}$ for some $\alpha \in \Gamma$, and $a = \wt(\delta_1) a' \wt(\delta_2)$. By our hypothesis, there exists $\tracevar' \in \paths(\tup{p',\dashv}, \tup{q', \dashv})$ such that $\wt^*(\tracevar') = a'$. It follows that $\delta_1\tracevar'\delta_2 \in \paths(\tup{p, \dashv}, \tup{q, \dashv})$, and $\wt^*(\delta_1\tracevar'\delta_2) = a$.
		\item Otherwise, there exist $\tup{T_1, \sigma_1}, \tup{T_2, \sigma_2} \in G_{j - 1}$ such that $\tup{T, \sigma} = h_{\configs}(\tup{T_1, \sigma_1}, \tup{T_2, \sigma_2})$. By definition, $\sigma_1(\epsilon) = \tup{\tup{p, p'}, a_1}$, and $\sigma_2(\epsilon) = \tup{\tup{p', q}, a_2}$ for some $p' \in Q$ and $a_1, a_2 \in \semiring{S}$ such that $a = a_1 a_2$. By our hypothesis, there exist $\tracevar_1 \in \paths(\tup{p, \dashv}, \tup{p', \dashv})$, and $\tracevar_2 \in \paths(\tup{p', \dashv}, \tup{q, \dashv})$ such that $\wt^*(\tracevar_i) = a_i$ for $i \in [2]$. Once again, it follows that $\tau_1\tau_2 \in \paths(\tup{p, \dashv}, \tup{q, \dashv})$, and $\wt^*(\tau_1\tau_2) = a$.
	\end{enumerate}
	
	$(\Leftarrow)$ Assume $\tracevar \in \paths(\tup{p, \dashv}, \tup{q, \dashv})$ such that $\wt^*(\tracevar) = a$. We proceed by induction on $|\tracevar|$. The property trivially holds for the case $\tracevar = \epsilon$, forcing $p = q$.
	
	Suppose $\tracevar = \tracevar'\tracevar''$ for some non-empty $\tracevar' \in \paths(\tup{p, \dashv}, \tup{p', \dashv})$, and non-empty $\tracevar'' \in \paths(\tup{p', \dashv}, \tup{q, \dashv})$. Then, $a = \wt^*(\tracevar) = \wt^*(\tracevar') \wt^*(\tracevar'')$. By our hypothesis, there exist $\tup{T_1, \sigma_1}, \tup{T_2, \sigma_2} \in \stgen{\textsc{Init}}{\freest{\configs}}$ such that $\sigma_1(\epsilon) = \tup{\tup{p, p'}, \wt^*(\tracevar')}$, and $\sigma_2(\epsilon) = \tup{\tup{p', q}, \wt^*(\tracevar'')}$. Then by definition of "ideals" there exists $\tup{T, \sigma} = h_{\configs}(\tup{T_1, \sigma_1}, \tup{T_2, \sigma_2}) \in \stgen{\textsc{Init}}{\freest{\configs}}$ where $\sigma(\epsilon) = \tup{\tup{p, q}, \wt^*(\tracevar')\wt^*(\tracevar'')} = \tup{\tup{p, q}, a}$.
    
	\sloppy Now suppose there exist no such non-empty runs $\tracevar'$ and $\tracevar''$. Then there must exist transitions $\delta_1 = \tup{p, \epsilon, u_1, \alpha, p'}$, $\delta_2 = \tup{q', \alpha, u_2, \epsilon, q}$, with $\tup{p', \dashv} \reaches \tup{q', \dashv}$, such that $$\tup{p, \dashv} \xreaches{\delta_1} \tup{p', \alpha\bottom} \reaches \tup{q', \alpha\bottom} \xreaches{\delta_2} \tup{q, \dashv}.$$ This follows from the simple inductive property $\paths(\tup{p', \dashv}, \tup{q', \dashv}) \subseteq \paths(\tup{p', \alpha\bottom}, \tup{q', \alpha\bottom})$. Intuitively, we claim that there must exist a run from $\tup{p', \alpha\bottom}$ to $\tup{q', \alpha\bottom}$ that does not contain a "configuration" with just $\dashv$ in the stack. Indeed, otherwise there must exist a "configuration" of the form $\tup{p'', \dashv}$ in every run from $\tup{p', \alpha\bottom}$ to $\tup{q', \alpha\bottom}$, i.e., $\paths(\tup{p', \dashv}, \tup{q', \dashv}) = \emptyset$, which contradicts our assumption. Hence, $\tracevar$ is of the form $\delta_1\tracevar'\delta_2$ for some $\tracevar' \in \paths(\tup{p', \dashv}, \tup{q', \dashv})$.

    By our hypothesis, we can thus obtain a $\configs$-"labelled tree" $\tup{T', \sigma'} \in \stgen{\textsc{Init}}{\freest{\configs}}$ such that $\sigma'(\epsilon) = \tup{\tup{p', q'}, a'}$, where $a' = \wt^*(\tracevar')$. By definition of "ideals", we obtain $\tup{T, \sigma} = v_{\configs}(\tup{\delta_1, \delta_2}, \tup{T', \sigma'}) \in \stgen{\textsc{Init}}{\freest{\configs}}$ where $\sigma(\epsilon) = \tup{\tup{p, q}, \wt^*(\delta_1) a' \wt^*(\delta_2)} = \tup{\tup{p, q}, \wt^*(\tracevar)}$.
\end{proof}

\section{Proof of Correctness and Termination}\label{sec:proof-of-correctness}

\subsection{Correctness}\label{appx-sec:correctness}
To prove "(P2)", it suffices to prove the following statement using induction: 
\begin{lemma}\label[lemma]{lem:p2}
    Let $\tup{T, \sigma} \in G_i$, for $i \in \nat$. There exists $\tup{T', \sigma'} \in H_i$ such that $\sigma(\epsilon) \isrleq{\configs} \forget_{\sigma'}(\epsilon)$.
\end{lemma}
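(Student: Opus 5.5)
We prove the claim by induction on $i$, strengthening nothing beyond what is stated. The proof rests on two auxiliary facts, which we establish first.

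\emph{Invariance.} The forgetful map $\forget$ is invariant under $\equiv$, i.e.\ $\mix{x} \equiv \mix{y}$ implies $\forget(\mix{x}) = \forget(\mix{y})$. It suffices to check each generating equation in \Cref{eqn:normal-form}.
\begin{itemize}
\item $\stopgap$ is sent to $\vf{1}$, which handles $\stopgap \cdot \stopgap \equiv \stopgap$ and the absorption of $\stopgap$ into cold letters.
\item Associativity is preserved because $\otimes$ is associative.
\item $\hot{x}\cdot\hot{y} \equiv \hot{\forget(x\cdot y)}$ and $\mix{x}^\star \equiv \hot{\forget(x)^\fixop}$ hold by the definition of $\forget$.
\item $\equiv$ is a congruence, and $\forget$ is defined compositionally, so invariance lifts from the generators to all of $\equiv$.
\end{itemize}
This lets us compute $\forget$ on normal forms without worrying about which representative is stored.

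\emph{Monotonicity of the star.} For every $\mix{y}$ we have $\forget(\mix{y}) \isrleq{S} \forget(\mix{y}^\star)$. If $\forget(\mix{y}) = \vf{0}$ this is trivial. Otherwise \Cref{prop:one-omega} gives $\forget(\mix{y})^\fixop = \bigoplus_{n} \forget(\mix{y})^n$, which dominates the summand $n = 1$.

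\emph{Base case.} For $i = 0$, a non-$\bot$ tree in $G_0 = \textsc{Init}$ has root label $\tup{\tup{p,p},\vf{1}}$. The tree in $H_0$ with root $\tup{\tup{p,p},\stopgap}$ has $\forget(\stopgap) = \vf{1}$, so it is an equal witness.

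\emph{Inductive step.} Let $\tup{T,\sigma} \in G_i$. If $\tup{T,\sigma} \in G_{i-1}$, we use the induction hypothesis together with $H_{i-1} \subseteq H_i$.

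Suppose instead $\tup{T,\sigma} = v_{\configs}(\tup{\delta_1,\delta_2}, LT')$ with $LT' \in G_{i-1}$ having root weight $a'$. The induction hypothesis gives a tree $\tup{T_0,\sigma_0} \in H_{i-1}$ with the same state pair and $a' \isrleq{S} \forget(\mix{x})$, where $\mix{x} = \rootweight{\sigma_0(\epsilon)}$. Since $v_{\configs}$ only inspects the state pair, $v_{\configs}(\tup{\delta_1,\delta_2}, \tup{T_0,\forget_{\sigma_0}})$ is defined. Hence $\textsc{Compute}_v$ contributes some tree to $H_{i,v}$, according to the first applicable case among "(V1)"--"(V3)". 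Put $\mix{w} := \cold{\wt(\delta_1)}\cdot\stopgap\cdot\mix{x}\cdot\stopgap\cdot\cold{\wt(\delta_2)}$; by invariance $\forget(\mix{w}) = \wt(\delta_1)\forget(\mix{x})\wt(\delta_2)$. By monotonicity of $\otimes$,
\[
\wt(\delta_1)\, a'\, \wt(\delta_2) \isrleq{S} \forget(\mix{w}),
\]
so it suffices to show that the root of the produced tree has $\forget$-value at least $\forget(\mix{w})$. The state pairs always agree by construction.
\begin{itemize}
\item In "(V3)" the new root weight is exactly $\mix{w}$.
\item In "(V1)" the returned subtree has root weight equal to $\mix{w}$.
\item In "(V2)" we have $\mix{w} = \mix{y_1}\cdot\rootweight{\sigma_0(t)}\cdot\mix{y_2}$, and the new weight is either $\mix{y_1}^\star\cdot\rootweight{\sigma_0(t)}\cdot\mix{y_2}^\star$ or $(\mix{y_1}\cdot\hot{l})^\star\cdot\mix{x}'\cdot(\hot{r}\cdot\mix{y_2})^\star$ with $\rootweight{\sigma_0(t)} = \hot{l}\cdot\mix{x}'\cdot\hot{r}$. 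Apply monotonicity of the star to each starred factor and then monotonicity of $\otimes$. In the second form, regroup using associativity to get
\[
\forget(\mix{y_1}\hot{l})\,\forget(\mix{x}')\,\forget(\hot{r}\mix{y_2}) = \forget(\mix{w}).
\]
\end{itemize}

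The $h$ case is identical. Apply the induction hypothesis to both children, note that $h_{\configs}$ is defined because it too depends only on the state pairs, and run through "(H1)"--"(H3)" with $\mix{w} := \mix{x_1}\cdot\stopgap\cdot\mix{x_2}$.

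The main obstacle is the "(V2)"/"(H2)" case, for two reasons. First, the identity $\mix{w} = \mix{y_1}\cdot\rootweight{\sigma_0(t)}\cdot\mix{y_2}$ holds only up to $\equiv$, which is why invariance of $\forget$ must be in place. Second, the hot-splitting variant regroups $\hot{l}$ and $\hot{r}$ into the starred factors. I would handle this by checking that $\hot{l}\cdot\mix{x}'\cdot\hot{r}$ and the regrouped product have the same $\forget$-value by associativity, before applying monotonicity of the star.
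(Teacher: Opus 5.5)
Your proof is correct and follows the paper's own argument: induction on $i$, with the $v$-step (resp.\ $h$-step) matched against whichever of (V1)--(V3) (resp.\ (H1)--(H3)) fires, monotonicity of $\otimes$ for the exact-weight cases, and dominance of the starred expression for (V2)/(H2). The differences are minor. The paper takes the (V2) bound straight from the $n=1$ term of the diagonal sum in (C1), whereas you derive $a \isrleq{S} a^\fixop$ factorwise from \Cref{prop:one-omega}; your route also covers the zero case that (C1) excludes. You also make explicit the $\equiv$-invariance of $\forget$, which the paper uses tacitly.
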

\begin{proof}
    Let $\tup{T, \sigma} \in G_i$. We proceed by induction on $i \in \nat$. If $i = 0$, then the above statement holds by definition.

    Suppose $i > 0$. Assume $\tup{T, \sigma} \in G_{i, v}$. Let $\sigma(\epsilon) = \tup{\tup{p, q}, a}$. By definition, there exist a tuple $\tup{\delta_1, \delta_2} \in \trans^2$ of transitions and a tree $\tup{T_1, \sigma_1} \in G_{i - 1}$ where $\sigma_1(\epsilon) = \tup{\tup{p_1, q_1}, a_1}$, such that $v_{\configs}(\tup{\delta_1, \delta_2}, \tup{T_1, \sigma_1}) = \tup{T, \sigma}$, and $a = \wt(\delta_1) a_1 \wt(\delta_2)$.

    By our hypothesis, there exists $\tup{T_1', \sigma_1'} \in H_{i - 1}$ such that $\sigma_1(\epsilon) \isrleq{\configs} \forget_{\sigma_1'}(\epsilon)$. Let $\sigma_1'(\epsilon) = \tup{\tup{p_1, q_1}, \mix{x_1}}$. By assumption, $H_{i, v}$ is non-empty. As per the definition of $H_{i, v}$, we may choose $\tup{T_1', \sigma_1'}$ such that one of the following three cases holds:

    \paragraph{Case "(V1)"}
    There exists $t \in T_1'$ such that $\cold{\wt(\delta_1)} \cdot \stopgap \cdot \mix{x_1} \cdot \stopgap \cdot \cold{\wt(\delta_2)} = \rootweight{\sigma_1'(t)}$, and we can define $\tup{T', \sigma'} = t^{-1}\tup{T_1', \sigma_1'} \in H_{i, v}$. In particular, we have $\sigma(\epsilon) = \forget_{t^{-1}\sigma_1'}(\epsilon)$.
    
    \paragraph{Case "(V2)"}
    Some $t \in T_1'$ is the smallest "node" such that $st({\sigma_1'}(t)) = st(\sigma(\epsilon))$, and $\cold{\wt(\delta_1)} \cdot \stopgap \cdot \mix{x_1} \cdot \stopgap \cdot \cold{\wt(\delta_2)} = \mix{y_1} \cdot \rootweight{\sigma_1'(t)} \cdot \mix{y_2}$, for some computable non-empty $\mix{y_1}, \mix{y_2} \in \tsem{S}$. We can define $T' = t^{-1}T_1'$, and $\sigma'(t') = \sigma_1'(t \cdot t')$ for $t' \in T' \setminus \{\epsilon\}$. If $\rootweight{\sigma_1'(t)} = \hot{l} \cdot \mix{x_1'} \cdot \hot{r}$ then we define $\sigma'(\epsilon) \coloneqq \tup{\tup{p, q}, (\mix{y_1} \cdot \hot{l})^\star \cdot \mix{x_1'} \cdot (\hot{r} \cdot \mix{y_2})^\star}$, otherwise $\sigma'(\epsilon) \coloneqq \tup{\tup{p, q}, \mix{y_1}^\star \cdot \rootweight{\sigma_1'(t)} \cdot \mix{y_2}^\star}$.
    
    We treat the second case, writing $\mix{x_1'} \coloneqq \rootweight{\sigma_1'(t)}$; the first case is analogous, with $\mix{y_1} \cdot \hot{l}$ and $\hot{r} \cdot \mix{y_2}$ in place of $\mix{y_1}$ and $\mix{y_2}$. Then, we have
    $$a = \wt(\delta_1)a_1\wt(\delta_2) \isrleq{S} \wt(\delta_1)\forget(\mix{x_1})\wt(\delta_2) = \forget(\cold{\wt(\delta_1)}\cdot \stopgap \cdot \mix{x_1} \cdot \stopgap \cdot \cold{\wt(\delta_2)}) = \forget(\mix{y_1}\cdot \mix{x_1'} \cdot \mix{y_2}).$$ 
    The inequality follows from $\sigma_1(\epsilon) \isrleq{\configs} \forget_{\sigma_1'}(\epsilon)$. By property "(C1)", we also have,
    \begin{eqnarray*}
        \forget(\mix{y_1} \cdot \mix{x_1'} \cdot \mix{y_2}) \isrleq{S} \forget(\mix{y_1})^\fixop \forget(\mix{x_1'}) \forget(\mix{y_2})^\fixop = \forget(\mix{y_1}^\star \cdot \mix{x_1'} \cdot \mix{y_2}^\star).
    \end{eqnarray*}
    Hence, $\sigma(\epsilon) \isrleq{\configs} \forget_{\sigma'}(\epsilon)$.

    \paragraph{Case "(V3)"}
    Otherwise, we define $T' = 1 \cdot T_1' \cup \{\epsilon\}$, $\sigma'(\epsilon) = \tup{\tup{p, q}, \cold{\wt(\delta_1)}\cdot \stopgap \cdot \mix{x_1} \cdot \stopgap \cdot \cold{\wt(\delta_2)}}$, and $\sigma'(1 \cdot t') = \sigma_1'(t')$ for $t' \in T_1'$.

    We can conduct a similar case-by-case analysis when $\tup{T, \sigma} \in G_{i, h}$.
\end{proof}

Proving property "(P3)" is relatively non-trivial. We will need the following auxiliary lemma for our proof:
\begin{lemma}\label[lemma]{lem:pumping-infix}
    Let $\tup{T, \sigma} \in H_{i + 1}$ be arbitrary, where $i \in \nat$, and $\rootweight{\sigma(\epsilon)}$ is of the form $\hot{l} \cdot \mix{x} \cdot \hot{r}$. Then the tree $\tup{T, \sigma'}$, where $\sigma'(\epsilon) \coloneqq \tup{st(\sigma(\epsilon)), \mix{x}}$, and $\sigma'(t) \coloneqq \sigma(t)$ for $t \in T$, is in $H_{i}$.
\end{lemma}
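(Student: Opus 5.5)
The plan is to argue by induction on $i$, with a case analysis on which clause of $\textsc{Compute}_v(H_i)$ or $\textsc{Compute}_h(H_i)$ produced $\tup{T, \sigma} \in H_{i + 1}$. If $\tup{T, \sigma} \in H_i$ already, the induction hypothesis gives the relabelled tree in $H_{i - 1} \subseteq H_i$. The base case and "(V3)" are vacuous. The root of an $H_0$ tree is $\stopgap$, and a root produced by "(V3)" begins with the cold letter $\cold{\wt(\delta_1)}$. By \Cref{lem:thermal-extension-word-form}, neither normal word can have the form $\hot{l} \cdot \mix{x} \cdot \hot{r}$.

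\emph{Cases "(V1)" and "(H1)".} Here the new tree is $t^{-1}\tup{T_0, \sigma_0}$ for some $\tup{T_0, \sigma_0} \in H_i$. If $t \neq \epsilon$, \Cref{lem:H-subtree-closure} places it in $H_{i - 1}$, and the induction hypothesis yields the relabelled tree in $H_{i - 1} \subseteq H_i$. If $t = \epsilon$, the tree itself lies in $H_i$, and the claim follows from the induction hypothesis directly.

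\emph{Cases "(V2)" and "(H2)".} The root is either $(\mix{y_1} \cdot \hot{l'})^\star \cdot \mix{x'} \cdot (\hot{r'} \cdot \mix{y_2})^\star$ or $\mix{y_1}^\star \cdot \rootweight{\sigma_0(t)} \cdot \mix{y_2}^\star$. Both starred factors normalise to single hot letters, so the outer $\hot{l}, \hot{r}$ are exactly these letters. In the second branch, $\rootweight{\sigma_0(t)}$ does not begin or end with a hot letter, since otherwise the first branch would have applied. Hence normal-form uniqueness gives $\mix{x} = \rootweight{\sigma_0(t)}$, and the relabelled tree is literally the subtree $t^{-1}\tup{T_0, \sigma_0}$. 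This subtree lies in $H_i$ by \Cref{lem:H-subtree-closure}, or trivially if $t = \epsilon$. In the first branch, $\mix{x} = \mix{x'}$, where $\rootweight{\sigma_0(t)} = \hot{l'} \cdot \mix{x'} \cdot \hot{r'}$. The subtree $t^{-1}\tup{T_0, \sigma_0}$ lies in $H_i$, and its root is hot-bracketed. Applying the induction hypothesis to it gives precisely the required tree, with root $\mix{x'}$ and the same children, in $H_{i - 1} \subseteq H_i$.

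\emph{Case "(H3)" — the main obstacle.} Here the root is $\mix{x_1} \cdot \stopgap \cdot \mix{x_2}$ with $\mix{x_1} = \hot{l} \cdot \mix{x_1'}$ and $\mix{x_2} = \mix{x_2'} \cdot \hot{r}$. Merely stripping the outer letters does not correspond to any single $h_{\tconfigs}$ step on the unchanged children. I would first try to strengthen the inductive statement to one-sided versions. The first says that if a root in $H_{i + 1}$ has the form $\hot{l} \cdot \mix{x}$, the tree with root relabelled to $\mix{x}$ lies in $H_i$; the second is the symmetric statement for $\mix{x} \cdot \hot{r}$. These would be proved simultaneously by the same case analysis. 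In "(H3)" the two hot ends then come from different children, so one would strip $\hot{l}$ using the left child's history and $\hot{r}$ using the right child's. One would then argue, via \Cref{lem:infix} and the fact that the clauses "(H1)"/"(H2)" depend only on root labels and the stored subtrees, that the relabelled combination is itself generated at level $i$. If this fails, the fallback is to show that "(H3)" can never produce a doubly hot-bracketed root in the situations where the lemma is later invoked. That is, such roots only arise from "(V2)"/"(H2)", so the statement could be restricted accordingly.
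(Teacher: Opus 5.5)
Your skeleton (induction on $i$, split on the clause that produced the tree) matches the paper's one-line proof. The base case, (V1)/(H1), (V3) and the first branch of (V2)/(H2) are fine. The second branch of (V2)/(H2) is not. That branch is taken whenever $\rootweight{\sigma_0(t)}$ does not \emph{both} start and end with a hot letter, so it also covers a label that starts with a hot letter and ends with a cold letter or with $\stopgap$. In that case $\mix{y_1}^\star$ fuses with the leading hot letter, and $\mix{x}$ is $\rootweight{\sigma_0(t)}$ with its first letter deleted, not $\rootweight{\sigma_0(t)}$. Such labels do occur: (H3) outputs $\rootweight{\sigma_1(\epsilon)} \cdot \stopgap \cdot \rootweight{\sigma_2(\epsilon)}$, which starts hot whenever the left child's root does (iteration 4 in \Cref{fig:wpds-upward-translated}).

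More importantly, (H3) cannot be handled by a stronger induction, because the statement is false there. By \Cref{lem:H-subtree-closure}, (V1)/(H1) only re-emit trees already present. So every tree first enters $\bigcup_j H_j$ via $H_0$, (V3), (H3) or (V2)/(H2). Roots from (V2)/(H2) start with a hot letter, roots from (V3) end with a cold letter, and an (H3)-tree with children $T_1, T_2$ has root exactly $\rootweight{\sigma_1(\epsilon)} \cdot \stopgap \cdot \rootweight{\sigma_2(\epsilon)}$.

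Here is the counterexample. Take distinct states $p, m, q$ with $\delta_a$ pushing $\alpha$ at $p$, $\delta_b$ popping $\alpha$ from $p$ to $m$, and $\delta_c$ popping $\alpha$ at $m$. Write $\cold{a}$ for $\cold{\wt(\delta_a)}$ and $\hot{A}$ for $\hot{\wt(\delta_a)^\fixop}$, and similarly for the other transitions. As in iterations 2--3 of that figure, (V3) then (V2) give $T_1$ at $\tup{p, m}$ with root $\hot{A} \cdot \cold{a} \cdot \cold{b} \cdot \hot{C}$ over a leaf. The same pattern with $\beta$ on $m, q$ gives $T_2$ at $\tup{m, q}$ with root $\hot{D} \cdot \cold{d} \cdot \cold{e} \cdot \hot{F}$. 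No node of $T_1$ or $T_2$ has state pair $\tup{p, q}$, so (H1)/(H2) cannot fire. Hence (H3) yields the root $\hot{A} \cdot \mix{x} \cdot \hot{F}$ with $\mix{x} = \cold{a} \cdot \cold{b} \cdot \hot{C} \cdot \stopgap \cdot \hot{D} \cdot \cold{d} \cdot \cold{e}$. The relabelled tree has two children and a root starting with a cold letter. It could therefore only be the (H3)-tree on $T_1, T_2$, whose root is not $\mix{x}$, so it lies in no $H_j$.

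Your proposed repairs do not work either. The one-sided strengthening already fails for $T_1$: stripping only $\hot{A}$ leaves a one-child tree whose root ends with a hot letter, which no (V3) step produces. Recombining stripped children would also change the children's labels, which the lemma keeps fixed. The fallback of restricting the statement does not rescue \Cref{lem:p3}, which invokes the lemma at arbitrary hot-bracketed nodes. Hot ends inherited through (H3) do not come from pumping at that node's state pair, so (P3c) fails for them, yet (V2) still merges them. For example, add a push of $\gamma$ at $p$ and a pop of $\gamma$ at $q$. Then (V2) on the (H3)-tree creates $(\cold{g} \cdot \hot{A})^\star$, which over $\osem{S}_{\downsub{\trans}}$ evaluates to $\{\delta_g, \delta_a\}^*$. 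The computed downward closure thus contains $\delta_a\delta_g$, even though every genuine trace to $\tup{q, \dashv}$ performs all $\delta_g$ before any $\delta_a$. Closing the gap therefore requires changing the construction, for instance by distinguishing hot letters created by pumping at the node itself from those inherited via (H3); a finer case analysis will not suffice.
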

\begin{proof}
    The proof goes via induction on $i$, and a case-by-case analysis of each of the construction steps.
\end{proof}

Given two "trees" $T_1$ and $T_2$, an injective function $\embed: T_1 \rightarrow T_2$ is an ""embedding"" iff $\embed(\epsilon) = \epsilon$, and for every $t_1, t_2 \in T_1$, $t_1 \preceq t_2$ iff $\embed(t_1) \preceq \embed(t_2)$. In this case, we say that $T_1$ "embeds" into $T_2$. We now prove the following more general property:
\begin{lemma}\label[lemma]{lem:p3}
    Let $\tup{T, \sigma} \in H_i$, for $i \in \nat$. There exist a sequence $n \mapsto \tup{T_n, \sigma_n} \in \cinit$ of trees and a sequence of "embeddings" $n \mapsto (\embed_n: T \rightarrow T_n) $ such that, for every $t \in T$ and $n \in \nat$, the following conditions hold:
    \begin{enumerate}
        \item[""(P3a)""] $st(\sigma_n \circ \embed_n (t)) = st(\sigma(t))$.
        \item[""(P3b)""] $$\wtsemantics{\forget_\sigma(t)} = \bigoplus_{n \in \nat} \rootweight{\sigma_n \circ \embed_n (t)}.$$
        \item[""(P3c)""] If $\rootweight{\sigma(\epsilon)}$ is of the form $\hot{l} \cdot \mix{x} \cdot \hot{r}$, then there exists a sequence of "pumping" "contexts" $n \mapsto g_n \in \context{\freest{\configs}}$ such that for every choice $a \in \osem{S}$ it holds 
        \begin{equation*}
            \bigoplus_{n' \in \nat} b_{n'} a c_{n'} = \forget(l) \otimes a \otimes \forget(r),
        \end{equation*}
        where $\effect{g_{n'}} = \tup{b_{n'}, c_{n'}}$.
        \item[""(P3d)""] For every $t' \in T$ such that $t \preceq t'$, let $f_{n, t, t'} \in \context{\freest{\configs}}$ be the unique "context" that corresponds to "nodes" $\embed_n(t), \embed_n(t') \in T_n$. For every choice of $a \in \osem{S}$, it holds
        \begin{equation*}\label{eqn:infinite-pumps-effect-2}
            \bigoplus_{n' \in \nat} b_{n', t, t'}ac_{n', t, t'} = \forget(\mix{l_{t, t'}})\otimes a \otimes \forget(\mix{r_{t, t'}}),
        \end{equation*}
        where $\effect{f_{n', t, t'}} = \tup{b_{n', t, t'},c_{n', t, t'}}$, and $\rootweight{\sigma(t)} = \mix{l_{t, t'}} \cdot \rootweight{\sigma(t')} \cdot \mix{r_{t, t'}}$.
    \end{enumerate}
    For finite sequences we assume that the infinite sums only span over the indices where the sequences are defined. 
\end{lemma}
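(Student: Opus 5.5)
The plan is to argue by induction on $i$, proving all four conditions "(P3a)"–"(P3d)" simultaneously and following the case split of the construction steps "(V1)"–"(V3)" and "(H1)"–"(H3)".

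\textbf{Base case.} For $i = 0$, every tree in $H_0$ is a single node labelled $\tup{\tup{p,p},\stopgap}$. We take the constant sequence $T_n = \{\epsilon\}$, with root label $\tup{\tup{p,p},\vf{1}}$ from $\textsc{Init}$, and $\embed_n = \identity$.
\begin{itemize}
\item "(P3a)" and "(P3b)" are immediate, since $\forget(\stopgap) = \vf{1}$.
\item "(P3c)" is vacuous, because $\stopgap$ is not of the form $\hot{l}\cdot\mix{x}\cdot\hot{r}$.
\item "(P3d)" only concerns $t = t' = \epsilon$. Here $f = \identity$ and $\mix{l} = \mix{r} = \stopgap$.
\end{itemize}

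\textbf{Non-accelerating steps.} In the inductive step, the cases "(V1)" and "(H1)" return a subtree of a tree in $H_{i-1}$. For these we restrict the existing sequences: we take the subtrees of $T_n$ rooted at $\embed_n(t)$, which lie in $\cinit$ because ideals are closed under taking subtrees.

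In "(V3)" we set $T'_n \coloneqq v_{\configs}(\tup{\delta_1,\delta_2},T_n)$. In "(H3)", given sequences $T^1_n$ and $T^2_m$ for the two children, we enumerate pairs $(n,m)$ along a bijection $\nat \to \nat^2$ and set $T'_{(n,m)} \coloneqq h_{\configs}(T^1_n,T^2_m)$. The embeddings extend by prefixing $1\cdot$ or $2\cdot$.

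For "(P3b)" at the root, the required identity is
\[
\bigoplus_{n,m} a^1_n a^2_m = \Big(\bigoplus_n a^1_n\Big)\Big(\bigoplus_m a^2_m\Big).
\]
This follows from both parts of "(C2)": distributivity, together with regrouping of sums along the partition by $n$.

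For "(P3d)", the context between two nodes either lies inside a child, where the induction hypothesis applies after composing with the fixed ground context at the root (using "(C2)" again), or it is the ground context itself, whose effect equals the cold factors $\mix{l}$ and $\mix{r}$. "(P3c)" needs no work here, because a root of the form $\cold{\cdot}\cdots$ or $\mix{x_1}\cdot\stopgap\cdot\mix{x_2}$ begins or ends with a hot letter only if $\mix{x_1}$ or $\mix{x_2}$ does. In that situation we inherit the pumping sequence from the child via "(P3c)" for the child, composed with identity on the other side.

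\textbf{Accelerating steps (main obstacle).} The hard cases are "(V2)" and "(H2)". The new tree is $t^{-1}T$ for a node $t$ of the old tree $T$, whose root sits in state $st(\sigma(t))$, and the root weight becomes $\mix{y_1}^\star\cdot\rootweight{\sigma(t)}\cdot\mix{y_2}^\star$ or the hot-merged variant.

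By the induction hypothesis "(P3d)" applied to the old root $\epsilon$ and the node $t$ (shifted by the new ground context), we get pumping contexts $f_{n,\epsilon,t}$. These are pumping because the states agree. Their effects sum to $\forget(\mix{y_1})\otimes a\otimes\forget(\mix{y_2})$.

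The plan is to form all finite compositions $f_{n_1}\circ\cdots\circ f_{n_k}$ and apply them to the subtrees $\embed_n(t)^{-1}T_n$. These are trees in $\cinit$ by "context" closure (\Cref{lem:context-closure}), with the right states. Summing over $k$ and over the indices, and regrouping with "(C2)", gives
\[
\bigoplus_k \forget(\mix{y_1})^k\, a\, \forget(\mix{y_2})^k .
\]
This collapses by "(C1)" to $\forget(\mix{y_1})^\fixop\, a\, \forget(\mix{y_2})^\fixop$, which is "(P3b)". The same pumping family witnesses "(P3c)".

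In the hot-merged variant $(\mix{y_1}\cdot\hot{l})^\star\cdot\mix{x}\cdot(\hot{r}\cdot\mix{y_2})^\star$, we first use "(P3c)" for the old node to peel off $\hot{l}$ and $\hot{r}$ as their own pumping family, and use \Cref{lem:pumping-infix} to obtain the inner tree with label $\mix{x}$. We then interleave the two families before applying "(C1)".

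"(P3d)" for the new tree follows from the old "(P3d)" between $t$ and deeper nodes, together with "(P3c)"-style reasoning at the new root. The main bookkeeping risk is making the embeddings consistent across the countable re-indexing. I expect this interleaving of pumping families in the hot case, with its countable re-indexing, to be where the real difficulty lies.
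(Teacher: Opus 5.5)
Your overall route is the paper's: induction over the construction steps, subtrees for (V1)/(H1), and wrapping the old sequences and regrouping with (C2) for (V3)/(H3). For (V2)/(H2) you likewise interleave the contexts $g\circ f'_{n,\epsilon,t}$ with the (P3c)-family of the node $t$, apply them to the sequence of the stripped tree from \Cref{lem:pumping-infix}, and collapse with (C1).

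The step that fails is your treatment of (P3c) in the (H3) case. Normalisation only acts at the junction, and $\stopgap$ keeps hot letters apart. So $\mix{x_1}\cdot\stopgap\cdot\mix{x_2}$ has the shape $\hot{l}\cdot\mix{x}\cdot\hot{r}$ only if $\hot{l}$ is the first letter of $\mix{x_1}$ \emph{and} $\hot{r}$ is the last letter of $\mix{x_2}$. Both children are therefore involved, and the only pumping families you have access to are theirs. Those families act on trees rooted at $\tup{p_1,q_1}$ and $\tup{q_1,q_2}$, not at the root state $\tup{p_1,q_2}$. Their effects sum to $\forget(l)\,a\,\forget(r_1)$ and $\forget(l_2)\,a\,\forget(r)$, where $\hot{r_1}$ and $\hot{l_2}$ are the inner boundary letters; if those are not hot, the families do not exist at all. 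A context applied at $\tup{p_1,q_2}$ only wraps its argument, so it cannot reproduce pumping inside one half of the $h$-split. Composing with the identity on the other side gives the wrong factor anyway.

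For the same reason \Cref{lem:pumping-infix} fails for such a tree: stripping $\hot{l}$ and $\hot{r}$ gives a label that no step produces over the same two children. Your hot-merged (V2)/(H2) argument uses exactly these two facts about $t$, so it breaks whenever $t$ is an (H3)-tree of this shape. The inherited letters also undercut your claim that the accelerating family witnesses (P3c) in the plain case. If $\rootweight{\sigma(t)}$ begins, but does not end, with such a letter $\hot{l}$, normalisation merges $\mix{y_1}^\star$ into it. Your family then accounts only for $\forget(\mix{y_1})^\fixop$, not for $\forget(\mix{y_1})^\fixop\otimes\forget(l)$.

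A cleverer argument cannot close this gap. The paper's own proof has the same hole: it calls the $H_{i,h}$ case similar and checks (P3c) only for (V3), where it is vacuous. The statement is in fact false there. Work in $\osem{S}_{\downsub{\trans}}$ with $\downsub{\wt}(\delta)=(\epsilon+\delta)$, and take this PDS:
\begin{itemize}
\item at $p_1$, loops $\delta_1,\delta_3$ push $\alpha,\beta$; $\delta_5$ pushes $\gamma$ into $r$, and $\delta_6$ pops it into $q_1$;
\item at $q_1$, loops $\delta_4$ and $\delta_7$ pop $\beta$ and push $\eta$; $\delta_9$ pushes $\zeta$ into $r'$, and $\delta_{10}$ pops it into $q_2$;
\item at $q_2$, loops $\delta_8,\delta_2$ pop $\eta,\alpha$.
\end{itemize}
Step (V3) with $\tup{\delta_5,\delta_6}$ on the initial tree at $\tup{r,r}$, followed by (V2) with $\tup{\delta_3,\delta_4}$, produces a tree $C_1$ at $\tup{p_1,q_1}$ with root $\hot{\{\delta_3\}^*}\cdot\cold{(\epsilon+\delta_5)}\cdot\cold{(\epsilon+\delta_6)}\cdot\hot{\{\delta_4\}^*}$. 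Symmetrically one gets $C_2$ at $\tup{q_1,q_2}$ ending in $\hot{\{\delta_8\}^*}$. Step (H3) combines them into $D$ at $\tup{p_1,q_2}$, whose root begins with $\hot{\{\delta_3\}^*}$ and ends with $\hot{\{\delta_8\}^*}$. Processing $D$ with $\tup{\delta_1,\delta_2}$ falls under (V2) at $t=\epsilon$, with $\mix{y_1}=\cold{(\epsilon+\delta_1)}$ and $\mix{y_2}=\cold{(\epsilon+\delta_2)}$. The resulting tree in $H_4$ has a root beginning with $(\cold{(\epsilon+\delta_1)}\cdot\hot{\{\delta_3\}^*})^\star\equiv\hot{\{\delta_1,\delta_3\}^*}$, and its image under $\forget$ denotes a language containing $\delta_3\delta_1$.

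However, each $\beta$ must be popped at $q_1$ and hence lie above every $\alpha$. So every trace from $\tup{p_1,\dashv}$ to $\tup{q_2,\dashv}$ has the form $\delta_1^k\delta_3^m\delta_5\delta_6\delta_4^m\delta_7^j\delta_9\delta_{10}\delta_8^j\delta_2^k$. No sum of root weights of trees in $\cinit$ at $\tup{p_1,q_2}$ contains $\delta_3\delta_1$, so (P3a)--(P3b) cannot hold for this tree. The hot-merged formula is justified only when the outer hot letters of $\rootweight{\sigma(t)}$ were created by an acceleration at $st(\sigma(t))$ itself. A correct proof therefore requires changing the construction, or weakening the statement, not just a more careful induction.
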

In particular, property "(P3)" follows from properties "(P3a)", and "(P3b)" in the special case when $t = \epsilon$. 
\begin{proof}
    Let $\tup{T, \sigma} \in H_i$ for an arbitrary $i \in \nat$. We proceed by induction on $i$. If $i = 0$, then the above statement is trivially true. In particular, for every $t \in T$, we define $n \mapsto f_{n, t} = \identity$.

    Suppose $i > 0$. Assume $\tup{T', \sigma'} \in H_{i - 1}$ to be arbitrary. By our hypothesis, there exist sequences $n \mapsto \tup{T_n', \sigma_n'}$ of trees in $\cinit$, $n \mapsto \embed_n'$ of "embeddings" of the form $\embed_n': \tup{T', \sigma'} \rightarrow \tup{T_n', \sigma_n'}$, and $n \mapsto f_{n, t, t'}'$ of "contexts" in $\context{\freest{\configs}}$ for every $t, t' \in T'$ where $t \preceq t'$, such that the corresponding instantiation of the lemma holds. We also assume that the sequences are infinite as the proof follows similarly for the finite case.

    Without loss of generality, we assume $\tup{T, \sigma} \in H_i \setminus H_{i - 1}$. Let $\tup{T, \sigma} \in H_{i, v}$. The proof for the case when $\tup{T, \sigma} \in H_{i, h}$ follows similarly. Then we can find a suitable pair of transitions $\tup{\delta_1, \delta_2} \in \trans^2$, and choose $\tup{T', \sigma'}$ such that $\tup{T'', \sigma''} = v_{\configs}(\tup{\delta_1, \delta_2}, \tup{T', \forget_{\sigma'}}) \in \tree{\configs}$, and one of the following cases holds:

    \paragraph{Case "(V1)"}
    There exists $t \in T'$ such that $st(\sigma''(\epsilon)) = st(\sigma(t))$, $\cold{\wt(\delta_1)} \cdot \stopgap \cdot \rootweight{\sigma(\epsilon)} \cdot \stopgap \cdot \cold{\wt(\delta_2)} = \rootweight{\sigma(t)}$, and $\tup{T, \sigma} = \inverse\tup{T', \sigma'}$. By \Cref{lem:H-subtree-closure}, the rest follows from our hypothesis. 
    
    \paragraph{Case "(V3)"}
    We have $T = \{\epsilon\} \cup 1 \cdot T'$, $\sigma(\epsilon) = \tup{\tup{p, q}, \cold{\wt(\delta_1)} \cdot \stopgap \cdot \rootweight{\sigma'(\epsilon)} \cdot \stopgap \cdot \cold{\wt(\delta_2)}}$, and $\sigma(1 \cdot t) = \sigma'(t)$ for $t \in T'$. By definition, for every $n \in \nat$, $v_{\configs}(\tup{\delta_1, \delta_2}, \tup{T_n', \sigma_n'}) \in \cinit$. Hence, we define the candidate sequences as follows:
    \begin{equation*}
        \begin{array}{ccc}
            & \tup{T_n, \sigma_n} = v_{\configs}(\tup{\delta_1, \delta_2}, \tup{T_n', \sigma_n'})\text{, and } & \\
            \embed_n(\epsilon) = \epsilon, & & \embed_n(1 \cdot t') = 1 \cdot \embed_n'(t'),\\
        \end{array}
    \end{equation*}
    where $n \in \nat$, and $t' \in T'$. We show how to prove "(P3b)", as "(P3c)" is trivially true, and "(P3d)" follows from definitions. In particular, it suffices to show that $$\bigoplus_{n \in \nat} \rootweight{\sigma_n \circ \embed_n (\epsilon)} = \forget(\cold{\wt(\delta_1)} \cdot \stopgap \cdot \rootweight{\sigma'(\epsilon)} \cdot \stopgap \cdot \cold{\wt(\delta_2)}) = \wtsemantics{\forget_\sigma(\epsilon)}.$$ The case where $t \neq \epsilon$ follows simply from our assumption.
    
    By definition, $\rootweight{\sigma_n \circ \embed_n (\epsilon)} = \wt(\delta_1)\rootweight{\sigma_n' \circ \embed_n' (\epsilon)}\wt(\delta_2)$. Hence, we have 

    \begin{align*}
        \bigoplus_{n \in \nat} \rootweight{\sigma_n \circ \embed_n (\epsilon)} &= \bigoplus_{n \in \nat} \wt(\delta_1)\rootweight{\sigma_n' \circ \embed_n' (\epsilon)}\wt(\delta_2)\\
         &= \wt(\delta_1)(\bigoplus_{n \in \nat} \rootweight{\sigma_n' \circ \embed_n' (\epsilon)})\wt(\delta_2) \\
         &= \wt(\delta_1)\wtsemantics{\forget_{\sigma'}(\epsilon)}\wt(\delta_2).
    \end{align*}
    Since $\forget(\rootweight{\sigma'(\epsilon)}) = \wtsemantics{\forget_{\sigma'}(\epsilon)}$, we are done.

    \paragraph{Case "(V2)"}
    Suppose there exists a "node" $t \in T'$ such that $st(\sigma'(t)) = st(\sigma''(\epsilon))$, $T = \inverse T'$, $\rootweight{\sigma(\epsilon)} = (\mix{y_1} \cdot \hot{l_t})^\star \cdot \mix{x_t'} \cdot (\hot{r_t} \cdot \mix{y_2})^\star$, and $\sigma(t') = \sigma'(t \cdot t')$ for $t' \in T\setminus \{\epsilon\}$, where $\cold{\wt(\delta_1)} \cdot \stopgap \cdot \rootweight{\sigma'(\epsilon)} \cdot \stopgap \cdot \cold{\wt(\delta_2)} = \mix{y_1} \cdot \rootweight{\sigma'(t)} \cdot \mix{y_2}$, for non-empty $\mix{y_1}, \mix{y_2} \in \tsem{S}$, and $\rootweight{\sigma'(t)} = \hot{l_t} \cdot \mix{x_t'} \cdot \hot{r_t}$.

    By \Cref{lem:H-subtree-closure}, the tree $\inverse\tup{T', \sigma'}$ belongs to $H_{i - 1}$. Consider the sequence $n \mapsto g_n'$ of "pumping" "contexts" in $\context{\freest{\configs}}$, where $\effect{g_n'} = \tup{b_n', c_n'}$, such that "(P3c)" holds in our induction hypothesis. 
    
    By \Cref{lem:pumping-infix}, there exists a tree $\tup{T^\dagger, \sigma^\dagger}$ in $H_{i - 1}$ such that $T^\dagger = T$, $\sigma^\dagger(\epsilon) = \tup{st(\sigma'(t)), \mix{x_t'}}$, and $\sigma^\dagger(t') = \sigma'(t \cdot t')$ for $t' \in T \setminus \{\epsilon\}$. By our induction hypothesis, there exist sequences $n \mapsto \tup{T_n^\dagger, \sigma_n^\dagger}$ of trees in $\cinit$, $n \mapsto \embed_n^\dagger$ of "embeddings" of the form $\embed_n^\dagger: \tup{T^\dagger, \sigma^\dagger} \rightarrow \tup{T_n^\dagger, \sigma_n^\dagger}$, and $n \mapsto f_{n, t', t''}^\dagger$ of "contexts" in $\context{\freest{\configs}}$ for every $t', t'' \in T^\dagger$ where $t' \preceq t''$, such that the corresponding instantiation of the lemma holds.

    Let $g$ be the "context" $v_{\configs}(\tup{\delta_1, \delta_2}, \_)$. Consider the countable set of trees given by
    $$X = \funcclosure{\left(\sum_{n, m \in \nat} g \circ f_{n, \epsilon, t}' \circ g_{m}'\right)}\left(\{\tup{T_j^\dagger, \sigma_j^\dagger} \mid j \in \nat\}\right).$$ 
    We claim that $X$ is the candidate set of trees.\footnote{Of course, it is possible to represent this countable set as a sequence.} Observe that every "context" of the form $g \circ f_{n, \epsilon, t}' \circ g_m'$ is a pumping "context". Hence, "(P3a)" holds.
    
    Let bold variables of the form $\vf{n} = \tup{n_1, \ldots, n_k} \in \nat^{k}$, and $\vf{m} = \tup{m_1, \ldots, m_k} \in \nat^{k}$ denote non-negative vectors. We write $f_{\vf{n}, \vf{m}}$ to denote the pumping "context" $\left((g \circ f_{n_1, \epsilon, t}' \circ g_{m_1}')\circ \cdots \circ (g \circ f_{n_{k}, \epsilon, t}' \circ g_{m_{k}}') \right)$, and we write $\tup{T_{\vf{n}, \vf{m}, j}, \sigma_{\vf{n}, \vf{m}, j}}$ to denote the tree $f_{\vf{n}, \vf{m}}(\tup{T_j^\dagger, \sigma_j^\dagger})$ in $X$, for $j \in \nat$. We define the embedding $\embed_{\vf{n}, \vf{m}, j}: T \rightarrow T_{\vf{n}, \vf{m}, j}$ such that $\embed_{\vf{n}, \vf{m}, j}(\epsilon) \coloneqq \epsilon$, and $\embed_{\vf{n}, \vf{m}, j}(t') \coloneqq (1 \cdot \embed_{n_1}'(t) \cdot t_{m_1}') \cdots (1 \cdot \embed_{n_{k}}'(t) \cdot t_{m_{k}}') \cdot \embed_j^\dagger(t')$ for $t' \in T \setminus \{\epsilon\}$, where $t_m' \in \nat^*$ is the unique "node" associated to the \emph{hole} of the "context" $g_m'$.  
    
    We can rewrite $X$ as follows: 
    $$\bigcup_{i \in \nat}\left(\left(\sum_{n, m \in \nat} g \circ f_{n, \epsilon, t}' \circ g_m'\right)^i\left(\{\tup{T_j^\dagger, \sigma_j^\dagger} \mid j \in \nat\}\right)\right).$$ 
    Thus, in order to prove "(P3b)", one must show that
    \begin{equation*}
        \bigoplus_{\vf{n}, \vf{m} \in \nat^i, j \in \nat} \rootweight{\sigma_{\vf{n}, \vf{m}, j} \circ \embed_{\vf{n}, \vf{m}, j} (t')} = \forget(\rootweight{\sigma(t')}),
    \end{equation*}
    for $t' \in T$. The case when $t' \neq \epsilon$ follows by the definition of $\tup{T^\dagger, \sigma^\dagger}$. When $t' = \epsilon$, it suffices to show
    \begin{equation}\label{eqn:p3a-next}
        \bigoplus_{\vf{n}, \vf{m} \in \nat^i, j \in \nat} \rootweight{\sigma_{\vf{n}, \vf{m}, j} \circ \embed_{\vf{n}, \vf{m}, j} (\epsilon)} = \forget(\mix{y_1} \cdot \hot{l_t})^i \forget(\mix{x_t'}) \forget(\hot{r_t} \cdot \mix{y_2})^i,
    \end{equation} 
    for arbitrary $i \in \nat$. Indeed, then we have 
    \begin{align*}
        \bigoplus_{i \in \nat} \left(\bigoplus_{\vf{n}, \vf{m} \in \nat^i, j \in \nat} \rootweight{\sigma_{\vf{n}, \vf{m}, j} \circ \embed_{\vf{n}, \vf{m}, j} (\epsilon)}\right) &= \bigoplus_{i \in \nat} \forget(\mix{y_1} \cdot \hot{l_t})^i \forget(\mix{x_t'}) \forget(\hot{r_t} \cdot \mix{y_2})^i\\
         &= \forget((\mix{y_1} \cdot \hot{l_t})^\star \cdot \mix{x_t'} \cdot (\hot{r_t} \cdot \mix{y_2})^\star),
    \end{align*}
    where the second equality follows from the definition of "pumping semirings".
    
    We prove that \Cref{eqn:p3a-next} holds, by induction. For $i = 0$, the "context" $f_{\vf{n}, \vf{m}}$ is the identity, and the equation holds by our induction hypothesis on $\tup{T^\dagger, \sigma^\dagger}$. Now, suppose it holds for an arbitrary $i \in \nat$. We have,
    \[
	\resizebox{\ifdim\width>\linewidth \linewidth\else \width\fi}{!}{$\displaystyle
	\begin{aligned}
        \bigoplus_{\vf{n}, \vf{m} \in \nat^{i + 1}, j \in \nat} \rootweight{\sigma_{\vf{n}, \vf{m}, j} \circ \embed_{\vf{n}, \vf{m}, j} (\epsilon)} &= \bigoplus_{\vf{n}, \vf{m} \in \nat^{i + 1}, j \in \nat} \rootweight{\sigma_{\vf{n}, \vf{m}, j} (\epsilon)}\\
        &= \bigoplus_{n, m \in \nat, \vf{n}', \vf{m}' \in \nat^{i}, j \in \nat} \wt(\delta_1)b_{n, \epsilon, t}'b_m' \rootweight{\sigma_{\vf{n}', \vf{m}', j} (\epsilon)} c_m' c_{n, \epsilon, t}'\wt(\delta_2)\\
        &= \wt(\delta_1) \left(\bigoplus_{n \in \nat} b_{n, \epsilon, t}' \left(\bigoplus_{m \in \nat, \vf{n}', \vf{m}' \in \nat^{i}, j \in \nat} b_m' \rootweight{\sigma_{\vf{n}', \vf{m}', j} (\epsilon)} c_m' \right) c_{n, \epsilon, t}' \right)\wt(\delta_2)\\
        &= \wt(\delta_1) \forget(\mix{l_{\epsilon, t}}') \left(\bigoplus_{m \in \nat} b_m' \left(\bigoplus_{\vf{n}', \vf{m}' \in \nat^{i}, j \in \nat} \rootweight{\sigma_{\vf{n}', \vf{m}', j} (\epsilon)} \right) c_m' \right)\forget(\mix{r_{\epsilon, t}}')\wt(\delta_2)\\
        &= \wt(\delta_1) \forget(\mix{l_{\epsilon, t}}') \forget(\hot{l_t}) \left(\forget(\mix{y_1} \cdot \hot{l_t})^i \forget(\mix{x_t'}) \forget(\hot{r_t} \cdot \mix{y_2})^i\right) \forget(\hot{r_t}) \forget(\mix{r_{\epsilon, t}}')\wt(\delta_2)\\
        &= \forget(\mix{y_1} \cdot \hot{l_t})^{i + 1} \forget(\mix{x_t'}) \forget(\hot{r_t} \cdot \mix{y_2})^{i + 1},
    \end{aligned}$}
	\]
    where $\mix{l_{\epsilon, t}}', \mix{r_{\epsilon, t}}' \in \tsem{S}$ are such that $\rootweight{\sigma'(\epsilon)} = \mix{l_{\epsilon, t}}'\cdot\rootweight{\sigma'(t)}\cdot \mix{r_{\epsilon, t}}'$ (these must exist by \Cref{lem:infix}), and $\tup{b_{n, \epsilon, t}', c_{n, \epsilon, t}'} = \effect{f_{n, \epsilon, t}'}$, for $n \in \nat$. Observe that the fourth equality uses property "(P3d)" in our hypothesis, and the fifth uses property "(P3c)" for the "contexts" $g_m'$ together with \Cref{eqn:p3a-next} for $i$.

    It remains to prove "(P3c)", and "(P3d)". We will first prove "(P3d)". Recall that $t \in T'$ is fixed. Let $t', t'' \in T$ be arbitrary, such that $t' \preceq t''$. For the case when $t' \neq \epsilon \neq t''$, the proof follows from our hypothesis. Let $t' = \epsilon$. We write the required set of "contexts" as $\{f_{\vf{n}, \vf{m}} \circ f_{j, \epsilon, t''}^\dagger \mid j \in \nat, \tup{\vf{n}, \vf{m}} \in (\nat \times \nat)^* \}$. We must then show that for every choice of $a \in \osem{S}$, we have 
    \begin{eqnarray*}
        \bigoplus_{i \in \nat}\bigoplus_{n_1,m_1,\ldots, m_i, n_i, j \in \nat} (\wt(\delta_1) b_{n_1, \epsilon, t}'b_{m_1}') \cdots (\wt(\delta_1) b_{n_i, \epsilon, t}' b_{m_i}') b_{j, \epsilon, t''}^\dagger\otimes a \otimes \\
        c_{j, \epsilon, t''}^\dagger (c_{m_i}' c_{n_i, \epsilon, t}' \wt(\delta_2)) \cdots (c_{m_1}' c_{n_1, \epsilon, t}' \wt(\delta_2))\\
        = \forget(\mix{l_{\epsilon, t''}}) \otimes a \otimes \forget(\mix{r_{\epsilon, t''}}), 
    \end{eqnarray*}
    where $\effect{f_{j, t_1, t_2}}^\dagger = \tup{b_{j, t_1, t_2}^\dagger, c_{j, t_1, t_2}^\dagger}$ for $j \in \nat$, and $t_1, t_2 \in \nat^*$. As before, by "(C1)", it suffices to prove
    \begin{eqnarray*}
        \bigoplus_{n_1,m_1,\ldots, m_i, n_i, j \in \nat} (\wt(\delta_1) b_{n_1, \epsilon, t}'b_{m_1}') \cdots (\wt(\delta_1) b_{n_i, \epsilon, t}' b_{m_i}') b_{j, \epsilon, t''}^\dagger\otimes a \otimes \\
        c_{j, \epsilon, t''}^\dagger (c_{m_i}' c_{n_i, \epsilon, t}' \wt(\delta_2)) \cdots (c_{m_1}' c_{n_1, \epsilon, t}' \wt(\delta_2))\\
        = \forget(\mix{y_1})^i \forget(\mix{l_{t, t''}'}) \otimes a \otimes \forget(\mix{r_{t, t''}'}) \forget(\mix{y_2})^i, 
    \end{eqnarray*}
    for arbitrary $i \in \nat$, where $\mix{x_t'} = \mix{l_{t, t''}'} \cdot \rootweight{\sigma'(t'')} \cdot \mix{r_{t, t''}'}$. Once again, we proceed by induction. The computations are similar to the proof of "(P3b)". For "(P3c)", we just take the countable set $\{f_{\vf{n}, \vf{m}} \circ f_{j, \epsilon, \epsilon}^\dagger \mid j \in \nat, \tup{\vf{n}, \vf{m}} \in (\nat \times \nat)^* \}$. Since $f_{j, \epsilon, \epsilon}^\dagger = \identity$ by definition, it is equivalent to $\{f_{\vf{n}, \vf{m}} \mid \tup{\vf{n}, \vf{m}} \in (\nat \times \nat)^* \}$.

    The proof follows analogously for the simpler case when $\rootweight{\sigma'(t)}$ is not of the form $\hot{l_t} \cdot \mix{x_t'} \cdot \hot{r_t}$.
    
\end{proof}

\subsection{Termination}\label{appx-sec:termination}
We will now prove the termination of our algorithm by proving that "(P1)" holds. It suffices to show that the set $\bigcup_{i \in \nat} H_i$ is finite. For the sake of contradiction, assume $\bigcup_{i \in \nat} H_i$ is infinite. We observe:

\begin{lemma}\label[lemma]{lem:height-bounded}
    For every $i \in \nat$, the "heights" of trees in $H_i$ are bounded by $|Q|^2$.
\end{lemma}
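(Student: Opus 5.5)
We argue by induction on $i$, proving the stronger invariant that along every root-to-node path in a tree of $H_i$ the state pairs $st(\sigma(t))$ are pairwise distinct. The height bound follows at once: a path of length $k$ has $k+1$ nodes, and there are only $|Q|^2$ pairs in $Q \times Q$, so $k+1 \leq |Q|^2$. For $i = 0$ the trees of $H_0$ are single roots and the invariant is trivial.

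For the inductive step, let $\tup{T', \sigma'} \in H_{i+1} \setminus H_i$, and split according to the step that produced it.

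In cases "(V1)" and "(H1)", the tree is a subtree $t^{-1}\tup{T, \sigma}$ of a tree in $H_i$. Every path of it is a suffix of a path of the original tree, so the invariant is inherited.

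In cases "(V2)" and "(H2)", the tree is again $t^{-1}T$ (respectively $t^{-1}T_j$). Only its root label is changed, and the root keeps the state pair $st(\sigma''(\epsilon)) = st(\sigma(t))$. So the multiset of state pairs along each path is the same as along the corresponding path below $t$ in the original tree, and the invariant is inherited as well.

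The real content lies in cases "(V3)" and "(H3)". There a fresh root with state pair $s \coloneqq st(\sigma''(\epsilon))$ is placed above $T$ (or above $T_1$ and $T_2$). Since the children satisfy the invariant by hypothesis, it suffices to show that $s$ does not already occur as $st(\sigma(t))$ for any node $t$ of a child. Suppose it does. Then consider the earliest such node $t$ on the path. By \Cref{lem:infix}, applied to the child tree with $\epsilon \preceq t$, we have $\rootweight{\sigma(\epsilon)} = \mix{y_1}' \cdot \rootweight{\sigma(t)} \cdot \mix{y_2}'$. Hence the new root expression $\cold{\wt(\delta_1)} \cdot \stopgap \cdot \rootweight{\sigma(\epsilon)} \cdot \stopgap \cdot \cold{\wt(\delta_2)}$, or $\rootweight{\sigma_1(\epsilon)} \cdot \stopgap \cdot \rootweight{\sigma_2(\epsilon)}$ in the $h$ case, has the form $\mix{y_1} \cdot \rootweight{\sigma(t)} \cdot \mix{y_2}$. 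Two subcases remain. If both $\mix{y_1}$ and $\mix{y_2}$ equal $\stopgap$ or are trivial, the expression equals $\rootweight{\sigma(t)}$, so "(V1)" or "(H1)" would have applied. Otherwise, the shortest such $t$ triggers "(V2)" or "(H2)". Either way "(V3)" or "(H3)" is never reached, which is a contradiction.

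The main obstacle is this last subcase analysis. The wrapping factors $\mix{y_1}, \mix{y_2}$ must be shown to be nontrivial (that is, $\neq \stopgap$) whenever the expression differs from $\rootweight{\sigma(t)}$, and this has to hold after normalisation. For the $v$ case this is clear, since the cold letters $\cold{\wt(\delta_j)}$ survive normalisation. For the $h$ case, with $t$ inside $T_1$ say, the factor $\mix{y_2}$ contains $\stopgap \cdot \rootweight{\sigma_2(\epsilon)}$. The separator $\stopgap$ prevents hot merging, which by \Cref{lem:thermal-extension-word-form} keeps $\mix{y_2}$ non-trivial unless it collapses to $\stopgap$ alone. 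If it does collapse that way, this means the expression is literally $\rootweight{\sigma_j(t)}$ up to the separator, which is exactly the situation handled by "(H1)"/"(V1)". I would make this precise by a careful check of the boundary normalisations, and this is the step I expect to need the most care.
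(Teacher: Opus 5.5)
Your invariant, that state pairs are pairwise distinct along every root-to-node path, is the same one the paper's two-line sketch asserts. Your handling of the subtree-forming steps and of (V3) is fine. In the $v$ case the letters $\cold{\wt(\delta_1)}$ and $\cold{\wt(\delta_2)}$ survive normalisation, so both wrapping factors are non-trivial, and (V1) or (V2) fires whenever the new pair recurs.

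The gap is in the $h$ case, at exactly the step you postponed, and both branches of your dichotomy fail there.

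(i) Trivial wrapping factors do not give equality with the expression $\mix{x_t}$ at $t$. The separator $\stopgap$ is an identity only next to cold letters, $\hot{r}\cdot\stopgap$ is already normal, and (H1) compares normal forms, not expressions \emph{up to the separator}. The paper's own run shows the consequence. Iteration~4 of \Cref{fig:wpds-upward-translated} applies (H3) to a tree with root pair $\tup{q_1,q_2}$ and the tree $\tup{\tup{q_2,q_2},\stopgap}$. The result is a root with pair $\tup{q_1,q_2}$ directly above a child with pair $\tup{q_1,q_2}$. So the invariant is false, not merely delicate.

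(ii) Your fallback, that otherwise (H2) fires, overlooks that (H2) requires \emph{both} $\mix{y_1}\neq\stopgap$ and $\mix{y_2}\neq\stopgap$. The recurring pair can sit at the root of $T_1$, which happens whenever $T_2$ has a loop pair $\tup{q,q}$. Then $\mix{y_1}$ can be forced to be $\stopgap$, and only (H3) remains.

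Case (ii) iterates without bound. Take $p\neq q$ and the transitions $\tup{p,\epsilon,u,\alpha,p}$, $\tup{p,\alpha,u,\epsilon,q}$, $\tup{q,\epsilon,u,\beta,q}$ and $\tup{q,\beta,u,\epsilon,q}$.
\begin{itemize}
\item Step (V3) on $\tup{\tup{p,p},\stopgap}$ gives $T_1$ with root $\tup{\tup{p,q},\cold{c}\cdot\cold{d}}$.
\item Step (V2) on $\tup{\tup{q,q},\stopgap}$ gives the one-node tree $T_2=\tup{\tup{q,q},\hot{a}\cdot\stopgap\cdot\hot{b}}$.
\end{itemize}
Now form $h(\cdots h(h(T_1,T_2),T_2)\cdots,T_2)$. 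The only nodes carrying the new pair $\tup{p,q}$ lie on the left spine.
\begin{itemize}
\item (H1) fails, because the new expression is strictly longer than every spine expression.
\item (H2) fails, because every spine expression begins with $\cold{c}$, and so does the new expression, which has only two cold letters. Any $\mix{y_1}\neq\stopgap$ would either add a cold letter or place a hot letter before $\cold{c}$.
\end{itemize}
Hence (H3) fires every time. The root expressions are the pairwise distinct normal forms of $\cold{c}\cdot\cold{d}\cdot(\stopgap\cdot\hot{a}\cdot\stopgap\cdot\hot{b})^n$, and the $n$-th tree has height $n+1>|Q|^2=4$ once $n\geq 4$.

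So no careful check of boundary normalisations can close your argument: with (H1) and (H2) as written, the height bound itself fails. A correct proof has to start from amended rules and then re-establish an invariant like yours against them. For instance, (H2) could also accelerate one-sided pumps, which (C1) with $a=\vf{1}$ justifies, and (H1) could disregard a boundary $\stopgap$.
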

The proof follows from the fact that the steps "(V1)", "(V2)", "(H1)", and "(H2)" ensure that for any tree $\tup{T, \sigma} \in H_i$, and two of its "nodes" $t_1, t_2 \in T$, such that $t_1 \preceq t_2$ and $t_1 \neq t_2$, we have $st(\sigma(t_1)) \neq st(\sigma(t_2))$. Since these pairs of states range over $Q \times Q$, no path from the "root" of such a tree can have more than $|Q|^2$ "nodes". Since $\bigcup_{i \in \nat} H_i$ is infinite and there are only finitely many "heights" $h \leq |Q|^2$, by the pigeon-hole principle there must exist infinitely many trees in $\bigcup_{i \in \nat} H_i$ of "height" $h$, for some $h \leq |Q|^2$. Let $h_0$ be the least such "height".

By construction, $H_i \subseteq H_{i + 1}$, for all $i \in \nat$. Hence, there exists an index $i_0 \in \nat$ such that $H_{i_0}$ contains the set of all trees of "height" strictly less than $h_0$, in $\bigcup_{i \in \nat} H_i$. Without loss of generality, we assume $(i_0 - 1)$ to be the smallest such index. Let $R_0 = \{\rootweight{\sigma(t)} \mid \tup{T, \sigma} \in H_{i_0}, t \in T, \text{ht}(\inverse T) = h_0 \}$ be the $\tsem{S}$-components of the "roots" of the trees in $H_{i_0}$, at "height" $h_0$. By definition, $R_0$ is a finite set.

Let us define the set $""W_0"" \coloneqq R_0 \cup \{\hot{b} \cdot \mix{x} \cdot \hot{c} \mid b, c \in \fixop(S), \mix{x} \in R_0 \} \subseteq \tsem{S}$. We then claim,
\begin{lemma}\label[lemma]{lem:termination-lemma}
    Let $i \in \nat$ with $i \geq i_0$. For every tree $\tup{T, \sigma} \in H_i$ and $t \in T$ such that $\text{ht}(\inverse T) = h_0$, we have $\rootweight{\sigma(t)} \in W_0$.
\end{lemma}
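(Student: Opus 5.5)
We argue by induction on $i \geq i_0$, following the construction of $H_i$ from $H_{i-1}$. In the base case $i = i_0$, the claim holds by the definition of $R_0 \subseteq W_0$. For the inductive step, we fix a new tree $\tup{T, \sigma} \in H_i \setminus H_{i-1}$ and a "node" $t$ with $\text{ht}(\inverse T) = h_0$. If $t \neq \epsilon$, then \Cref{lem:H-subtree-closure} gives $\inverse\tup{T,\sigma} \in H_{i-1}$, so the label at $t$ is the root label of a tree in $H_{i-1}$ of "height" $h_0$, and the induction hypothesis applies. This leaves the root, $t = \epsilon$, with $\text{ht}(T) = h_0$, which we treat by cases on the step that produced the tree.

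In cases "(V1)" and "(H1)", the new tree is a "subtree" of an older tree, so its root label already occurs at "height" $h_0$ in a tree of $H_{i-1}$ and lies in $W_0$ by hypothesis. In cases "(V3)" and "(H3)", every "child" of the new root has "height" strictly less than $h_0$; in "(V3)" the unique child has "height" exactly $h_0 - 1$. By the choice of $i_0$, all trees of "height" below $h_0$ already lie in $H_{i_0 - 1}$. Their $v$- and $h$-combinations are therefore built by stage $i_0$. Provided the same step is taken deterministically on the same inputs, the resulting tree, and hence its root label, already belongs to $H_{i_0}$, so its label lies in $R_0$. We will have to check that whether "(V1)"/"(V2)"/"(V3)" (resp. "(H1)"/"(H2)"/"(H3)") fires depends only on the two argument trees and not on $i$. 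This holds because the conditions inspect only the argument trees.

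The main obstacle is case "(V2)" (and symmetrically "(H2)"). Here the new tree is $\inverse T'$ for a node $t$ of an older tree $\tup{T',\sigma'}$, and $\inverse T'$ has "height" $h_0$. Its root label is rewritten to either $\mix{y_1}^\star\cdot\rootweight{\sigma'(t)}\cdot\mix{y_2}^\star$ or $(\mix{y_1}\cdot\hot{l})^\star\cdot\mix{x}\cdot(\hot{r}\cdot\mix{y_2})^\star$. Using \eqref{eqn:normal-form}, the starred factors normalise to single hot letters $\hot{b}$ and $\hot{c}$ with $b, c \in \fixop(S)$. So the new label is $\hot{b}\cdot\mix{x}\cdot\hot{c}$, where $\mix{x}$ is either the old label $\rootweight{\sigma'(t)}$ or that label with its outer hot letters removed. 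In the first subcase, $\rootweight{\sigma'(t)}$ is not of the form $\hot{l}\cdot\mix{x}\cdot\hot{r}$. It is the label of a "height"-$h_0$ node of a tree in $H_{i-1}$, so by hypothesis it lies in $W_0$, and since it is not hot-bracketed it must lie in $R_0$; the new label is then in $W_0$. In the second subcase, $\rootweight{\sigma'(t)} = \hot{l}\cdot\mix{x}\cdot\hot{r}$ lies in $W_0$. Either it is in $R_0$ itself, which needs separate handling, or it equals $\hot{b'}\cdot\mix{x}\cdot\hot{c'}$ with $\mix{x}\in R_0$; the normal-form uniqueness of \Cref{lem:thermal-extension-word-form} identifies this $\mix{x}$, and the result is again in $W_0$.

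For the residual situation where a hot-bracketed label $\hot{l}\cdot\mix{x}\cdot\hot{r}$ lies in $R_0$ itself, we would invoke \Cref{lem:pumping-infix}. It shows that the stripped tree with root label $\mix{x}$ occurs one stage earlier and has the same shape, hence also "height" $h_0$. We therefore expect $\mix{x}\in R_0$ as well, after adjusting $i_0$ so that $H_{i_0}$ is closed under this stripping, which is possible since $R_0$ is finite. This bookkeeping, keeping the "inner core" $\mix{x}$ inside the finite set $R_0$, is where we expect the real difficulty. If it fails, we would enlarge $W_0$ to use the set of cores of $R_0$, which is still finite.
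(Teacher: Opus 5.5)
Your overall structure matches the paper's proof: induction on $i$, \Cref{lem:H-subtree-closure} for non-root nodes, then a case split on (V1)/(H1), (V2)/(H2) and (V3)/(H3). Your treatment of (V1)/(H1), of (V3)/(H3), and of the non-bracketed (V2)/(H2) subcase is sound, and slightly more careful than the paper's. The gap is the hot-bracketed (V2)/(H2) subcase. You leave it open, and the partial arguments you sketch there do not work.

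First, you claim that normal-form uniqueness identifies the interior $\mix{x}$ of an old label $\hot{b'}\cdot\mix{x_0}\cdot\hot{c'}\in W_0\setminus R_0$ with the witness $\mix{x_0}\in R_0$. This fails whenever $\mix{x_0}$ itself begins or ends with a hot letter, which labels produced by (H3) can do. That letter then merges with $\hot{b'}$ or $\hot{c'}$, so the interior is only a proper infix of $\mix{x_0}$, and you know nothing about it. Second, your repairs for the residual case (moving $i_0$, or enlarging $W_0$ by the cores of $R_0$) change $R_0$ and $W_0$. They therefore would not prove the lemma as stated.

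The missing idea is to apply the induction hypothesis not to the old bracketed label but directly to the stripped tree supplied by \Cref{lem:pumping-infix}. This is what the paper does when it picks $\tup{T',\sigma'}$ with $\rootweight{\sigma(\epsilon)} = \hot{b'}\cdot\rootweight{\sigma'(\epsilon)}\cdot\hot{c'}$.
\begin{itemize}
\item The stripped tree has the shape of the subtree at $t$, hence height $h_0$. It lies in an earlier stage, hence in $H_{i-1}$, because the $H_j$ increase. So no change of $i_0$ is needed, and the hypothesis gives $\mix{x}\in W_0$.
\item Now use normal forms. $\mix{x}$ sits between two hot letters of a normal word, so by condition (ii) of \Cref{lem:thermal-extension-word-form} it does not begin with a hot letter.
\item Every element of $W_0\setminus R_0$ does begin with a hot letter, since normalising $\hot{b}\cdot\mix{x_0}\cdot\hot{c}$ only merges hot letters.
\item Hence $\mix{x}\in R_0$. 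The new label is then $\hot{b}\cdot\mix{x}\cdot\hot{c}$ with $b=\forget(\mix{y_1}\cdot\hot{l})^\fixop$ and $c=\forget(\hot{r}\cdot\mix{y_2})^\fixop$, both in $\fixop(S)$. So it lies in $W_0$.
\end{itemize}
This handles the bracketed subcase uniformly, without splitting on whether the old label is in $R_0$. It also shows that $R_0$ is automatically closed under stripping hot brackets, which is the closure property you were trying to force by adjusting $i_0$.
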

By property "(C3)", we know that $W_0$ is finite. Hence, \Cref{lem:termination-lemma} directly contradicts our initial assumption, and $h_0$ cannot exist.

\begin{proof}[Proof of \Cref{lem:termination-lemma}]
    We proceed by induction on $i \in \nat$. The case when $i = i_0$ follows by definition. Let $i > i_0$ be arbitrary for which the lemma holds. Let $\tup{T, \sigma} \in H_{i + 1}$, and $t \in T$ be arbitrary such that $t$ is at "height" $h_0$. The proof is trivial when $t \neq \epsilon$, and follows from the fact that the corresponding "subtree" $\inverse\tup{T, \sigma}$ rooted at $t$ belongs to $H_{i}$ (by \Cref{lem:H-subtree-closure}). Hence, we assume $t = \epsilon$.
    
    If $\tup{T, \sigma}$ is generated as a result of step "(V1)" (resp. "(H1)"), then by definition $\tup{T, \sigma} \in H_{i}$. Hence, the lemma follows from our hypothesis. Suppose $\tup{T, \sigma}$ is generated as a result of step "(V2)" (resp. "(H2)"). Then there exists $\tup{T', \sigma'} \in H_i$, such that $\rootweight{\sigma(\epsilon)} = \hot{b}'\cdot \rootweight{\sigma'(\epsilon)} \cdot \hot{c}'$, for some $b', c' \in \fixop(S)$. By our hypothesis, $\rootweight{\sigma'(\epsilon)} \in W_0$. If $\rootweight{\sigma'(\epsilon)}$ is in $R_0$, we are done. Otherwise, the "normal" form of $\hot{b'} \cdot \rootweight{\sigma'(\epsilon)} \cdot \hot{c'}$ is in $W_0 \setminus R_0$.\footnote{Recall that, in our construction we implicitly assume all expressions in $\tsem{S}$ to be in their normal forms.} 
    
    Suppose $\tup{T, \sigma}$ is generated as a result of step "(V3)" (resp. "(H3)"). By definition, the tree $1^{-1}\tup{T, \sigma}$ (resp. trees $i^{-1}\tup{T, \sigma}$, for $i \in [2]$) has "height" $(h_0 - 1)$, and hence belongs to the set $H_{i_0 - 1}$. Therefore, $\tup{T, \sigma} \in H_{i_0}$, and the statement holds.
\end{proof}

Thus, our algorithm terminates.

\subsection{Complexity}\label{appx-sec:complexity}
We now prove \Cref{prop:complexity}. Every label occurring in $\bigcup_{i \in \nat} H_i$ is a pair in $(Q \times Q) \times \tsem{S}$ whose $\tsem{S}$-component is a "normal" expression over the alphabet $\wt(\trans) \cup \fixop(\semgen{\wt(\trans)}{\osem{S}}) \cup \{\stopgap\}$, which has at most $|\trans| + K + 1$ letters: the \coldw{cold} letters are introduced only by "(V3)", where they are transition weights, and the \hotw{hot} letters only by "(V2)" and "(H2)", where they are star closures of forgetful images of expressions over that same alphabet.

It remains to bound the lengths of these expressions, and the number of "nodes" per tree. Let $L_h$ denote the maximum "size" of a label at a "node" of "height" $h$. Steps "(V1)" and "(H1)" select a "subtree", and steps "(V2)" and "(H2)" replace a prefix and a suffix of a label by one \hotw{hot} letter each; none of these lengthen a label by more than an additive constant. The remaining steps "(V3)" and "(H3)" lengthen it by an additive constant, and by the sum of the "sizes" of the two children, respectively. Hence $L_h \leq 2L_{h - 1} + \mathcal{O}(1)$, and so $L_h \in 2^{\mathcal{O}(h)}$. By \Cref{lem:height-bounded} it follows that $L \coloneqq \max_h L_h \in 2^{\mathcal{O}(|Q|^2)}$. Since $v_{\tconfigs}$ and $h_{\tconfigs}$ are respectively unary and binary, the trees are binary, and again by \Cref{lem:height-bounded} they have at most $2^{|Q|^2 + 1}$ "nodes". Each "node" therefore carries one of at most $|Q|^2 \cdot \left(|\trans| + K + 1\right)^{L + 2}$ labels, where the first factor accounts for the pair of states. Taken over all "nodes", this first factor contributes only $2^{\mathcal{O}(|Q|^2)}$ to the exponent below, and is absorbed. Combining the three bounds yields \Cref{prop:complexity}.

\section{Fixpoint Algorithm}\label{sec:fixpoint-algorithm}

\begin{algorithm}
\begin{algodef}
\DontPrintSemicolon
\SetKwInOut{KwIn}{Input}
\textsc{FixPt}:\;
\KwIn{A "wPDS" $\wpds$, and states $p,p' \in Q$.}
\BlankLine
$H_{prev} \gets \emptyset$\;
$H_{curr} \gets \{\tup{\{\epsilon\}, \sigma} \mid q \in Q, \sigma(\epsilon) = \tup{\tup{q, q}, \stopgap} \}$\;
\While{$H_{prev} \neq H_{curr}$}{
  $H_{prev} \gets H_{curr}$\;
  $H_{curr} \gets H_{curr} \cup \textsc{Compute}_v (H_{curr}) \cup \textsc{Compute}_h (H_{curr})$\;
}
$\mathit{fixpoint} \gets \left\{\wtsemantics{\forget_{\sigma}(\epsilon)} \mid \tup{T, \sigma} \in H_{curr}, st(\sigma(\epsilon)) = \tup{p, p'}\right\}$\;
\Return{fixpoint}
\end{algodef}
\caption{Fixpoint Algorithm.}%
\label{alg:fixpoint-computation}%
\end{algorithm}

\section{Qualitative Network Reachability is Decidable}\label{atig-dec-proc}

Let us define the decision problem introduced in \cite{atig2008reachability}:

\begin{problemdef}
    ""\textsc{ConfigReach}""\textbf{:} \\[5pt]
    \textbf{Input:} \\
    An acyclic network $\mathcal{N} = \tup{P, \ch, \{\tup{\pds^p, \trivialsem, \wt^p}\}_{p \in P}, \{\phi^\delta\}_{\delta \in \bigcup_{p \in P} \trans^p}}$, and two "global configurations" $\vf{c}, \vf{c'} \in (\config{\pds^p})_{p \in P}$. \\[3pt]
    \textbf{Questions:}\\
    Does $\vf{c} \reaches_{\mathcal{N}} \vf{c}'$ hold? If yes, what is a "global trace" from $\vf{c}$ to $\vf{c}'$?
\end{problemdef}
Note that each "wPDS" in the input network is essentially a "PDS", since each of the underlying semirings is the Boolean semiring. 

Atig et al. showed that "\textsc{ConfigReach}" is decidable \cite[Prop. 3, Thm. 4]{atig2008reachability}. They first reduce the problem from acyclic networks to that of linear networks \cite[Prop. 3]{atig2008reachability}: after sorting the processes topologically, each process is simulated by one whose control state additionally stores a copy of the states of all the processes preceding it; the copy is refreshed by observing the immediately preceding process, and guards are checked against the copy. No process is merged or duplicated. Given a linear network and the \emph{root} process $p$, the decision procedure starts with the computation of the "downward closure" of the set of all state sequences \emph{modulo stuttering} that occur in some run from $\vf{c}^{p}$ to $\vf{c}'^{p}$ in $\pds^{p}$. Since the set of such sequences is a context-free language, its downward closure is effectively regular \cite{courcelle1991constructing,van1978effective}. The "downward closure" is computed as an NFA $\mathcal{A}^{p}$, and a kind of product of $\mathcal{A}^{p}$ with the \emph{child} "PDS" $\pds^{p'}$ then yields a "PDS" that we denote by $\mathcal{L}(\mathcal{A}^p) *_{obs} \pds^{p'}$ (see \cite{atig2008reachability} for its explicit construction). Intuitively, the product "PDS" $\mathcal{L}(\mathcal{A}^p) *_{obs} \pds^{p'}$ computes the set of all runs in $\pds^{p'}$, where in each run $\pds^{p'}$ observes the downward closure of the trace of some accepting run in $\mathcal{A}^{p}$, as an alternative to observing the states in $\pds^{p}$. The downward closure captures the possibility that the child process $p'$ might have missed some of the state occurrences in $p$, since communication is asynchronous. 

The process is repeated assuming $p'$ as the \emph{new} root process, and $\mathcal{L}(\mathcal{A}^p) *_{obs} \pds^{p'}$ as the corresponding "PDS", until we reach the leaf process $l$. At the end, the construction yields a product of the "PDS" $\pds^l$ at the leaf process $l$, with some NFA $\mathcal{A}$ such that $\pds^l$ is allowed a run from $\vf{c}^l$ to $\vf{c}'^l$ iff there are \emph{compatible} runs in each of the ancestor processes $p''$ (remember that each child process can observe the states of its parent process) from $\vf{c}^{p''}$ to $\vf{c}'^{p''}$. One can decide the existence of such a run in polynomial time in the size of the final product "PDS". Dually, Atig et al. also propose a bottom-up approach, that starts from the leaf processes and iteratively computes their "upward closures". Intuitively, the "upward closure" captures the possibility that a child process does not need to observe all the states in a parent process. 

\section{Proof of Theorem \ref{thm:thin-network-verification}}\label{sec:thin-network-verif-proof}

We will now give a proof of \Cref{thm:thin-network-verification}. As mentioned earlier, our algorithm is an adaptation of the decision procedure proposed in \cite[Prop. 4, Thm. 4]{atig2008reachability}, instead using our fixpoint algorithm for simultaneously computing the required language closures of traces, as well as the weights of the corresponding traces. Our procedure reduces to the classical one in the degenerate case of the Boolean semiring, which is both $\leq_{\trivialsem}$- and $\geq_{\trivialsem}$-"thin". Like the classical procedure, it first linearises the network \cite[Prop. 3]{atig2008reachability}; by \Cref{lem:linearisation}, which shows that this preserves the weight of every process, it suffices to compute \genconfigreach for linear networks, and the arguments below concern these. As a first step, we will show that our fixpoint computation preserves some properties used later in our algorithm. 

\subsection{Revisiting the Fixpoint Algorithm}

Recall the arbitrarily fixed "wPDS" $\wpds$, where $\pds = \tup{Q, A, \Gamma, \trans}$, and $\osem{S} = \tup{S, \oplus, \otimes, \vf{0}, \vf{1}}$. Our constructions below run the fixpoint algorithm over products of "pumping semirings", which need not be "pumping semirings" themselves. The following lemma shows that "thin" ones are, in the right sense.

\begin{lemma}\label[lemma]{lem:thin-products}
    Let $\osem{S}_1$ and $\osem{S}_2$ be "pumping semirings", and equip their products with $a^\fixop \coloneqq \bigoplus_{n \in \nat} a^n$ for $a \neq \vf{0}$, as forced by \Cref{prop:one-omega}, and an arbitrary value for $\vf{0}^\fixop$.
    \begin{enumerate}[(1)]
        \item If both are $\geq$-"thin", then $\osem{S}_1 \times \osem{S}_2$ is a $\geq$-"thin" "pumping semiring".
        \item If both are $\leq$-"thin", then $\osem{S}_1 \boxtimes \osem{S}_2 \coloneqq \left((S_1 \setminus \{\vf{0}\}) \times (S_2 \setminus \{\vf{0}\})\right) \cup \{\vf{0}\}$ is a subsemiring of $\osem{S}_1 \times \osem{S}_2$, and a $\leq$-"thin" "pumping semiring".
    \end{enumerate}
\end{lemma}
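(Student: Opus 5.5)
The plan is to verify the three conditions "(C1)"–"(C3)" of \Cref{def:fixop-semiring}, together with thinness, \emph{componentwise}. The one delicate point is components that are $\vf{0}$ even though the pair itself is non-zero. First I record the basic facts about $\osem{S}_1 \times \osem{S}_2$.
\begin{itemize}
\item It is an "idempotent semiring" whose natural order is the product order.
\item Hence a countable sum $\bigoplus_{n} (x_n, y_n)$ exists iff both $\bigoplus_n x_n$ and $\bigoplus_n y_n$ exist, and it is then the pair of these sums: the supremum of the partial sums is taken componentwise.
\item "(C2)" therefore transfers directly from the factors, both for multiplying countable sums by constants and for regrouping them along partitions.
\item For $a = (a_1, a_2) \neq \vf{0}$ one computes $a^\fixop = \bigoplus_n a^n = (a_1^{*}, a_2^{*})$, where $a_i^* = a_i^\fixop$ if $a_i \neq \vf{0}$ and $a_i^* = \vf{1}$ if $a_i = \vf{0}$, since $\vf{0}^0 = \vf{1}$ and $\vf{0}^n = \vf{0}$ for $n \geq 1$.
\item In particular $\fixop(S_1 \times S_2) \subseteq (\fixop(S_1) \cup \{\vf{1}\}) \times (\fixop(S_2) \cup \{\vf{1}\}) \cup \{\vf{0}^\fixop\}$, which is finite, giving "(C3)".
\end{itemize}

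The main obstacle is "(C1)" when a non-zero pair has a zero component. Suppose $a = (a_1,a_2)$ and $c = (c_1, c_2)$ are non-zero but, say, $a_1 = \vf{0} \neq c_1$. In that component the diagonal sum $\bigoplus_n a_1^n b_1 c_1^n$ collapses to its $n=0$ term $b_1$, whereas the required value is $a_1^* b_1 c_1^* = b_1 c_1^\fixop$. This is exactly where the two thinness directions diverge.

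For part (1), $\geq$-thinness gives $c_1^n \isrleq{S} \vf{1}$ for all $n$, so $c_1^\fixop = \bigoplus_n c_1^n = \vf{1}$, and the two sides agree. The other subcases ($c_1 = \vf{0}$, or both $a_1$ and $c_1$ zero) are handled the same way. When both components are non-zero, "(C1)" is inherited from the factors, using that the sum is computed componentwise. Thinness of the product is immediate: for non-zero pairs, $b \otimes a \otimes c \isrleq{S} a$ holds componentwise, trivially so in any component where $a_i$, $b_i$ or $c_i$ is $\vf{0}$, since $\vf{0}$ is least.

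For part (2), $\leq$-thinness makes the mixed case fail, since it forces $c_1^\fixop \isrgeq{S} c_1$, so we restrict to $\osem{S}_1 \boxtimes \osem{S}_2$. I first show that it is a subsemiring.
\begin{itemize}
\item It contains $\vf{0}$ and $\vf{1} = (\vf{1},\vf{1})$.
\item It is closed under $\oplus$: $x \oplus y \isrgeq{S} x$, and $\vf{0}$ is the least element, so the sum of non-zero elements is non-zero.
\item It is closed under $\otimes$: $\leq$-thinness excludes zero divisors, because for non-zero $x,y$ we have $x \isrleq{S} \vf{1} \otimes x \otimes y$, so $x \otimes y = \vf{0}$ would force $x = \vf{0}$.
\item It is closed under $\fixop$: the componentwise stars of non-zero components are non-zero.
\item Countable sums of non-zero elements computed in the product stay in $\boxtimes$, since they dominate a non-zero summand. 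Hence "(C2)" is inherited.
\end{itemize}
Now every non-zero element has both components non-zero. "(C1)" then follows componentwise from the factors, "(C3)" from finiteness of $\fixop(S_1) \times \fixop(S_2)$, and $\leq$-thinness componentwise as before.
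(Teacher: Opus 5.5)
Your proof is correct. Part (2) follows the paper's argument essentially step for step. You use the absence of zero divisors under $\leq$-thinness and the fact that sums of non-zero elements are non-zero. This gives closure of $\osem{S}_1 \boxtimes \osem{S}_2$ under $\oplus$, $\otimes$, countable sums and $\fixop$, and then (C1)--(C3) and thinness are inherited componentwise.

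For part (1) your organisation differs. You split on which components of $a$ and $c$ vanish, take (C1) from the factors when both are non-zero, and use thinness only in the mixed cases to get $c_1^\fixop = \vf{1}$. The paper instead instantiates thinness at $a = c = \vf{1}$ once, to see that $\vf{1}$ is the top element of each factor. Then $a_i^n \otimes b_i \otimes c_i^n \isrleq{S} b_i$ for all $a_i, b_i, c_i$, zero or not. Hence every diagonal sum equals $b$, and $a^\fixop = \vf{1}$ for every non-zero pair, so both sides of (C1) are $b$. This needs no case split and no appeal to the factors' (C1).

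The paper's version is shorter and proves slightly more: in the $\geq$-thin case the star is constantly $\vf{1}$ on non-zero elements, so (C1) and (C3) come for free once (C2) holds. Yours has the merit of isolating the mixed-zero case as the only obstruction, which is exactly what motivates passing to $\boxtimes$ in (2).

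One small imprecision in that motivating remark. The inequality $c_1^\fixop \isrgeq{S} c_1$ holds in every pumping semiring, so it is not the reason the mixed case fails. The real failure is that $b_1 \otimes c_1^\fixop$ can strictly exceed $b_1$: in the arctic semiring, take $b_1 = \vf{1} = 0$ and $c_1 = 1$, where $c_1^\fixop = \infty$. The remark is not used in your proof, so nothing breaks.
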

\begin{proof}
    Countable sums in $\osem{S}_1 \times \osem{S}_2$ are taken componentwise, so in both cases "(C2)" holds, and thinness transfers componentwise. It remains to show "(C1)" and "(C3)".

    (1) Instantiating $\geq$-thinness at $a = c = \vf{1}$ shows that $\vf{1}$ is the greatest element of each $\osem{S}_i$. Hence, for all $a, b, c \in S_i$, including $\vf{0}$, every term $a^n \otimes b \otimes c^n$ lies below the term $b$ for $n = 0$, so $\bigoplus_{n \in \nat} a^n \otimes b \otimes c^n = b$. In particular, $a^\fixop = \vf{1}$ for every non-zero element $a$ of the product, and "(C1)" holds componentwise, as both of its sides equal $b$. "(C3)" holds trivially.

    (2) $\leq$-"thin" semirings have no zero divisors: if $x, y \neq \vf{0}$ and $x \otimes y = \vf{0}$, then $y \isrleq{S} x \otimes y \otimes \vf{1} = \vf{0}$, a contradiction. Moreover, $x \oplus y = \vf{0}$ implies $x = y = \vf{0}$ in any "idempotent semiring". Hence $\osem{S}_1 \boxtimes \osem{S}_2$ is closed under $\oplus$, $\otimes$, and countable sums. Every non-zero element has non-zero components, so by \Cref{prop:one-omega} $\fixop$ acts componentwise on it, staying in $\osem{S}_1 \boxtimes \osem{S}_2$ since $a_i^\fixop \isrgeq{S_i} \vf{1}$; thus "(C1)" and "(C3)" follow componentwise.
\end{proof}
The restriction to $\boxtimes$ in (2) is necessary: for the arctic semiring $\osem{S}$, the product $\osem{S} \times \osem{S}$ violates "(C1)" at $a = \tup{-\infty, 0}$, $b = \tup{0, 0}$, and $c = \tup{1, 0}$, whose first components yield $0 \neq \infty$.

\inlsec{$\leq_{\osem{S}}$-"thin"}
Suppose $\osem{S}$ is $\leq_{\osem{S}}$-"thin". Following our construction in \Cref{subsec:compute-downward}, we construct the auxiliary "wPDS" $\tup{\pds, \osem{S}_{\downsub{\trans}} \boxtimes \osem{S}, \wt'}$ where we define $\wt'(\delta) \coloneqq \tup{\epsilon + \delta, \wt(\delta)} \in \osem{S}_{\downsub{\trans}} \boxtimes \osem{S}$. Both components of $\wt'(\delta)$ are non-zero, and $\osem{S}_{\downsub{\trans}}$ is $\leq$-"thin" (cf. \Cref{fig:semiring-interpretation}), so $\osem{S}_{\downsub{\trans}} \boxtimes \osem{S}$ is a "pumping semiring" by \Cref{lem:thin-products}. Intuitively, traces in our auxiliary "wPDS" simultaneously track both the downward closure, and the weight of the trace w.r.t. $\wt$. Given arbitrary "configurations" $c, c' \in \config{\pds}$, if $\wt'\wtsemantics{c, c'} = \tup{e, a}$, then it is easy to prove that $\expsem{e} = \down{\paths(c, c')}$, and $a = \wt\wtsemantics{c, c'}$. However, our fixpoint algorithm computes more than that. 

Consider the sequence $J_0 \subseteq J_1 \subseteq \cdots$ of finite sets of trees computed in our fixpoint iteration, 
for $\tup{\pds, \osem{S}_{\downsub{\trans}} \boxtimes \osem{S}, \wt'}$. Let $Y_i = \{\forget_{\sigma}(\epsilon) \mid \tup{T, \forget_{\sigma}} \in J_i\}$ be the set of "root" labels of the trees in $J_i$, for $i \in \nat$. 
Now, suppose $\tup{\tup{p, p'}, \tup{e, a}} \in Y_i$ for some $i \in \nat$. By property "(P3)", there must exist a countable set of traces $X \subseteq \paths(\tup{p, \dashv}, \tup{p', \dashv})$ such that its "downward closure" $\down{X}$ is $\expsem{e}$, and the sum of the weights of its elements is $a$. Since our construction steps never use the $\oplus$ operation of the underlying semiring $\osem{S}_{\downsub{\trans}} \boxtimes \osem{S}$, our expression $e$ cannot be of the form $e_1 + e_2$. Thus, $e$ is a product of "atoms". This leaves two other cases: (1) either $e$ uses an "atom" of the form $A^*$, or (2) it does not. Using a case-by-case analysis, we can prove that: 

\begin{lemma}\label[lemma]{lem:supertrace-exists}
    For any finite subset $X' \subseteq X$ of traces in $X$, there exists a trace $\tracevar \in X$, such that $\forall \tracevar' \in X'. \tracevar' \coveredeq \tracevar$. 
\end{lemma}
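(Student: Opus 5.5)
We read $X$ as the specific countable set produced by the proof of \Cref{lem:p3} for the tree whose root label is $\tup{\tup{p, p'}, \tup{e, a}}$. That is, $X$ is the set of traces of the trees $\tup{T_n, \sigma_n} \in \cinit$ obtained there, translated back to traces via \Cref{thm:sta-equivalence}. We do not take $X$ to be an arbitrary set satisfying "(P3)". The plan is to strengthen \Cref{lem:p3} with an extra invariant, proved by the same induction on $i$ and the same case split over "(V1)"--"(V3)" and "(H1)"--"(H3)". The invariant says that the family $n \mapsto \tup{T_n, \sigma_n}$ is \emph{directed} with respect to the subword order on the associated traces: for any $n_1, n_2$ there is an $n$ such that the traces of $T_{n_1}$ and $T_{n_2}$ are both subwords of the trace of $T_n$. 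We impose the same directedness on the context families $f_{n,t,t'}$ and $g_n$. For contexts, we compare the pair of trace fragments placed on the left and right of the hole. Once this invariant holds, the lemma follows by induction on $|X'|$: repeatedly take upper bounds of pairs.

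First the easy cases. For $i = 0$ the families are constant, so directedness is immediate. Case "(V1)" and "(H1)" reuse the families of a subtree, via \Cref{lem:H-subtree-closure}, so directedness is inherited. In "(V3)" the new trees are $v_{\configs}(\tup{\delta_1, \delta_2}, \_)$ applied to a directed family. Wrapping traces as $\tracevar \mapsto \delta_1 \tracevar \delta_2$ is monotone for $\coveredeq$, so the new family is again directed. In "(H3)" we use the product family $\tup{n_1, n_2} \mapsto h_{\configs}(T^1_{n_1}, T^2_{n_2})$. Subword order is compatible with concatenation, so taking upper bounds componentwise gives directedness.

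The main case is "(V2)" (and symmetrically "(H2)"). Here the family is the closure $X$ built from contexts $g \circ f'_{n,\epsilon,t} \circ g'_m$ composed $k$ times and then applied to trees $T^\dagger_j$. Take two elements indexed by $(\vf{n}, \vf{m}, j)$ of length $k$ and $(\vf{n}', \vf{m}', j')$ of length $k'$. Assume $k \leq k'$. We first pad the shorter vector by inserting extra pumping blocks; inserting a block's letters only adds letters, so the original trace stays a subword of the padded one. Then we match block by block. For position $r$, directedness of the $f'$ family and the $g'$ family gives indices $n''_r, m''_r$ that dominate both choices. Directedness of the $\dagger$ family gives a $j''$ dominating $j$ and $j'$. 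Composing these blocks yields a single context whose trace fragments contain both originals as subwords, because subwords are compatible with concatenation of fragments. Directedness of the new $g_n$ and $f_{n,t,t'}$ families is shown the same way, since they are exactly these compositions.

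The expected obstacle is the padding step. A $k$-fold pumped trace is not automatically a subword of a $k'$-fold one with unrelated indices. The plan above handles this by first adding blocks and then dominating blocks pointwise. Block $r$ of the shorter trace must be embedded into block $r$ of the longer one, with the extra blocks at the end absorbed as inserted letters. This works because $\coveredeq$ is preserved when fragments are spliced around a hole. That preservation is the only order property used. The distinction in the statement between $e$ containing an "atom" $A^*$ or not corresponds to whether any "(V2)"/"(H2)" step occurred. Without such a step, $X$ is finite and directed, and in fact has a maximum.
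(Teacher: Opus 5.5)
Your argument works, given the induction behind \Cref{lem:p3}. But it takes a much heavier route than the paper, and you do not need to restrict $X$: the statement holds for \emph{every} countable $X$ with $X{\downarrow} = \expsem{e}$.

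The paper's case split, on whether $e$ contains an atom $A^*$, uses only two facts already available:
\begin{itemize}
\item $X{\downarrow} = \expsem{e}$;
\item $e$ is an $\oplus$-free product of atoms.
\end{itemize}
The second fact makes $\expsem{e}$ directed under $\sqsubseteq$. Each atom is directed: $\{\epsilon, \delta\}$ has maximum $\delta$, and $A^*$ is closed under concatenation. Products of directed sets are directed, because $u_i \sqsubseteq u$ and $v_i \sqsubseteq v$ imply $u_i v_i \sqsubseteq uv$.

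Now take a finite $X' \subseteq X \subseteq \expsem{e}$ and pick an upper bound $w \in \expsem{e}$ of $X'$. Since $w \in X{\downarrow}$, some $\tau \in X$ satisfies $w \sqsubseteq \tau$, and this $\tau$ is the required supertrace. When $e$ has no star atoms, $\expsem{e}$ is the downward closure of a single word, which is then forced to be the maximum of $X$. That is all the star/no-star distinction is for; it does not record whether (V2) or (H2) fired.

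This argument treats \Cref{lem:p3} as a black box. Yours reopens its induction and threads a directedness invariant through all six cases, both for trees and for the left/right fragment pairs of contexts. Your padding-then-blockwise-domination step in (V2) is correct, and it gives an explicit construction of the supertrace. The costs are these:
\begin{itemize}
\item You obtain the lemma only for one particular witness family.
\item Your proof is only as solid as every case of that induction. This includes the (H)-cases and the (V2) case without outer hot letters, which the paper only asserts follow similarly.
\item It also includes the (P3c) families you must keep directed at (H3)-nodes whose label happens to begin and end with a hot letter. Neither you nor the paper constructs these families.
\end{itemize}
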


Given an arbitrary "word" $w \in \expsem{e}$, let $\tracevar_w \in X$ be a minimal trace that contains $w$ as a subword. Such a trace exists since $w \in \expsem{e} = \down{X}$, and the subword ordering is well-founded. Now, the set $\{\tracevar_w\} \cup \left(X \setminus \down{\{\tracevar_w\}}\right) \subseteq X$, which contains no proper subword of $\tracevar_w$, is countable. Consider the following (potentially infinite) enumeration: 
\begin{equation*}
    \tracevar_1, \tracevar_2, \tracevar_3, \ldots
\end{equation*}

Using \Cref{lem:supertrace-exists}, there must exist the following increasing sequence of traces in $X$:
\begin{equation*}
    \tracevar_1' \coveredeq \tracevar_2' \coveredeq \tracevar_3' \coveredeq \ldots,
\end{equation*}
where $\tracevar_n' \in X$ is a trace that contains every trace in $\{\tracevar_k \mid k \leq n\} \cup \{\tracevar_k' \mid k < n\}$ as a subword. Observe that this sequence must contain a trace that contains $w$ as a subword, since $\tracevar_w$ belongs to the earlier enumeration. Let the corresponding index be $i_w \in \nat$. Then the sequence 
\begin{equation*}
    \tracevar_{i_w}' \coveredeq \tracevar_{i_w + 1}' \coveredeq \tracevar_{i_w + 2}' \coveredeq \ldots
\end{equation*}
is entirely contained in $\up{\{w\}}$. 
Since $\osem{S}$ is $\isrleq{S}$-"thin", we thus obtain: 

\begin{equation*}
    \bigoplus_{\tracevar \in X} \wt^*(\tracevar) = \bigoplus_{n \geq i_w} \wt^*(\tracevar_n') = a.
\end{equation*}

Similarly, given a countable set of words $W \subseteq \expsem{e}$, let $X_W = \{\tracevar_w \mid w \in W\}$ be a set of minimal traces that contain words in $W$. As before, since $X_W \cup \left(X \setminus \down{X_W}\right)$ is a countable set, using earlier techniques we can obtain a sequence: 
\begin{equation*}
    \tracevar_1^{\dagger} \coveredeq \tracevar_2^{\dagger} \coveredeq \tracevar_3^{\dagger} \coveredeq \ldots
\end{equation*}
such that for any finite subset $W'$ of $W$, there exists a trace $\tracevar_{W'}^{\dagger}$ which contains every word in $W'$ as a subword. Once again, since $\osem{S}$ is $\isrleq{S}$-"thin", we obtain: 
\begin{equation*}
    \bigoplus_{n \in \nat} \wt^*(\tracevar_n^{\dagger}) = a.
\end{equation*}

Let $J_{i_0}$ be the fixpoint of the sequence of $J_i$'s. We then have the following property: 

\begin{corollary}\label[corollary]{cor:well-behaved-downward-closure}
    Let $\tup{\tup{p, p'}, \tup{e, a}} \in Y_{i_0}$ be arbitrary. For every countable set $W \subseteq \expsem{e}$, there exists a countable subset $Z \subseteq \paths(\tup{p, \dashv}, \tup{p', \dashv})$ of traces such that the following properties hold: 
    \begin{enumerate}
        \item $\bigoplus_{\tracevar' \in Z} \wt^*(\tracevar') = a$,
        \item for every finite subset $W' \subseteq W$, there exists $\tracevar \in Z \cap \bigcap_{w \in W'} \left(\up{\{w\}}\right)$, and
        \item $Z$ is totally-ordered under the subword order $\coveredeq$.
    \end{enumerate}
\end{corollary}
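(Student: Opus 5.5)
The plan is to assemble \Cref{cor:well-behaved-downward-closure} from the discussion preceding it, taking $Z$ to be the chain of traces $\tracevar_1^\dagger \coveredeq \tracevar_2^\dagger \coveredeq \cdots$ constructed there. Fix $\tup{\tup{p, p'}, \tup{e, a}} \in Y_{i_0}$. By "(P3)", applied to the fixpoint $J_{i_0}$ of the auxiliary "wPDS" over $\osem{S}_{\downsub{\trans}} \boxtimes \osem{S}$ (a "pumping semiring" by \Cref{lem:thin-products}), we obtain a countable $X \subseteq \paths(\tup{p, \dashv}, \tup{p', \dashv})$. This set satisfies $\down{X} = \expsem{e}$ and $\bigoplus_{\tracevar \in X} \wt^*(\tracevar) = a$. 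Here I use that sums in the $\downsub{\trans}$ component are unions of downward closures, so the first component of the sum being $e$ means $\down{X} = \expsem{e}$; the second component gives the weight.

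Given a countable $W \subseteq \expsem{e}$, I would first choose, for each $w \in W$, a trace $\tracevar_w \in X$ with $w \coveredeq \tracevar_w$. Then I enumerate the countable set $X_W \cup X$ as $\tracevar_1, \tracevar_2, \ldots$; including all of $X$ simplifies the weight argument. Next I build the chain inductively: $\tracevar_n^\dagger \in X$ is a trace dominating $\tracevar_n$ and $\tracevar_{n-1}^\dagger$, which exists by \Cref{lem:supertrace-exists} applied to the finite set $\{\tracevar_n, \tracevar_{n-1}^\dagger\} \subseteq X$. Set $Z \coloneqq \{\tracevar_n^\dagger \mid n \in \nat\} \subseteq X$.

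Property (3) holds by construction. For property (2), let $W' \subseteq W$ be finite. Each $\tracevar_w$ with $w \in W'$ occurs at some index in the enumeration, so taking $N$ as the maximum of these indices gives $w \coveredeq \tracevar_w \coveredeq \tracevar_N^\dagger$ for all $w \in W'$. Hence $\tracevar_N^\dagger \in Z \cap \bigcap_{w \in W'} \up{\{w\}}$.

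For property (1), since $Z \subseteq X$ we get $\bigoplus_{Z} \wt^* \isrleq{S} a$, using "(C2)" to split the sum over $X$ into $Z$ and its complement. For the reverse inequality, every $\tracevar \in X$ equals some $\tracevar_n$ and so satisfies $\tracevar \coveredeq \tracevar_n^\dagger$. In a $\isrleq{S}$-"thin" semiring, inserting transitions can only increase the weight, because $a \isrleq{S} b \otimes a \otimes c$ for non-zero elements and all transition weights are non-zero. Hence $\wt^*(\tracevar) \isrleq{S} \wt^*(\tracevar_n^\dagger) \isrleq{S} \bigoplus_Z \wt^*$, and summing over $X$ using "(C2)" gives $a \isrleq{S} \bigoplus_Z \wt^*$.

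The main obstacle is the monotonicity of $\wt^*$ under $\coveredeq$. Passing from one trace to a supertrace inserts letters at several positions, so a single thinness step does not suffice. I would handle this by induction on the number of inserted letters: write $\tracevar = uv$ and the supertrace locally as $u\delta v$. Then $\wt^*(u)\wt^*(v) \isrleq{S} \wt^*(u)\wt(\delta)\wt^*(v)$ follows by applying thinness with $a = \vf{1}$ and $b = \wt(\delta)$, giving $\vf{1} \isrleq{S} \wt(\delta)$, and then using monotonicity of $\otimes$.

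A secondary point is the existence of the countable sums over $Z$, for which I would appeal to "(C2)" via the partition argument.
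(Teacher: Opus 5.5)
Your proposal is correct and follows the paper's own argument. You extract $X$ from (P3), use \Cref{lem:supertrace-exists} to build a $\coveredeq$-chain dominating an enumeration of $X$, and use $\isrleq{S}$-thinness (which you rightly justify via $\vf{1} \isrleq{S} \wt(\delta)$ and monotonicity of $\otimes$) to identify the chain's total weight with $a$; enumerating all of $X$ rather than the paper's $X_W \cup (X \setminus \down{X_W})$, and not requiring $\tracevar_w$ to be minimal, are harmless simplifications. The only thing to tighten is the appeal to (C2) for existence of the sum over $Z$, since (C2) presupposes that the sums exist. Instead, note that the partial sums along the chain are just $\wt^*(\tracevar_n^\dagger)$, and your two inequalities already show that $a$ is their least upper bound, so $\bigoplus_{\tracevar' \in Z} \wt^*(\tracevar')$ exists and equals $a$.
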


\inlsec{$\geq_{\osem{S}}$-"thin"}
Suppose $\osem{S}$ is $\geq_{\osem{S}}$-"thin". Following our construction in \Cref{subsec:upward-closure-free}, we construct the auxiliary "wPDS" $\tup{\pds, \osem{S}_{\upsub{\trans}} \times \osem{S}, \wt'}$ where we define $\wt'(\delta) \coloneqq \tup{\{\delta\}, \wt(\delta)} \in \osem{S}_{\upsub{\trans}} \times \osem{S}$. Since $\osem{S}_{\upsub{\trans}}$ is $\geq$-"thin" (cf. \Cref{fig:semiring-interpretation}), $\osem{S}_{\upsub{\trans}} \times \osem{S}$ is a "pumping semiring" by \Cref{lem:thin-products}. Intuitively, the traces in this "wPDS" track both their upward closure, and their weight w.r.t. $\wt$. Given arbitrary "configurations" $c, c' \in \config{\pds}$, if $\wt'\wtsemantics{c, c'} = \tup{L, a}$, then it is easy to prove via induction that $\up{L} = \up{\paths(c, c')}$, and $a = \wt\wtsemantics{c, c'}$. As before, we will show that there is more to the story.

Consider the sequence $J_0 \subseteq J_1 \subseteq \cdots$ of finite sets of trees computed in every iteration of \Cref{alg:fixpoint-computation} for $\tup{\pds, \osem{S}_{\upsub{\trans}} \times \osem{S}, \wt'}$. Let $Y_i = \{\forget_{\sigma}(\epsilon) \mid \tup{T, \forget_{\sigma}} \in J_i\}$ be the set of "root" labels of the trees in $J_i$, for $i \in \nat$. 
Now, suppose $\tup{\tup{p, p'}, \tup{L, a}} \in Y_i$ for some $i \in \nat$. By property "(P3)", there must exist a set of traces $X \subseteq \paths(\tup{p, \dashv}, \tup{p', \dashv})$ such that its "upward closure" $\up{X}$ is $\up{L}$, and the sum of the weights of its elements is $a$. Since our construction steps never use the $\oplus$ operation of the underlying semiring $\osem{S}_{\upsub{\trans}} \times \osem{S}$, the set $L$ is obtained from the singletons $\{\delta\}$, for $\delta \in \trans$, using only $\upsub{\cdot}$ and $\upsub{\fixop}$, both of which map singletons to singletons. In other words $L$ is a singleton of the form $\{\tracevar\}$. Moreover, as $\up{X} = \up{\{\tracevar\}}$, some trace in $X$ is both a subword and a superword of $\tracevar$. Hence, $\tracevar \in X$ holds. Since $\osem{S}$ is $\geq_{\osem{S}}$-"thin", we obtain the equation: 
\begin{equation*}
    \bigoplus_{\tracevar' \in X} \wt^*(\tracevar') = \wt^*(\tracevar) = a.
\end{equation*}

Let $J_{i_0}$ be the fixpoint of the sequence of $J_i$'s. This leads to the following corollary: 
\begin{corollary}
    Let $\tup{\tup{p, p'}, \tup{L, a}} \in Y_{i_0}$ be arbitrary. For every $w \in \up{L}$, there exists a trace $\tracevar \in \paths(\tup{p, \dashv}, \tup{p', \dashv})$ which is contained in the "downward closure" $\down{\{w\}}$, such that 
    $\wt^*(\tracevar) = a$
    holds. 
\end{corollary}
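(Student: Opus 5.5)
The plan is to read the corollary off the discussion just before it, after making two points rigorous. First, the $\upsub{\trans}$-component $L$ of every root label in $Y_{i_0}$ is a singleton. Second, the weight $a$ is attained by a single trace, namely the unique element of $L$. Once these hold, the corollary follows from the definition of "upward closure".

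\textbf{Step 1 ($L$ is a singleton).} I will argue by induction on the iteration index $i$, simultaneously for all labels at all "nodes" of trees in $H_i$. The claim is that $\forget$ of every $\tsem{S}$-expression occurring there has first component of the form $\{u\}$ for some $u \in \trans^*$. The base case is $\forget(\stopgap) = \tup{\{\epsilon\}, \vf{1}}$. The construction steps "(V1)"–"(V3)" and "(H1)"–"(H3)" build labels only from subexpressions of existing labels, the cold letters $\cold{\wt'(\delta)}$ with $\wt'(\delta) = \tup{\{\delta\}, \wt(\delta)}$, concatenation, and $\star$; they never use $\oplus$. I will check that this remains true after normalisation via \Cref{eqn:normal-form}. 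On the first component, $\upsub{\cdot}$ maps singletons to singletons, and $\upsub{\fixop}$ returns $\{\epsilon\}$. By \Cref{lem:thin-products}(1), $\fixop$ acts componentwise on non-zero elements, so the first component of $x^\fixop$ is $\{\epsilon\}$. Together these give the inductive step.

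\textbf{Step 2 ($a$ is attained by one trace).} Let $\tup{\tup{p,p'},\tup{\{\tracevar\},a}} \in Y_{i_0}$. By "(P3)", applied through \Cref{lem:p3} and \Cref{thm:sta-equivalence}, there is a countable $X \subseteq \paths(\tup{p,\dashv},\tup{p',\dashv})$ whose weights under $\wt'$ sum to $\tup{\{\tracevar\},a}$. Sums in $\osem{S}_{\upsub{\trans}}$ keep the $\sqsubseteq$-minimal elements, so $\up{X} = \up{\{\tracevar\}}$. Hence $\tracevar \sqsubseteq x$ for every $x \in X$, and $\tracevar$ is the minimal element of $X$, so $\tracevar \in X$. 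To compare weights, I will use $\geq$-thinness at $a = \vf{1}$, which gives $\vf{1} \isrgeq{S} u$ for every non-zero $u$. By monotonicity of $\otimes$, inserting transitions with non-zero weights into $\tracevar$ can only decrease its weight, so $\wt^*(x) \isrleq{S} \wt^*(\tracevar)$ for all $x \in X$. Therefore $a = \bigoplus_{x \in X} \wt^*(x) = \wt^*(\tracevar)$.

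\textbf{Step 3 (conclusion).} Given $w \in \up{L} = \up{\{\tracevar\}}$, we have $\tracevar \sqsubseteq w$, that is, $\tracevar \in \down{\{w\}}$. Together with $\tracevar \in \paths(\tup{p,\dashv},\tup{p',\dashv})$ and $\wt^*(\tracevar) = a$, this is exactly the required witness, and it is the same trace for every $w$.

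I expect the main obstacle to be Step 1. The normal-form rewriting merges hot letters and evaluates $\star$. I have to verify that these merges never introduce an $\oplus$ on the first component, and that the accelerated hot letters of "(V2)" and "(H2)" indeed become $\{\epsilon\}$ there. I will handle this by proving that the set of pairs $\tup{\{u\}, s}$ with $s \neq \vf{0}$ is closed under $\otimes$ and $\fixop$, and noting that every rewrite in \Cref{eqn:normal-form} only applies these two operations.
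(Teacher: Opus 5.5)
Your proposal is correct and follows the paper's argument. Only $\otimes$ and $\fixop$ are ever applied, so $L=\{\tau\}$ is a singleton; the equality of upward closures forces $\tau$ to be the minimal element of the set $X$ supplied by (P3), so $\tau\in X$; and $\geq$-thinness gives $\mathsf{wt}^*(\tau)=a$, so the single trace $\tau$ is a witness for every $w$.

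One small repair to your last paragraph: drop the requirement $s\neq\mathbf{0}$ from the invariant. A $\geq$-thin pumping semiring can have zero divisors (e.g.\ the four-element Boolean algebra with $\vee,\wedge$), so that set need not be closed under $\otimes$. The requirement is also unnecessary, since $\langle\{u\},s\rangle$ is already non-zero in the product because $\{u\}\neq\emptyset$, and its $\fixop$-image has first component $\{\epsilon\}$ in any case.
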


\subsection{Asynchronous product}
We briefly review a certain notion of "asynchronous products", which form an important primitive for the network verification algorithm presented in \cite{atig2008reachability}, and also ours. 

Consider $k$ alphabets $B_1, B_2, \ldots, B_k$. We write $B_i^\epsilon \coloneqq B_i \cup \{\epsilon\}$ for the alphabet $B_i$ \emph{padded} with a fresh idling symbol, and
\[
    B^\product \coloneqq \left(B_1^\epsilon \times \cdots \times B_k^\epsilon\right) \setminus \left\{\tup{\epsilon, \ldots, \epsilon}\right\}
\]
for the set of tuples in which at least one component moves. Given a "word" $w \in \left(B^\product\right)^*$, we write $\project{B_i}(w)$ to denote the pointwise projection of $w$ onto the component that corresponds to $B_i$, \emph{erasing} every occurrence of $\epsilon$, for $i \in [k]$.
Given $k$ languages $L_i \subseteq B_i^*$, for $i \in [k]$, we define their ""asynchronous product"" $L_1 \product \cdots \product L_k \subseteq \left(B^\product\right)^*$ as the set of "words" $w \in \left(B^\product\right)^*$ such that $\project{B_i}(w) \in L_i$, for $i \in [k]$. Padding is what makes the product \emph{asynchronous}: a component idles at the positions where its coordinate is $\epsilon$, so the $L_i$ need not be of equal length. We also denote this product by $\product_{i \in [k]} L_i$. The following result is folklore: 
\begin{lemma}\label[lemma]{lem:regular-product-computable}
    Given $k$ regular languages $L_1, \ldots, L_k$, $\product_{i \in [k]} L_i$ is effectively computable. 
\end{lemma}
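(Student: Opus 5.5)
We have to prove that $\product_{i \in [k]} L_i$ is regular, and that an automaton for it can be constructed from automata for the $L_i$. The natural approach is a direct product construction over the padded alphabet $B^\product$. We proceed by induction on $k$, or more directly by one $k$-fold product.

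\textbf{Construction.} Given NFAs $\mathcal{A}_i = \tup{Q_i, B_i, \Delta_i, I_i, F_i}$ with $\mathcal{L}(\mathcal{A}_i) = L_i$, define $\mathcal{A}$ over the alphabet $B^\product$ as follows.
\begin{itemize}
    \item Its state set is $Q_1 \times \cdots \times Q_k$, its initial states are $I_1 \times \cdots \times I_k$, and its final states are $F_1 \times \cdots \times F_k$.
    \item For a letter $\tup{b_1, \ldots, b_k} \in B^\product$, there is a transition $\tup{q_1, \ldots, q_k} \to \tup{q_1', \ldots, q_k'}$ iff, for every $i$, either $b_i \in B_i$ and $\tup{q_i, b_i, q_i'} \in \Delta_i$, or $b_i = \epsilon$ and $q_i' = q_i$.
\end{itemize}
Since $B^\product$ and all $Q_i$ are finite, $\mathcal{A}$ is finite and effectively computable from the $\mathcal{A}_i$.

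\textbf{Correctness.} We show $\mathcal{L}(\mathcal{A}) = \product_{i \in [k]} L_i$ by proving a stronger invariant, by induction on $|w|$. For every $w \in (B^\product)^*$, $\mathcal{A}$ has a run on $w$ from $\tup{q_1, \ldots, q_k}$ to $\tup{q_1', \ldots, q_k'}$ iff, for each $i$, $\mathcal{A}_i$ has a run on $\project{B_i}(w)$ from $q_i$ to $q_i'$.
\begin{itemize}
    \item The base case $w = \epsilon$ is immediate.
    \item In the inductive step, write $w = w'\cdot\tup{b_1, \ldots, b_k}$. Then $\project{B_i}(w) = \project{B_i}(w')\, b_i$, where the letter $b_i$ is erased when it is $\epsilon$. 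The transition definition matches exactly one step of $\mathcal{A}_i$ on $b_i$, or a stay in place when $b_i = \epsilon$.
\end{itemize}
Specialising the invariant to initial and final tuples gives the equality of languages.

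The only delicate point is the erasure of $\epsilon$ in the projection. An idling coordinate must correspond to $\mathcal{A}_i$ staying in its state rather than taking a step. The construction enforces this by requiring $q_i' = q_i$ when $b_i = \epsilon$, so the induction above handles it. The exclusion of the all-$\epsilon$ tuple from $B^\product$ plays no role in correctness. It only removes a letter that would otherwise act as a self-loop on every state. We expect no genuine obstacle.
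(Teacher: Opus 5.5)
Your proposal is correct. The paper gives no proof of this lemma and simply labels it folklore; your synchronized product automaton over $B^\product$, in which idling components stay in place, is exactly the standard argument behind that label.

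Equivalently, you could note that $\product_{i \in [k]} L_i = \bigcap_{i \in [k]} \project{B_i}^{-1}(L_i)$ and appeal to closure of regular languages under inverse homomorphism and intersection; your construction is this route unrolled explicitly.
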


\subsection{The Algorithms}
We are now ready to present our network verification algorithm. Henceforth, the input acyclic network $\mathcal{N} = \tup{P, \ch, \{\tup{\pds^p, \osem{S}^p, \wt^p}\}_{p \in P}, \{\phi^\delta\}_{\delta \in \bigcup_{p \in P} \trans^p}}$, and "global configurations" $\vf{c}$ and $\vf{c}'$ are considered to be globally available in our algorithms. For simplicity of presentation, we assume $\vf{c}^p$ is of the form $\tup{q^p, \dashv}$ for some $q^p \in Q^p$, for every $p \in P$, i.e., the "local configurations" have empty stacks. Similarly, we assume ${\vf{c}'}^p$ is of the form $\tup{{q'}^p, \dashv}$ for some ${q'}^p \in Q^p$, for every $p \in P$. Note that this is not a limitation, as one can reduce the general problem instance to an auxiliary instance of this form. Given a process $p \in P$, we write $\parents(p) \coloneqq \{p' \mid \tup{p', p} \in \ch\}$ and $\children(p) \coloneqq \{p' \mid \tup{p, p'} \in \ch\}$ to denote the set of parents and children of $p$ respectively. A process observes its parents. 

\paragraph{Linearisation}
Following \cite[Prop. 3]{atig2008reachability} (cf. \Cref{atig-dec-proc}), we first reduce $\mathcal{N}$ to a linear network. Fix a topological order $p_1, \ldots, p_n$ of $P$ in which every process observes only processes preceding it, i.e., $\tup{p_j, p_i} \in \ch$ implies $j < i$, and write $\vf{Q}_{<i} \coloneqq Q^{p_{i-1}} \times \cdots \times Q^{p_1}$. We construct the network $\mathcal{N}'$ with processes $p'_1, \ldots, p'_n$ and channels $\ch' \coloneqq \{\tup{p'_{i-1}, p'_i} \mid 1 < i \leq n\}$, in which the "wPDS" of $p'_i$ has control states $Q^{p_i} \times \vf{Q}_{<i}$, the stack alphabet of $\pds^{p_i}$, the semiring $\osem{S}^{p_i}$, and the following transitions:
\begin{itemize}
    \item (\emph{refresh}) for all $q \in Q^{p_i}$ and $\vf{x}, \vf{y} \in \vf{Q}_{<i}$, the transition $\tup{\tup{q, \vf{x}}, \epsilon, \epsilon, \epsilon, \tup{q, \vf{y}}}$, enabled when $p'_{i-1}$ is in the state $\vf{y}$, with weight $\vf{1}$; it updates the copy of the states of the preceding processes;
    \item (\emph{simulate}) for every $\delta = \tup{q, \alpha, u, \alpha', q'} \in \trans^{p_i}$ and every $\vf{x} \in \vf{Q}_{<i}$ whose components, together with $q$, satisfy $\phi^\delta$, the unguarded transition $\tup{\tup{q, \vf{x}}, \alpha, u, \alpha', \tup{q', \vf{x}}}$, with weight $\wt^{p_i}(\delta)$; it simulates $\delta$, checking its guard against the copy.
\end{itemize}
No process is merged or duplicated: the stack and the weights of $p'_i$ are those of $p_i$, and $\mathcal{N}'$ is linear. Given a "global configuration" $\vf{c}$ of $\mathcal{N}$, we write $\tilde{\vf{c}}$ for the "global configuration" of $\mathcal{N}'$ in which every process has the local configuration of $\vf{c}$, and every copy holds the corresponding states of $\vf{c}$.

\begin{lemma}\label[lemma]{lem:linearisation}
    Let $\vf{c}, \vf{c}'$ be "global configurations" of $\mathcal{N}$, and let $\vf{c}''$ range over the finitely many "global configurations" of $\mathcal{N}'$ that agree with $\vf{c}'$ on the states and stacks of the original processes. Then $\vf{c} \reaches_{\mathcal{N}} \vf{c}'$ iff $\tilde{\vf{c}} \reaches_{\mathcal{N}'} \vf{c}''$ for some such $\vf{c}''$, and for every $i \in [n]$,
    \[
        \nwtsemantics{\vf{c}, \vf{c}'}_{\mathcal{N}}^{p_i} = \bigoplus_{\vf{c}''} \nwtsemantics{\tilde{\vf{c}}, \vf{c}''}_{\mathcal{N}'}^{p'_i}.
    \]
\end{lemma}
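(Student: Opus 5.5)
The plan is to establish a weight-preserving correspondence between global traces of $\mathcal{N}$ from $\vf{c}$ to $\vf{c}'$ and global traces of $\mathcal{N}'$ from $\tilde{\vf{c}}$ to some $\vf{c}''$. Concretely, I will define a projection $\pi$ on traces of $\mathcal{N}'$. It erases every refresh transition and replaces each simulate transition of $p'_i$, built from $\delta \in \trans^{p_i}$, by $\delta$ itself. Since refresh transitions carry weight $\vf{1}$ and simulate transitions carry $\wt^{p_i}(\delta)$, we get $(\wt^{p'_i})^*(\vt^{p'_i}) = (\wt^{p_i})^*(\pi(\vt)^{p_i})$ for every trace $\vt$ of $\mathcal{N}'$. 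Once we know that $\pi$ maps $\bigcup_{\vf{c}''} \paths_{\mathcal{N}'}(\tilde{\vf{c}}, \vf{c}'')$ onto $\paths_{\mathcal{N}}(\vf{c}, \vf{c}')$, both claims of the lemma follow. Reachability is then immediate. For the weight identity, the right-hand side is a sum over the union of finitely many trace sets, re-indexed through $\pi$. Idempotence of $\oplus$ absorbs the multiplicity of preimages, and the regrouping clause of "(C2)" lets us partition the union by $\vf{c}''$ and by fibres of $\pi$. The resulting set of values is exactly $\{(\wt^{p_i})^*(\vt^{p_i}) \mid \vt \in \paths_{\mathcal{N}}(\vf{c}, \vf{c}')\}$, so the sums agree. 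Existence of the sums transfers in the same way.

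Soundness (the image of $\pi$ lies in $\paths_{\mathcal{N}}$) is the \textbf{main obstacle}. The difficulty is that the copy $\vf{x}$ held by $p'_i$ may be stale relative to the actual states of $p_1,\dots,p_{i-1}$. I will prove, by induction on $i$, an invariant for any run of $\mathcal{N}'$: the sequence of values taken by the copy in $p'_i$ is a subsequence, modulo stuttering, of the sequence of joint states of $p'_1,\dots,p'_{i-1}$. This holds because a refresh can only copy the current state of $p'_{i-1}$, whose own copy is by induction a time-consistent snapshot of its predecessors.

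Given this invariant, I reorder the run of $\mathcal{N}'$. Each simulate step of $p'_i$ is executed under a copy value $\vf{x}$ that was genuinely attained at some earlier moment by processes $1..i-1$. I delay or advance the moves of $p_i$ so that each such step happens exactly during an interval in which the predecessors are in $\vf{x}$. This is done inductively from $p_1$ upward, inserting the moves of $p_i$ into the already-scheduled interleaving of $p_1..p_{i-1}$, and monotonicity of the snapshot times makes the insertion order-preserving. The guard $\phi^\delta$ is then satisfied on actual states in $\mathcal{N}$. Because the reordering only permutes steps of different processes, each local trace is unchanged. Hence the local weights are preserved, and the scheduled run is a trace of $\mathcal{N}$ with projection $\pi(\vt)$.

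Completeness (surjectivity of $\pi$) is easier. Given a global run of $\mathcal{N}$, I simulate it in $\mathcal{N}'$. After each step of any process $p_j$, I insert a cascade of refreshes in $p'_{j+1}, \dots, p'_n$, in this order, so that every copy is kept exactly current. Then every guard check of $\mathcal{N}$ becomes a check against the correct copy, so the simulate transition is enabled, and the final configuration is one of the $\vf{c}''$ agreeing with $\vf{c}'$ on original states and stacks. By construction $\pi$ of the simulating trace is the original trace, which establishes surjectivity.
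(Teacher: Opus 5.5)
Your strategy matches the paper's. Its proof sketch takes the correspondence between traces of $\mathcal{N}$ and of $\mathcal{N}'$ (same local traces, up to inserted refreshes) from \cite[Prop.~3]{atig2008reachability}. It then argues, as you do, that each process's weight depends only on its local trace and that refreshes weigh $\vf{1}$. You re-derive that correspondence yourself. The completeness half (cascaded refreshes) is fine, but the soundness half contains two claims that are false as written.

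\textbf{First claim: $\pi$ does not land in $\paths_{\mathcal{N}}$.} Because $\pi$ keeps the interleaving, it does not map $\bigcup_{\vf{c}''}\paths_{\mathcal{N}'}(\tilde{\vf{c}}, \vf{c}'')$ into $\paths_{\mathcal{N}}(\vf{c}, \vf{c}')$. Suppose $\delta_1$ moves $p_1$ from $a$ to $b$, and $\delta_2$ is guarded by $p_1$ being in $a$. Then $\mathcal{N}'$ may first simulate $\delta_1$ and afterwards simulate $\delta_2$ against the stale copy $a$. The image $\delta_1\delta_2$ is not a trace of $\mathcal{N}$; only $\delta_2\delta_1$ is. Your rescheduling paragraph in fact produces a different interleaving. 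So what you actually establish is that both sides have the same set of tuples of local traces $(\vt^{p_i})_{i}$. That suffices: $(\wt^{p_i})^*$ only sees the $p_i$-component, and a countable idempotent sum, being the supremum of its partial sums, depends only on the set of its summands. Replace the bookkeeping via fibres of $\pi$ with this set equality.

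\textbf{Second claim: the invariant cannot be stated about the given run of $\mathcal{N}'$.} Copies in that run are not time-consistent snapshots. Start with the copy of $p'_2$ equal to $a$. Let $p_1$ move from $a$ to $b$, then let $p'_2$ simulate a step from $q$ to $q'$ against $a$, then let $p'_3$ refresh. Now $p'_3$ holds $\tup{q', a}$, yet $p_1$ and $p_2$ are never simultaneously in $a$ and $q'$ in that run. So a copy value need not have been genuinely attained at any earlier moment.

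The invariant has to refer to the schedule being built. After scheduling $p_1, \dots, p_{i-1}$, there should be a monotone map $\mu_{i-1}$ from positions of the local run of $p'_{i-1}$ to positions of that schedule. At $\mu_{i-1}(k)$, the actual states of $p_{i-1}, \dots, p_1$ must equal the $k$-th state of $p'_{i-1}$, meaning its own state together with its copy. The refreshes of $p'_i$ read positions of $p'_{i-1}$'s local run in non-decreasing order. Composing with $\mu_{i-1}$ therefore tells you where to insert each simulate step of $p_i$. It also yields $\mu_i$ for every position of $p'_i$'s run, including those reached by refreshes, which $p'_{i+1}$ may copy.

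With these two repairs, your argument becomes a self-contained proof of the correspondence the paper cites.
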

\begin{proof}[Proof sketch]
    The proof of \cite[Prop. 3]{atig2008reachability} relates the "global traces" of $\mathcal{N}$ from $\vf{c}$ to $\vf{c}'$ with those of $\mathcal{N}'$ from $\tilde{\vf{c}}$ to some $\vf{c}''$, such that every process performs the same sequence of transitions of $\pds^{p_i}$, up to inserted refresh steps. The weight of a process depends only on its "local trace", and refresh steps have weight $\vf{1}$, so the per-process aggregates coincide; the sum over $\vf{c}''$ accounts for the final values of the copies.
\end{proof}
Henceforth, we run our algorithms on $\mathcal{N}'$, writing $\mathcal{N}$ for it by abuse of notation; by \Cref{lem:linearisation}, this computes \genconfigreach for the original network. In particular, every process now has at most one parent and at most one child. The target state ${q'}^p$ of a process stands for any of its states whose first component is the original target state, and the subroutines take the union over these, just as they do over the accepting states of the observation automaton.

As discussed in \Cref{sec:network-verification}, there are two variants of our algorithm: a top-down approach that works for $\leq_{\osem{S}}$-"thin" semirings, and a bottom-up approach that works for $\geq_{\osem{S}}$-"thin" semirings. 

\inlsec{$\leq_{\osem{S}}$-"thin" semirings}
Suppose $\osem{S}^p$ is $\leq_{\osem{S}}$-"thin" for every $p \in P$. 

Given a process $p \in P$ and a transition $\delta = \tup{q, \alpha, u, \alpha', q'} \in \trans^p$, we write $\project{st}(\delta) \coloneqq q'$ to denote its successor state. Given a "word" $w = \delta_1 \cdots \delta_n \in {\trans^p}^*$, we naturally define $\project{st}^*(w) \coloneqq \project{st}(\delta_1) \cdots \project{st}(\delta_n) \in {Q^p}^*$. As per convention, given a language $L \subseteq {\trans^p}^*$, we write $\project{st}^*(L)$ to denote the language $\{\project{st}^*(w) \mid w \in L\} \subseteq {Q^p}^*$.

\paragraph{Automata Product}
Consider a "wPDS" $\tup{\pds^{p'}, \osem{S}^{p'}, \wt^{p'}}$, for some $p' \in P$, and a regular language $L$ whose letters are tuples of states of the processes in $\parents(p')$. We write $L *_{obs} \tup{\pds^{p'}, \osem{S}^{p'}, \wt^{p'}}$ to denote the "wPDS" whose underlying "PDS" is the product $L *_{obs} \pds^{p'}$ recalled in \Cref{atig-dec-proc} (cf. \cite{atig2008reachability}), with weights added as follows. Each transition of $L *_{obs} \pds^{p'}$ simulates a transition $\delta = \tup{q, \alpha, u, \alpha', q'}$ of $\pds^{p'}$ while observing a tuple $\vf{q}$ of its parents' states that, together with its own current state $q$, satisfies the guard $\phi^\delta$, and which it consumes as the next letter of $L$; we assign it the weight $\wt^{p'}(\delta)$. Intuitively, we simulate $\pds^{p'}$ while consuming elements from $L$ as state observations, so that the observations made along any run form a word of $L$. Its states are pairs of a state of $\pds^{p'}$ and a state of an NFA $\mathcal{A}_L$ recognising $L$, which we call the \emph{observation automaton}; we write $s_L$ for its initial state, and $F_L$ for its set of accepting states.

\paragraph{Algorithm}
Our algorithm (cf. \Cref{alg:topdown-main-routine}) has a main routine $\textsc{TopDown}$ describing the top-down flow of computation, and a subroutine $\textsc{TDSub}$ (cf. \Cref{alg:topdown-sub-routine}) that describes the computation in each level. While $\textsc{TDSub}$ and most of $\textsc{TopDown}$ are straightforward weighted adaptations of Atig et al.'s decision procedure, the biggest difference lies in Lines 8-14. Since $\mathcal{N}$ is linear, each iteration visits the single next process of the chain, which computes its individual fixpoint for every tuple of fixpoints computed so far. A fixpoint of its parent summarises a whole family of the parent's runs, through the "downward closure" of their observable states, and the child may use any observations from this closure. That these observations are always embedded in runs of the parent whose weights add up to that fixpoint follows from \Cref{cor:well-behaved-downward-closure}.  

\begin{algorithm}
\begin{algodef}
\DontPrintSemicolon
\SetKwInOut{KwIn}{\textsc{TopDown}}
\KwIn{}
\BlankLine
\tcp{Find the roots of the input topology}
$\mathit{top} \gets \left\{p \in P \mid \parents(p) = \emptyset\right\}$\;
\tcp{Compute the fixpoints for each root process}
$\mathit{CurrVals} \gets \prod\limits_{p \in \mathit{top}} \textsc{FixPt}(\tup{\pds^p, \osem{S}_{\downsub{\trans^p}} \boxtimes \osem{S}^p, {\wt^p}'}, q^p, {q'}^p)$\;
\tcp{Track the currently visited and unvisited processes of network}
$\mathit{visited} \gets \mathit{top}$\;
$\mathit{unvisited} \gets P \setminus \mathit{top}$\;
\tcp{Recursively compute until all processes are visited}
\While{$\mathit{unvisited} \neq \emptyset$}{
    \tcp{Find root processes in the unvisited network}
    $\mathit{top} \gets \left\{p \in \mathit{unvisited} \mid \parents(p) \cap \mathit{unvisited} = \emptyset\right\}$\;
    $\mathit{tmpVals} \gets \emptyset$\;
    \tcp{For each $\mathit{val}$ tuple of prior processes...}
    \For{$\mathit{val} \gets \mathit{CurrVals}$}{
        \tcp{...extend $\mathit{val}$ with a fixpoint of \emph{every} root process}
        $\mathit{ext} \gets \{\mathit{val}\}$\;
        \For{$p \in \mathit{top}$}{
            \tcp{Compute fixpoint w.r.t. $\mathit{val}$ as observation}
            $\mathit{ext} \gets \mathit{ext} \times \textsc{TDSub}(\mathit{val}, p)$\;
        }
        $\mathit{tmpVals} \gets \mathit{tmpVals} \cup \mathit{ext}$\;
    }
    \tcp{Update the current fixpoints}
    $\mathit{CurrVals} \gets \mathit{tmpVals}$\;
    \tcp{Update visited and unvisited processes}
    $\mathit{unvisited} \gets \mathit{unvisited} \setminus \mathit{top}$\;
    $\mathit{visited} \gets \mathit{visited} \cup \mathit{top}$\;
}
\tcp{Project every tuple in $\mathit{CurrVals}$ onto its semiring components}
$\mathit{CurrWts} \gets \left\{\tup{\project{2}(\mathit{val}^p)}_{p \in P} \mid \mathit{val} \in \mathit{CurrVals}\right\}$\;
\tcp{Compute the sum of weight tuples}
$\mathit{output} \gets \bigoplus_{a \in \mathit{CurrWts}} a$\;
\Return{$\mathit{output}$}
\end{algodef}
\caption{Main routine of Top-Down Approach for $\leq_{\osem{S}}$-"thin" semirings.}%
\label{alg:topdown-main-routine}%
\end{algorithm}
\begin{algorithm}
\begin{algodef}
\DontPrintSemicolon
\SetKwInOut{KwIn}{\textsc{TDSub}($\mathit{val}, p$)}
\KwIn{}
\BlankLine
\tcp{Take asynchronous product of parent languages of $p$ w.r.t. $\mathit{val}$}
$L \gets \product_{p' \gets \parents(p), \tup{e^{p'}, a^{p'}} \gets \mathit{val}^{p'}} {q^{p'}}^*\project{st}^*\left(\expsem{e^{p'}}\right)$\;
\tcp{Normalize notation}
$\tup{\pds, \osem{S}, \wt} \gets \tup{L *_{obs} \pds^p, \osem{S}^p, {\wt^p}}$\;
\tcp{Instantiate variables using pattern-matching}
$\tup{Q, A, \Gamma, \trans} \gets \pds$\;
\tcp{Compute fixpoint, from the initial to an accepting state of the observation automaton}
$Y \gets \bigcup_{s \in F_L} \textsc{FixPt}(\tup{\pds, \osem{S}_{\downsub{\trans}} \boxtimes \osem{S}, {\wt}'}, \tup{q^p, s_L}, \tup{{q'}^p, s})$\;
\tcp{Return the fixpoints of $p$; the caller extends $\mathit{val}$ with them}
\Return{$Y$}
\end{algodef}
\caption{Subroutine of Top-Down Approach for $\leq_{\osem{S}}$-"thin" semirings.}%
\label{alg:topdown-sub-routine}%
\end{algorithm}
\vspace{10pt}
\inlsec{$\geq_{\osem{S}}$-"thin" semirings}
Suppose $\osem{S}^p$ is $\geq_{\osem{S}}$-"thin" for every $p \in P$. 

\paragraph{Automata Product}
As a shorthand, we will write $\vf{Q} = \left(Q^p\right)_{p \in P}$. For $\delta \in \trans^p$, the guard $\phi^\delta$ only constrains the states of $p$ and of the processes it observes; we identify it with the set of tuples in $\vf{Q}$ whose projection onto these processes lies in $\phi^\delta$, and thus consider the map $\phi: \delta \mapsto \phi^\delta \subseteq \vf{Q}$, for $\delta \in \bigcup_{p \in P} \trans^p$.
Given a regular language $L \subseteq {\trans^p}^*$, for some $p \in P$, we write $\phi^*(L)$ to denote its image under the homomorphism induced by $\phi$. Given $\vf{Q}' \subseteq \vf{Q}$ and $p' \in P$, we write ${\vf{Q}'}^{p'}$ for the projection of $\vf{Q}'$ onto $Q^{p'}$.

Consider a regular language $L \subseteq {2^{\vf{Q}}}^*$ of constraints, and a "wPDS" $\tup{\pds^{p'}, \osem{S}^{p'}, \wt^{p'}}$, for some $p' \in P$. We write $L *_{\phi} \tup{\pds^{p'}, \osem{S}^{p'}, \wt^{p'}}$ to denote the "wPDS" whose underlying "PDS" is the product $L *_{\phi} \pds^{p'}$ of Atig et al.'s bottom-up procedure \cite{atig2008reachability}, with weights added as follows. Each transition of $L *_{\phi} \pds^{p'}$ simulates a transition $\delta = \tup{q, \alpha, u, \alpha', q'}$ of $\pds^{p'}$ while consuming a constraint $\vf{Q}'$ as the next letter of $L$, provided that the current state satisfies it, i.e., $q \in {\vf{Q}'}^{p'}$; we assign it the weight $\wt^{p'}(\delta)$. Intuitively, we simulate $\pds^{p'}$ while satisfying, and consuming constraints from $L$. As before, its states are pairs of a state of $\pds^{p'}$ and a state of the observation automaton $\mathcal{A}_L$ for $L$, with initial state $s_L$ and accepting states $F_L$.

\paragraph{Algorithm}
For this case, the algorithm is a simple weighted extension of Atig et al.'s bottom-up procedure. As before, our algorithm (cf. \Cref{alg:bottomup-main-routine}) has a main routine $\textsc{BottomUp}$ describing the bottom-up flow of computation, and a subroutine $\textsc{BUSub}$ (cf. \Cref{alg:bottomup-sub-routine}) that describes the computation in each level. 

\begin{algorithm}[t]
\begin{algodef}
\DontPrintSemicolon
\SetKwInOut{KwIn}{\textsc{BottomUp}}
\KwIn{}
\BlankLine
\tcp{Find the leaves of the input topology}
$\mathit{bottom} \gets \left\{p \in P \mid \children(p) = \emptyset\right\}$\;
\tcp{Compute the fixpoints for each leaf process}
$\mathit{CurrVals} \gets \prod\limits_{p \in \mathit{bottom}} \textsc{FixPt}(\tup{\pds^p, \osem{S}_{\upsub{\trans^p}} \times \osem{S}^p, {\wt^p}'}, q^p, {q'}^p)$\;
\tcp{Track the currently visited and unvisited processes of network}
$\mathit{visited} \gets \mathit{bottom}$\;
$\mathit{unvisited} \gets P \setminus \mathit{bottom}$\;
\tcp{Recursively compute until all processes are visited}
\While{$\mathit{unvisited} \neq \emptyset$}{
    \tcp{Find leaf processes in the unvisited network}
    $\mathit{bottom} \gets \left\{p \in \mathit{unvisited} \mid \children(p) \cap \mathit{unvisited} = \emptyset\right\}$\;
    $\mathit{tmpVals} \gets \emptyset$\;
    \tcp{For each $\mathit{val}$ tuple of prior processes...}
    \For{$\mathit{val} \gets \mathit{CurrVals}$}{
        \tcp{...extend $\mathit{val}$ with a fixpoint of \emph{every} leaf process}
        $\mathit{ext} \gets \{\mathit{val}\}$\;
        \For{$p \in \mathit{bottom}$}{
            \tcp{Compute fixpoint w.r.t. $\mathit{val}$ as observation}
            $\mathit{ext} \gets \mathit{ext} \times \textsc{BUSub}(\mathit{val}, p)$\;
        }
        $\mathit{tmpVals} \gets \mathit{tmpVals} \cup \mathit{ext}$\;
    }
    \tcp{Update the current fixpoints}
    $\mathit{CurrVals} \gets \mathit{tmpVals}$\;
    \tcp{Update visited and unvisited processes}
    $\mathit{unvisited} \gets \mathit{unvisited} \setminus \mathit{bottom}$\;
    $\mathit{visited} \gets \mathit{visited} \cup \mathit{bottom}$\;
}
\tcp{Project every tuple in $\mathit{CurrVals}$ onto its semiring components}
$\mathit{CurrWts} \gets \left\{\tup{\project{2}(\mathit{val}^p)}_{p \in P} \mid \mathit{val} \in \mathit{CurrVals}\right\}$\;
\tcp{Compute the sum of weight tuples}
$\mathit{output} \gets \bigoplus_{a \in \mathit{CurrWts}} a$\;
\Return{$\mathit{output}$}
\end{algodef}
\caption{Main routine of Bottom-Up Approach for $\geq_{\osem{S}}$-"thin" semirings.}%
\label{alg:bottomup-main-routine}%
\end{algorithm}
\begin{algorithm}
\begin{algodef}
\DontPrintSemicolon
\SetKwInOut{KwIn}{\textsc{BUSub}($\mathit{val}, p$)}
\KwIn{}
\BlankLine
\tcp{Take intersection of the upward-closed constraint languages of the children of $p$ w.r.t. $\mathit{val}$}
$L \gets \bigcap_{p' \gets \children(p), \tup{L^{p'}, a^{p'}} \gets \mathit{val}^{p'}} \up{\phi^*\left(L^{p'}\right)}$\;
\tcp{Normalize notation}
$\tup{\pds, \osem{S}, \wt} \gets L *_{\phi} \tup{\pds^p, \osem{S}^p, {\wt^p}}$\;
\tcp{Instantiate variables using pattern-matching}
$\tup{Q, A, \Gamma, \trans} \gets \pds$\;
\tcp{Compute fixpoint, from the initial to an accepting state of the observation automaton}
$Y \gets \bigcup_{s \in F_L} \textsc{FixPt}(\tup{\pds, \osem{S}_{\upsub{\trans}} \times \osem{S}, {\wt}'}, \tup{q^p, s_L}, \tup{{q'}^p, s})$\;
\tcp{Return the fixpoints of $p$; the caller extends $\mathit{val}$ with them}
\Return{$Y$}
\end{algodef}
\caption{Subroutine of Bottom-Up Approach for $\geq_{\osem{S}}$-"thin" semirings.}%
\label{alg:bottomup-sub-routine}%
\end{algorithm}

\end{document}